\providecommand{\U}[1]{\protect\rule{.1in}{.1in}}
\newcommand{\cmmnt}[1]{}
\newtheorem{theorem}{Theorem}
\newtheorem{corollary}{Corollary}
\newtheorem{definition}{Definition}
\newtheorem{lemma}{Lemma}
\newtheorem{proposition}{Proposition}
\newtheorem{remark}{Remark}
\newenvironment{proof}[1][Proof]{\noindent\textbf{#1.} }{\ \rule{0.5em}{0.5em}}
\begin{document}

\title{\vspace{-.3in}\textbf{Extendibility limits quantum-secured communication and key distillation}}

 \author{Vishal Singh\thanks{School of Applied and Engineering Physics, Cornell University, Ithaca, New York 14850, USA} \and 
 Mark M.~Wilde\thanks{School of Electrical and Computer Engineering, Cornell University, Ithaca, New York 14850, USA} 
 }

\date{ }
\maketitle

\begin{abstract}
    Secret-key distillation from quantum states and channels is a central task of interest in quantum information theory, as it facilitates private communication over a quantum network. Here, we study the task of secret-key distillation from bipartite states and point-to-point quantum channels using local operations and one-way classical communication (one-way LOCC). We employ the resource theory of unextendible entanglement to study the transformation of a bipartite state under one-way LOCC, and we obtain several efficiently computable upper bounds on the number of secret bits that can be distilled from a bipartite state using one-way LOCC channels; these findings apply not only in the one-shot setting but also in some restricted asymptotic settings. We extend our formalism to  private communication over a quantum channel assisted by forward classical communication. We obtain efficiently computable upper bounds on the one-shot forward-assisted private capacity of a channel, thus addressing a question in the theory of quantum-secured communication that has been open for some time now. Our formalism also provides upper bounds on the rate of private communication when using a  large number of channels in such a way that the error in the transmitted private data decreases exponentially with the number of channel uses. Moreover, our bounds can be computed using semidefinite programs, thus providing a computationally feasible method to understand the limits of private communication over a quantum network.
\end{abstract}

\tableofcontents

\section{Introduction}

\subsection{Motivation}

The existence of a quantum network facilitates the distribution of secret keys between distant parties~\cite{BB84,Ekert91}, which ensures secure  communication by means of the one-time pad protocol. However, realizing an ideal quantum network can be very expensive. This motivates an in-depth study of the number of secret bits that can be established with the available resources, which, in the context of the quantum internet, are partially entangled states and quantum channels.

Our ability to distill secret keys from a bipartite state or a quantum channel depends on the operations that we can perform. The three most common settings studied in any non-local resource distillation task are as follows: local operations, local operations with one-way classical communication, and local operations with two-way classical communication. Here we consider the task of secret-key distillation from bipartite states and point-to-point channels in the presence of local operations and one-way classical communication. In what follows, we first discuss our contributions on understanding secret-key distillation from states, and thereafter we discuss our related contributions for channels.  

\subsection{Secret-key distillation from states}

The task of distilling secret keys from a bipartite state using local operations and one-way classical communication, abbreviated as one-way LOCC, has been studied extensively in the past~\cite{DW05, RR12, KKGW21}. From an information-theoretic perspective, the main quantities of interest are the one-shot, one-way distillable key of a state and the asymptotic one-way distillable key of the state. The one-shot, one-way distillable key of a state is roughly defined as the maximum number of ``approximate'' secret bits that can be distilled from a state using a one-way LOCC channel with respect to a fixed error parameter, and the asymptotic one-way distillable key of the state is the maximum rate at which secret bits can be distilled from an arbitrarily large number of independent and identically distributed copies of the state when using one-way LOCC channels.

Computing the one-shot, one-way distillable key and the asymptotic one-way distillable key is a challenging task. Lower bounds on the one-way distillable key in the one-shot regime, as well as the asymptotic regime, have been found in previous works~\cite{DW05, RR12, KKGW21}. Upper bounds on the one-shot distillable key of a state when using two-way LOCC have been found in terms of the smooth-min relative entropy of entanglement~\cite{WTB17} and the squashed entanglement~\cite{Christandl06, CEHHOR07, CSW12, Wilde16}. Naturally, these quantities also bound the one-shot, one-way distillable key from above. However, computing the smooth-min relative entropy of entanglement of a state is related to the  NP-hard problem of optimizing over the set of separable states~\cite{Gur03,G10}, and the squashed entanglement is not even known to be computable in the Turing sense, due to it involving an optimization over a state having a system of unbounded size. Moreover, the aforementioned quantities bound the one-shot distillable key of a state, which is expected to be larger than the one-shot, one-way distillable key of the state in general, leaving room for significant improvement in the estimation of the latter quantity.

In this work, we invoke the framework of unextendibility to obtain upper bounds on the one-shot, one-way distillable key of a state, which can be computed by means of a semidefinite program. As such, to the best of our knowledge, ours is the first general upper bound on this quantity that is efficiently computable, in contrast to the smooth min-relative entropy of entanglement and the squashed entanglement. We also give an upper bound on the maximum rate at which secret bits can be distilled from an arbitrarily large number of i.i.d.~copies of a state when using one-way LOCC channels, provided that the error in distillation decreases exponentially with the number of copies of the resource state. 

\subsection{Private communication over channels}

The one-shot setting of private communication has been the subject of several studies~\cite{RR11, WTB17, Wilde17, RSW17, KKGW21}. In the context of private communication, we are interested in the maximum number of private bits that can be sent through a quantum channel when using some freely available operations, which can be local operations and classical communication, local operations with only forward-classical communication, or local operations only. The corresponding quantities are called the one-shot two-way-assisted private capacity, the one-shot forward-assisted private capacity, and the one-shot unassisted private capacity, respectively. In the presence of forward-classical assistance, the task of secret-key distillation is equivalent to the task of private communication, which allows us to immediately extend our understanding of secret-key distillation from channels to private communication.

Finding efficiently computable upper bounds on the one-shot private capacity of a channel has remained an unsolved problem since early works on private capacity~\cite{CWY04, DW05}. Several upper bounds on the one-shot, two-way-assisted private capacity have been obtained~\cite{TGW14, WTB17, QSW18}. However, none of them are known to be efficiently computable. Even in the asymptotic regime, computable upper bounds on the unassisted private capacity and two-way-assisted private capacity are known only for qubit channels~\cite{FF21}. 

Here we contribute to this growing body of knowledge by giving upper bounds on the one-shot forward-assisted private capacity of a channel, which can be computed efficiently using a semidefinite program. We also give a semidefinite computable upper bound on the maximum rate at which private bits can be transmitted through a quantum channel when the error in transmission is required to decay exponentially with the number of channel uses. 

\subsection{Methods used in this work}

The resource theory of unextendible entanglement developed in~\cite{WWW24} serves as the primary mathematical framework in our investigation. The set of free states in the resource theory of unextendible entanglement is a state-dependent set comprising of all symmetric extensions of the state in question. The set of two-extendible channels serves as the set of free operations, which was defined in~\cite{KDWW19, KDWW21}. All one-way LOCC channels are two-extendible channels, which makes the resource theory of unextendible entanglement useful for the analysis of private communication with one-way LOCC.

The unextendible entanglement of quantum channels was defined in~\cite{SW24_channels}, which is the primary mathematical framework that we use to investigate private communication through a quantum channel assisted by one-way LOCC. In this resource theory, the set of free channels is a channel-dependent set, which consists of channels that are symmetrically-\textit{compatible} with the channel in question, where compatible channels were defined in~\cite{HMZ16}. The set of free operations are two-extendible superchannels, which form a semidefinite relaxation of the set of one-way LOCC superchannels originally considered in~\cite{LM15, RBL18}.

In the past, the unextendible entanglement of states has been used to study the \textit{exact} and \textit{probabilistic} distillation of secret keys from states using one-way LOCC channels in~\cite{WWW24,SW24}, and the unextendible entanglement of channels has been used to study zero-error private communication through channels in~\cite{SW24_channels}. By using the resource theory of unextendible entanglement to study ``approximate'' secret-key distillation from states and channels, we demonstrate that this resource theory can be used to study more practical settings in which an arbitrarily small error is allowed in resource distillation.

\subsection{Summary of results and organization of the paper}

The main contributions of our paper are as follows: We give upper bounds on the one-shot, one-way distillable key of a bipartite state and on the one-shot, forward-assisted private capacity of point-to-point quantum channels. As mentioned previously, to the best of our knowledge, these are the first efficiently computable upper bounds on these quantities. Extending our results to the asymptotic setting, we give upper bounds on the maximum rate of distilling secret keys from i.i.d.~copies of a bipartite state or channel when using one-way LOCC, albeit in a particular setting in which the error in distillation is required to decay exponentially with the number of copies of the resource. Several of our bounds can be computed using semidefinite programs, adding to their practical relevance. Finally, with this work, we demonstrate the power of the resource theory of unextendible entanglement in studying resource distillation. Prior to our work here, it was unclear how to apply this concept to the  setting of approximate key distillation and left as an open question since~\cite{WWW24}.

\begin{table}[t]
    \begin{center}
        \begin{tabular}{|p{5.9cm}|p{5.8cm}|p{3.2cm}|}
        \hline
        Setting & Divergence used for upper bound & Reference\\
        \hline\hline
        &&\\[-0.7em]
         One-shot setting & Smooth-min relative entropy & Theorem~\ref{theo:distill_key_st_ub_hypo_test} \\
        &&\\[-0.5em]
        Simplified bounds for 1-shot setting  & Smooth-min relative entropy & Corollaries~\ref{cor:dist_key_st_alg_ub} and~\ref{cor:dist_key_st_td_ub}\\
        &&\\[-0.5em]
        $n$-Shot setting & Sandwiched R\'enyi relative entropy & Corollary~\ref{cor:dist_key_sandwich_ub_n_copies}\\
        &&\\[-0.5em]
        Asymptotic setting & Umegaki relative entropy & Theorem~\ref{theo:dist_key_asymptotic_bnd}\\[0.5em]
        \hline
    \end{tabular}
    \end{center}
    \caption{A list of our results for secret-key distillation from a bipartite state when using one-way LOCC channels, in the one-shot and asymptotic settings.}
    \label{tab:st_results}
\end{table}

\begin{table}[t]
    \begin{center}
        \begin{tabular}{|p{5.9cm}|p{6.3cm}|p{2.0cm}|}
        \hline
        Setting &\centering Divergence used for upper bound & Reference\\
        \hline\hline
        &&\\[-0.7em]
         One-shot setting & Smooth-min relative entropy & Theorem~\ref{theo:dist_key_ch_hypo_test_bnd} \\
        &&\\[-0.5em]
        Simplified bounds for 1-shot setting  & Smooth-min relative entropy & Corollary~\ref{cor:smooth_min_simplified_bnd_channels}\\
        &&\\[-0.5em]
        $n$-shot setting & Geometric R\'enyi relative entropy & Corollary~\ref{cor:dist_key_ch_geo_bnd}\\
        &&\\[-0.5em]
        Asymptotic setting & Belavkin--Staszewski relative entropy & Theorem~\ref{theo:dist_key_asymptotic_bnd_channels}\\[0.5em]
        \hline
    \end{tabular}
    \end{center}
    \caption{A list of our results for forward-assisted private communication from point-to-point quantum channels in the one-shot and asymptotic settings.}
    \label{tab:ch_results}
\end{table}

In Table~\ref{tab:st_results}, we present a brief summary of our results on one-way secret-key distillation from bipartite states, and in Table~\ref{tab:ch_results}, we give a brief summary of our results on forward-assisted private communication over channels. We note here that the Python codes used for calculating the semidefinite programs in this paper are available with the arXiv posting.

An outline of our  paper is as follows:
\begin{itemize}
    \item Section~\ref{sec:notations}: Definitions and notations used in the paper, along with basic facts about quantum states, channels, and superchannels.
    
    \item Section~\ref{sec:key_dist_setup}: Discussion on secret-key distillation from bipartite states using one-way LOCC channels, and definition of the one-shot, one-way distillable key of a state, which is the primary quantity of interest.
    
    \item Section~\ref{sec:two_extendibility}: Review of the concepts of two-extendibility and the unextendible entanglement of states. Discussion on the unextendible entanglement of states induced by smooth-min relative entropy and $\alpha$-sandwiched R\'enyi relative entropy, which are the primary ingredients in the main result obtained for one-way secret-key distillation from bipartite states.
    
    \item Section~\ref{sec:distillable_key_results}: Main results on one-way secret-key distillation from an arbitrary bipartite state. Numerical demonstration of the upper bounds on the one-shot, one-way distillable key of isotropic states using semidefinite programs.
    \item Section~\ref{sec:priv_comm_results}: Discussion of private communication over quantum channels using one-way LOCC superchannels. Review of the unextendible entanglement of channels induced by smooth-min relative entropy and $\alpha$-geometric R\'enyi relative entropy. Main results on one-way private communication over an arbitrary channel. Demonstrating the upper bound on the one-shot, forward-assisted private capacity of the erasure channel using analytical expressions. 
\end{itemize}

\section{Notation and Preliminaries}\label{sec:notations}

In this section, we review background material on the three major elements that we use in the rest of the work: quantum states, channels, and superchannels. 

\subsection{Quantum states and channels}

A quantum state $\rho_A$ is a positive semidefinite, unit-trace operator acting on a Hilbert space $\mathcal{H}_A$. We denote the set of all linear operators acting on the Hilbert space $\mathcal{H}_A$ by $\mathcal{L}(A)$ and the set of all quantum states acting on this Hilbert space by $\mathcal{S}(A)$. A bipartite quantum state $\rho_{AB}$ acting on the Hilbert space $\mathcal{H}_{A}\otimes \mathcal{H}_B$ is called separable if it can be written as
\begin{equation}
	\rho_{AB} = \sum_{x\in \mathcal{X}} p(x)\sigma^x_{A}\otimes\tau^x_{B},
 \label{eq:def-sep-state}
\end{equation}
where $\{p(x)\}_{x\in \mathcal{X}}$ is a probability distribution and $\{\sigma^x_{A}\}_{x\in \mathcal{X}}$ and $\{\tau^x_{B}\}_{x\in \mathcal{X}}$ are sets of states. Any quantum state that is not separable is said to be entangled. The maximally entangled state vector in the Hilbert space $\mathcal{H}_{A}\otimes\mathcal{H}_B$ is denoted as follows:
\begin{equation}
	|\Phi^d\rangle_{AB} \coloneqq \frac{1}{\sqrt{d}}\sum_{i=0}^{d-1} |i\rangle_A|i\rangle_B,
\end{equation}
where $\{|i\rangle\}_{i=0}^{d-1}$ is an orthonormal basis and $d$ is the Schmidt rank of the state. We denote the correponding density operator as $\Phi^d_{AB} \equiv |\Phi^d\rangle\!\langle \Phi^d|_{AB}$. 

A quantum channel $\mathcal{N}_{A\to B}$ is a completely positive (CP) and trace-preserving (TP) linear map that takes an operator acting on the Hilbert space $\mathcal{H}_A$ as input and outputs an operator acting on the Hilbert space $\mathcal{H}_B$. We denote the Choi operator of a channel $\mathcal{N}_{A\to B}$ by  $\Gamma^{\mathcal{N}}_{RB}$, which is defined as follows:
\begin{equation}\label{eq:Choi_op_defn}
	\Gamma^{\mathcal{N}}_{RB} \coloneqq \mathcal{N}_{A\to B}\!\left(d\Phi^d_{RA}\right),
\end{equation}
where $\Phi^d_{RA}$ is the maximally entangled state of Schmidt rank $d$ and system $R$ is isomorphic to system $A$. The normalized Choi operator is called the Choi state of the channel, and it is defined as follows:
\begin{equation}
    \Phi^{\mathcal{N}}_{RB} \coloneqq \mathcal{N}_{A\to B}\!\left(\Phi^d_{RB}\right).
\end{equation}

An important class of channels that is central to our work consists of one-way LOCC channels. We use the symbol $\mathcal{L}^{\to}$ for a one-way LOCC channel. 

\begin{figure}
    \centering
    \begin{subfigure}{0.45\textwidth}
        \includegraphics[width=0.9\linewidth]{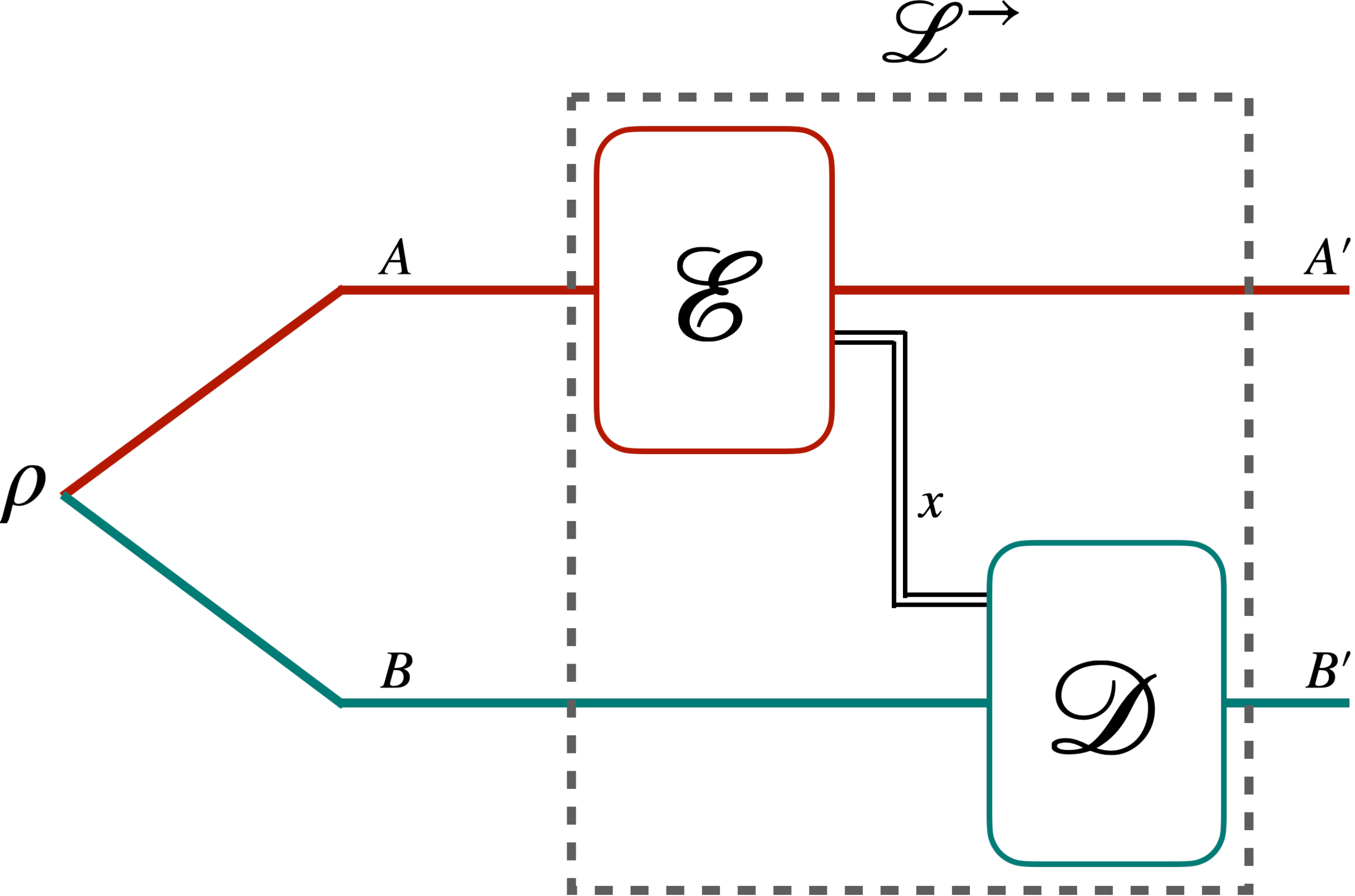}
        \caption{\centering\label{fig:one_WL_channel} One-way LOCC channel}
    \end{subfigure}
    \begin{subfigure}{0.45\textwidth}
        \includegraphics[width=0.9\linewidth]{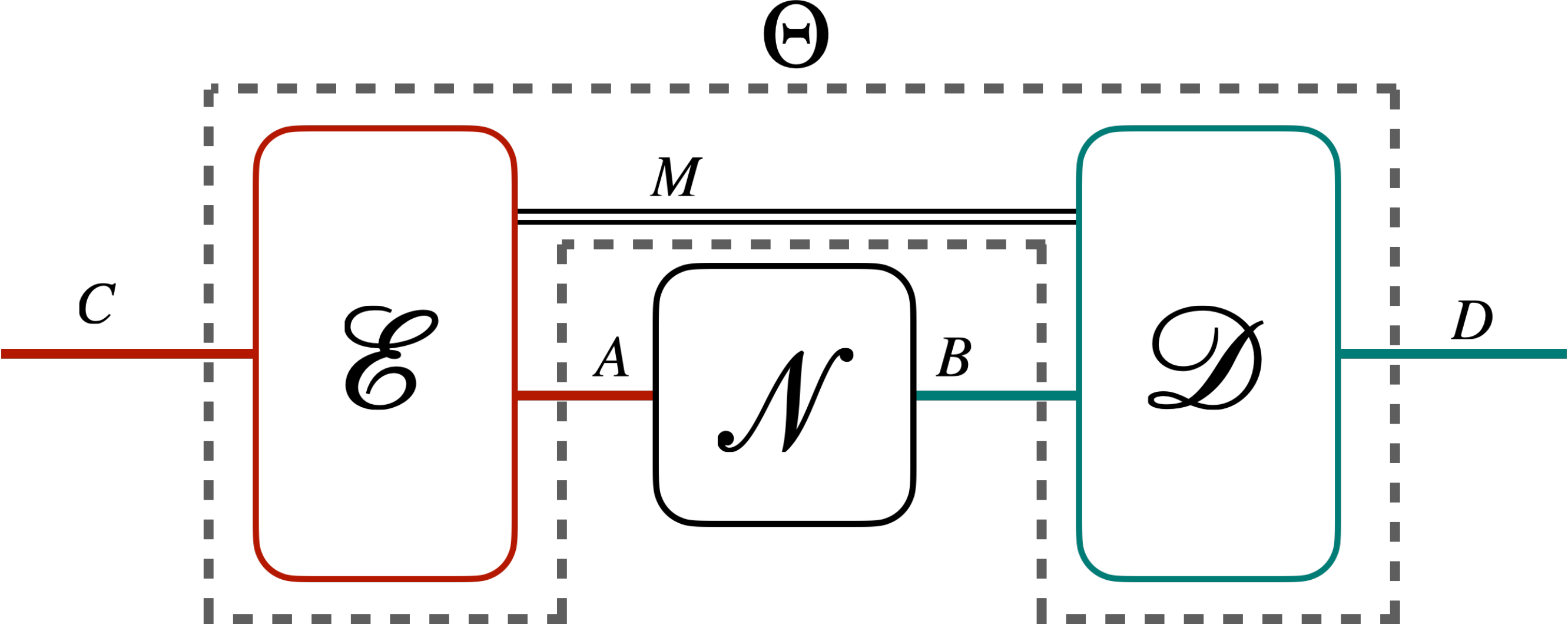}
        \caption{\centering\label{fig:one_WL_superchannel} One-way LOCC superchannel}
    \end{subfigure}
    \caption{(a) Schematic diagram of a one-way LOCC channel $\mathcal{L}^{\to}_{AB\to A'B'}$, as defined in~\eqref{eq:one_WL_ch_defn}, acting on a bipartite state $\rho_{AB}$. (b) Schematic diagram of a one-way LOCC superchannel $\Theta_{(A\to B)\to (C\to D)}$, defined in~\eqref{eq:superchannel_fund_theo} with $M$ being a classical system, acting on a channel $\mathcal{N}_{A\to B}$.}
\end{figure}

A one-way LOCC channel is a quantum channel that acts on a bipartite state, and it can be physically described by the following sequence of operations: Say Alice and Bob share a bipartite state $\rho_{AB}$. Alice applies a quantum instrument $\left\{\mathcal{E}^x_{A\to A'}\right\}_{x\in \mathcal{X}}$ on her system, where $x$ is a classical label corresponding to the outcome of the instrument. She sends the classical label $x$ to Bob through an ideal classical channel. Bob then applies a quantum channel $\mathcal{D}^x_{B\to B'}$ based on the label $x$ that he received from Alice (see Figure~\ref{fig:one_WL_channel}). A one-way LOCC channel can be mathematically described as follows:
\begin{equation}\label{eq:one_WL_ch_defn}
    \mathcal{L}^{\to}_{AB\to A'B'} = \sum_{x\in \mathcal{X}}\mathcal{D}^x_{A\to B}\otimes \mathcal{E}^x_{A\to A'},
\end{equation}
where $\left\{\mathcal{E}^x_{A\to A'}\right\}_{x \in \mathcal{X}}$ is a quantum instrument and $\left\{\mathcal{D}^x_{B\to B'}\right\}_{x\in \mathcal{X}}$ is a set of quantum channels.

\subsection{Quantum superchannels}\label{sec:superchannels}

A quantum superchannel $\Theta_{(A\to B)\to (C\to D)}$ is a linear map that transforms a quantum channel to another quantum channel. Since quantum channels are completely positive and trace-preserving maps, a superchannel is a completely CPTP-preserving map (see Definition~\ref{def:superch} for a formal definition). It can be perceived as a mathematical model for any physical transformation a quantum channel can undergo, as long as the resulting map is also a quantum channel. Quantum superchannels were introduced in~\cite{Chiribella_2008} and further investigated in~\cite{Gour_2019}, both of which provide a detailed discussion. Here, we include a brief discussion on superchannels relevant to this work.

\begin{definition}[Superchannel]
\label{def:superch}
Let $\mathcal{T}_{A\to B} : \mathcal{L}(A) \to \mathcal{L}(B)$ be a linear map. Let the space of all such maps be denoted by $\mathbb{L}^{AB}$. 
A linear map $\Theta_{(A\to B)\to (C\to D)}: \mathbb{L}^{AB} \to \mathbb{L}^{CD}$ is a superchannel if 
\begin{enumerate}
    \item It is completely CP preserving; i.e.,    $(\operatorname{id}_{(E)\to(E)}\otimes\Theta_{(A\to B)\to (C\to D)}) (\mathcal{T}_{EA \to EB})$ is a CP map if $\mathcal{T}_{EA \to E'B}$ is a CP map, for all possible dimensions of system E.
    
    \item It is TP preserving; i.e.,
        $\Theta_{(A\to B)\to (C\to D)}(\mathcal{T}_{A\to B})$  is a TP map if $\mathcal{T}_{A\to B}$ is a TP map.
\end{enumerate}
\end{definition}

According to the fundamental theorem of superchannels~\cite{Chiribella_2008}, every superchannel can be decomposed into a pre-processing channel $\mathcal{E}_{C\to MA}$ and a post-processing channel $\mathcal{D}_{MB\to D}$ connected by a memory system $M$. That is, for every superchannel $\Theta_{(A\to B)\to (C\to D)}$, there exist $\mathcal{E}_{C\to MA}$ and $\mathcal{D}_{MB\to D}$ such that
\begin{equation}\label{eq:superchannel_fund_theo}
    \Theta_{(A\to B)\to (C\to D)}(\mathcal{N}_{A\to B}) = \mathcal{D}_{MB\to D}\circ\mathcal{N}_{A\to B}\circ\mathcal{E}_{C\to MA}.
\end{equation}
Quantum superchannels are a powerful tool in analyzing communication tasks over a quantum channel, as any communication protocol can be modeled as a superchannel.

A special class of superchannels that is relevant to this work is the class of one-way LOCC superchannels. This is the set of superchannels that can be simulated by local operations and forward classical communication (see Figure~\ref{fig:one_WL_superchannel}). In particular, if system $M$ in~\eqref{eq:superchannel_fund_theo} is set to be a classical system, then every superchannel $\Theta_{(A\to B)\to (C\to D)}$ that has the form given in~\eqref{eq:superchannel_fund_theo} is a one-way LOCC superchannel.

\section{One-way secret-key distillation}\label{sec:key_dist_setup}

In principle, the existence of a quantum network ensures unconditional secret key distribution~\cite{BB84, Ekert91}. Alice and Bob can often manipulate a shared entangled state by means of local operations to obtain a maximally classically-correlated state that is completely independent of the system of any eavesdropper. The maximally-classically correlated state can then be used as a key to encrypt some classical data that Alice intends to send to Bob using the one-time-pad scheme. Since the eavesdropper is independent of the key shared between Alice and Bob, it is impossible for them to decode the encrypted data irrespective of their computational power. The state thus established between Alice, Bob, and an eavesdropper is called a tripartite key state, and it can be expressed in the following form:
\begin{equation}\label{eq:tripartite_key_defn}
    \tau_{ABE} = \frac{1}{K}\sum_{i=0}^{k-1} |i\rangle\!\langle i|_A\otimes |i\rangle\!\langle i|_B\otimes \sigma_E,
\end{equation}
where $\sigma_E$ is an arbitrary quantum state.

In general, the task of secret-key distillation is a three party problem due to the involvement of the eavesdropper. However, a crucial discovery was made in~\cite{HHHO05,HHHO09}, establishing an equivalence between the tripartite scenario and a bipartite scenario involving the concept of a \textit{private state}. In the next section, we briefly review the structure of bipartite private states, which plays an important role in this work.

\subsection{Bipartite private states}

A bipartite private state $\gamma^k_{ABA'B'}$ is the most general form of a quantum state that furnishes a secret key of $\log_2 k$ bits upon local measurements of systems $A$ and $B$. Therefore, to establish a secret key whose secrecy is ensured by the laws of quantum mechanics, one needs to establish a bipartite private state.

It was shown in~\cite{HHHO05, HHHO09} that a private state $\gamma^k_{ABA'B'}$ holding $\log_2 k$ secret key bits can always be written in the following form:
\begin{equation}\label{eq:priv_st_defn}
	\gamma^k_{ABA'B} = V_{ABA'B'}\left(\Phi^k_{AB}\otimes\tau_{A'B'}\right)V^{\dagger}_{ABA'B'},
\end{equation}
where $\Phi^k_{AB}$ is a maximally entangled state of Schmidt rank $k$, the operator $\tau_{A'B'}$ is an arbitrary bipartite state, and $V_{ABA'B'}$ is called a twisting unitary, defined as follows:
\begin{equation}\label{eq:twisting_unitary_defn}
	V_{ABA'B'} = \sum_{i=0}^{k-1} |i\rangle\!\langle i|_A\otimes I_B \otimes U^{i}_{A'B'},
\end{equation}
with $U^i_{A'B'}$ being some unitary operator. The private state in~\eqref{eq:priv_st_defn} can then be written more explicitly as follows:
\begin{equation}
    \gamma^k_{ABA'B} = \sum_{i,j=0}^{k-1} |i\rangle\!\langle j|_A\otimes |i\rangle\!\langle j|_B\otimes U^i_{A'B'}\tau_{A'B'}\left(U^j_{A'B'}\right)^{\dagger}.
\end{equation}
Systems $A$ and $B$ are said to be the key systems, and systems $A'$ and $B'$ are said to be the shield systems.

\subsection{One-shot, one-way distillable key of a state}

Let us now consider the task of distilling secret keys from a bipartite state shared between two parties using local operations and one-way classical communication. Since the distillation of a secret key is equivalent to the distillation of a bipartite private state, we consider the task of distilling a private state from bipartite resource state using one-way LOCC channels. However, distilling private states exactly, or even probabilistically, is a very restrictive task~\cite{SW24}, and one must relax this setting to allow for any practical distillation of secret keys.

The task of distilling approximate secret keys using one-way LOCC channels has been a subject of significant interest in several prior works~\cite{DW05, RR12, KKGW21} (see Figure~\ref{fig:key_distillation} for a schematic diagram). The error in distillation of secret keys from a state $\rho_{AB}$ using a one-way LOCC channel $\mathcal{L}^{\to}_{AB\to A'B'A''B''}$ is measured by the infidelity, defined as
\begin{equation}\label{eq:key_distillation_error}
    p_{\operatorname{err}}\!\left(\mathcal{L}^{\to};\rho_{AB}\right) \coloneqq \inf_{\gamma^k_{A'B'A''B''}}\left(1-F\!\left(\gamma^k_{A'B'A''B''},\mathcal{L}^{\to}_{AB\to A'B'A''B''}\!\left(\rho_{AB}\right)\right)\right),
\end{equation}
where the infimum is over all bipartite private states holding $\log_2 k$ secret bits and $F(\cdot,\cdot)$ denotes the fidelity between two states, which is defined as follows:
\begin{equation}
    F\!\left(\rho,\sigma\right) \coloneqq \left(\operatorname{Tr}\!\left[\sqrt{\sqrt{\sigma}\rho\sqrt{\sigma}}\right]\right)^2.
\end{equation}
The reason for choosing infidelity to be the metric of error is motivated from the \lq$\gamma^k$-privacy test\rq~\cite{HHHLO08_QP, HHHLO08} (see~\cite[Section~15.1.3]{KW24} for a detailed discussion.
The $\gamma^k$-privacy test is a POVM $\left\{\Pi^{\gamma}_{ABA'B'}, I_{ABA'B'}-\Pi^{\gamma}_{ABA'B'}\right\}$, where 
\begin{equation}\label{eq:priv_test_defn}
    \Pi^{\gamma}_{ABA'B'} \coloneqq V_{ABA'B'}\left(\Phi^k_{AB}\otimes I_{A'B'}\right)V^{\dagger}_{ABA'B'},
\end{equation}

\begin{figure}
    \centering
    \includegraphics[width=0.65\linewidth]{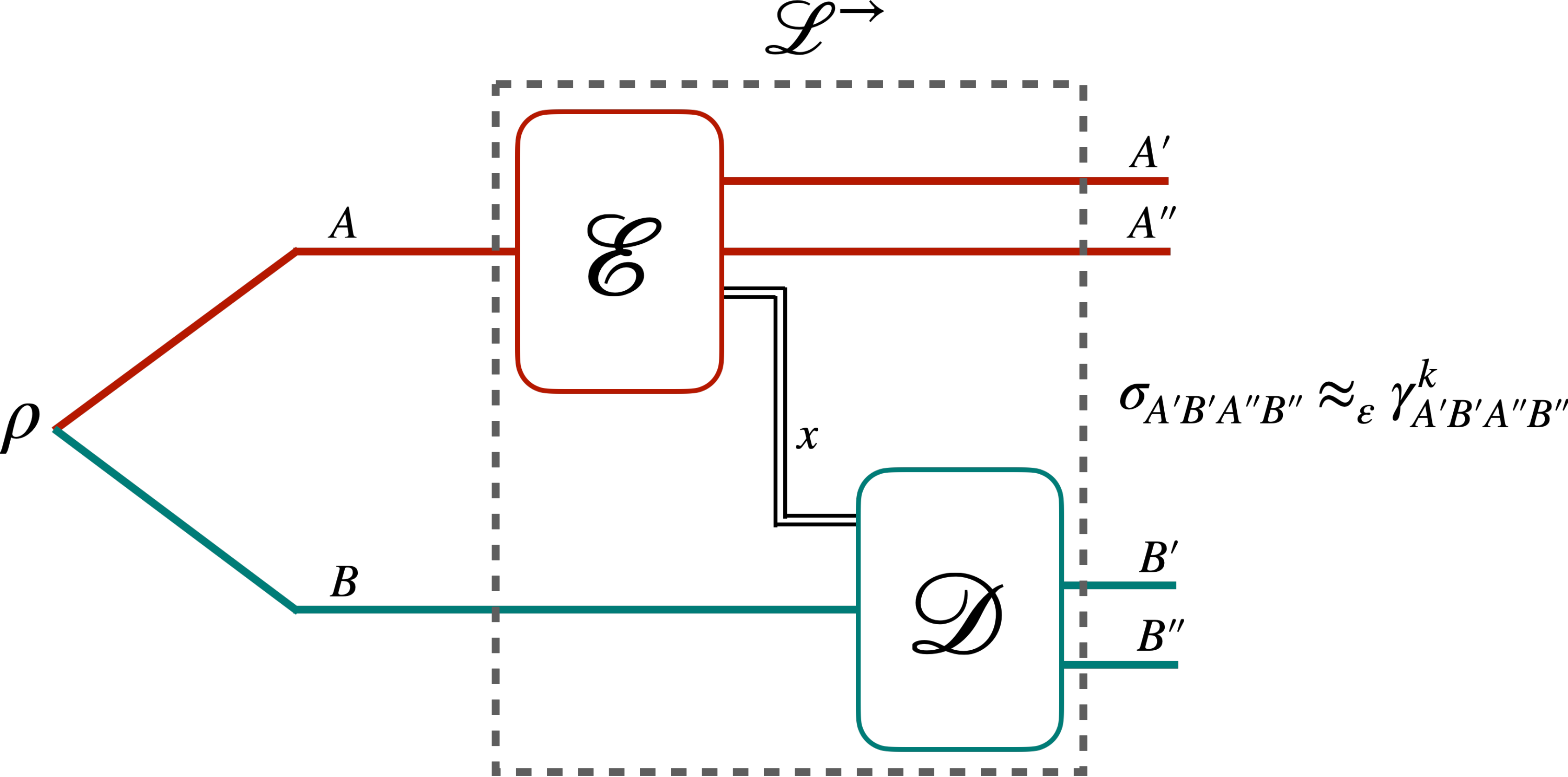}
    \caption{Schematic diagram of approximate distillation of a bipartite private state $\gamma^k_{A'B'A''B''}$ from a state $\rho_{AB}$ using a one-way LOCC channel $\mathcal{L}^{\to}_{AB\to A'B'A''B''}$, where the error in distillation, denoted by $\varepsilon$, is defined in~\eqref{eq:key_distillation_error}.}
    \label{fig:key_distillation}
\end{figure}

The one-shot, one-way distillable key of a state is the quantity that describes the number of bits of  secret key that can be established between two parties holding a resource state $\rho_{AB}$, with some error tolerance $\varepsilon$, when using one-way LOCC channels.
\begin{definition}[One-shot, one-way distillable key]
    For $\varepsilon\in[0,1]$, the one-shot, one-way distillable key of a state $\rho_{AB}$ is defined as follows:
\begin{equation}\label{eq:1W_dist_key_st_defn}
	K^{\varepsilon,\to}_D\!\left(\rho_{AB}\right) \coloneqq \sup_{\substack{k\in \mathbb{N}, \gamma^k_{A'B'A''B''},\\ \mathcal{L}^{\to}\in \operatorname{1WL}}}\left\{\begin{array}{c}
	\log_2 k:\\
	F\!\left(\mathcal{L}^{\to}_{AB\to A'B'A''B''}\!\left(\rho_{AB}\right),\gamma^k_{A'B'A''B''}\right) \ge 1-\varepsilon
	\end{array} \right\},
\end{equation}
where $\operatorname{1WL}$ stands for the set of all one-way LOCC channels.
\end{definition}

In the above definition, the supremum is over every positive integer $k$, every private state $\gamma^k_{A'B'A''B''}$ holding $\log_2 k$ secret key bits, and every one-way LOCC channel $\mathcal{L}^{\to}_{AB\to A'B'A''B''}$.

\section{Two-extendibility}\label{sec:two_extendibility}

In this section we review the concepts of two-extendibility for states and channels. The resource theory of $k$-extendibility was developed in~\cite{KDWW19, KDWW21} as a semidefinite relaxation of the resource theory of entanglement, and the resource theory of two-extendibility is a special case when $k = 2$. A state-dependent resource theory of extendibility was developed in~\cite{WWW24}, which is the framework that we employ to study the task of secret-key distillation with one-way LOCC channels.

\subsection{Two-extendible states and channels}

Let us first discuss two-extendible states~\cite{DPS04}, also known as \textit{symmetrically extendible states}~\cite{Wer89}, \textit{two-shareable states}~\cite{Yang06}, and \textit{anti-degradable states}~\cite{LDS18}. 

\begin{definition}[Two-extendible state]
    A bipartite state $\rho_{AB}$ is said to be two-extendible if there exists a state $\omega_{ABE}$ such that the following conditions hold:
    \begin{equation}\label{eq:2_ext_st_ext}
        \operatorname{Tr}_{E}\!\left[\omega_{ABE}\right] = \rho_{AB},
    \end{equation}
    and
    \begin{equation}\label{eq:2_ext_st_perm}
        W_{BE}\!\left(\omega_{ABE}\right)\!W^{\dagger}_{BE} = \omega_{ABE},
    \end{equation}
    where the unitary swap operator $W$ is defined as follows:
    \begin{equation}\label{eq:SWAP_defn}
        W_{BE} \coloneqq \sum_{k,k' = 0}^{d-1} |k\rangle\!\langle k'|_B\otimes |k'\rangle\!\langle k|_E.
    \end{equation}
\end{definition}
Note that the system $E$ should be isomorphic to the system $B$ for~\eqref{eq:2_ext_st_ext} and~\eqref{eq:2_ext_st_perm} to hold. The state $\omega_{ABE}$ is said to be a two-extension of the state $\rho_{AB}$ if the conditions in~\eqref{eq:2_ext_st_ext} and~\eqref{eq:2_ext_st_perm} are met. 

\begin{remark}
    All bipartite separable states are two-extendible. Consider an arbitrary bipartite separable state $\rho_{AB}\coloneqq \sum_{x\in \mathcal{X}} p(x)\sigma^x_A\otimes \tau^x_{B}$, where $\{p(x)\}_{x\in \mathcal{X}}$ is a probability distribution and $\left\{\sigma^x_A\right\}_{x\in \mathcal{X}}$ and $\left\{\tau^x_B\right\}_{x\in \mathcal{X}}$ are sets of quantum states. One can always construct the following two-extension of $\rho_{AB}$:
    \begin{equation}
        \omega_{ABE} \coloneqq \sum_{x\in \mathcal{X}} p(x)\sigma^x_A\otimes \tau^x_{B}\otimes \tau^x_{E},
    \end{equation}
    which shows that all bipartite separable states are two-extendible. However, all two-extendible states are not separable. A simple example is the following isotropic state~\cite{HH99}:
    \begin{equation}
        \zeta_{AB} = \frac{5}{8}\Phi_{AB} + \frac{1}{8}\left(I_{AB}-\Phi_{AB}\right) = \frac{1}{2}\Phi_{AB} + \frac{1}{8}I_{AB},
    \end{equation}
    where $A$ and $B$ are two-dimensional systems, $\Phi_{AB}$ is a two-qubit maximally entangled state, and $I_{AB}$ is the identity operator. This state is two-extendible with the following two-extension:
    \begin{equation}\label{eq:2_ext_st_example}
        \omega_{ABE} = \frac{1}{4}\Phi_{AB}\otimes I_{E} + \frac{1}{4}\Phi_{AE}\otimes I_{B},
    \end{equation}
    but $\zeta_{AB}$ has non-zero distillable entanglement~\cite{HH99}, and hence, it is not a separable state.
\end{remark}

A family of semidefinite relaxations of one-way LOCC channels was developed in~\cite{KDWW19,KDWW21}, called $k$-extendible channels. The set of $k$-extendible channels serves as the set of free channels in~\cite{KDWW19,KDWW21}. Setting $k=2$, we obtain the set of two-extendible channels, which serves as the set of free operations in the state-dependent resource theory of unextendibility developed in~\cite{WWW24}. Here we briefly discuss the idea of two-extendible channels.

\begin{definition}[Two-extendible channel]
    A bipartite channel $\mathcal{N}_{AB\to A'B'}$ is said to be two-extendible if there exists a channel $\mathcal{P}_{ABE\to A'B'E'}$ such that the following conditions hold:
    \begin{equation}\label{eq:2_ext_ch_ext}
        \operatorname{Tr}_{E'}\circ\mathcal{P}_{ABE\to A'B'E'} = \mathcal{N}_{AB\to A'B'}\otimes \operatorname{Tr}_{E},
    \end{equation}
    and
    \begin{equation}\label{eq:2_ext_ch_perm}
        \mathcal{W}_{B'E'}\circ\mathcal{P}_{ABE\to A'B'E'} = \mathcal{P}_{ABE\to A'B'E'}\circ \mathcal{W}_{BE},
    \end{equation}
    where $\mathcal{W}_{BE} \coloneqq W_{BE}(\cdot)\!W_{BE}^{\dagger}$ with $W_{BE}$ defined in~\eqref{eq:SWAP_defn}. The conditions in~\eqref{eq:2_ext_ch_ext} and~\eqref{eq:2_ext_ch_perm} are known as the channel extension condition and the permutation covariance condition, respectively.
\end{definition}
The channel $\mathcal{P}_{ABE\to A'B'E'}$ is said to be a two-extension of  $\mathcal{N}_{AB\to A'B'}$ if the channel extension and permutation covariance conditions, mentioned in~\eqref{eq:2_ext_ch_ext} and~\eqref{eq:2_ext_ch_perm} respectively, hold.

If a channel $\mathcal{N}_{AB\to A'B'}$ is two-extendible, then it is non-signaling from $B$ to $A$~\cite[Appendix A]{HSW23}; that is,
\begin{equation}
    \operatorname{Tr}_{B'}\circ\mathcal{N}_{AB\to A'B'} = \operatorname{Tr}_{B'}\circ\mathcal{N}_{AB\to A'B'}\circ \mathcal{R}^{\pi}_B,
\end{equation}
where $\mathcal{R}^{\pi}_B$ is a channel that traces out the input and replaces it with a maximally mixed state. Moreover, all one-way LOCC channels are two-extendible, as can be seen from a simple construction. An arbitrary one-way LOCC channel can be written in the following form:
    \begin{equation}
        \mathcal{N}_{AB\to A'B'} = \sum_{x\in \mathcal{X}} \mathcal{E}^x_{A\to A'}\otimes \mathcal{F}^x_{B\to B'},
    \end{equation}
    where $\left\{\mathcal{E}^x_{A\to A'}\right\}_{x \in \mathcal{X}}$ is a quantum instrument and $\left\{\mathcal{F}^x_{B\to B'}\right\}_{x\in \mathcal{X}}$ is a set of quantum channels. A two-extension of this channel can be constructed as follows:
    \begin{equation}
        \mathcal{P}_{ABE\to A'B'E'} = \sum_{x\in \mathcal{X}} \mathcal{E}^x_{A\to A'}\otimes \mathcal{F}^x_{B\to B'}\otimes \mathcal{F}^x_{E\to E'}.
    \end{equation}
    Hence, every one-way LOCC channel is two-extendible.

    On the contrary, all two-extendible channels cannot be simulated by local operations and one-way classical communication. Consider the example of a bipartite channel that traces out the input and replaces it with the state mentioned in~\eqref{eq:2_ext_st_example}, which can be mathematically represented as follows:
    \begin{equation}\label{eq:2_ext_ch_example}
        \mathcal{N}_{AB\to A'B'}\!\left(\cdot\right) = \operatorname{Tr}\!\left[\cdot\right]\left(\frac{1}{2}\Phi_{AB} + \frac{1}{2}\frac{I_{AB}}{4}\right).
    \end{equation}
    Since this channel is capable of taking a separable state as input and establishing an entangled state, it is not a one-way LOCC channel. However, one can construct the following two-extension of the channel:
    \begin{equation}
        \mathcal{P}_{ABE\to A'B'E'}\!\left(\cdot\right) \coloneqq \operatorname{Tr}\!\left[\cdot\right]\left(\frac{1}{4}\Phi_{A'B'}\otimes I_{E'} + \frac{1}{4}\Phi_{A'E'}\otimes I_{B'}\right).
    \end{equation}
    Therefore, the channel defined in~\eqref{eq:2_ext_ch_example} is an example of a two-extendible channel that is not a one-way LOCC channel.

\subsection{Unextendible entanglement of states}

Let $\mathbb{R}$ denote the field of real numbers. A generalized divergence~\cite{PV10} is a functional $\mathbf{D}\colon \mathcal{S}(A)\times \mathcal{S}(A) \to \mathbb{R}\cup \{+\infty\}$, such that, for arbitrary states $\rho_A,\sigma_A\in \mathcal{S}(A)$ and an arbitrary channel $\mathcal{N}_{A\to B}$, the data-processing inequality holds
\begin{equation}
    \mathbf{D}\!\left(\rho_A\Vert\sigma_A\right) \ge \mathbf{D}\!\left(\mathcal{N}_{A\to B}(\rho_A)\Vert\mathcal{N}_{A\to B}(\sigma_A)\right).
\end{equation}

Some examples of divergences that commonly appear in quantum information theory are the quantum relative entropy~\cite{Ume62}, Petz-R\'enyi relative entropies~\cite{Petz86}, sandwiched R\'enyi relative entropies~\cite{MDSST13, WWY14}, and geometric R\'enyi relative entropies~\cite{Mat13, FF21}.

The generalized unextendible entanglement of a bipartite  state has been defined in~\cite{WWW24}. We include a short discussion on the topic for necessary development.

\begin{definition}[\cite{WWW24}]
\label{def:unext-ent}
    The generalized unextendible entanglement of a bipartite state $\rho_{AB}$, induced by a generalized divergence $\mathbf{D}$ between states, is defined as
        \begin{equation}
        \label{eq:gen_unext_ent_states}
        \mathbf{E}^u(\rho_{AB}) \coloneqq  \inf_{\omega_{ABE}\in \mathcal{S}\left(ABE\right)} \frac{1}{2}\Big\{\mathbf{D}\!\left(\rho_{AB}\Vert\operatorname{Tr}_{B}\!\left[\omega_{ABE}\right]\right) \colon \operatorname{Tr}_{E}\!\left[\omega_{ABE}\right] = \rho_{AB}\Big\},
    \end{equation}    
    where the optimization is over every state $\rho_{ABE}$ that is an extension of the state $\rho_{AB}$. We also adopt the following alternative notations sometimes because they can be helpful to make the bipartition $A|B$ clear:
    \begin{equation}
        \mathbf{E}^u(A;B)_{\rho} \equiv \mathbf{E}^u(\rho_{A:B}) \equiv \mathbf{E}^u(\rho_{AB}).
    \end{equation}
\end{definition}

Let us define the following set of extensions of a bipartite state~$\rho_{AB}$:
\begin{equation}\label{eq:extensions_state}
    \operatorname{Ext}\!\left(\rho_{AB}\right) \coloneqq \left\{\omega_{ABE}: \operatorname{Tr}_{BE}\!\left[\omega_{ABE}\right] = \rho_{AB}\right\},
\end{equation}
where $E$ is isomorphic to $B$. This allows us to write the generalized unextendible entanglement of  $\rho_{AB}$, induced by the generalized divergence $\mathbf{D}$, as
\begin{equation}
    \mathbf{E}^u(\rho_{AB}) = \inf_{\omega_{ABE}\in \operatorname{Ext}\left(\rho_{AB}\right)}\frac{1}{2}\mathbf{D}\!\left(\rho_{AB}\Vert\operatorname{Tr}_{E}\!\left[\omega_{ABE}\right]\right).
\end{equation}
Alternatively, one can define the set of state-dependent free states as follows:
\begin{equation}\label{eq:free_states}
    \mathcal{F}\!\left(\rho_{AB}\right) \coloneqq \left\{\operatorname{Tr}_{B}\!\left[\omega_{ABE}\right]:\omega_{ABE}\in \operatorname{Ext}\!\left(\rho_{AB}\right)\right\}.
\end{equation}
The generalized unextendible entanglement of the state $\rho_{AB}$ can then be written as follows:
\begin{equation}
    \mathbf{E}^u(\rho_{AB}) = \inf_{\sigma_{AB}\in \mathcal{F}\left(\rho_{AB}\right)}\frac{1}{2}\mathbf{D}\!\left(\rho_{AB}\Vert\sigma_{AB}\right).
\end{equation}
\begin{theorem}[\cite{WWW24}]\label{theo:unext_ent_st_monotonicity}
    The generalized unextendible entanglement of bipartite state does not increase under the action of a two-extendible channel. That is,
    \begin{equation}
        \mathbf{E}^u\!\left(\rho_{AB}\right) \ge \mathbf{E}^u\!\left(\mathcal{N}_{AB\to A'B'}\!\left(\rho_{AB}\right)\right),
    \end{equation}
    where $\mathcal{N}_{AB\to A'B'}$ is a two-extendible channel.
\end{theorem}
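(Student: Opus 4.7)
The plan is to use the equivalent reformulation $\mathbf{E}^u(\rho_{AB}) = \inf_{\sigma_{AB}\in \mathcal{F}(\rho_{AB})} \frac{1}{2}\mathbf{D}(\rho_{AB}\Vert\sigma_{AB})$ from~\eqref{eq:free_states} and to exhibit, for every $\sigma_{AB}\in \mathcal{F}(\rho_{AB})$, a corresponding free state of $\mathcal{N}_{AB\to A'B'}(\rho_{AB})$ related to $\sigma_{AB}$ by a channel. The natural candidate is $\sigma'_{A'B'} \coloneqq \mathcal{N}_{AB\to A'B'}(\sigma_{AB})$, since then the inequality $\mathbf{D}(\rho_{AB}\Vert\sigma_{AB}) \ge \mathbf{D}(\mathcal{N}(\rho_{AB})\Vert \mathcal{N}(\sigma_{AB}))$ is immediate from the data-processing property of $\mathbf{D}$. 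The core of the proof is therefore to verify that this candidate lies in $\mathcal{F}(\mathcal{N}(\rho_{AB}))$.

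To establish this, I would fix an extension $\omega_{ABE}$ of $\rho_{AB}$ realizing $\sigma_{AB} = \operatorname{Tr}_B[\omega_{ABE}]$ and lift it to the output systems via a two-extension $\mathcal{P}_{ABE\to A'B'E'}$ of $\mathcal{N}_{AB\to A'B'}$, whose existence is guaranteed by the two-extendibility hypothesis. Setting $\omega'_{A'B'E'}\coloneqq \mathcal{P}_{ABE\to A'B'E'}(\omega_{ABE})$, the channel-extension condition~\eqref{eq:2_ext_ch_ext} yields $\operatorname{Tr}_{E'}[\omega'_{A'B'E'}] = \mathcal{N}(\operatorname{Tr}_E[\omega_{ABE}]) = \mathcal{N}(\rho_{AB})$, so $\omega'_{A'B'E'}$ is an extension of $\mathcal{N}(\rho_{AB})$. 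To identify its other marginal, combine the permutation-covariance condition~\eqref{eq:2_ext_ch_perm} with~\eqref{eq:2_ext_ch_ext} and the natural identification of $\operatorname{Tr}_{B'}$ with $\operatorname{Tr}_{E'}\circ \mathcal{W}_{B'E'}$ under $B'\cong E'$:
\begin{align}
    \operatorname{Tr}_{B'}[\omega'_{A'B'E'}] &= \operatorname{Tr}_{E'}\!\left[\mathcal{W}_{B'E'}(\mathcal{P}(\omega_{ABE}))\right] = \operatorname{Tr}_{E'}\!\left[\mathcal{P}(\mathcal{W}_{BE}(\omega_{ABE}))\right] \notag \\
    &= \mathcal{N}\!\left(\operatorname{Tr}_E[\mathcal{W}_{BE}(\omega_{ABE})]\right) = \mathcal{N}(\operatorname{Tr}_B[\omega_{ABE}]) = \mathcal{N}(\sigma_{AB}),
\end{align}
so $\mathcal{N}(\sigma_{AB})\in \mathcal{F}(\mathcal{N}(\rho_{AB}))$ as desired.

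Combining data processing with the above, for every $\sigma_{AB}\in \mathcal{F}(\rho_{AB})$ we obtain $\mathbf{D}(\rho_{AB}\Vert\sigma_{AB}) \ge \mathbf{D}(\mathcal{N}(\rho_{AB})\Vert\mathcal{N}(\sigma_{AB})) \ge 2\,\mathbf{E}^u(\mathcal{N}(\rho_{AB}))$, and taking the infimum over $\sigma_{AB}\in \mathcal{F}(\rho_{AB})$ yields the desired inequality. The main obstacle is the bookkeeping in the displayed chain above: one must carefully combine the permutation-covariance condition, which swaps $B\leftrightarrow E$ on the input and $B'\leftrightarrow E'$ on the output of $\mathcal{P}$, with the channel-extension condition, which directly specifies only $\operatorname{Tr}_{E'}\circ \mathcal{P}$ and not $\operatorname{Tr}_{B'}\circ \mathcal{P}$, in order to establish the key operator identity $\operatorname{Tr}_{B'}\circ \mathcal{P} = \mathcal{N}\circ\operatorname{Tr}_B$ on extensions. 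Once this identity is in hand, a single application of data processing closes the argument.
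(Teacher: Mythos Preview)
Your proposal is correct and is precisely the standard argument for this monotonicity result. The paper itself does not supply a proof but simply refers the reader to \cite[Theorem~2]{WWW24}; your outline reconstructs that argument faithfully, combining the channel-extension condition~\eqref{eq:2_ext_ch_ext} and the permutation-covariance condition~\eqref{eq:2_ext_ch_perm} to show $\mathcal{N}(\sigma_{AB})\in\mathcal{F}(\mathcal{N}(\rho_{AB}))$ and then invoking data processing.
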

\begin{proof}
	See Theorem 2 in~\cite{WWW24}.
\end{proof}

\medskip

A direct consequence of Theorem~\ref{theo:unext_ent_st_monotonicity} is that the generalized unextendible entanglement of a bipartite state does not increase under the action of one-way LOCC channels.

The generalized unextendible entanglement provides a framework for quantifying the unextendibility of a bipartite state $\rho_{AB}$ with respect to the system $B$. A different measure for unextendibility was considered in~\cite{KDWW19, KDWW21}, where the divergence was measured from the fixed set of two-extendible states. However, in Definition~\ref{def:unext-ent}, the divergence is measured by means of a set of states that depend on the input state itself. Although both measures are equal to the minimal possible value of $\mathbf{D}$ when $\rho_{AB}$ is two-extendible, they are not equal in general.

The unextendible entanglement is a measure of entanglement between two systems, and as such, it is expected to obtain its maximum value for the maximally entangled state. This is indeed the case, as is evident from the following argument. An arbitrary bipartite state $\rho_{AB}$ can be established between Alice and Bob with the help of a maximally entangled state $\Phi^d_{A_0B_0}$ of sufficiently large Schmidt rank and a one-way LOCC channel, where $\operatorname{dim}(A_0) = \operatorname{dim}(B_0) = \min\{\operatorname{dim}(A),\operatorname{dim}(B)\}$. A simple protocol to perform this transformation is as follows: Alice prepares the state $\rho_{AA'}$ locally. We can assume $\operatorname{dim}(A')\le \operatorname{dim}(A)$ without loss of generality. She uses the maximally entangled state $\Phi^d_{A_0B_0}$ to implement the teleportation protocol and send the state on system $A'$ to Bob, thus establishing the state $\rho_{AB}$ between Alice and Bob. The aforementioned protocol can be mathematically represented as the following one-way LOCC channel~\cite[Chapter~5]{KW24}:
    \begin{equation}\label{eq:teleportation_protocol}
        \mathcal{L}^{\rho,\to}_{A_0B_0\to AB}\!\left(\Phi^d_{A_0B_0}\right) = \sum_{x,z=0}^{d-1} \operatorname{Tr}_{A_0A'}\!\left[\Phi^{z,x}_{A_0A'}W^{z,x}_{B_0}\left(\rho_{AA'}\otimes\Phi_{A_0B_0}\right)\left(W^{z,x}_{B_0}\right)^{\dagger}\right],
    \end{equation}
    where $\{W^{z,x}\}_{z,x}$ is the set of Heisenberg--Weyl operators and $\Phi^{z,x}_{A_0A'} \coloneqq W^{z,x}_{A_0}\Phi^d_{A_0A'}(W^{z,x}_{A_0})^\dag$. Since the generalized unextendible entanglement of a bipartite state does not increase under the action of a one-way LOCC channel, the following inequality holds for every state $\rho_{AB}$:
    \begin{equation}\label{eq:unext_ent_le_max_ent}
        \mathbf{E}^u\!\left(\rho_{AB}\right) = \mathbf{E}^u\!\left(\mathcal{L}^{\rho,\to}_{A_0B_0\to AB}\!\left(\Phi^d_{A_0B_0}\right)\right) \le \mathbf{E}^u\!\left(\Phi^d_{A_0B_0}\right),
    \end{equation}
    where $\mathcal{L}^{\rho,\to}_{A_0B_0\to AB}$ is the channel defined in~\eqref{eq:teleportation_protocol} and $d\coloneqq \min\{\operatorname{dim}(A),\operatorname{dim}(B)\}$.

\subsubsection{Smooth-min unextendible entanglement}

The unextendible entanglement of states was studied in detail in~\cite{WWW24} for several different underlying divergences, with applications in finding efficiently computable upper bounds on the probabilistic and exact one-way distillable entanglement and key  of a state. A stronger no-go theorem for probabilistic key distillation was obtained in~\cite{SW24} using the min-relative entropy as the underlying divergence for the unextendible entanglement. In this work, we are specifically interested in the unextendible entanglement of a state induced by the smooth min-relative entropy to understand the limits of one-shot approximate distillable key of a bipartite state using one-way LOCC channels. As a special case of~\eqref{eq:gen_unext_ent_states}, we define the smooth min-unextendible entanglement as follows:
\begin{equation}
    E^{u,\varepsilon}_{\min}\!\left(\rho_{AB}\right) \coloneqq \inf_{\sigma_{AB}\in \mathcal{F}\left(\rho_{AB}\right)} \frac{1}{2}D^{\varepsilon}_{\min}\!\left(\rho_{AB}\Vert\sigma_{AB}\right),
\end{equation}
where the set $\mathcal{F}\!\left(\rho_{AB}\right)$ was defined in~\eqref{eq:free_states} and
\begin{equation}\label{eq:smooth_min_rel_ent_defn}
    D^{\varepsilon}_{\min}\!\left(\rho\Vert\sigma\right) \coloneqq -\log_2 \inf_{0\le \Lambda \le I}\left\{\operatorname{Tr}\!\left[\Lambda\sigma\right]: \operatorname{Tr}\!\left[\Lambda\rho\right] \ge 1-\varepsilon\right\}
\end{equation}
is the smooth min-relative entropy~\cite{BD10,BD11}, also known as the hypothesis testing relative entropy~\cite{WR12}.

We will often use the following quantity in our discussions:
\begin{equation}\label{eq:J_hypo_test_defn}
    J^{\varepsilon}_{\min}\!\left(\rho_{AB}\right) \coloneqq 2^{-2E^{u,\varepsilon}_{\min}\!\left(\rho_{AB}\right)},
\end{equation}
which we will abbreviate as $J^{\varepsilon}_{\min}$ when the state it acts upon is obvious from the context. The quantity $J^{\varepsilon}_{\min}\left(\rho_{AB}\right)$ can alternatively be written as follows:
\begin{equation}\label{eq:J_eps_interpret_1}
    J^{\varepsilon}_{\min}\!\left(\rho_{AB}\right) = \sup_{\sigma_{AB}\in \mathcal{F}(\rho_{AB})} \inf_{0\le \Lambda \le I}\left\{\operatorname{Tr}\!\left[\Lambda\sigma\right] : \operatorname{Tr}\!\left[\Lambda\rho\right] \ge 1-\varepsilon\right\}.
\end{equation}

The set $\mathcal{F}(\rho_{AB})$ is a convex set of quantum states, and the set $\{\Lambda: 0\le \Lambda \le I: \operatorname{Tr}\!\left[\Lambda\rho\right]\ge 1-\varepsilon\}$ is a convex set of measurement operators for a fixed state $\rho$. Therefore, using Sion's minimax theorem~\cite{Sion58}, we can interchange the supremum and infimum to arrive at the following equality:
\begin{equation}
    J^{\varepsilon}_{\min}\!\left(\rho_{AB}\right) = \inf_{0\le \Lambda \le I}\sup_{\sigma_{AB}\in \mathcal{F}(\rho_{AB})} \left\{\operatorname{Tr}\!\left[\Lambda\sigma\right] : \operatorname{Tr}\!\left[\Lambda\rho\right] \ge 1-\varepsilon\right\}.
\end{equation}
The above equality gives an interpretation for the quantity $J^{\varepsilon}_{\min}(\rho_{AB})$ in the hypothesis testing setting. Given a quantum state $\rho_{AB}$ and an arbitrary state $\sigma_{AB}\in \mathcal{F}(\rho_{AB})$, the quantity $J^{\varepsilon}_{\min}(\rho_{AB})$ denotes the minimum type-II error probability in the worst case when the type-I error probability is guaranteed to be less than $\varepsilon$. Note that the quantity $J^{\varepsilon}_{\min}$ decreases monotonically with increasing smooth-min unextendible entanglement. As such, $J^{\varepsilon}_{\min}$ is smaller for highly entangled states and larger for weakly entangled states, and the minimum value is achieved for the maximally entangled state due to~\eqref{eq:unext_ent_le_max_ent}.

\begin{proposition}\label{prop:max_ent_unext_ent_hypo_test}
    The unextendible entanglement of a maximally entangled state with Schmidt rank $d$ is equal to the following:
    \begin{equation}
        E^{u,\varepsilon}_{\min}\!\left(\Phi^d_{AB}\right) = \log_2d -\frac{1}{2}\log_2(1-\varepsilon).
    \end{equation}
\end{proposition}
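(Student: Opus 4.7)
The plan is to first characterize $\mathcal{F}(\Phi^d_{AB})$ exactly and then compute the relevant smooth-min divergence directly. Since $\Phi^d_{AB}$ is pure, any extension $\omega_{ABE}$ satisfying $\operatorname{Tr}_E[\omega_{ABE}] = \Phi^d_{AB}$ is forced by purification uniqueness to have the product form $\Phi^d_{AB}\otimes\tau_E$ for some state $\tau_E$. Tracing over $B$ and using $\operatorname{Tr}_B[\Phi^d_{AB}] = I_A/d$, one gets $\mathcal{F}(\Phi^d_{AB}) = \{(I_A/d)\otimes\tau_B : \tau_B\in\mathcal{S}(B)\}$ (identifying $E$ with $B$), so the task reduces to evaluating
\[
\inf_{\tau_B}\frac{1}{2}D^{\varepsilon}_{\min}\!\left(\Phi^d_{AB}\,\bigg\Vert\,\frac{I_A}{d}\otimes\tau_B\right).
\]

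For the lower bound $E^{u,\varepsilon}_{\min}(\Phi^d_{AB})\ge \log_2 d - \tfrac{1}{2}\log_2(1-\varepsilon)$, I would plug the test operator $\Lambda = (1-\varepsilon)\Phi^d_{AB}$ into the variational formula for $D^{\varepsilon}_{\min}$ in~\eqref{eq:smooth_min_rel_ent_defn}. This $\Lambda$ is feasible because $0\le\Lambda\le I$ and $\operatorname{Tr}[\Lambda\Phi^d_{AB}] = 1-\varepsilon$, and using the identity $\operatorname{Tr}[\Phi^d_{AB}(I_A\otimes M_B)] = \operatorname{Tr}[M_B]/d$ one computes $\operatorname{Tr}[\Lambda((I_A/d)\otimes\tau_B)] = (1-\varepsilon)/d^2$ independently of $\tau_B$. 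Hence $D^{\varepsilon}_{\min}(\Phi^d_{AB}\,\|\,(I_A/d)\otimes\tau_B)\ge 2\log_2 d - \log_2(1-\varepsilon)$ uniformly in $\tau_B$.

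For the matching upper bound, I would simply choose $\tau_B = I_B/d$, so the candidate state is $I_{AB}/d^2$. For any feasible $\Lambda$, writing $\operatorname{Tr}[\Lambda] = \operatorname{Tr}[\Lambda\Phi^d_{AB}] + \operatorname{Tr}[\Lambda(I-\Phi^d_{AB})]$ and using positivity of both summands gives $\operatorname{Tr}[\Lambda]\ge 1-\varepsilon$, so $\operatorname{Tr}[\Lambda\,I_{AB}/d^2]\ge (1-\varepsilon)/d^2$; this lower bound on the infimum appearing in $D^{\varepsilon}_{\min}$ produces the matching upper bound on the divergence. Combining the two bounds, the infimum over $\tau_B$ is exactly $2\log_2 d - \log_2(1-\varepsilon)$, attained at $\tau_B = I_B/d$; dividing by $2$ yields the claimed formula. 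The only nontrivial ingredient is the pure-state extension structure in the first step; everything else is a one-line calculation exploiting $(\Phi^d_{AB})^2 = \Phi^d_{AB}$ and the normalization of the maximally entangled state, so no minimax exchange or $U\otimes\bar{U}$-twirling argument is needed.
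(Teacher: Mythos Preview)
Your proof is correct and follows essentially the same approach as the paper: both first reduce $\mathcal{F}(\Phi^d_{AB})$ to $\{\pi_A\otimes\tau_B\}$ via purity, establish the lower bound by plugging in $\Lambda=(1-\varepsilon)\Phi^d_{AB}$, and then obtain the matching upper bound at $\tau_B=I_B/d$. The only minor difference is in the upper-bound step: the paper invokes the dual SDP for $D^{\varepsilon}_{\min}$ and plugs in the feasible pair $(\mu,Z)=(1/d^2,0)$, whereas you argue directly in the primal that any feasible $\Lambda$ has $\operatorname{Tr}[\Lambda]\ge 1-\varepsilon$, hence $\operatorname{Tr}[\Lambda\, I_{AB}/d^2]\ge(1-\varepsilon)/d^2$. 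Your route is slightly more self-contained since it avoids citing the dual formulation.
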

\begin{proof}
    See Appendix~\ref{app:max_ent_unext_ent_hypo_test}.
\end{proof}

\begin{proposition}\label{prop:J_eps_range}
    The smooth-min unextendible entanglement of a state $\rho_{AB}$ is bounded as follows:
    \begin{equation}
        -\frac{1}{2}\log_2(1-\varepsilon) \le E^{u,\varepsilon}_{\min}\!\left(\rho_{AB}\right) \le \log_2 d -\frac{1}{2}\log_2(1-\varepsilon),
    \end{equation}
    where $d \coloneqq \min\{\operatorname{dim}(A),\operatorname{dim}(B)\}$ with $\operatorname{dim}(A)$ and $\operatorname{dim}(B)$ being the dimensions of system $A$ and $B$, respectively.

    Consequently,
    \begin{equation}\label{eq:J_eps_range}
        \frac{1-\varepsilon}{d^2} \le J^{\varepsilon}\!\left(\rho_{AB}\right) \le 1-\varepsilon.
    \end{equation}
\end{proposition}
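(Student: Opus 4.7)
The plan is to establish the two bounds on $E^{u,\varepsilon}_{\min}$ separately; the corresponding bounds on $J^{\varepsilon}_{\min}$ in~\eqref{eq:J_eps_range} then follow immediately from the defining relation~\eqref{eq:J_hypo_test_defn} together with the fact that the map $x\mapsto 2^{-2x}$ is monotonically decreasing.

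For the upper bound $E^{u,\varepsilon}_{\min}(\rho_{AB})\le \log_2 d - \tfrac{1}{2}\log_2(1-\varepsilon)$, I would invoke the teleportation-based reduction already established in~\eqref{eq:unext_ent_le_max_ent}: every bipartite state $\rho_{AB}$ can be produced from the maximally entangled state $\Phi^d_{A_0B_0}$ by the one-way LOCC channel $\mathcal{L}^{\rho,\to}_{A_0B_0\to AB}$ of~\eqref{eq:teleportation_protocol}, with $d=\min\{\operatorname{dim}(A),\operatorname{dim}(B)\}$. Since every one-way LOCC channel is two-extendible, the monotonicity statement in Theorem~\ref{theo:unext_ent_st_monotonicity} specializes to yield $E^{u,\varepsilon}_{\min}(\rho_{AB})\le E^{u,\varepsilon}_{\min}(\Phi^d_{A_0B_0})$. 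Substituting the exact value $E^{u,\varepsilon}_{\min}(\Phi^d_{A_0B_0}) = \log_2 d - \tfrac{1}{2}\log_2(1-\varepsilon)$ supplied by Proposition~\ref{prop:max_ent_unext_ent_hypo_test} then concludes the argument.

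For the lower bound $E^{u,\varepsilon}_{\min}(\rho_{AB})\ge -\tfrac{1}{2}\log_2(1-\varepsilon)$, I would prove a universal inequality $D^{\varepsilon}_{\min}(\rho\Vert\sigma)\ge -\log_2(1-\varepsilon)$ valid for every pair of states $\rho,\sigma$; taking the infimum over $\sigma_{AB}\in\mathcal{F}(\rho_{AB})$ and dividing by $2$ then produces the claim. To establish this universal bound, I would test the optimization~\eqref{eq:smooth_min_rel_ent_defn} at the specific feasible operator $\Lambda=(1-\varepsilon)I$, which satisfies $0\le \Lambda\le I$ and $\operatorname{Tr}[\Lambda\rho]=1-\varepsilon$, thereby meeting the constraint of~\eqref{eq:smooth_min_rel_ent_defn}. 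Its objective value is $\operatorname{Tr}[\Lambda\sigma]=1-\varepsilon$, so the infimum in~\eqref{eq:smooth_min_rel_ent_defn} is at most $1-\varepsilon$, giving $D^{\varepsilon}_{\min}(\rho\Vert\sigma)\ge -\log_2(1-\varepsilon)$ as desired.

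I do not anticipate a serious obstacle here because the proposition essentially packages two elementary ingredients: monotonicity combined with the exact evaluation at $\Phi^d$ for one direction, and a universal hypothesis-testing inequality obtained by exhibiting a single scalar feasible measurement for the other. The only point requiring minor care is that the constraint $\operatorname{Tr}[\Lambda\rho]\ge 1-\varepsilon$ in~\eqref{eq:smooth_min_rel_ent_defn} is attained with equality by $\Lambda=(1-\varepsilon)I$, so feasibility must be checked against the non-strict inequality in the definition, which is immediate.
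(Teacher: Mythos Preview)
Your proposal is correct and follows the same overall structure as the paper: the upper bound via teleportation plus monotonicity plus Proposition~\ref{prop:max_ent_unext_ent_hypo_test} is identical, and the lower bound proceeds, as in the paper, by establishing the universal inequality $D^{\varepsilon}_{\min}(\rho\Vert\sigma)\ge -\log_2(1-\varepsilon)$ and then infimizing.

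The one small difference is in how that universal inequality is obtained. The paper applies the data-processing inequality for $D^{\varepsilon}_{\min}$ under the replace-with-maximally-mixed channel $\mathcal{R}^{\pi}$ and then evaluates $D^{\varepsilon}_{\min}(\pi\Vert\pi)=-\log_2(1-\varepsilon)$ directly from the definition. You instead exhibit the explicit feasible point $\Lambda=(1-\varepsilon)I$ in~\eqref{eq:smooth_min_rel_ent_defn}. Your route is slightly more elementary since it avoids invoking data processing altogether; the paper's route has the virtue of showing that the bound is saturated precisely when $\rho$ and $\sigma$ coincide after processing. Both are equally valid and essentially one-line arguments.
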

\begin{proof}
    See Appendix~\ref{app:J_eps_range}.
\end{proof}

\medskip

The smooth-min unextendible entanglement of a bipartite state can be computed using a semidefinite program (see Appendix~\ref{app:semidefinite_programs}).

\subsubsection{\texorpdfstring{$\alpha$}{alpha}-Sandwiched unextendible entanglement}

Another quantity that is relevant to this work is the $\alpha$-sandwiched unextendible entanglement~\cite{WWW24}, which is defined as follows:
\begin{equation}
    \widetilde{E}^u_{\alpha}\!\left(\rho_{AB}\right) \coloneqq \inf_{\sigma_{AB}\in \mathcal{F}(\rho_{AB})}\frac{1}{2}\widetilde{D}_{\alpha}\!\left(\rho_{AB}\Vert\sigma_{AB}\right) \qquad \forall \alpha \in \left(1,\infty\right),
\end{equation}
where 
\begin{equation}
    \widetilde{D}_{\alpha}\!\left(\rho\Vert\sigma\right) \coloneqq \frac{1}{\alpha - 1}\log_2 \operatorname{Tr}\!\left[\left(\sigma^{(1-\alpha)/2\alpha}\rho\sigma^{(1-\alpha)/2\alpha}\right)^{\alpha}\right]
\end{equation}
is the $\alpha$-sandwiched R\'enyi relative entropy~\cite{MDSST13, WWY14}. The $\alpha$-sandwiched unextendible entanglement was defined for all $\alpha \in (0,1)\cup(1,\infty)$ in~\cite{WWW24}, but we restrict our development here to $\alpha \in \left(1,\infty\right)$, due to technical reasons that will become apparent in Section~\ref{sec:distillable_key_results}.

Consider the special case when $\alpha \to \infty$. It was shown in~\cite[Theorem 5]{MDSST13}  that the $\alpha$-sandwiched relative entropy is equal to the max-relative entropy~\cite{Dat09}  when $\alpha \to \infty$ ; that is,
\begin{equation}\label{eq:max_re_ent_defn}
    D_{\max}\!\left(\rho\Vert\sigma\right) = \lim_{\alpha \to \infty}\widetilde{D}_{\alpha}\!\left(\rho\Vert\sigma\right).
\end{equation}
Corresponding to the max-relative entropy, the max-unextendible entanglement is defined as follows:
\begin{equation}
    E^u_{\max}\!\left(\rho_{AB}\right) \coloneqq \frac{1}{2}\inf_{\sigma_{AB}\in \mathcal{F}(\rho_{AB})} D_{\max}\!\left(\rho_{AB}\Vert\sigma_{AB}\right).
\end{equation}

Besides monotonicity under two-extendible channels, the $\alpha$-sandwiched unextendible entanglement has several other properties desirable in a resource monotone. We state some of the relevant properties below.
\begin{itemize}
    \item \textbf{Subadditivity:} The $\alpha$-sandwiched unextendible entanglement obeys the following subadditivity inequality~\cite[Proposition 13]{WWW24}
    \begin{equation}
        \widetilde{E}^u_{\alpha}\!\left(\rho_{A_1B_1}\otimes \sigma_{A_2B_2}\right) \le \widetilde{E}^u_{\alpha}\!\left(\rho_{A_1B_1}\right) + \widetilde{E}^u_{\alpha}\!\left(\sigma_{A_2B_2}\right) \qquad \forall \alpha \in \left[\frac{1}{2},1\right)\cup\left(1,\infty\right).
    \end{equation}
    \item \textbf{Additivity of max-unextendible entanglement:} The max-unextendible entanglement is additive under tensor product of states~\cite[Proposition 14]{WWW24}; that is,
    \begin{equation}
        E^u_{\max}\!\left(\rho_{A_1B_1}\otimes \sigma_{A_2B_2}\right) = E^u_{\max}\!\left(\rho_{A_1B_1}\right) + E^u_{\max}\!\left(\sigma_{A_2B_2}\right).
    \end{equation}
    \item \textbf{Monotonicity in $\alpha$:} The $\alpha$-sandwiched unextendible entanglement of a state increases monotonically with increasing $\alpha$, which follows from the fact that the $\alpha$-sandwiched R\'enyi relative entropy between two states increases monotonically with $\alpha$~\cite[Theorem 7]{MDSST13}.
    \item \textbf{Semidefinite representation:} The max-unextendible entanglement can be computed using a semidefinite program~\cite{WWW24} (see Appendix~\ref{app:semidefinite_programs} for a review).
\end{itemize}

The max-unextendible entanglement of a state can be written as a limiting case of the $\alpha$-sandwiched unextendible entanglement as follows:
\begin{align}
    \lim_{\alpha\to \infty}\widetilde{E}^u_{\alpha}\!\left(\rho_{AB}\right) &= \sup_{\alpha\in (1,\infty)}\frac{1}{2}\inf_{\sigma_{AB}\in\mathcal{F}\!\left(\rho_{AB}\right)}\widetilde{D}_{\alpha}\!\left(\rho_{AB}\Vert\sigma_{AB}\right)\\
    &= \frac{1}{2}\inf_{\sigma_{AB}\in\mathcal{F}\!\left(\rho_{AB}\right)}\sup_{\alpha\in (1,\infty)}\widetilde{D}_{\alpha}\!\left(\rho_{AB}\Vert\sigma_{AB}\right)\\
    &= \frac{1}{2}\inf_{\sigma_{AB}\in\mathcal{F}\!\left(\rho_{AB}\right)} D_{\max}\!\left(\rho_{AB}\Vert\sigma_{AB}\right)\\
    &= E^u_{\max}\!\left(\rho_{AB}\right)\label{eq:Emax_eq_lim_inf_sandwiched},
\end{align}
where the first equality follows from the monotonicity of the $\alpha$-sandwiched unextendible entanglement in $\alpha$. The $\alpha$-sandwiched R\'enyi relative entropy $\widetilde{D}_{\alpha}\!\left(\rho\Vert\sigma\right)$ is lower-semicontinuous with respect to $\sigma$~\cite[Lemma IV.8]{MO21} (see also \cite[Remark~38]{DKQSWW23}), and it increases monotonically with $\alpha \in (1,\infty)$. Therefore, we can use the Mosonyi--Hiai minimax theorem from~\cite[Corollary~A.2]{MH11} to arrive at the second equality above. The last two equalities follow from the equality in~\eqref{eq:max_re_ent_defn} and the definition of the max-unextendible entanglement, respectively.

The subadditivity of $\alpha$-sandwiched unextendible entanglement is useful in the analysis of one-shot, one-way secret-key distillation from independent and identically distributed (i.i.d.)~copies of a bipartite state, as we shall see in Section~\ref{sec:sandwich_st_ub}.

\section{Limits on secret-key distillation from bipartite states}\label{sec:distillable_key_results}

In this section we use the framework of unextendible entanglement discussed in Section~\ref{sec:two_extendibility} to obtain upper bounds on the one-shot, one-way distillable key of a state defined in~\eqref{eq:1W_dist_key_st_defn}.

Consider a quantum state $\psi^{\gamma}_{ABA'B'E}$, which is an extension of a private state $\gamma^k_{ABA'B'}$ with system~$E$ held by an eavesdropper. The reduced state of $\psi^{\gamma}_{ABA'B'E}$ on systems $AE$ is a product state, with the reduced state on system $A$ being the maximally mixed state. This is evident from the equivalence between a bipartite private state and a tripartite secret-key state. It was further shown in~\cite[Appendix K]{WWW24} that applying a twisting unitary on the joint systems of Alice and the eavesdropper is also insufficient to establish any correlations between Alice's key system, $A$, and the eavesdropper's system, $E$. We state this formally in Lemma~\ref{lem:twist_priv_marginal}, which we later use to establish the main results of this work.

\begin{lemma}[\cite{WWW24}]\label{lem:twist_priv_marginal}
    Let $\psi^{\gamma}_{ABA'B'EE'R}$ be a purification of a bipartite private state $\gamma^k_{ABA'B'}$.
    For every purification for which system $E$ is isomorphic to $B$ and system $E'$ is isomorphic to $B'$, the following equality holds:
    \begin{equation}\label{eq:twist_priv_marginal}
        \operatorname{Tr}_{A'BB'E'R}\!\left[W_{AEA'E'}^{\dagger}\psi^{\gamma}W_{AEA'E'}\right] = \pi_A\otimes\sigma_E,
    \end{equation}
    where $\tau_E$ is a quantum state, $\pi_A$ is the maximally mixed state, and $W_{AEA'E'}$ is a twisting unitary of the form given in~\eqref{eq:twisting_unitary_defn}.
\end{lemma}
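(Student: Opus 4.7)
My plan is to fix one canonical purification of $\gamma^k_{ABA'B'}$, reduce every other purification to it via unitary freedom on the purifying system, and then carry out the partial-trace calculation using two separate cyclicity cancellations. Starting from the structural decomposition in~\eqref{eq:priv_st_defn}, I would pick any purification $|\tau\rangle_{A'B'E'R}$ of the shield state $\tau_{A'B'}$ (which exists because $E'\cong B'$ and $R$ may be taken large enough) and set
\begin{equation}
    |\psi^{\gamma}_{0}\rangle_{ABA'B'EE'R} \coloneqq V_{ABA'B'}\!\left(|\Phi^k\rangle_{AB} \otimes |0\rangle_E \otimes |\tau\rangle_{A'B'E'R}\right) = \frac{1}{\sqrt{k}}\sum_i |i\rangle_A|i\rangle_B|0\rangle_E|\tau_i\rangle_{A'B'E'R},
\end{equation}
with $|\tau_i\rangle \coloneqq (U^i_{A'B'}\otimes I_{E'R})|\tau\rangle$. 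Any purification with $E\cong B$ and $E'\cong B'$ is then $|\psi^{\gamma}\rangle = (I_{ABA'B'}\otimes U_{EE'R})|\psi^{\gamma}_{0}\rangle$ for some unitary $U_{EE'R}$, so it suffices to establish the result for this general $|\psi^{\gamma}\rangle$.

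I would next use the block-diagonal form $W_{AEA'E'}^\dagger = \sum_i |i\rangle\!\langle i|_A \otimes I_E \otimes (U^i_{A'E'})^\dagger$ from~\eqref{eq:twisting_unitary_defn}, which is trivial on $E$. Conjugating $\psi^{\gamma}$ by $W^\dagger$ preserves the $|i\rangle\!\langle j|_A$ index structure of $|\psi^{\gamma}_{0}\rangle\!\langle\psi^{\gamma}_{0}|$, and the companion $|i\rangle\!\langle j|_B$ factor wipes out every off-diagonal $i\neq j$ contribution once $B$ is traced out. The target marginal thus reduces to
\begin{equation}
    \operatorname{Tr}_{A'BB'E'R}\!\left[W^\dagger \psi^{\gamma} W\right] = \frac{1}{k}\sum_i |i\rangle\!\langle i|_A \otimes \eta^i_E,
\end{equation}
where $\eta^i_E = \operatorname{Tr}_{A'E'R}\!\left[(I_E\otimes (U^i_{A'E'})^\dagger)\, U_{EE'R}(|0\rangle\!\langle 0|_E\otimes \xi^i_{A'E'R})U_{EE'R}^\dagger\, (I_E\otimes U^i_{A'E'})\right]$ and $\xi^i_{A'E'R}=\operatorname{Tr}_{B'}[|\tau_i\rangle\!\langle\tau_i|]$.

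The crux is showing that $\eta^i_E$ is in fact independent of $i$, and I plan to dispatch this in two cyclicity steps. First, since $(U^i_{A'E'})^\dagger$ acts only on the systems $A'E'$ being traced, cyclicity of the partial trace eliminates it and leaves $\eta^i_E = \operatorname{Tr}_{A'E'R}[U_{EE'R}(|0\rangle\!\langle 0|_E\otimes \xi^i_{A'E'R}) U_{EE'R}^\dagger]$. Second, because $U_{EE'R}$ is trivial on $A'$, the $A'$-trace commutes past it, and cyclicity on $A'B'$ (where the $U^i_{A'B'}$ buried inside $|\tau_i\rangle$ lives) yields $\operatorname{Tr}_{A'}\xi^i_{A'E'R} = \operatorname{Tr}_{A'B'}[|\tau\rangle\!\langle\tau|]$, independent of $i$; denote this common value by $\mu_{E'R}$. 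Combining gives $\eta^i_E = \operatorname{Tr}_{E'R}[U_{EE'R}(|0\rangle\!\langle 0|_E\otimes \mu_{E'R})U_{EE'R}^\dagger]$, also independent of $i$; calling it $\sigma_E$, the marginal equals $\pi_A\otimes \sigma_E$ as claimed. I expect the main obstacle to be orchestrating these two cancellations cleanly, since they live on overlapping but distinct system pairs ($A'E'$ for the twist in $W^\dagger$ versus $A'B'$ for the twist inside $V$), and the second cancellation crucially exploits that $U_{EE'R}$ does not act on $A'$.
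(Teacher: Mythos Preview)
Your argument is correct. The paper itself does not give an in-line proof of this lemma; it simply cites \cite[Appendix~K]{WWW24}, so there is no paper-proof to compare against in detail. Your direct computation---fixing a canonical purification, invoking unitary freedom on $EE'R$, then eliminating the two twisting unitaries via two separate cyclicity-of-partial-trace steps---is exactly the natural route, and each step is sound: the $W$-twist on $A'E'$ drops because both $A'$ and $E'$ are traced, and the $V$-twist $U^i_{A'B'}$ hidden in $|\tau_i\rangle$ drops because both $A'$ and $B'$ are traced while $U_{EE'R}$ acts trivially on $A'$.
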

\begin{proof}
    See~\cite[Appendix K]{WWW24}.
\end{proof}

\medskip

The fact that Alice's system is in a product state with the eavesdropper ensures that the state shared by Alice and the eavesdropper does not pass the privacy test with a probability greater than~$\frac{1}{k}$. We generalize this statement to approximate private states in Lemma~\ref{lem:priv_marg_phi_fid_bnd} below.
\begin{lemma}\label{lem:priv_marg_phi_fid_bnd}
	Fix $k \in \mathbb{N}$ and $\varepsilon \in \left[0, 1-\frac{1}{k^2}\right]$. Let $\sigma_{ABA'B'}$ be a quantum state such that
 \begin{equation}
     F(\sigma_{ABA'B'},\gamma^k_{ABA'B'}) \ge 1-\varepsilon,
 \end{equation}
 where $\gamma^k_{ABA'B'}$ is a bipartite private state holding $\log_2 k$ secret key bits. Let $V_{ABA'B'}$ be the twisting unitary corresponding to the private state; that is, there exists a state $\tau_{A'B'}$ such that
	\begin{equation}
		\gamma^k_{ABA'B'} = V_{ABA'B'}\!\left(\Phi^k_{AB}\otimes \tau_{A'B'}\right)V^{\dagger}_{ABA'B'}.
	\end{equation}
	The probability of an arbitrary state $\omega_{AEA'E'} \in \mathcal{F}(\sigma_{ABA'B'})$ passing the $\gamma^k$-privacy test is bounded from above by the following quantity:
    \begin{equation}
        \operatorname{Tr}\!\left[\Pi^{\gamma}_{AEA'E'}\omega_{AEA'E'}\right] \le \varsigma(\varepsilon, k), 
    \end{equation}
    where 
 \begin{equation}
     \varsigma(\varepsilon, k) \coloneqq \varepsilon + \frac{1-2\varepsilon}{k^2} +\frac{2\sqrt{(k^2-1)\varepsilon(1-\varepsilon)} }{k^2},
     \label{eq:varsig-func}
 \end{equation}
$\Pi^{\gamma}_{ABA'B'}$ is the privacy test, system $B$ is isomorphic to $E$, and system $B'$ is isomorphic to $E'$. The set $\mathcal{F}\!\left(\sigma_{ABA'B'}\right)$ was defined in~\eqref{eq:free_states}.
\end{lemma}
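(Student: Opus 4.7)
The plan is a three-step reduction. First I would move the problem from $\sigma_{ABA'B'}$ to an ``ideal'' extension of the actual private state $\gamma^{k}$ via Uhlmann's theorem; then apply Lemma~\ref{lem:twist_priv_marginal} to pin down the privacy-test probability on that ideal extension; and finally transfer the bound to the given $\omega_{AEA'E'}$ by data-processing of fidelity under the binary measurement $\{\Pi^{\gamma}, I-\Pi^{\gamma}\}$.

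Concretely, for $\omega_{AEA'E'}\in\mathcal{F}(\sigma_{ABA'B'})$, I would pick an extension $\omega_{ABA'B'EE'}$ of $\sigma_{ABA'B'}$ with $\omega_{AEA'E'}=\operatorname{Tr}_{BB'}[\omega_{ABA'B'EE'}]$ and purify it to $|\psi^{\omega}\rangle_{ABA'B'EE'R}$. Since this also purifies $\sigma_{ABA'B'}$ and $F(\sigma_{ABA'B'},\gamma^{k}_{ABA'B'})\ge 1-\varepsilon$, Uhlmann's theorem supplies a purification $|\psi^{\gamma}\rangle_{ABA'B'EE'R}$ of $\gamma^{k}_{ABA'B'}$ on the \emph{same} joint system with overlap at least $1-\varepsilon$. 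Setting $\omega^{\gamma}_{AEA'E'}\coloneqq \operatorname{Tr}_{BB'R}[|\psi^{\gamma}\rangle\!\langle\psi^{\gamma}|]$ and invoking monotonicity of fidelity under partial trace yields $F(\omega_{AEA'E'},\omega^{\gamma}_{AEA'E'})\ge 1-\varepsilon$. Because $\omega^{\gamma}_{AEA'E'}$ is the $AEA'E'$-marginal of a purification of $\gamma^{k}$ with $E\cong B$ and $E'\cong B'$, Lemma~\ref{lem:twist_priv_marginal} gives $\operatorname{Tr}_{A'E'}[V^{\dagger}_{AEA'E'}\omega^{\gamma}_{AEA'E'}V_{AEA'E'}]=\pi_{A}\otimes\tau_{E}$ for some state $\tau_{E}$; combined with the identities $\Pi^{\gamma}_{AEA'E'}=V_{AEA'E'}(\Phi^{k}_{AE}\otimes I_{A'E'})V_{AEA'E'}^{\dagger}$ and $\operatorname{Tr}[\Phi^{k}_{AE}(\pi_{A}\otimes\tau_{E})]=1/k^{2}$, this pins down $\operatorname{Tr}[\Pi^{\gamma}_{AEA'E'}\omega^{\gamma}_{AEA'E'}]=1/k^{2}$.

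For the transfer, set $p\coloneqq \operatorname{Tr}[\Pi^{\gamma}_{AEA'E'}\omega_{AEA'E'}]$ and $q\coloneqq 1/k^{2}$. Feeding $\omega_{AEA'E'}$ and $\omega^{\gamma}_{AEA'E'}$ through the classicalizing channel associated with $\{\Pi^{\gamma},I-\Pi^{\gamma}\}$ and using data-processing of fidelity gives the constraint $\left(\sqrt{pq}+\sqrt{(1-p)(1-q)}\right)^{2}\ge F(\omega_{AEA'E'},\omega^{\gamma}_{AEA'E'})\ge 1-\varepsilon$. Maximizing $p$ under this constraint is the main calculational point: the substitution $\sqrt{q}=\cos\phi$, $\sqrt{p}=\cos\theta$ reduces it to $\cos(\phi-\theta)\ge \sqrt{1-\varepsilon}$, whose extremal choice $\theta=\phi-\arccos\sqrt{1-\varepsilon}$ yields $p\le q(1-\varepsilon)+(1-q)\varepsilon+2\sqrt{q(1-q)\varepsilon(1-\varepsilon)}$, and substituting $q=1/k^{2}$ produces exactly $\varsigma(\varepsilon,k)$. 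The hypothesis $\varepsilon\le 1-1/k^{2}$ is precisely what keeps $\theta\in[0,\pi/2]$, marking the threshold beyond which the trivial bound $p\le 1$ would take over.

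The main obstacle is the final optimization: everything upstream is a routine combination of Uhlmann's theorem, monotonicity of fidelity, and Lemma~\ref{lem:twist_priv_marginal}, whereas the precise form of $\varsigma(\varepsilon,k)$ only materializes after carefully extremizing the fidelity constraint on Bernoulli distributions. A secondary subtlety is arranging the purification so that the auxiliary systems $E,E'$ come out isomorphic to $B,B'$, which is needed to apply Lemma~\ref{lem:twist_priv_marginal} verbatim.
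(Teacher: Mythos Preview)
Your proposal is correct and follows essentially the same skeleton as the paper: Uhlmann's theorem to pass to a purification of $\gamma^{k}$, Lemma~\ref{lem:twist_priv_marginal} to pin down the marginal, then data-processing of fidelity to extract a constraint that is solved for the privacy-test probability. Two minor points of streamlining distinguish your version from the paper's. First, the paper applies $V^{\dagger}_{AEA'E'}$, traces to $AE$, and then twirls via~\eqref{eq:twirl_defn} to land on isotropic states whose fidelity is computed directly; you instead collapse this composite into the single binary measurement $\{\Pi^{\gamma},I-\Pi^{\gamma}\}$, which yields the identical Bernoulli-fidelity constraint in one step. Second, the paper solves the resulting inequality $\bigl(\sqrt{q/k^{2}}+\sqrt{(1-q)(1-1/k^{2})}\bigr)^{2}\ge 1-\varepsilon$ by a quadratic analysis (deferred to Appendix~\ref{app:priv_marg_phi_fid_bnd}), whereas your trigonometric substitution $\sqrt{p}=\cos\theta$, $\sqrt{q}=\cos\phi$ reduces it to $\cos(\phi-\theta)\ge\sqrt{1-\varepsilon}$ and reads off the extremal $p$ directly. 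Both shortcuts are legitimate and arguably cleaner than the paper's route, but they do not change the substance of the argument.
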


\begin{proof}
    	Let $\phi^{\sigma}_{ABA'B'EE'R}$ be an arbitrary  purification of $\sigma_{ABA'B'}$, such that system $E$ is isomorphic to $B$ and system $E'$ is isomorphic to $B'$. Then the quantum state $\operatorname{Tr}_{BB'R}\!\left[\phi^{\sigma}\right]$ is in the set $\mathcal{F}(\sigma_{ABA'B'})$. Let $\psi^{\gamma}_{ABA'B'EE'R}$ be a purification of $\gamma^k_{ABA'B'}$. An arbitrary purification of the private state $\gamma^k_{ABA'B'}$ is of the following form:
\begin{equation}\label{eq:priv_st_arb_purification}
	\psi^{\gamma}_{ABA'B'X} = V_{ABA'B'}\left(\Phi^k_{AB}\otimes\psi^{\tau}_{A'B'X}\right)V^{\dagger}_{ABA'B'},
\end{equation}
where $\psi^{\tau}_{A'B'X}$ is a pure state and $X$ is a purifying system. 

As a consequence of Uhlmann's theorem, consider that
\begin{align}
    F\!\left(\sigma_{ABA'B'},\gamma^k_{ABA'B'}\right) &= \max_{\psi^{\gamma}} \left|\langle \psi^{\gamma}|\phi^{\sigma}\rangle\right|^2\\
    &= \max_{\psi^{\gamma}} F\!\left(\psi^{\gamma}_{ABA'B'EE'R},\phi^{\sigma}_{ABA'B'EE'R}\right)\\
    &= \max_{\psi^{\tau}_{A'B'EE'R}} F\!\left( V_{ABA'B'}\left(\Phi^k_{AB}\otimes\psi^{\tau}_{A'B'EE'R}\right)V^{\dagger}_{ABA'B'}, \phi^{\sigma}_{ABA'B'EE'R}\right).
\end{align}
The maximizations in the first and second equalities are over every purification $\psi^{\gamma}$ of $\gamma^k_{ABA'B'}$ on the systems $ABA'B'EE'R$. Since every purification of the state $\gamma^k_{ABA'B'}$ can be written in the form mentioned in~\eqref{eq:priv_st_arb_purification}, the maximization over every purification $\psi^{\gamma}_{ABA'B'EE'R}$ is equivalent to a maximization over every purification $\psi^{\tau}_{A'B'EE'R}$ of the state $\tau_{A'B'}$.
Therefore, there exists a pure state $\psi^{\tau}_{A'B'EE'R}$ and a corresponding purification $\psi^{\gamma}_{ABA'B'EE'R}$ such that the following equality holds:
\begin{equation}\label{eq:uhlmann_application}
	F\!\left(\sigma_{ABA'B'},\gamma^k_{ABA'B'}\right) = |\langle \psi^{\gamma}|\phi^{\sigma}\rangle|^2 = F(\psi^{\gamma},\phi^{\sigma}).
\end{equation} 

Using the data-processing inequality for fidelity of states, consider that
\begin{align}
	F(\psi^{\gamma},\phi^{\sigma})
 & \leq F\!\left(\operatorname{Tr}_{A'BB'E'R}\!\left[V_{AEA'E'}^{\dagger}\psi^{\gamma}V_{AEA'E'}\right],\operatorname{Tr}_{A'BB'E'R}\!\left[V_{AEA'E'}^{\dagger}\phi^{\sigma}V_{AEA'E'}\right]\right)\\
&= F\left(\pi_A\otimes \tau_E, \operatorname{Tr}_{A'E'}\!\left[V_{AEA'E'}^{\dagger}\omega_{AEA'E'} V_{AEA'E'}\right]\right),\label{eq:fid_purifications_le_fid_marginals}
\end{align}
where we have used Lemma~\ref{lem:twist_priv_marginal} to arrive at the final equality and $\omega_{AEA'E'} $ is defined in the statement of the lemma.

Now consider the twirling channel,  defined as follows:
\begin{equation}\label{eq:twirl_defn}
	\mathcal{T}_{AB}\!\left(\cdot\right) = \int dU \left(U_A\otimes \overline{U}_B\right)\left(\cdot\right)\left(U_A\otimes \overline{U}_B\right)^{\dagger},
\end{equation} 
where the integral is with respect to the Haar measure. The action of this channel on an arbitrary quantum state $\rho_{AB}$, with $\operatorname{dim}(A) = \operatorname{dim}(B) = d$, results in the following isotropic state~\cite{HH99,Watrous2018}:
\begin{equation}\label{eq:twirl_to_isotropic}
	\mathcal{T}_{AB}\!\left(\rho_{AB}\right) = \operatorname{Tr}\!\left[\Phi^d_{AB}\rho_{AB}\right]\Phi^d_{AB} + \left(1-\operatorname{Tr}\!\left[\Phi^d_{AB}\rho_{AB}\right]\right)\frac{I_{AB}-\Phi^d_{AB}}{d^2-1}.
\end{equation}
Consider that
\begin{align}
	\operatorname{Tr}\!\left[\Phi^k_{AE}\left(\pi_A\otimes\tau_E\right)\right] &= \frac{1}{k}\operatorname{Tr}\!\left[\Phi^k_{AE}(I_A\otimes\tau_E)\right]\\
	&= \frac{1}{k^2}\operatorname{Tr}\!\left[\tau_E\right]\\
	&= \frac{1}{k^2}.
\end{align}
Therefore,
\begin{equation}\label{eq:twirled_max_mix}
	\mathcal{T}_{AE}\!\left(\pi_A\otimes \tau_E\right) = \frac{1}{k^2}\Phi^k_{AE} + \left(1-\frac{1}{k^2}\right)\frac{I_{AE}-\Phi^k_{AE}}{k^2-1} = \frac{I_{AE}}{k^2}.
\end{equation}
The action of the twirling channel $\mathcal{T}_{AE}$ on the state $\operatorname{Tr}_{A'E'}\!\left[V^{\dagger}\omega_{AEA'E'}V\right]$ is given by the following expression:
\begin{equation}\label{eq:twirled_priv_marg}
	\mathcal{T}_{AE}\!\left(\operatorname{Tr}_{A'E'}\!\left[V^{\dagger}\omega_{AEA'E'}V\right]\right) = q\,\Phi^k_{AE} + \left(1-q\right)\frac{I_{AE}-\Phi^k_{AE}}{k^2-1},
\end{equation}
where
\begin{equation}\label{eq:q_expression}
	q \coloneqq \operatorname{Tr}\!\left[\Phi^k_{AE}\left(\operatorname{Tr}_{A'E'}\!\left[V^{\dagger}\omega_{AEA'E'}V\right]\right)\right].
\end{equation}
The above expression can be rewritten as follows:
\begin{equation}\label{eq:q_priv_test_interpret}
    q = \operatorname{Tr}\!\left[V_{AEA'E'}\left(\Phi^k_{AE}\otimes I_{A'E'}\right)V^{\dagger}_{AEA'E'}\omega_{AEA'E'}\right] = \operatorname{Tr}\!\left[\Pi^{\gamma}_{AEA'E'}\omega_{AEA'E'}\right],
\end{equation}
where $\Pi^{\gamma}_{AEA'E'}$ is the $\gamma$-privacy test defined in~\eqref{eq:priv_test_defn}. Therefore, $q$ can be interpreted as the probability of the state $\omega_{AEA'E'}$ passing the $\gamma$-privacy test.

Going back to~\eqref{eq:fid_purifications_le_fid_marginals} and using the data-processing inequality for fidelity, we arrive at the following inequality:
\begin{align}
	F(\psi^{\gamma},\phi^{\sigma}) &\le F\!\left(\mathcal{T}_{AE}\!\left(\pi_A\otimes \tau_E\right), \mathcal{T}_{AE}\!\left(\operatorname{Tr}_{A'E'}\!\left[V^{\dagger}\omega V\right]\right)\right)\\
	&= F\!\left(\pi_{AE},\mathcal{T}_{AE}\!\left(\operatorname{Tr}_{A'E'}\!\left[V^{\dagger}\omega V\right]\right)\right).\label{eq:fid_purifications_le_fid_twirled}
\end{align}
Since the above inequality holds for all $\sigma_{ABA'B'}$ such that $F(\sigma_{ABA'B'},\gamma^k_{ABA'B'})\ge 1-\varepsilon$ for some private state $\gamma^k_{ABA'B'}$ and every $\omega_{AEA'E'}$ that is in the set $\mathcal{F}(\sigma_{ABA'B'})$, we can combine~\eqref{eq:uhlmann_application},~\eqref{eq:fid_purifications_le_fid_marginals}, and~\eqref{eq:fid_purifications_le_fid_twirled} to arrive at the following inequality:
\begin{align}
	1-\varepsilon &\le F(\sigma_{ABA'B'},\gamma^k_{ABA'B'})\\
 &\le F\!\left(\pi_{AE},\mathcal{T}_{AE}\!\left(\operatorname{Tr}_{A'E'}\!\left[V^{\dagger}\omega V\right]\right)\right)\label{eq:fid_twirl_priv_max_mix}\\
 &= F\!\left(\pi_{AE},q~\Phi^k_{AE} + (1-q)\frac{I_{AE}-\Phi^k_{AE}}{k^2}\right)\\
 &= \left(\sqrt{\frac{q}{k^2}} + \sqrt{\left(1-q\right)\left(1-\frac{1}{k^2}\right)}\right)^2,\label{eq:q_eq_main}
\end{align}
where the first equality follows from~\eqref{eq:twirled_priv_marg} and the last equality follows by evaluating the fidelity between the two isotropic states.

The inequality in~\eqref{eq:q_eq_main} is satisfied for all $q \in [0,1]$ if $\varepsilon \ge 1-\frac{1}{k^2}$. If $\varepsilon \in \left[0,1-\frac{1}{k^2}\right]$ then $q$ must lie in the following range for~\eqref{eq:q_eq_main} to hold:
\begin{equation}\label{eq:q_final_expression}
    0\le q \le \varepsilon + \frac{1-2\varepsilon}{k^2} + \frac{2\sqrt{(k^2-1)\varepsilon(1-\varepsilon)}}{k^2}.
\end{equation}
See Appendix~\ref{app:priv_marg_phi_fid_bnd} for a detailed proof of the aforementioned statement. Since we have identified $q$ to be the probability of the state $\omega_{AEA'E'}$ passing the $\gamma$-privacy test, we conclude the statement of the lemma.
\end{proof}

\medskip

Lemma~\ref{lem:priv_marg_phi_fid_bnd} plays a central role in obtaining lower bounds on the unextendible entanglement of an approximate private state $\sigma_{ABA'B'}$ such that $F(\sigma_{ABA'B'},\gamma^k_{ABA'B'}) \ge 1-\varepsilon$ for some private state~$\gamma^k_{ABA'B'}$. 

\begin{remark}
    Note that $F\!\left(\pi_{AB},\Phi^k_{AB}\right) = \frac{1}{k^2}$, where $\pi_{AB}$ is the maximally mixed state. Therefore, if $\varepsilon \ge 1-\frac{1}{k^2}$ for some $k\in \mathbb{N}$, then the following inequality holds for the one-shot, one-way distillable key of an arbitrary bipartite state $\rho_{AB}$:
    \begin{equation}
        K^{\varepsilon,\to}_{D}\!\left(\rho_{AB}\right) \ge \log_2 k.
    \end{equation}
    As such, choosing $\varepsilon \ge 1-\frac{1}{k^2}$ allows the one-shot, one-way distillable key of a separable state to be non-zero, which makes this regime  uninteresting from a practical perspective.
\end{remark}

\subsection{Smooth-min unextendible entanglement upper bound on one-shot distillable secret key of bipartite states}\label{sec:smooth_min_st_ub}

The $\gamma$-privacy test is a special POVM that can be used to distinguish a bipartite state $\sigma_{ABA'B'}$ from any state $\omega_{AEA'E'} \in \mathcal{F}(\sigma_{ABA'B'})$. Recall that the smooth-min unextendible entanglement quantifies the ability to distinguish between a state $\rho_{AB}$ from an arbitrary state $\sigma_{AB} \in \mathcal{F}(\rho_{AB})$. Therefore, Lemma~\ref{lem:priv_marg_phi_fid_bnd} allows us to obtain a bound on the unextendible entanglement of a state $\sigma_{ABA'B'}$ satisfying $F(\sigma_{ABA'B'},\gamma^k_{ABA'B'})\ge 1-\varepsilon$. This is stated as Proposition~\ref{prop:unext_ent_hyp_test_lb} below.
\begin{proposition}\label{prop:unext_ent_hyp_test_lb}
	Fix $k\in \mathbb{N}$ and $\varepsilon \in \left[0,1-\frac{1}{k^2}\right]$. Consider a quantum state $\sigma_{ABA'B'}$ such that $F(\sigma_{ABA'B'},\gamma^k_{ABA'B'})\ge 1-\varepsilon$, where $\gamma^k_{ABA'B}$ is a bipartite private state. The smooth-min unextendible entanglement of the state $\sigma_{ABA'B'}$ is bounded from below by the following quantity:
	\begin{equation}
		E^{u,\varepsilon}_{\min}\!\left(\sigma_{ABA'B'}\right) \ge -\frac{1}{2}\log_2\!\left(\varsigma(\varepsilon, k)\right),
	\end{equation}
 where $\varsigma(\varepsilon, k)$ is defined in~\eqref{eq:varsig-func}.
\end{proposition}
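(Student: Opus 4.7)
The plan is to exhibit the $\gamma^k$-privacy test $\Pi^{\gamma}$ from \eqref{eq:priv_test_defn} as a canonical measurement operator that is simultaneously feasible for the hypothesis-testing optimization defining $D^{\varepsilon}_{\min}(\sigma_{ABA'B'}\Vert\omega_{AEA'E'})$ and that assigns small acceptance probability to every $\omega \in \mathcal{F}(\sigma_{ABA'B'})$. Since
\begin{equation}
E^{u,\varepsilon}_{\min}(\sigma_{ABA'B'}) \;=\; \inf_{\omega \in \mathcal{F}(\sigma_{ABA'B'})} \tfrac{1}{2}\,D^{\varepsilon}_{\min}(\sigma_{ABA'B'}\Vert\omega_{AEA'E'}),
\end{equation}
it suffices to upper bound the infimum inside the logarithm in \eqref{eq:smooth_min_rel_ent_defn} by $\varsigma(\varepsilon,k)$ uniformly in $\omega$; this will immediately produce the claimed bound after multiplying by $1/2$ and negating.

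First I would verify feasibility, namely $\operatorname{Tr}[\Pi^{\gamma}_{ABA'B'}\sigma_{ABA'B'}] \ge 1-\varepsilon$. The key observations are that $\Pi^{\gamma} = V(\Phi^k\otimes I)V^{\dagger}$ is a projector and that the privacy test accepts the ideal private state with probability one, i.e.\ $\operatorname{Tr}[\Pi^{\gamma}\gamma^k]=1$, which follows directly from the form of $\gamma^k$ in \eqref{eq:priv_st_defn}. Applying the data-processing inequality for fidelity under the two-outcome POVM $\{\Pi^{\gamma}, I-\Pi^{\gamma}\}$ reduces the problem to computing the (classical) fidelity between $(\operatorname{Tr}[\Pi^{\gamma}\sigma],\, 1-\operatorname{Tr}[\Pi^{\gamma}\sigma])$ and $(1,0)$, which equals $\operatorname{Tr}[\Pi^{\gamma}\sigma]$; the hypothesis $F(\sigma,\gamma^k)\ge 1-\varepsilon$ then yields the required lower bound.

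Second, I would invoke Lemma~\ref{lem:priv_marg_phi_fid_bnd} as a black box; it already gives exactly the bound $\operatorname{Tr}[\Pi^{\gamma}_{AEA'E'}\omega_{AEA'E'}] \le \varsigma(\varepsilon,k)$ for every $\omega \in \mathcal{F}(\sigma_{ABA'B'})$, noting that the systems $E,E'$ are isomorphic to $B,B'$ so the same twisting unitary defines a structurally identical test on the $AEA'E'$ systems. Combining feasibility with this upper bound, the infimum in \eqref{eq:smooth_min_rel_ent_defn} is at most $\varsigma(\varepsilon,k)$, so $D^{\varepsilon}_{\min}(\sigma\Vert\omega) \ge -\log_2 \varsigma(\varepsilon,k)$ uniformly in $\omega$, and dividing by two concludes the proof.

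I do not anticipate any substantial obstacle: the heavy lifting was done in Lemma~\ref{lem:priv_marg_phi_fid_bnd}, and this proposition is in essence a translation of that fidelity bound into the hypothesis-testing language of the smooth-min relative entropy. The only mildly delicate step is the feasibility argument, which is a standard consequence of monotonicity of fidelity together with the sharp acceptance property $\operatorname{Tr}[\Pi^{\gamma}\gamma^k]=1$ of the privacy test.
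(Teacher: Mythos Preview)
Your proposal is correct and follows essentially the same approach as the paper: exhibit the privacy test $\Pi^{\gamma}$ as a feasible measurement in the smooth-min relative entropy optimization, then invoke Lemma~\ref{lem:priv_marg_phi_fid_bnd} to bound its acceptance probability on every $\omega\in\mathcal{F}(\sigma_{ABA'B'})$. The paper cites \cite[Lemma~9]{WTB17} for the feasibility step $\operatorname{Tr}[\Pi^{\gamma}\sigma]\ge 1-\varepsilon$, whereas you supply the direct argument via data processing of fidelity under the two-outcome measurement $\{\Pi^{\gamma},I-\Pi^{\gamma}\}$; this is exactly how that cited lemma is proved, so the two arguments coincide.
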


\begin{proof}
	Let $\sigma_{ABA'B'}$ be a quantum state such that $F(\sigma_{ABA'B'},\gamma^k_{ABA'B'})\ge 1-\varepsilon$ for some bipartite private state $\gamma^k_{ABA'B'}$ and some $\varepsilon \in \left[0,1-\frac{1}{k^2}\right]$. The following inequality holds for every such state $\sigma_{ABA'B'}$~\cite[Lemma~9]{WTB17}:
\begin{equation}
	\operatorname{Tr}\!\left[\Pi^{\gamma}_{ABA'B'}\sigma_{ABA'B'}\right] \ge 1-\varepsilon,
\end{equation}
where the projector $\Pi^{\gamma}_{ABA'B'}$ is the $\gamma^k$-privacy test defined in~\eqref{eq:priv_test_defn}.

Now consider the hypothesis testing relative entropy between the state $\sigma_{ABA'B'}$, for which $F(\sigma_{ABA'B'},\gamma^k_{ABA'B'})\ge 1-\varepsilon$ for some private state $\gamma^k_{ABA'B'}$, and an arbitrary state $\omega_{AEA'E'}$ which lies in the set $\mathcal{F}(\sigma_{ABA'B'})$:
\begin{align}
	D^{\varepsilon}_{\min}\!\left(\sigma_{ABA'B'}\Vert \omega_{AEA'E'}\right) = \sup_{0\le \Lambda \le I} -\log_2  \left\{\operatorname{Tr}\!\left[\Lambda \omega\right] : \operatorname{Tr}\!\left[\Lambda \sigma\right] \ge 1-\varepsilon\right\}.\label{eq:hypo_test_priv_defn_2}
\end{align}
It is understood here that system $E$ is isomorphic to $B$ and system $E'$ is isomorphic to $B'$.

Since $\Pi^{\gamma}_{ABA'B'}$ is a specific measurement operator that satisfies the constraints in~\eqref{eq:hypo_test_priv_defn_2}, the following inequality holds:
\begin{equation}\label{eq:hypo_test_priv_test_bnd}
	D^{\varepsilon}_{\min}\!\left(\sigma_{ABA'B'}\Vert\omega_{AEA'E'}\right) \ge -\log_2 \operatorname{Tr}\!\left[\Pi^{\gamma}_{AEA'E'}\omega_{AEA'E'}\right],
\end{equation}
which leads to the following inequality after applying Lemma~\ref{lem:priv_marg_phi_fid_bnd}:
\begin{equation}\label{eq:hypo_test_priv_close_marg_lb}
	D^{\varepsilon}_{\min}\!\left(\sigma_{ABA'B'}\Vert\omega_{AEA'E'}\right) \ge -\log_2\!\left(\varsigma(\varepsilon, k)\right).
\end{equation}
Since~\eqref{eq:hypo_test_priv_close_marg_lb} holds for all $\omega_{AEA'E'}$ in the set $\mathcal{F}(\sigma_{ABA'B'})$, we conclude that the smooth-min unextendible entanglement of the state $\sigma_{ABA'B'}$ is bounded from below by the following quantity:
\begin{align}
	E^{u,\varepsilon}_{\min}\!\left(\sigma_{ABA'B'}\right) &= \frac{1}{2}\inf_{\omega \in \mathcal{F}(\sigma)} D^{\varepsilon}_{\min}\!\left(\sigma_{ABA'B'}\Vert\omega_{AEA'E'}\right)\\
    &\ge -\frac{1}{2}\log_2\!\left(\varsigma(\varepsilon, k)\right),
\end{align}
thus completing the proof.
\end{proof}

\medskip

\begin{remark}\label{rem:nogo_key_distill}
    For a fixed $\varepsilon \in \left[0,\frac{3}{4}\right]$, if $E^{u,\varepsilon}_{\min}\!\left(\sigma_{ABA'B'}\right)<-\frac{1}{2}\log_2\!\left(\varsigma\!\left(\varepsilon,2\right)\right)$, then there does not exist a private state $\gamma^k_{ABA'B'}$ such that $F\!\left(\sigma_{ABA'B'},\gamma^k_{ABA'B'}\right) \ge 1-\varepsilon$. 
\end{remark}

The unextendible entanglement of a state does not increase under the action of a one-way LOCC channel. Therefore, for any one-way LOCC channel $\mathcal{L}^{\to}_{AB\to A'B'A''B''}$ that is used to distill an $\varepsilon$-approximate private state $\sigma_{A'B'A''B''}$ from a bipartite resource state $\rho_{AB}$, the unextendible entanglement of $\rho_{AB}$ must be larger than the unextendible entanglement of $\sigma_{A'B'A''B''}$, which in turn is bounded from below by $-\frac{1}{2}\log_2 (\varsigma(\varepsilon,k))$ as stated in Proposition~\ref{prop:unext_ent_hyp_test_lb}. That is,
\begin{equation}
    E^{u,\varepsilon}_{\min}\!\left(\rho_{AB}\right) \ge E^{u,\varepsilon}_{\min}\!\left(\sigma_{A'B'A''B''}\right) \ge -\frac{1}{2}\log_2(\varsigma(\varepsilon,k)).
\end{equation}
Rewriting the above inequality as an upper bound on $\log_2 k$, which is the maximum number of secret bits that can be distilled from $\rho_{AB}$ using a two-extendible channel, yields an upper bound on the one-shot, one-way distillable key of the state $\rho_{AB}$. 
\begin{theorem}[Unextendibility bound on one-shot distillable key] \label{theo:distill_key_st_ub_hypo_test}
    Fix $\varepsilon \in (0,1)$. Let $\rho_{AB}$ be a quantum state  such that the following inequality holds:
    \begin{equation}
        \varepsilon < J^{\varepsilon}_{\min}\!\!\left(\rho_{AB}\right) \le \frac{1}{4} + \frac{\varepsilon}{2} + \frac{\sqrt{3\varepsilon(1-\varepsilon)}}{2},
    \end{equation}
   where $J^{\varepsilon}_{\min}\!\left(\rho_{AB}\right)$ is defined in~\eqref{eq:J_hypo_test_defn}. Then the one-shot, one-way distillable key of a bipartite state $\rho_{AB}$ is bounded from above by the following quantity:
	\begin{equation}\label{eq:dist_key_hypo_test_ub}
		K^{\varepsilon,\to}_D\!\left(\rho_{AB}\right) \le \frac{1}{2}\log_2\!\left[\left(\frac{\sqrt{J^{\varepsilon}_{\min}\!\left(\rho_{AB}\right)\!\left(1-J^{\varepsilon}_{\min}\!\left(\rho_{AB}\right)\right)}+\sqrt{\varepsilon(1-\varepsilon)}}{J^{\varepsilon}_{\min}\!\left(\rho_{AB}\right)-\varepsilon}\right)^2+1\right].
	\end{equation}
 If $J^{\varepsilon}_{\min}(\rho_{AB}) > \frac{1}{4} + \frac{\varepsilon}{2} + \frac{\sqrt{3\varepsilon(1-\varepsilon)}}{2}$, then the one-shot, one-way distillable key of the state is equal to zero.
\end{theorem}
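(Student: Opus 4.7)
The plan is to combine Proposition~\ref{prop:unext_ent_hyp_test_lb} with the monotonicity of the smooth-min unextendible entanglement under one-way LOCC channels (a direct consequence of Theorem~\ref{theo:unext_ent_st_monotonicity}, since every one-way LOCC channel is two-extendible). Suppose that $K^{\varepsilon,\to}_D(\rho_{AB}) \geq \log_2 k$ for some integer $k \geq 2$; then by the definition in~\eqref{eq:1W_dist_key_st_defn} there exist a one-way LOCC channel $\mathcal{L}^{\to}_{AB\to A'B'A''B''}$ and a private state $\gamma^k_{A'B'A''B''}$ such that the output $\sigma := \mathcal{L}^{\to}(\rho_{AB})$ satisfies $F(\sigma,\gamma^k) \geq 1-\varepsilon$. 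Applying Proposition~\ref{prop:unext_ent_hyp_test_lb} to $\sigma$ yields $E^{u,\varepsilon}_{\min}(\sigma) \geq -\frac{1}{2}\log_2 \varsigma(\varepsilon,k)$, and monotonicity transports this bound back to $\rho_{AB}$. Converting to $J^{\varepsilon}_{\min}$ via~\eqref{eq:J_hypo_test_defn} delivers the master inequality $J^{\varepsilon}_{\min}(\rho_{AB}) \leq \varsigma(\varepsilon,k)$.

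The second step is to invert this inequality into an upper bound on $k$. Writing $J := J^{\varepsilon}_{\min}(\rho_{AB})$, clearing denominators, and substituting $u := \sqrt{k^2-1}$, the relation $J \leq \varsigma(\varepsilon,k)$ becomes the quadratic inequality
\begin{equation}
(J-\varepsilon)\,u^2 - 2\sqrt{\varepsilon(1-\varepsilon)}\,u + (J+\varepsilon-1) \leq 0.
\end{equation}
Under the hypothesis $J > \varepsilon$, the leading coefficient is positive, and a short computation identifies the discriminant as $4J(1-J)$ using the algebraic identity $\varepsilon(1-\varepsilon) - (J-\varepsilon)(J+\varepsilon-1) = J(1-J)$. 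The two roots are $u_{\pm} = [\sqrt{\varepsilon(1-\varepsilon)} \pm \sqrt{J(1-J)}]/(J-\varepsilon)$, and the inequality holds precisely for $u \in [u_-, u_+]$. The universal bound $J \leq 1-\varepsilon$ from Proposition~\ref{prop:J_eps_range}, combined with $J > \varepsilon$, guarantees $\varepsilon(1-\varepsilon) \leq J(1-J)$ (since $(J-\varepsilon)(1-J-\varepsilon) \geq 0$), so $u_- \leq 0$ and the binding constraint reduces to $u \leq u_+$. This gives $k^2 \leq u_+^2 + 1$, and taking $\frac{1}{2}\log_2$ produces exactly the bound stated in~\eqref{eq:dist_key_hypo_test_ub}.

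For the vanishing case, direct substitution shows that the quadratic evaluates to zero at $u = \sqrt{3}$ precisely when $J = \varsigma(\varepsilon,2) = \frac{1}{4} + \frac{\varepsilon}{2} + \frac{\sqrt{3\varepsilon(1-\varepsilon)}}{2}$, so the hypothesis $J > \varsigma(\varepsilon,2)$ is equivalent to $\sqrt{3} > u_+$, that is, $u_+^2 + 1 < 4$. Since $k$ must be a positive integer with $k^2 \leq u_+^2 + 1 < 4$, the only admissible value is $k = 1$, giving $K^{\varepsilon,\to}_D(\rho_{AB}) \leq \log_2 1 = 0$. The main obstacle I anticipate is the algebraic bookkeeping around the quadratic in $u$: specifically, verifying the discriminant identity and ruling out any spurious constraint from the lower root $u_-$ (which the bound $J \leq 1-\varepsilon$ dispatches cleanly). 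Once those routine calculations are in place, the rest of the proof is assembled directly from tools already developed in the paper.
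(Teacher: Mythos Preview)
Your proof is correct and follows the same high-level strategy as the paper (monotonicity plus Proposition~\ref{prop:unext_ent_hyp_test_lb} to get $J^{\varepsilon}_{\min}(\rho_{AB}) \le \varsigma(\varepsilon,k)$, then invert in $k$), but your inversion step is genuinely cleaner. The paper substitutes $x \coloneqq k^2-1$, which leaves the square root $\sqrt{x}$ in place and forces a squaring step; this in turn requires tracking the sign of the left-hand side, producing a union of two intervals $[0,\tfrac{1-\varepsilon-J}{J-\varepsilon}] \cup [\beta_{x-},\beta_{x+}]$ that must then be shown to overlap via a separate calculation. Your substitution $u \coloneqq \sqrt{k^2-1}$ linearizes the square root term, so the relation $J \le \varsigma(\varepsilon,k)$ becomes a quadratic in $u$ \emph{without squaring}, and the only side condition to dispatch is $u_- \le 0$, which follows immediately from $\varepsilon < J \le 1-\varepsilon$. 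This buys you a shorter and more transparent argument; the paper's route is more laborious but of course reaches the same endpoint. One small point you leave implicit: Proposition~\ref{prop:unext_ent_hyp_test_lb} requires $\varepsilon \le 1 - 1/k^2$, which for $k \ge 2$ follows from $\varepsilon < 1/2$, itself forced by the hypothesis $\varepsilon < J \le 1-\varepsilon$ (via Proposition~\ref{prop:J_eps_range}); it would be worth stating this explicitly.
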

\begin{proof}
	See Appendix~\ref{app:distill_key_st_ub_hypo_test}.
\end{proof}

\medskip

When $\varepsilon = 0$, the upper bound from Theorem~\ref{theo:distill_key_st_ub_hypo_test} simplifies to the min-unextendible entanglement bound on the exact one-way distillable key of a state obtained in~\cite[Theorem~24]{WWW24}.

The one-shot, one-way distillable key of a state does not increase under the action of one-way LOCC channels, and we expect the same from any reasonable bound on the quantity. Similarly, we expect that the bound does not increase with decreasing $\varepsilon$ because demanding a higher fidelity between the distilled state and the target state should only lead to a lower yield from the distillation process. To examine if the upper bound on the one-shot, one-way distillable key of a state given in Theorem~\ref{theo:distill_key_st_ub_hypo_test} satisfies the aforementioned criteria, we invoke Lemma~\ref{lem:monotoncity_obj_func} stated below.
\begin{lemma}\label{lem:monotoncity_obj_func}
    For all $J \in (\varepsilon,1-\varepsilon]$ and $ \varepsilon \in [0,1]$, the following function
    \begin{equation}
        f(J,\varepsilon) \coloneqq \frac{1}{2}\log_2\!\left[\left(\frac{\sqrt{J(1-J)}+\sqrt{\varepsilon(1-\varepsilon)}}{J-\varepsilon}\right)^2+1\right]
    \end{equation}
    decreases monotonically with $J$ and increases monotonically with $\varepsilon$.
\end{lemma}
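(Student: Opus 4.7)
My approach will be to collapse the square-root expression inside $f$ via a trigonometric substitution, so that the monotonicity claims reduce to elementary properties of $\sin$. A direct calculus approach (differentiating $f$ in $J$ and in $\varepsilon$ and checking signs of the resulting expressions) is in principle feasible but algebraically unpleasant, whereas the substitution I have in mind produces a one-line simplification of $f$ that makes both claims visible by inspection.

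Concretely, I would set $J = \sin^2\theta$ and $\varepsilon = \sin^2\phi$ with $\theta,\phi \in [0,\pi/2]$, which is a bijection between $[0,1]$ and $[0,\pi/2]$. The hypothesis $J \in (\varepsilon, 1-\varepsilon]$ translates to $0 \le \phi < \theta$ together with $\theta+\phi \le \pi/2$, forcing $\theta - \phi$ into the interval $(0,\pi/2]$, which is precisely the range where $\sin$ is strictly increasing. Using the half-angle identities and the sum-to-product formulas, I expect
\begin{align}
\sqrt{J(1-J)} + \sqrt{\varepsilon(1-\varepsilon)} &= \tfrac{1}{2}\bigl[\sin(2\theta) + \sin(2\phi)\bigr] = \sin(\theta+\phi)\cos(\theta-\phi), \\
J - \varepsilon &= \tfrac{1}{2}\bigl[\cos(2\phi) - \cos(2\theta)\bigr] = \sin(\theta+\phi)\sin(\theta-\phi),
\end{align}
so that the fraction inside $f$ simplifies to $\cot(\theta-\phi)$. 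Combining this with the identity $\cot^2 x + 1 = \csc^2 x$ then yields the clean closed form
\begin{equation}
f(J,\varepsilon) = \tfrac{1}{2}\log_2\!\bigl[\csc^2(\theta-\phi)\bigr] = -\log_2 \sin(\theta-\phi).
\end{equation}

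From this form, both monotonicity statements follow immediately. Increasing $J$ while fixing $\varepsilon$ strictly increases $\theta$, hence $\theta-\phi$, so $\sin(\theta-\phi) \in (0,1]$ strictly increases and $f$ strictly decreases; symmetrically, increasing $\varepsilon$ while fixing $J$ strictly increases $\phi$, hence strictly decreases $\theta-\phi$, so $f$ strictly increases. I do not anticipate any serious obstacle in this route: once the substitution is in place, the proof reduces to a few standard trigonometric identities plus the monotonicity of $\sin$ on $(0,\pi/2]$. The only point requiring care is verifying that the upper endpoint constraint $J \le 1 - \varepsilon$ is precisely what keeps $\theta - \phi$ inside the monotonic range of $\sin$, ensuring that strict monotonicity holds throughout the stated domain.
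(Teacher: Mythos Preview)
Your proof is correct and takes a genuinely different route from the paper. The paper proceeds by brute-force calculus: it reduces to the square root of the inner fraction, computes $\frac{\partial}{\partial J}$ and $\frac{\partial}{\partial \varepsilon}$ explicitly, and then manipulates the resulting expressions (introducing $h = J - \varepsilon$, bounding terms, etc.) to determine their signs. Your trigonometric substitution $J = \sin^2\theta$, $\varepsilon = \sin^2\phi$ is far cleaner: the sum-to-product identities collapse the whole fraction to $\cot(\theta-\phi)$, giving the closed form $f = -\log_2 \sin(\theta-\phi)$, from which both monotonicity claims are immediate. The one point you should state explicitly when writing it up is that $\sin(\theta+\phi) \neq 0$ on the domain (since $J > \varepsilon \ge 0$ forces $\theta > 0$, hence $\theta + \phi \in (0,\pi/2]$), so the cancellation is legitimate. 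What your approach buys is a transparent one-line formula for $f$ that makes the result obvious by inspection; the paper's calculus approach, while more pedestrian, does not require spotting the substitution and would generalize more readily to perturbations of $f$ that break the trigonometric structure.
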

\begin{proof}
    See Appendix~\ref{app:monotonicity_obj_func}.
\end{proof}

\medskip

Recall that the quantity $J^{\varepsilon}_{\min}(\rho_{AB})$ increases under the action of one-way LOCC channels. Therefore, Lemma~\ref{lem:monotoncity_obj_func} implies that the upper bound on the one-shot, one-way distillable key of a state given in Theorem~\ref{theo:distill_key_st_ub_hypo_test} decreases monotonically under the action of one-way LOCC channels, and it increases monotonically with $\varepsilon$.

The smooth-min unextendible entanglement of a state can be written as a semidefinite program (see Appendix~\ref{app:semidefinite_programs}), which allows us to compute the upper bound on the one-shot, one-way distillable key given in Theorem~\ref{theo:distill_key_st_ub_hypo_test} using a semidefinite program. In Figure~\ref{fig:iso_state_dist_key} we demonstrate some numerical results for the smooth-min unextendible entanglement upper bound on the one-shot, one-way distillable key of an isotropic state~\cite{HH99}, which is parameterized as follows:
\begin{equation}\label{eq:iso_state_parameterization}
    \zeta^{F,d}_{AB} = F\Phi^{d}_{AB} + (1-F)\frac{I_{AB}-\Phi^d_{AB}}{d^2-1},
\end{equation}
where $F$ is a parameter in the interval $[0,1]$ and $d=d_A=d_B$.

\begin{figure}
    \centering
    \begin{subfigure}{0.4\textwidth}
        \includegraphics[width=\linewidth]{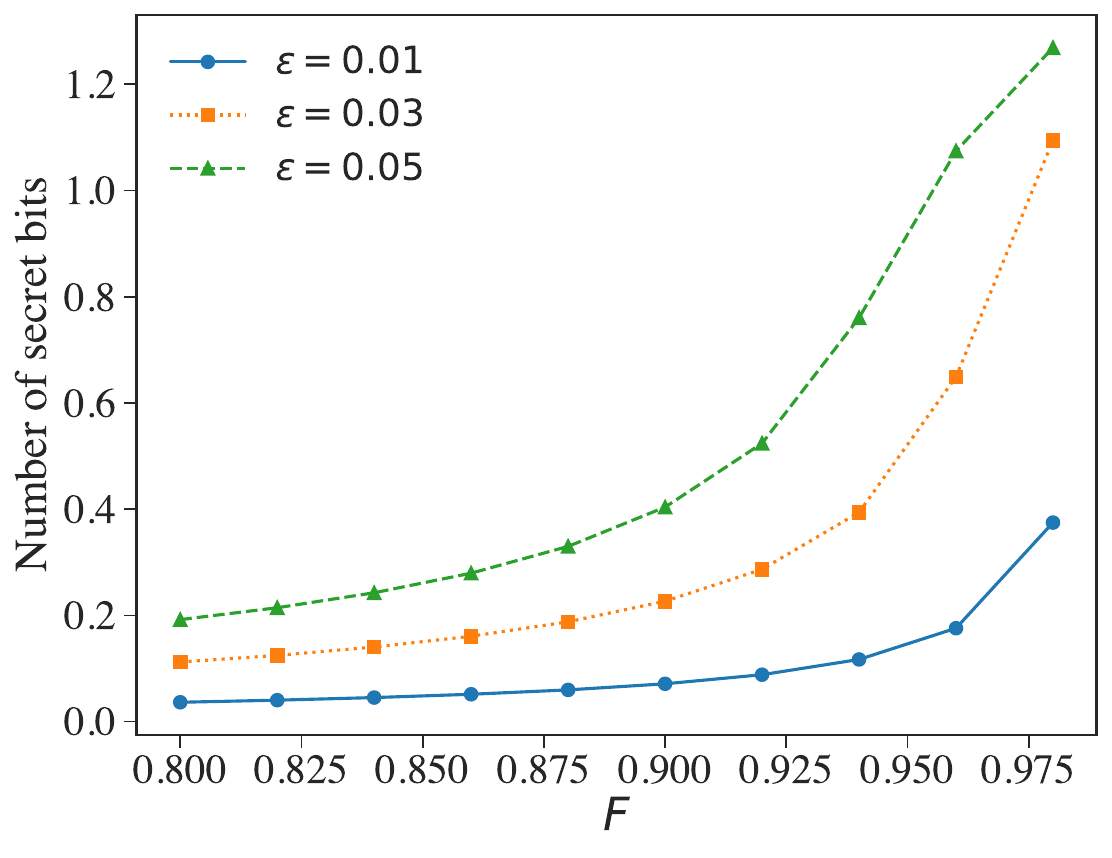}
        \caption{\centering\label{fig:iso_dist_key_2d}}
    \end{subfigure}
    \begin{subfigure}{0.4\textwidth}
        \includegraphics[width=\linewidth]{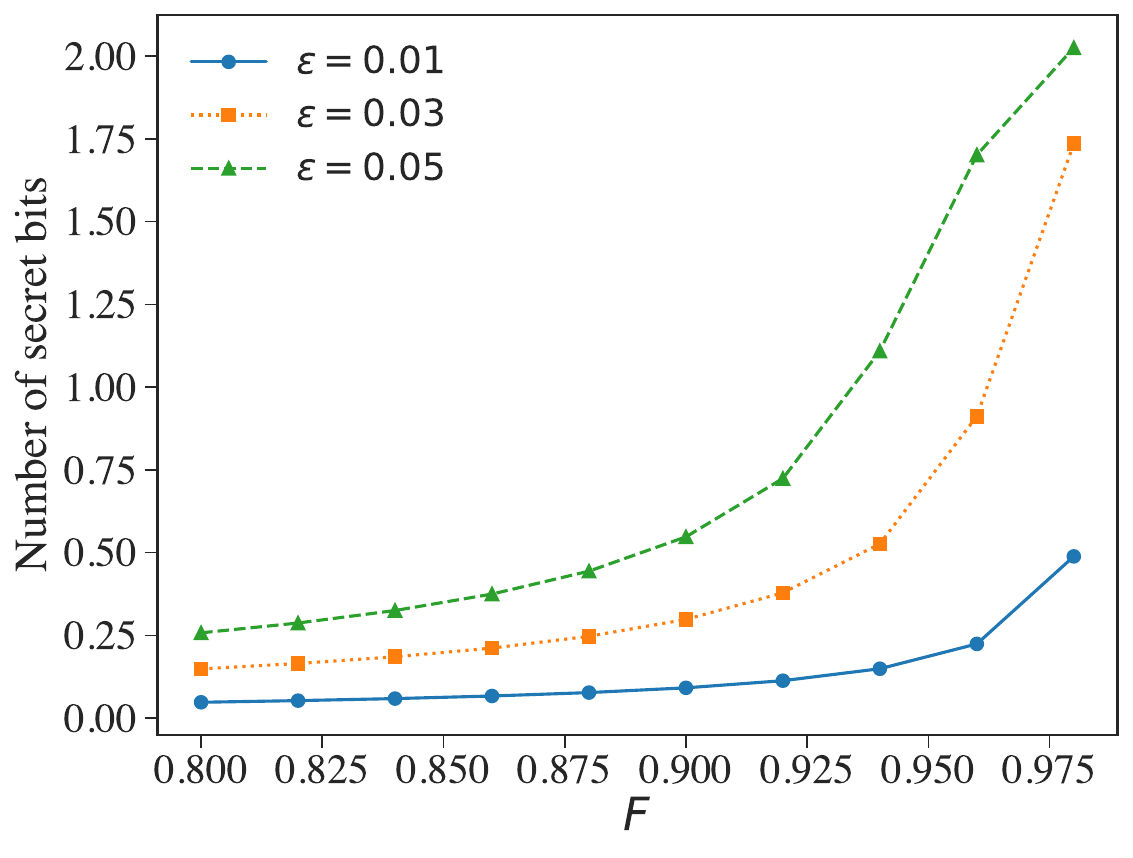}
        \caption{\centering\label{fig:iso_dist_key_3d}}
    \end{subfigure}
    \begin{subfigure}{0.4\textwidth}
        \includegraphics[width=\linewidth]{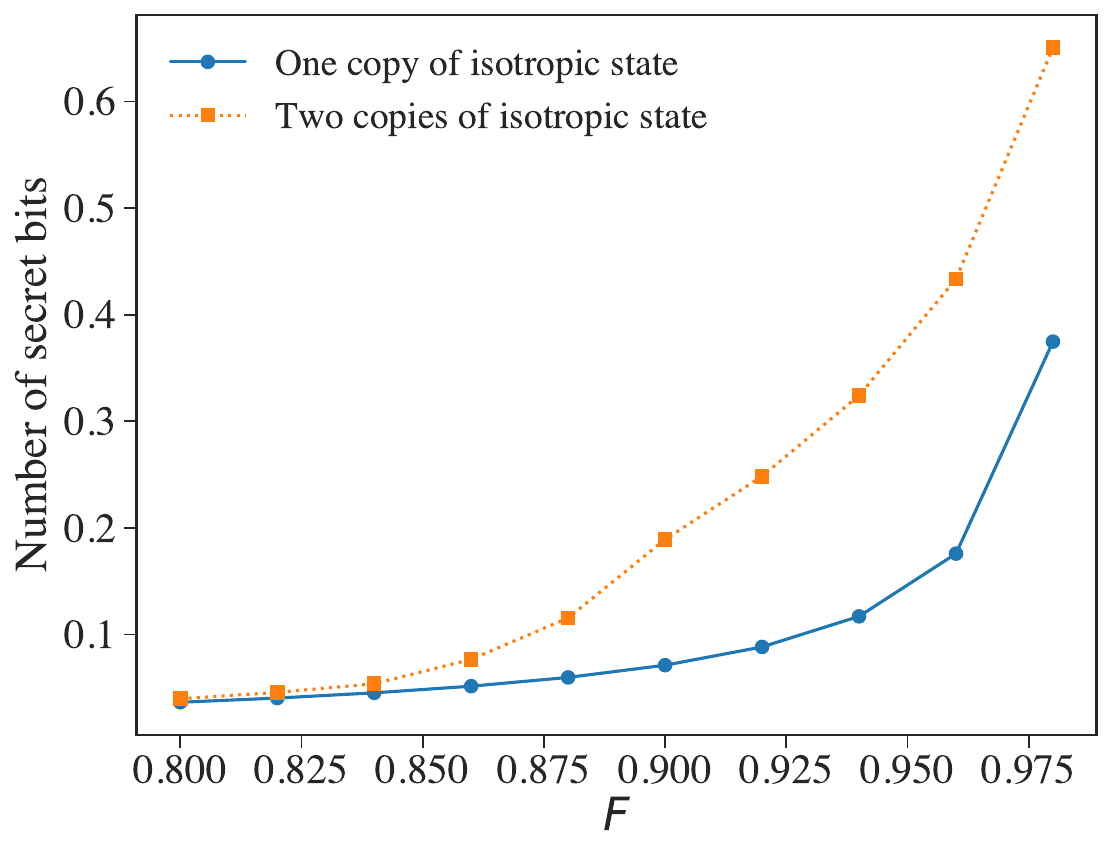}
        \caption{\centering\label{fig:iso_dist_key_1v2_qubit}}
    \end{subfigure}
    \caption{Upper bounds on the one-shot one-way distillable key of an isotropic state, described in~\eqref{eq:iso_state_parameterization}, using~\eqref{eq:dist_key_hypo_test_ub}. (a) Upper bounds for a two-dimensional isotropic state for different values of $\varepsilon$ plotted against the parameter $F$. (b) Upper bounds for a three-dimensional isotropic state for different values of $\varepsilon$ plotted against the parameter $F$. (c) Comparison between the upper bounds obtained for a single copy of a two-dimensional isotropic state with the upper bound obtained for two copies of a two-dimensional isotropic state, plotted against the parameter $F$.}
    \label{fig:iso_state_dist_key}
\end{figure}

\subsubsection{Simplified upper bounds}

The upper bound on the one-shot, one-way distillable key of a bipartite state obtained in Theorem~\ref{theo:distill_key_st_ub_hypo_test} is difficult to analyze, due to its complicated form. Weaker but simpler bounds can be obtained by using the smooth-min unextendible entanglement of states.

\textbf{Relaxation 1:} We first consider a relaxation of the upper bound in~\eqref{eq:dist_key_hypo_test_ub} by finding an algebraic inequality for the function $\varsigma(\varepsilon,k)$ defined in~\eqref{eq:varsig-func}:
\begin{align}
    \varsigma(\varepsilon,k) &= \varepsilon + \frac{1-2\varepsilon}{k^2} + \frac{2\sqrt{(k^2-1)\varepsilon(1-\varepsilon)}}{k^2}\\
    &\le \frac{1}{k^2} + \varepsilon\left(1-\frac{2}{k^2}\right) + 2\frac{\sqrt{k^2\varepsilon}}{k^2}\\
    &\le \frac{1}{k^2} + \varepsilon + 2\frac{\sqrt{\varepsilon}}{k}\\
    &= \left(\frac{1}{k} + \sqrt{\varepsilon}\right)^2,
\end{align}
where the first inequality follows from the fact that $(k^2-1)(1-\varepsilon)\le k^2$ for all $k\in \mathbb{N}$ and $\varepsilon \in [0,1]$ and the second inequality follows from the fact that $1-\frac{2}{k^2} < 1$ for all $k\in \mathbb{N}$.

We can now use this upper bound on $\varsigma(\varepsilon,k)$ along with the statement of Proposition~\ref{prop:unext_ent_hyp_test_lb} to obtain a lower bound on the smooth-min unextendible entanglement of a state $\sigma_{ABA'B'}$ such that $F(\sigma_{ABA'B'},\gamma^k_{ABA'B'})\ge 1-\varepsilon$ for some private state $\gamma^k_{ABA'B'}$. In particular,
\begin{equation}\label{eq:smooth_min_unext_ent_alg_lb}
    E^{u,\varepsilon}_{\min}\!\left(\sigma_{ABA'B'}\right) \ge -\frac{1}{2}\log_2\!\left(\frac{1}{k} + \sqrt{\varepsilon}\right)^2 = -\log_2\!\left(\frac{1}{k} + \sqrt{\varepsilon}\right).
\end{equation}
This leads to the following simplified but weaker upper bound on the one-shot, one-way distillable key of a bipartite state:
\begin{corollary}\label{cor:dist_key_st_alg_ub}
        Fix $\varepsilon \in (0,1)$. Let $\rho_{AB}$ be a quantum state  such that the following inequality holds:
    \begin{equation}
        \frac{1}{4} + \frac{\varepsilon}{2} + \frac{\sqrt{3\varepsilon(1-\varepsilon)}}{2} \ge J^{\varepsilon}_{\min}(\rho_{AB})  > \varepsilon,
    \end{equation}
    where $J^{\varepsilon}_{\min}$ is defined in~\eqref{eq:J_hypo_test_defn}. Then the one-shot, one-way distillable key of a bipartite state $\rho_{AB}$ is bounded from above as follows:
	\begin{equation}\label{eq:dist_key_hypo_test_alg_ub}
		K^{\varepsilon,\to}_D\!\left(\rho_{AB}\right) \le -\log_2\!\left(\sqrt{J^{\varepsilon}_{\min}(\rho_{AB})} - \sqrt{\varepsilon}\right).
	\end{equation}
\end{corollary}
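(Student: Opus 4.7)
The plan is to combine the algebraic relaxation of $\varsigma(\varepsilon,k)$ already stated in the excerpt with the monotonicity of the smooth-min unextendible entanglement under one-way LOCC channels, then rearrange to bound $\log_2 k$.

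First I would observe that the algebraic inequality $\varsigma(\varepsilon,k)\le \left(\tfrac{1}{k}+\sqrt{\varepsilon}\right)^2$ established just before the corollary, together with Proposition~\ref{prop:unext_ent_hyp_test_lb}, yields the lower bound~\eqref{eq:smooth_min_unext_ent_alg_lb}: for every state $\sigma_{ABA'B'}$ satisfying $F(\sigma_{ABA'B'},\gamma^k_{ABA'B'})\ge 1-\varepsilon$ for some private state $\gamma^k_{ABA'B'}$, one has $E^{u,\varepsilon}_{\min}(\sigma_{ABA'B'}) \ge -\log_2\!\left(\tfrac{1}{k}+\sqrt{\varepsilon}\right)$. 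This step only requires invoking Proposition~\ref{prop:unext_ent_hyp_test_lb} and the elementary algebra already carried out.

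Next I would take an arbitrary one-way LOCC channel $\mathcal{L}^{\to}_{AB\to A'B'A''B''}$ and a private state $\gamma^k_{A'B'A''B''}$ such that the distilled state $\sigma_{A'B'A''B''}\coloneqq \mathcal{L}^{\to}(\rho_{AB})$ satisfies $F(\sigma_{A'B'A''B''},\gamma^k_{A'B'A''B''})\ge 1-\varepsilon$, which corresponds to an achievable value of $\log_2 k$ in the definition~\eqref{eq:1W_dist_key_st_defn}. Since every one-way LOCC channel is two-extendible, Theorem~\ref{theo:unext_ent_st_monotonicity} applied with the underlying divergence $D^{\varepsilon}_{\min}$ gives $E^{u,\varepsilon}_{\min}(\rho_{AB})\ge E^{u,\varepsilon}_{\min}(\sigma_{A'B'A''B''})$. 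Chaining this with the lower bound from the previous paragraph yields
\begin{equation}
-\tfrac{1}{2}\log_2 J^{\varepsilon}_{\min}(\rho_{AB}) \;=\; E^{u,\varepsilon}_{\min}(\rho_{AB}) \;\ge\; -\log_2\!\left(\tfrac{1}{k}+\sqrt{\varepsilon}\right),
\end{equation}
using the definition~\eqref{eq:J_hypo_test_defn}.

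Finally, I would exponentiate and rearrange. The above inequality is equivalent to $\sqrt{J^{\varepsilon}_{\min}(\rho_{AB})}\le \tfrac{1}{k}+\sqrt{\varepsilon}$, which, under the hypothesis $J^{\varepsilon}_{\min}(\rho_{AB})>\varepsilon$, rearranges to $\tfrac{1}{k}\ge \sqrt{J^{\varepsilon}_{\min}(\rho_{AB})}-\sqrt{\varepsilon}>0$, and hence $\log_2 k\le -\log_2\!\left(\sqrt{J^{\varepsilon}_{\min}(\rho_{AB})}-\sqrt{\varepsilon}\right)$. Taking the supremum over all valid $(k,\gamma^k,\mathcal{L}^\to)$ on the left gives the claimed bound~\eqref{eq:dist_key_hypo_test_alg_ub}. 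The upper restriction on $J^{\varepsilon}_{\min}(\rho_{AB})$ in the hypothesis is inherited from the applicability regime of Theorem~\ref{theo:distill_key_st_ub_hypo_test} (equivalently, from the condition $\varepsilon\le 1-1/k^2$ required by Lemma~\ref{lem:priv_marg_phi_fid_bnd}); this ensures the distillation problem is nontrivial so that a finite bound on $\log_2 k$ can be deduced.

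There is no genuinely hard step; the only subtlety is checking that the algebraic relaxation of $\varsigma(\varepsilon,k)$ holds over the relevant range and that the positivity condition $\sqrt{J^{\varepsilon}_{\min}(\rho_{AB})}>\sqrt{\varepsilon}$ coming from $J^{\varepsilon}_{\min}(\rho_{AB})>\varepsilon$ makes the logarithm on the right-hand side of~\eqref{eq:dist_key_hypo_test_alg_ub} well defined. Both are immediate, so the corollary follows as a straightforward simplification of Theorem~\ref{theo:distill_key_st_ub_hypo_test}.
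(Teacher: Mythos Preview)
Your proposal is correct and follows essentially the same approach as the paper: the paper's proof simply states that it ``is similar to the proof of Theorem~\ref{theo:distill_key_st_ub_hypo_test} and follows directly from~\eqref{eq:smooth_min_unext_ent_alg_lb},'' and you have spelled out exactly that argument---combining the relaxed lower bound~\eqref{eq:smooth_min_unext_ent_alg_lb} with the monotonicity of $E^{u,\varepsilon}_{\min}$ under one-way LOCC (Theorem~\ref{theo:unext_ent_st_monotonicity}) and then rearranging.
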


\begin{proof}
    The proof is similar to the proof of Theorem~\ref{theo:distill_key_st_ub_hypo_test} and follows directly from~\eqref{eq:smooth_min_unext_ent_alg_lb}.
\end{proof}

\medskip

\textbf{Relaxation 2:} Another way to obtain a simpler bound on the one-shot, one-way distillable key of a state is by considering the trace norm. First we will find a simpler but weaker statement of Lemma~\ref{lem:priv_marg_phi_fid_bnd}. Consider a state $\sigma_{ABA'B'}$ such that $F(\gamma^k_{ABA'B'},\sigma_{ABA'B'})\ge 1-\varepsilon$ for some private state $\gamma^k_{ABA'B'}$. Combining~\eqref{eq:twirled_priv_marg} and~\eqref{eq:fid_twirl_priv_max_mix}, we find that the following inequality holds for any state $\omega_{AEA'E'} \in \mathcal{F}(\sigma_{ABA'B'})$:
\begin{equation}
    1-\varepsilon \le F\!\left(\pi_{AE}, q~\Phi^k_{AE} + (1-q)\frac{I_{AE}-\Phi^k_{AE}}{k^2-1}\right),
\end{equation}
where $q = \operatorname{Tr}\!\left[\Pi^{\gamma}_{AEA'E'}\omega_{AEA'E'}\right]$ as stated in~\eqref{eq:q_priv_test_interpret}. Using the Fuchs-van de Graaf inequality~\cite{FvdG98} (specifically, $F(\rho,\sigma) \leq 1-\frac{1}{4} \left \| \rho  - \sigma \right \|_1^2$), we arrive at the following inequality:
\begin{equation}
    1-\varepsilon \le F\!\left(\pi_{AE}, q~\Phi^k_{AE} + (1-q)\frac{I_{AE}-\Phi^k_{AE}}{k^2-1}\right) \le 1-\frac{1}{4}\left\Vert \pi_{AE} - q~\Phi^k_{AE} + (1-q)\frac{I_{AE}-\Phi^k_{AE}}{k^2-1}\right\Vert^2_1.
\end{equation}
The above inequality can be rewritten as follows:
\begin{align}
    \sqrt{\varepsilon} &\ge \frac{1}{2}\left\Vert \pi_{AE} - q~\Phi^k_{AE} + (1-q)\frac{I_{AE}-\Phi^k_{AE}}{k^2-1}\right\Vert_1\\
    &= \frac{1}{2}\left\Vert \frac{1}{k^2}\Phi^k_{AE} - \left(1-\frac{1}{k^2}\right)\frac{I_{AE}-\Phi^k_{AE}}{k^2-1} - q~\Phi^k_{AE} + (1-q)\frac{I_{AE}-\Phi^k_{AE}}{k^2-1}\right\Vert_1\\
    &= \frac{1}{2}\left\Vert \left(\frac{1}{k^2}-q\right)\Phi^k_{AE} + \left(\frac{1}{k^2}-q\right)\frac{I_{AE}-\Phi^k_{AE}}{k^2-1}\right\Vert_1\\
    &= \left|\frac{1}{k^2}-q\right|\left\Vert\frac{1}{2}\Phi^k_{AE} + \frac{1}{2}\frac{I_{AE}-\Phi^k_{AE}}{k^2-1} \right\Vert_1\\
    &= \left|\frac{1}{k^2}-q\right|,\label{eq:q_trace_dist_ineq_step_1}
\end{align}
where the first equality follows by writing the maximally mixed state as an isotropic state and the last equality follows by realizing that $\frac{1}{2}\Phi^k_{AE} + \frac{1}{2}\frac{I_{AE}-\Phi^k_{AE}}{k^2-1}$ is a quantum state for which the trace norm is equal to 1. The inequality in~\eqref{eq:q_trace_dist_ineq_step_1} is satisfied if and only if $q$ lies in the following range:
\begin{equation}
    \frac{1}{k^2}-\sqrt{\varepsilon} \le q \le \frac{1}{k^2} + \sqrt{\varepsilon}.
\end{equation}
Since $q = \operatorname{Tr}\!\left[\Pi^{\gamma}_{ABA'B'}\omega_{ABA'B'}\right]$, we have the following inequality:
\begin{equation}\label{eq:priv_test_pass_trace_dist}
    \operatorname{Tr}\!\left[\Pi^{\gamma}_{ABA'B'}\omega_{ABA'B'}\right] \le \frac{1}{k^2} + \sqrt{\varepsilon}.
\end{equation}

Using the inequality in~\eqref{eq:priv_test_pass_trace_dist} and the arguments used in the proof of Proposition~\ref{prop:unext_ent_hyp_test_lb}, we can bound the smooth-min unextendible entanglement of the state $\sigma_{ABA'B'}$ as follows:
\begin{equation}\label{eq:hypot_test_unext_ent_td_lb}
    E^{u,\varepsilon}_{\min}\!\left(\sigma_{ABA'B'}\right) \ge -\frac{1}{2}\log_2\!\left(\frac{1}{k^2} + \sqrt{\varepsilon}\right).
\end{equation}
A relaxed upper bound on the one-shot, one-way distillable key of a bipartite state can be found by using the inequality mentioned above, which we state as Corollary~\ref{cor:dist_key_st_td_ub}.
\begin{corollary}\label{cor:dist_key_st_td_ub}
    Fix $\varepsilon \in (0,1)$. Let $\rho_{AB}$ be a quantum state  such that the following inequality holds:
    \begin{equation}
        \frac{1}{4} + \frac{\varepsilon}{2} + \frac{\sqrt{3\varepsilon(1-\varepsilon)}}{2} \ge J^{\varepsilon}_{\min}(\rho_{AB})  > \sqrt{\varepsilon},
    \end{equation}
    where $J^{\varepsilon}_{\min}$ is defined in~\eqref{eq:J_hypo_test_defn}. Then the one-shot, one-way distillable key of a bipartite state $\rho_{AB}$ is bounded from above by the following quantity:
	\begin{equation}\label{eq:dist_key_hypo_test_td_ub}
		K^{\varepsilon,\to}_D\!\left(\rho_{AB}\right) \le -\frac{1}{2}\log_2\!\left(J^{\varepsilon}_{\min}(\rho_{AB}) - \sqrt{\varepsilon}\right).
	\end{equation}
\end{corollary}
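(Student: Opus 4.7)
The plan is to mirror the proof of Theorem~\ref{theo:distill_key_st_ub_hypo_test} but substitute the trace-distance-based lower bound on the smooth-min unextendible entanglement, \eqref{eq:hypot_test_unext_ent_td_lb}, in place of the $\varsigma(\varepsilon,k)$ bound from Proposition~\ref{prop:unext_ent_hyp_test_lb}. The inequality \eqref{eq:hypot_test_unext_ent_td_lb} has already been derived in the discussion preceding the corollary using the Fuchs--van de Graaf inequality applied after the twirling step, so I would simply invoke it as a ready ingredient.

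First I would fix any protocol that achieves the one-shot, one-way distillable key: a one-way LOCC channel $\mathcal{L}^{\to}_{AB\to A'B'A''B''}$, a private state $\gamma^k_{A'B'A''B''}$ with $k = 2^{K^{\varepsilon,\to}_D(\rho_{AB})}$, and an output state $\sigma_{A'B'A''B''} := \mathcal{L}^{\to}(\rho_{AB})$ satisfying $F(\sigma_{A'B'A''B''}, \gamma^k_{A'B'A''B''}) \ge 1-\varepsilon$. Because every one-way LOCC channel is two-extendible, Theorem~\ref{theo:unext_ent_st_monotonicity} yields $E^{u,\varepsilon}_{\min}(\rho_{AB}) \ge E^{u,\varepsilon}_{\min}(\sigma_{A'B'A''B''})$. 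Combining this with \eqref{eq:hypot_test_unext_ent_td_lb} gives $E^{u,\varepsilon}_{\min}(\rho_{AB}) \ge -\tfrac{1}{2}\log_2\!\left(\tfrac{1}{k^2} + \sqrt{\varepsilon}\right)$, equivalently $J^{\varepsilon}_{\min}(\rho_{AB}) \le \tfrac{1}{k^2} + \sqrt{\varepsilon}$. Rearranging under the hypothesis $J^{\varepsilon}_{\min}(\rho_{AB}) > \sqrt{\varepsilon}$ yields $k^2 \le \bigl(J^{\varepsilon}_{\min}(\rho_{AB}) - \sqrt{\varepsilon}\bigr)^{-1}$, and taking $\tfrac{1}{2}\log_2$ produces the advertised bound \eqref{eq:dist_key_hypo_test_td_ub}.

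There is essentially no hard step: the delicate work (twirling to an isotropic state, applying Fuchs--van de Graaf to pass from fidelity to trace distance, and extracting the bound on $q = \operatorname{Tr}[\Pi^{\gamma}_{AEA'E'}\omega_{AEA'E'}]$) was already carried out in the lead-up to the corollary. The only items to monitor are the domain conditions on $J^{\varepsilon}_{\min}(\rho_{AB})$: the lower bound $J^{\varepsilon}_{\min}(\rho_{AB}) > \sqrt{\varepsilon}$ is needed to keep the logarithm well-defined and to produce a finite, nontrivial upper bound, while the upper bound $J^{\varepsilon}_{\min}(\rho_{AB}) \le \tfrac{1}{4} + \tfrac{\varepsilon}{2} + \tfrac{\sqrt{3\varepsilon(1-\varepsilon)}}{2}$ (inherited from Remark~\ref{rem:nogo_key_distill}) is the regime in which a private state with $k \ge 2$ can exist as a target, so the estimate is not vacuous. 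As a sanity check, one can verify that this relaxed bound is always at least as large as the one in Theorem~\ref{theo:distill_key_st_ub_hypo_test}, which is the expected cost of replacing Lemma~\ref{lem:priv_marg_phi_fid_bnd} by the coarser Fuchs--van de Graaf estimate.
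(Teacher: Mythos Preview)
Your proposal is correct and follows essentially the same approach as the paper: the paper's proof simply states that the argument is similar to that of Theorem~\ref{theo:distill_key_st_ub_hypo_test} and follows directly from~\eqref{eq:hypot_test_unext_ent_td_lb}, which is precisely the route you have spelled out. The only minor technicality is that the supremum defining $K^{\varepsilon,\to}_D(\rho_{AB})$ need not be attained, so rather than fixing $k = 2^{K^{\varepsilon,\to}_D(\rho_{AB})}$ you should bound $\log_2 k$ for every feasible $k$ and then take the supremum; this does not affect the argument.
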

\begin{proof}
    The proof is similar to the proof of Theorem~\ref{theo:distill_key_st_ub_hypo_test} and follows directly from~\eqref{eq:hypot_test_unext_ent_td_lb}. 
\end{proof}

\begin{remark}
    The upper bound on the one-shot, one-way distillable key given in Corollary~\ref{cor:dist_key_st_alg_ub} is tighter than the upper bound given in Corollary~\ref{cor:dist_key_st_td_ub} for highly entangled states, but the order is reversed for weakly entangled states. To be precise, 
    \begin{equation}
        K^{\varepsilon,\to}_D\!\left(\rho_{AB}\right) \le -\log_2\!\left(\sqrt{J^{\varepsilon}_{\min}} - \sqrt{\varepsilon}\right) \le -\frac{1}{2}\log_2\!\left(J^{\varepsilon}_{\min} - \sqrt{\varepsilon}\right) ,
    \end{equation}
    for all  $J^{\varepsilon}_{\min} \in \left[\sqrt{\varepsilon},\frac{1}{4}\!\left(1 + \varepsilon + 2\sqrt{\varepsilon}\right)\right]$,
    and
    \begin{equation}
        K^{\varepsilon,\to}_D\!\left(\rho_{AB}\right)  \le -\frac{1}{2}\log_2\!\left(J^{\varepsilon}_{\min} - \sqrt{\varepsilon}\right)\le -\log_2\!\left(\sqrt{J^{\varepsilon}_{\min}} - \sqrt{\varepsilon}\right) ,
    \end{equation}
    for all  $J^{\varepsilon}_{\min} \in \left[\frac{1}{4}\!\left(1 + \varepsilon + 2\sqrt{\varepsilon}\right), 1-\varepsilon\right]$.
\end{remark}

The bounds obtained in Corollary~\ref{cor:dist_key_st_alg_ub} and Corollary~\ref{cor:dist_key_st_td_ub} can be computed using a semidefinite program. In Figure~\ref{fig:isotropic_st_relaxed_bnd} we plot the three different upper bounds on the one-shot, one-way distillable key of isotropic states, defined in~\eqref{eq:iso_state_parameterization}, obtained from Theorem~\ref{theo:distill_key_st_ub_hypo_test}, Corollary~\ref{cor:dist_key_st_alg_ub}, and Corollary~\ref{cor:dist_key_st_td_ub}. Notice that the bound from Corollary~\ref{cor:dist_key_st_alg_ub} is tighter than the bound from Corollary~\ref{cor:dist_key_st_td_ub} when the resource state is highly entangled.

\begin{figure}
    \centering
    \begin{subfigure}{0.4\linewidth}
       \includegraphics[width=\linewidth]{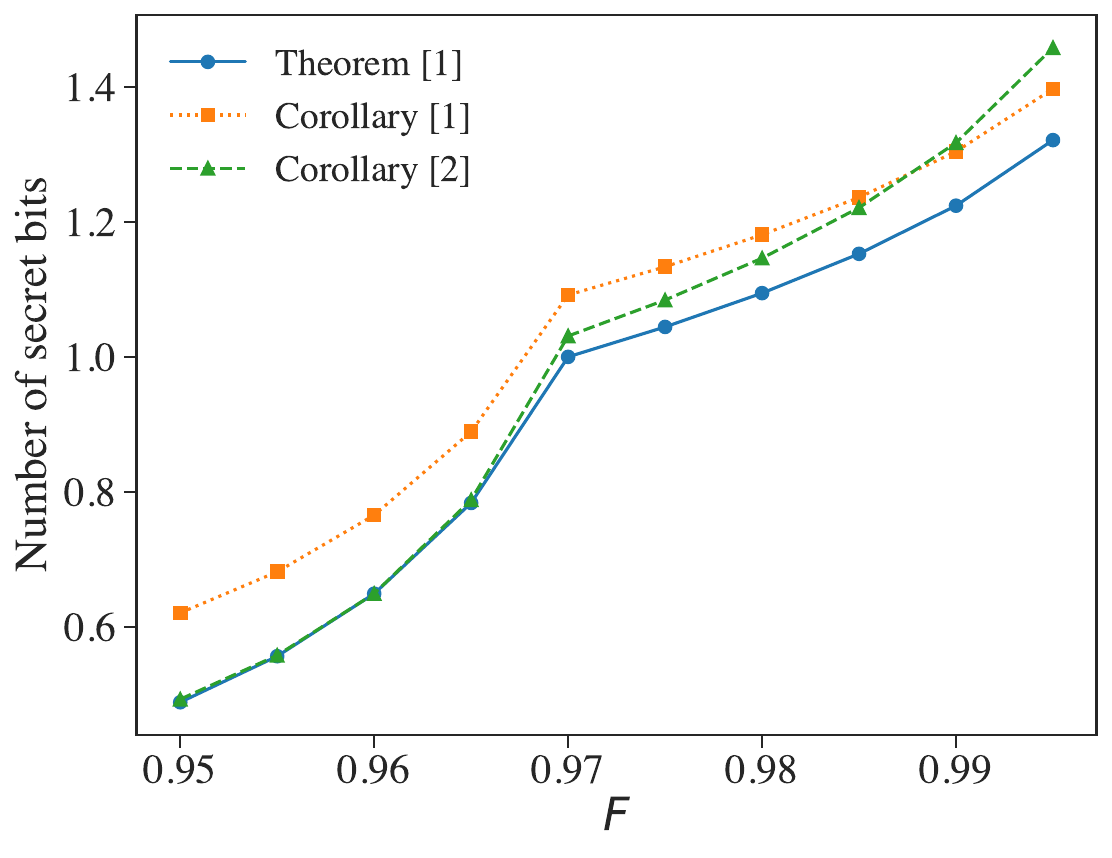} 
       \caption{\centering $\varepsilon = 0.03$}
    \end{subfigure}
    \begin{subfigure}{0.4\linewidth}
        \includegraphics[width = \linewidth]{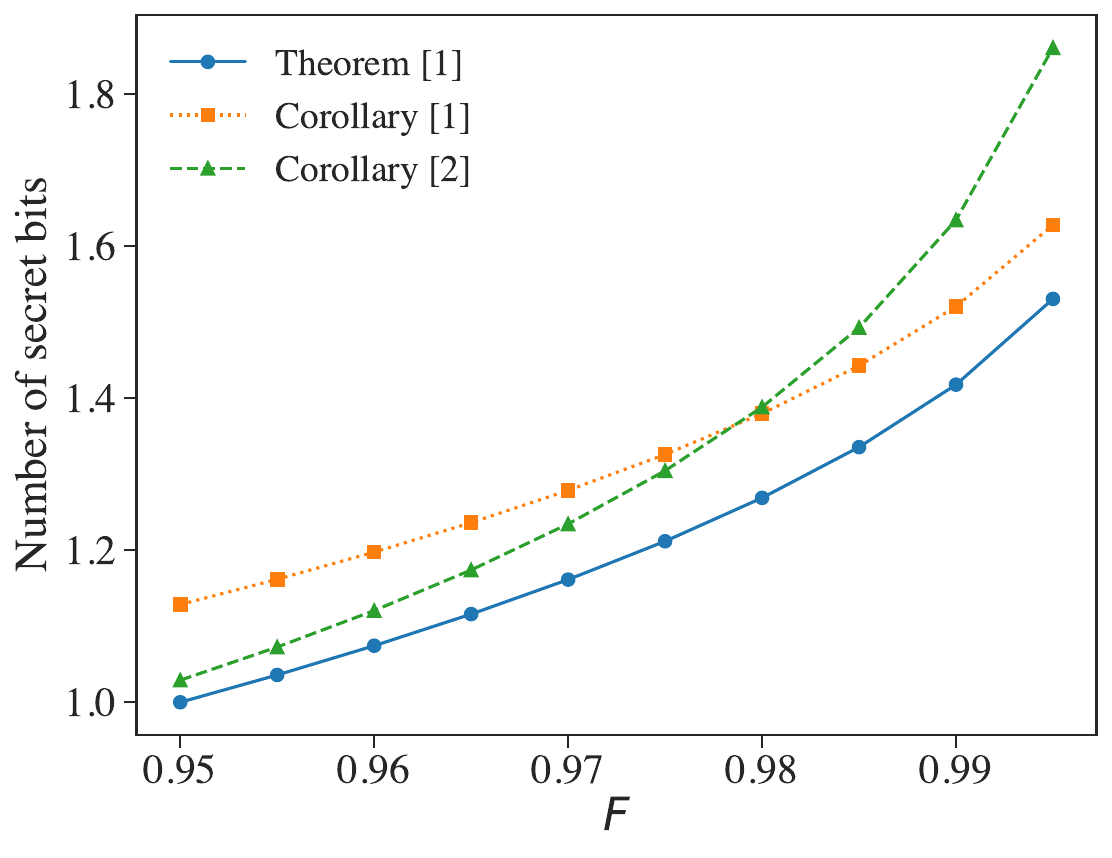}
        \caption{\centering $\varepsilon = 0.05$}
    \end{subfigure}
    
    \caption{Comparison between the upper bounds on the one-shot, one-way distillable key of two-dimensional isotropic states, parameterized as given in~\eqref{eq:iso_state_parameterization}, obtained using Theorem~\ref{theo:distill_key_st_ub_hypo_test}, Corollary~\ref{cor:dist_key_st_alg_ub}, and Corollary~\ref{cor:dist_key_st_td_ub}.}
    \label{fig:isotropic_st_relaxed_bnd}
\end{figure}

\subsection{One-way secret-key distillation from i.i.d.~copies of a state}\label{sec:sandwich_st_ub}

Resource distillation from independent and identically distributed (i.i.d.) copies of a state is often considered a physically relevant setting. As such, the rate at which secret bits can be distilled from $n$~i.i.d.~copies of a state, which is equal to $\frac{1}{n}K^{\varepsilon,\to}_D\!\left(\rho^{\otimes n}_{AB}\right)$, is an important quantity from both information-theoretic and practical perspectives. 

In principle, the bounds obtained for the one-shot, one-way distillable key of a state can be used to obtain an upper bound on the rate of one-way distillable key rate from $n$ copies of $\rho_{AB}$, with $\varepsilon$ error tolerance, by simply calculating the upper bounds for the state $\rho^{\otimes n}_{AB}$. However, the complexity of computing these bounds scales exponentially with the number of copies $n$, rendering the computation of these bounds intractable for large enough $n$. The smooth-min relative entropy is not subadditive; that is, there exists a choice of states $\rho^1,  \rho^2, \sigma^2$, and $\sigma^2$, such that the following inequality does not hold:
\begin{equation}
    D^{\varepsilon}_{\min}\!\left(\rho^2\otimes \rho^2\middle\Vert\sigma^1\otimes \sigma^2\right) \le D^{\varepsilon}_{\min}\!\left(\rho^1\middle\Vert\sigma^1\right) + D^{\varepsilon}_{\min}\!\left( \rho^2\middle\Vert\sigma^2\right),
\end{equation}
which makes it difficult to obtain a single-letter bound on the one-way distillable key rate with our approach. We turn to the $\alpha$-sandwiched unextendible entanglement of bipartite states to address this problem.

The smooth-min relative entropy is related to the $\alpha$-sandwiched R\'enyi relative entropy by the following inequality~\cite[Lemma~5]{CMW16}:
\begin{equation}\label{eq:smooth_min_vs_sandwich_ineq}
    D^{\varepsilon}_{\min}\!\left(\rho\Vert\sigma\right) \le \widetilde{D}_{\alpha}\!\left(\rho\Vert\sigma\right) + \frac{\alpha}{\alpha -1}\log_2\!\left(\frac{1}{1-\varepsilon}\right)
\end{equation}
for all $\varepsilon \in [0,1)$ and $\alpha \in (1,\infty)$. By taking an infimum over all states $\sigma \in \mathcal{F}(\rho)$, we arrive at an inequality relating the smooth-min unextendible entanglement of a state and the $\alpha$-sandwiched unextendible entanglement of the state, which we state in Proposition~\ref{prop:smooth_min_vs_sandwich_unext_ent} below.
\begin{proposition}\label{prop:smooth_min_vs_sandwich_unext_ent}
    Let $\alpha \in (1,\infty)$ and $\varepsilon \in [0,1)$. Then the following inequality holds between the smooth-min unextendible entanglement of a state and the $\alpha$-sandwiched unextendible entanglement of the state:
    \begin{equation}
        E^{u,\varepsilon}_{\min}\!\left(\rho\right) \le \widetilde{E}^u_{\alpha}\!\left(\rho\right) + \frac{1}{2}\cdot\frac{\alpha}{\alpha -1}\log_2\!\left(\frac{1}{1-\varepsilon}\right).
    \end{equation}
\end{proposition}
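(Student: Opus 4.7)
The plan is to derive the bound by taking the state-wise inequality~\eqref{eq:smooth_min_vs_sandwich_ineq} from~\cite[Lemma~5]{CMW16} and lifting it to the level of the unextendibility functionals by optimizing both sides over the common feasible set $\mathcal{F}(\rho)$. Since $\mathcal{F}(\rho)$ does not depend on $\alpha$, the extra $\varepsilon$-dependent term passes through the infimum as an additive constant, and the factor of $\frac{1}{2}$ that distinguishes the unextendible entanglement from the underlying divergence comes along for the ride.

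Concretely, I would first recall that for every $\sigma \in \mathcal{S}(AB)$, every $\alpha \in (1,\infty)$, and every $\varepsilon \in [0,1)$, the inequality
\begin{equation}
D^{\varepsilon}_{\min}\!\left(\rho\Vert\sigma\right) \le \widetilde{D}_{\alpha}\!\left(\rho\Vert\sigma\right) + \frac{\alpha}{\alpha-1}\log_2\!\left(\frac{1}{1-\varepsilon}\right)
\end{equation}
holds by~\cite[Lemma~5]{CMW16}. In particular, this inequality holds for every $\sigma \in \mathcal{F}(\rho)$ as defined in~\eqref{eq:free_states}. Next, I would take an infimum over $\sigma \in \mathcal{F}(\rho)$ on both sides. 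The left-hand side yields $2 E^{u,\varepsilon}_{\min}(\rho)$ by definition, while on the right-hand side the $\varepsilon$-dependent term is independent of $\sigma$ and hence pulls out of the infimum, leaving $\inf_{\sigma \in \mathcal{F}(\rho)} \widetilde{D}_{\alpha}(\rho\Vert\sigma) = 2\widetilde{E}^u_{\alpha}(\rho)$. Dividing the resulting inequality by $2$ yields the claim.

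Since the single-state inequality from~\cite{CMW16} is assumed and the feasible set $\mathcal{F}(\rho)$ is identical on both sides, there is essentially no obstacle: the argument is a one-line optimization over a common set, and the only nontrivial input is the cited lemma. The proof is therefore brief and requires no additional minimax or continuity considerations.
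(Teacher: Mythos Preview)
Your proposal is correct and matches the paper's own argument essentially verbatim: the paper states the inequality~\eqref{eq:smooth_min_vs_sandwich_ineq} from~\cite[Lemma~5]{CMW16} and then remarks that taking an infimum over all $\sigma \in \mathcal{F}(\rho)$ yields the proposition, exactly as you outlined.
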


As a counterpart of $J^{\varepsilon}_{\min}(\rho_{AB})$, we define the following quantity for mathematical simplicity:
\begin{equation}
    \widetilde{J}^{\varepsilon}_{\alpha}\!\left(\rho_{AB}\right) \coloneqq 2^{-2\widetilde{E}^u_{\alpha}(\rho_{AB})}\left(1-\varepsilon\right)^{\frac{\alpha}{\alpha - 1}} \qquad \forall \alpha \in (1,\infty).
\end{equation}
It is straightforward to verify from Proposition~\ref{prop:smooth_min_vs_sandwich_unext_ent} that
\begin{equation}\label{eq:J_eps_J_alpha_reln}
    J^{\varepsilon}_{\min}\!\left(\rho_{AB}\right) \ge \widetilde{J}^{\varepsilon}_{\alpha}\!\left(\rho_{AB}\right) \qquad \forall \alpha \in (1,\infty).
\end{equation}
One can simply use Theorem~\ref{theo:distill_key_st_ub_hypo_test} and Lemma~\ref{lem:monotoncity_obj_func} to obtain an upper bound on the one-shot, one-way distillable key of a state in terms of the $\alpha$-sandwiched unextendible entanglement, as follows:
\begin{equation}\label{eq:1shot_1w_1copy_sandwich_bnd}
    K^{\varepsilon,\to}_D\!\left(\rho_{AB}\right) \le f\!\left(J^{\varepsilon}_{\min}\!\left(\rho_{AB}\right),\varepsilon\right) \le f\!\left(\widetilde{J}^{\varepsilon}_{\alpha}\!\left(\rho_{AB}\right),\varepsilon\right) \qquad \forall \alpha \in (1,\infty),
\end{equation}
where the function $f$ is defined in Lemma~\ref{lem:monotoncity_obj_func}. The first inequality follows from Theorem~\ref{theo:distill_key_st_ub_hypo_test}, and the second inequality follows from~\eqref{eq:J_eps_J_alpha_reln} and Lemma~\ref{lem:monotoncity_obj_func}.

While the $\alpha$-sandwiched unextendible entanglement bound is clearly worse than the smooth-min unextendible entanglement bound, it gives a single-letter upper bound on the one-shot, one-way distillable key from $n$ i.i.d.~copies of a state $\rho_{AB}$. The subadditivity of the $\alpha$-sandwiched unextendible entanglement implies the following for all $\alpha \in (1,\infty)$:
\begin{equation}
    \widetilde{J}^{\varepsilon}_{\alpha}\!\left(\rho^{\otimes n}_{AB}\right) =  2^{-2\widetilde{E}^u_{\alpha}(\rho^{\otimes n}_{AB})}\left(1-\varepsilon\right)^{\frac{\alpha}{\alpha - 1}} \ge 2^{-2n\widetilde{E}^u_{\alpha}(\rho_{AB})}\left(1-\varepsilon\right)^{\frac{\alpha}{\alpha - 1}}.
\end{equation}
Let us define the following quantity:
\begin{equation}\label{eq:J_sandwich_defn}
    \widetilde{J}^{\varepsilon,n}_{\alpha}\!\left(\rho_{AB}\right) \coloneqq 2^{-2n\widetilde{E}^u_{\alpha}(\rho_{AB})}\left(1-\varepsilon\right)^{\frac{\alpha}{\alpha - 1}}.
\end{equation}
Then the fact that $\widetilde{J}^{\varepsilon}_{\alpha}\left(\rho^{\otimes n}_{AB}\right)\ge \widetilde{J}^{\varepsilon,n}_{\alpha}\left(\rho_{AB}\right)$, combined with Lemma~\ref{lem:monotoncity_obj_func} and~\eqref{eq:1shot_1w_1copy_sandwich_bnd}, leads to a single-letter upper bound on the one-shot, one-way distillable key of $n$ i.i.d.~copies of a state, which we state formally in Corollary~\ref{cor:dist_key_sandwich_ub_n_copies}.
\begin{corollary}\label{cor:dist_key_sandwich_ub_n_copies}
    Fix $\varepsilon \in (0,1)$ and $\alpha \in (1,\infty)$. Let $\rho_{AB}$ be a quantum state  such that the following inequality holds:
    \begin{equation}
        \varepsilon < \widetilde{J}^{\varepsilon,n}_{\alpha}\!\left(\rho_{AB}\right) \le \frac{1}{4} + \frac{\varepsilon}{2} + \frac{\sqrt{3\varepsilon(1-\varepsilon)}}{2},
    \end{equation}
   where $\widetilde{J}^{\varepsilon,n}_{\alpha}\left(\rho_{AB}\right)$ is defined in~\eqref{eq:J_sandwich_defn}. Then the $n$-shot, one-way distillable key of a state $\rho_{AB}$ is bounded from above by the following quantity:
	\begin{equation}\label{eq:dist_key_sandwich_ub_n_copies}
		K^{\varepsilon,\to}_D\!\left(\rho^{\otimes n}_{AB}\right) \le \frac{1}{2}\log_2\!\left[\left(\frac{\sqrt{\widetilde{J}^{\varepsilon,n}_{\alpha}\!\left(\rho_{AB}\right)\!\left(1-\widetilde{J}^{\varepsilon,n}_{\alpha}\!\left(\rho_{AB}\right)\right)}+\sqrt{\varepsilon(1-\varepsilon)}}{\widetilde{J}^{\varepsilon,n}_{\alpha}\!\left(\rho_{AB}\right)-\varepsilon}\right)^2+1\right],
	\end{equation}
  If $\widetilde{J}^{\varepsilon,n}_{\alpha}\!\left(\rho_{AB}\right) > \frac{1}{4} + \frac{\varepsilon}{2} + \frac{\sqrt{3\varepsilon(1-\varepsilon)}}{2}$, then the $n$-shot, one-way distillable key of the state is equal to zero.
\end{corollary}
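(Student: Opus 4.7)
The plan is to apply the one-shot bound of Theorem~\ref{theo:distill_key_st_ub_hypo_test} to the tensor-product state $\rho^{\otimes n}_{AB}$ with bipartition $A^n \vert B^n$, and then replace the resulting smooth-min quantity $J^{\varepsilon}_{\min}(\rho^{\otimes n}_{AB})$ by the desired single-letter sandwiched quantity $\widetilde{J}^{\varepsilon,n}_{\alpha}(\rho_{AB})$. The three ingredients required for this substitution are all already established earlier in the excerpt: (i) the comparison between smooth-min and $\alpha$-sandwiched unextendible entanglement given by Proposition~\ref{prop:smooth_min_vs_sandwich_unext_ent}, (ii) the subadditivity of $\widetilde{E}^u_{\alpha}$ on tensor products, and (iii) the monotonicity of the function $f(J,\varepsilon)$ in its first argument recorded in Lemma~\ref{lem:monotoncity_obj_func}.

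In more detail, the chain of inequalities I would assemble is as follows. Starting from Proposition~\ref{prop:smooth_min_vs_sandwich_unext_ent} applied to $\rho^{\otimes n}_{AB}$, I obtain $J^{\varepsilon}_{\min}(\rho^{\otimes n}_{AB}) \ge \widetilde{J}^{\varepsilon}_{\alpha}(\rho^{\otimes n}_{AB})$ for every $\alpha\in(1,\infty)$; next, using $\widetilde{E}^u_{\alpha}(\rho^{\otimes n}_{AB})\le n\,\widetilde{E}^u_{\alpha}(\rho_{AB})$ and the definition of $\widetilde{J}^{\varepsilon,n}_{\alpha}$ in \eqref{eq:J_sandwich_defn}, I get
\begin{equation}
J^{\varepsilon}_{\min}(\rho^{\otimes n}_{AB}) \;\ge\; \widetilde{J}^{\varepsilon}_{\alpha}(\rho^{\otimes n}_{AB}) \;\ge\; \widetilde{J}^{\varepsilon,n}_{\alpha}(\rho_{AB}).
\end{equation}
Applying Theorem~\ref{theo:distill_key_st_ub_hypo_test} to $\rho^{\otimes n}_{AB}$ yields $K^{\varepsilon,\to}_D(\rho^{\otimes n}_{AB}) \le f(J^{\varepsilon}_{\min}(\rho^{\otimes n}_{AB}),\varepsilon)$ in the non-trivial regime, and combining with Lemma~\ref{lem:monotoncity_obj_func} (which says that $f(\cdot,\varepsilon)$ is monotonically decreasing) gives
\begin{equation}
K^{\varepsilon,\to}_D(\rho^{\otimes n}_{AB}) \;\le\; f\!\left(J^{\varepsilon}_{\min}(\rho^{\otimes n}_{AB}),\varepsilon\right) \;\le\; f\!\left(\widetilde{J}^{\varepsilon,n}_{\alpha}(\rho_{AB}),\varepsilon\right),
\end{equation}
which is the asserted bound after expanding the definition of $f$.

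The only delicate step is matching the hypothesis ranges of Theorem~\ref{theo:distill_key_st_ub_hypo_test}, which are stated in terms of $J^{\varepsilon}_{\min}(\rho^{\otimes n}_{AB})$, with the hypothesis of the corollary, which is stated in terms of the potentially smaller $\widetilde{J}^{\varepsilon,n}_{\alpha}(\rho_{AB})$. I would resolve this by a case split driven by the value of $J^{\varepsilon}_{\min}(\rho^{\otimes n}_{AB})$. If $J^{\varepsilon}_{\min}(\rho^{\otimes n}_{AB})$ lies in the admissible window $(\varepsilon,\,\tfrac{1}{4}+\tfrac{\varepsilon}{2}+\tfrac{\sqrt{3\varepsilon(1-\varepsilon)}}{2}]$, the chain above applies directly. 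If instead $J^{\varepsilon}_{\min}(\rho^{\otimes n}_{AB})$ exceeds the upper threshold, Theorem~\ref{theo:distill_key_st_ub_hypo_test} already forces $K^{\varepsilon,\to}_D(\rho^{\otimes n}_{AB})=0$, which is trivially bounded by the non-negative quantity $f(\widetilde{J}^{\varepsilon,n}_{\alpha}(\rho_{AB}),\varepsilon)$. The second sentence of the corollary, which asserts vanishing key when $\widetilde{J}^{\varepsilon,n}_{\alpha}(\rho_{AB})$ itself exceeds the threshold, then follows automatically because the chain forces $J^{\varepsilon}_{\min}(\rho^{\otimes n}_{AB})$ into the zero regime of Theorem~\ref{theo:distill_key_st_ub_hypo_test}.

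There is no substantial new obstacle, since subadditivity of the $\alpha$-sandwiched unextendible entanglement and the smooth-min--sandwiched comparison are recalled earlier. The main care required is purely bookkeeping: ensuring that the case split on $J^{\varepsilon}_{\min}(\rho^{\otimes n}_{AB})$ is handled cleanly and that the restriction $\alpha\in(1,\infty)$ is maintained throughout so that both Proposition~\ref{prop:smooth_min_vs_sandwich_unext_ent} and the subadditivity property apply.
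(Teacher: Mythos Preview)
Your proposal is correct and follows essentially the same route as the paper: the paper also chains Proposition~\ref{prop:smooth_min_vs_sandwich_unext_ent} (giving $J^{\varepsilon}_{\min}\ge\widetilde{J}^{\varepsilon}_{\alpha}$), subadditivity of $\widetilde{E}^u_{\alpha}$ (giving $\widetilde{J}^{\varepsilon}_{\alpha}(\rho^{\otimes n})\ge\widetilde{J}^{\varepsilon,n}_{\alpha}(\rho)$), Theorem~\ref{theo:distill_key_st_ub_hypo_test}, and Lemma~\ref{lem:monotoncity_obj_func}, exactly as you propose. Your explicit case split on whether $J^{\varepsilon}_{\min}(\rho^{\otimes n}_{AB})$ exceeds the upper threshold is a detail the paper glosses over, but otherwise the arguments coincide.
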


A special case of the bound stated above arises when $\alpha \to \infty$. Let us define the following quantity:
\begin{align}
    J^{\varepsilon}_{\max}\!\left(\rho_{AB}\right) &\coloneqq \lim_{\alpha \to \infty}\widetilde{J}^{\varepsilon}_{\alpha}\!\left(\rho_{AB}\right)\\
    &= \left(\lim_{\alpha \to \infty}2^{-2\widetilde{E}^u_{\alpha}(\rho_{AB})}\right)\!\left(\lim_{\alpha \to \infty}\left(1-\varepsilon\right)^{\frac{\alpha}{\alpha - 1}}\right)\\
    &= 2^{-2E^u_{\max}(\rho_{AB})}(1-\varepsilon),
\end{align}
where the second equality follows from the definition of $\widetilde{J}^{\varepsilon}_{\alpha}(\rho_{AB})$ and the last equality follows from~\eqref{eq:Emax_eq_lim_inf_sandwiched}. The max-unextendible entanglement of a state can be computed using an SDP, which implies that $J^{\varepsilon}_{\max}(\rho_{AB})$ can be computed using an SDP. Thus, setting $\alpha\to \infty$ in Corollary~\ref{cor:dist_key_sandwich_ub_n_copies} leads to a single-letter, computable bound on the $n$-shot, one-way distillable key of a state. 

In Figure~\ref{fig:Emax_st_bounds} we plot the upper bounds on the $n$-shot, one-way distillable key of isotropic states calculated using Corollary~\ref{cor:dist_key_sandwich_ub_n_copies} with $\alpha \to \infty$. In Figure~\ref{fig:Emax_d_2} we plot the upper bounds for two-dimensional isotropic states, and in Figure~\ref{fig:Emax_d_3} we plot the upper bounds for three-dimensional isotropic states, with $\varepsilon = 0.01$ in all the cases.

\begin{figure}
    \centering
    \begin{subfigure}{0.45\linewidth}
    \centering
        \includegraphics[width=\linewidth]{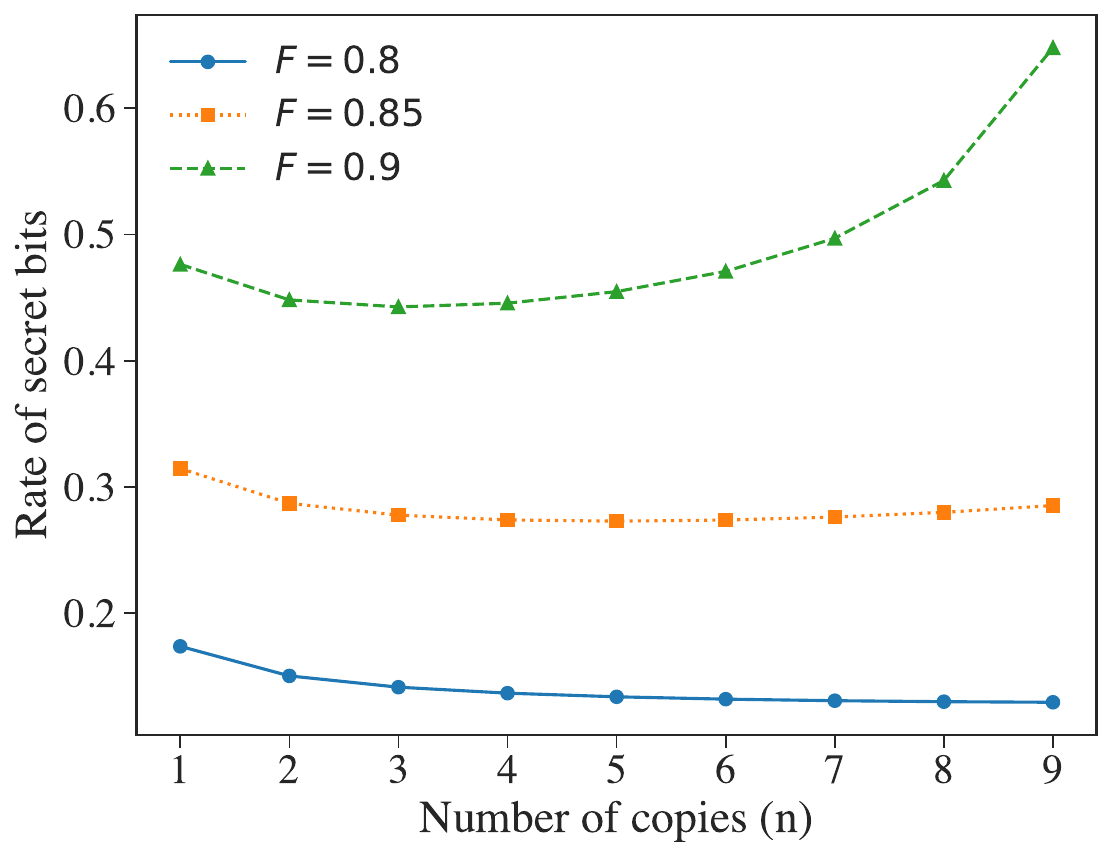}
    \caption{\centering Two-dimensional isotropic state}
    \label{fig:Emax_d_2}
    \end{subfigure}
    \begin{subfigure}{0.45\linewidth}
    \centering
        \includegraphics[width=\linewidth]{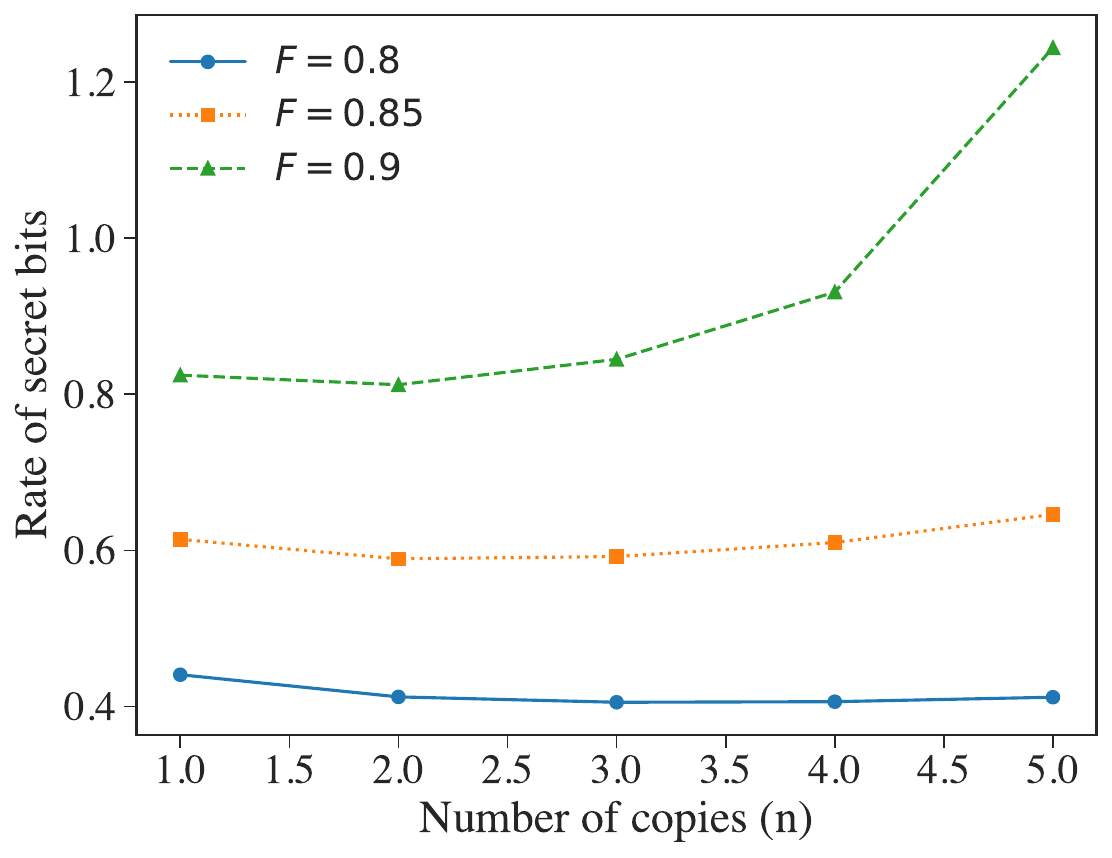}
    \caption{\centering Three-dimensional isotropic state}
    \label{fig:Emax_d_3}
    \end{subfigure}
    \caption{Upper bounds on the $n$-shot, one-way distillable-key rate of isotropic states using Corollary~\ref{cor:dist_key_sandwich_ub_n_copies} and setting $\alpha \to \infty$. The upper bounds are plotted for different values of the parameter $F$ of the isotropic state, with respect to the parameterization given in~\eqref{eq:iso_state_parameterization}, against the number of copies of the isotropic state, and $\varepsilon$ is set equal to $0.01$.  }
    \label{fig:Emax_st_bounds}
\end{figure}

\begin{remark}
    The techniques used to arrive at Corollaries~\ref{cor:dist_key_st_alg_ub} and~\ref{cor:dist_key_st_td_ub} can be used to find simpler bounds on the one-shot, one-way distillable key of $n$ i.i.d.~copies of a state $\rho_{AB}$ by using the $\alpha$-sandwiched unextendible entanglement. As such, for all $\varepsilon \in (0,1)$, $\alpha \in (1,\infty)$, $n\in \mathbb{N}$, and a state $\rho_{AB}$, if $\widetilde{J}^{\varepsilon,n}_{\alpha}\!\left(\rho_{AB}\right) > \varepsilon$, then
    \begin{equation}
        K^{\varepsilon,\to}_D\!\left(\rho^{\otimes n}_{AB}\right) \le -\log_2\!\left(\sqrt{\widetilde{J}^{\varepsilon,n}_{\alpha}\!\left(\rho_{AB}\right)} - \sqrt{\varepsilon}\right),
    \end{equation}
    and if $\widetilde{J}^{\varepsilon,n}_{\alpha}\!\left(\rho_{AB}\right) > \sqrt{\varepsilon}$, then
    \begin{equation}
        K^{\varepsilon,\to}_D\!\left(\rho^{\otimes n}_{AB}\right) \le -\frac{1}{2}\log_2\!\left(\widetilde{J}^{\varepsilon,n}_{\alpha}\!\left(\rho_{AB}\right) - \sqrt{\varepsilon}\right).
    \end{equation}
\end{remark}

\subsection{One-way secret-key distillation in the asymptotic setting}\label{sec:asymp_key_dist_states}

Now let us study the asymptotic setting of one-way secret-key distillation by using the framework of unextendibility.

In the asymptotic setting, we are interested in the maximum rate at which an arbitrarily large number of i.i.d.~copies of a state $\rho_{AB}$ can be transformed into a state that is arbitrarily close to an ideal secret key. In this setting, a one-way secret-key distillation protocol is given by a sequence of one-way LOCC channels $\left\{\mathcal{L}^{n,\to}_{A^nB^n\to A'B'A''B''}\right\}_{n\in \mathbb{N}}$, a sequence of bipartite private states $\left\{\gamma^{k_n}_{A'B'A''B''}\right\}_{n\in \mathbb{N}}$, and a sequence of real numbers $\{\varepsilon_n\}_{n\in \mathbb{N}}$ corresponding to the error in distillation. The joint system $A^n$ refers to $n$ systems, each of which are identical to the system $A$. The $n^{\operatorname{th}}$ element of this sequence acts on $n$ copies of the resource state $\rho_{AB}$ such that the infidelity between the output state and $\gamma^{k_n}_{A'B'A''B''}$ is less than or equal to $\varepsilon_n$. That is,
\begin{equation}
    F\!\left(\gamma^{k_n}_{A'B'A''B''},\mathcal{L}^{n,\to}_{A^nB^n\to A'B'A''B''}\!\left(\rho^{\otimes n}_{AB}\right)\right)\ge 1-\varepsilon_n \qquad \forall n\in \mathbb{N}.
\end{equation}
To achieve arbitrary precision in distilling secret keys, we demand that $\varepsilon_n \to 0$ as $n\to \infty$. The maximum achievable rate of distilling secret keys is then given by the one-way distillable key of the state, which can be mathematically defined in terms of the one-shot, one-way distillable key as follows~\cite[Chapter~15]{KW24}:
\begin{equation}
    K^{\to}_{D}\!\left(\rho_{AB}\right) \coloneqq \inf_{\varepsilon \in (0,1]}\liminf_{n\to \infty}\frac{1}{n}K^{\varepsilon,\to}_D\!\left(\rho^{\otimes n}_{AB}\right).
\end{equation}

The upper bounds on the one-shot, one-way distillable key, obtained in Sections~\ref{sec:smooth_min_st_ub} and~\ref{sec:sandwich_st_ub}, do not provide any new insight into the one-way distillable key of the state. This is because, for most states of interest, there exists an $n \in \mathbb{N}$ such that $J^{\varepsilon}_{\min}\!\left(\rho^{\otimes n}_{AB}\right) \le \varepsilon$ for any $\varepsilon \in (0,1]$, rendering the bound useless. 

However, consider a further restricted setting where the sequence $\{\varepsilon_n\}_{n\in \mathbb{N}}$ is required to decrease exponentially fast. The maximum rate of key distillation from an arbitrarily large number of copies of a resource state, for a fixed error exponent $a$, which we call the $a$-exponential one-way distillable key of a state, can be mathematically defined in the following manner.
\begin{definition}
    Fix $a>0$. We define the $a$-exponential one-way distillable key of a state $\rho_{AB}$ as follows:
    \begin{equation}
        K^{\to}_{D,a}\!\left(\rho_{AB}\right) \coloneqq \sup_{\substack{\left\{k_n\right\}_{n\in \mathbb{N}},\left\{\gamma^{k_n}_{A'B'A''B''}\right\}_{n\in \mathbb{N}}\\\left\{\mathcal{L}^{n,\to}_{A^nB^n\to A'B'A''B''}\right\}_{n\in \mathbb{N}} }} \liminf_{n\to \infty}\left\{\frac{\log_2 k_n}{n}: F\!\left(\gamma^{k_n}_{A'B'A''B''},\mathcal{L}^{\to}\!\left(\rho^{\otimes n}_{AB}\right)\right)\ge 1-2^{-an}\right\},
    \end{equation}
    where the supremum is over all sequences of bipartite states $\left\{\gamma^{k_n}_{ A'B'A''B''}\right\}_{n\in \mathbb{N}}$ and all sequences of one-way LOCC channels $\left\{\mathcal{L}^{n,\to}_{A^nB^n\to A'B'A''B''}\right\}_{n\in \mathbb{N}}$. The bipartite state $\gamma^{k_n}_{A'B'A''B''}$ holds $\log_2 k_n$ secret bits.

    We define the converse of $a$-exponential one-way distillable key of $\rho_{AB}$ as follows:
    \begin{equation}
        \widetilde{K}^{\to}_{D,a}\!\left(\rho_{AB}\right) \coloneqq \sup_{\substack{\left\{k_n\right\}_{n\in \mathbb{N}},\left\{\gamma^{k_n}_{A'B'A''B''}\right\}_{n\in \mathbb{N}}\\\left\{\mathcal{L}^{n,\to}_{A^nB^n\to A'B'A''B''}\right\}_{n\in \mathbb{N}} }} \limsup_{n\to \infty}\left\{\frac{\log_2 k_n}{n}: F\!\left(\gamma^{k_n}_{A'B'A''B''},\mathcal{L}^{\to}\!\left(\rho^{\otimes n}_{AB}\right)\right)\ge 1-2^{-an}\right\}.
    \end{equation}
\end{definition}

\begin{theorem}\label{theo:dist_key_asymptotic_bnd}
    Consider an arbitrary bipartite state $\rho_{AB}$. Let $d \coloneqq \min\{\operatorname{dim}(A),\operatorname{dim}(B)\}$ with $\operatorname{dim}(A)$ and $\operatorname{dim}(B)$ being the dimensions of systems $A$ and $B$, respectively. Fix $a\in (2\log_2d , \infty)$. Then the following bound holds:
    \begin{equation}
        \widetilde{K}^{\to}_{D,a}\!\left(\rho_{AB}\right) \le E^u\!\left(\rho_{AB}\right),
    \end{equation}
    where $E^u(\rho_{AB})$ is the relative-entropy-induced unextendible entanglement of the state $\rho_{AB}$ (i.e., defined as in~\eqref{eq:gen_unext_ent_states} with $\boldsymbol{D}$ replaced by the quantum relative entropy $D$). 
\end{theorem}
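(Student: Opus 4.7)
The plan is to sandwich the smooth-min unextendible entanglement $E^{u,\varepsilon_n}_{\min}(\rho^{\otimes n}_{AB})$ with $\varepsilon_n = 2^{-an}$ between a lower bound coming from the distillation protocol and a single-letter upper bound in terms of $E^u(\rho_{AB})$, then pass to the limit $n\to\infty$.

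First I would fix any sequence $(k_n,\gamma^{k_n}_{A'B'A''B''},\mathcal{L}^{n,\to}_{A^nB^n\to A'B'A''B''})$ that is achievable for $\widetilde{K}^{\to}_{D,a}$. Applying Proposition~\ref{prop:unext_ent_hyp_test_lb} (in the simpler algebraic form leading to Corollary~\ref{cor:dist_key_st_alg_ub}) to $\mathcal{L}^{n,\to}(\rho^{\otimes n}_{AB})$ and then using monotonicity of $E^{u,\varepsilon_n}_{\min}$ under one-way LOCC (Theorem~\ref{theo:unext_ent_st_monotonicity}), one obtains
\begin{equation}
\log_2 k_n \;\le\; E^{u,\varepsilon_n}_{\min}(\rho^{\otimes n}_{AB}) \;+\; \log_2\!\bigl(1+k_n 2^{-an/2}\bigr).
\end{equation}

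For the matching upper bound, I would use the fact that whenever $\omega_{ABE}$ is an extension of $\rho_{AB}$, the tensor product $\omega_{ABE}^{\otimes n}$ is an extension of $\rho_{AB}^{\otimes n}$ in the appropriate bipartition, so $\sigma_{AB}^{\otimes n}\in\mathcal{F}(\rho_{AB}^{\otimes n})$ for every $\sigma_{AB}\in\mathcal{F}(\rho_{AB})$. Combining this with inequality~\eqref{eq:smooth_min_vs_sandwich_ineq} (specifically, Proposition~\ref{prop:smooth_min_vs_sandwich_unext_ent}) and additivity of $\widetilde{D}_\alpha$ under tensor products yields, for all $\alpha>1$ and all $\sigma_{AB}\in\mathcal{F}(\rho_{AB})$,
\begin{equation}
\tfrac{1}{n}E^{u,\varepsilon_n}_{\min}(\rho^{\otimes n}_{AB}) \;\le\; \tfrac{1}{2}\widetilde{D}_\alpha(\rho_{AB}\|\sigma_{AB}) \;+\; \tfrac{\alpha}{2n(\alpha-1)}\log_2\!\tfrac{1}{1-\varepsilon_n}.
\end{equation}
Since $\varepsilon_n\to 0$, the second term on the right vanishes. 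Sending $\alpha\searrow 1^+$ (so that $\widetilde{D}_\alpha(\rho\|\sigma)\searrow D(\rho\|\sigma)$, the Umegaki relative entropy) and then taking the infimum over $\sigma_{AB}\in\mathcal{F}(\rho_{AB})$ gives $\limsup_n \tfrac{1}{n}E^{u,\varepsilon_n}_{\min}(\rho^{\otimes n}_{AB}) \le E^u(\rho_{AB})$.

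Combining the two bounds, $\tfrac{\log_2 k_n}{n} \le E^u(\rho_{AB}) + o(1) + \tfrac{1}{n}\log_2(1+k_n 2^{-an/2})$, so the main obstacle is ruling out that the residual correction contributes in the limit, and this is precisely where the hypothesis $a>2\log_2 d$ enters. I would dispose of it by a case split: if $\log_2 k_n \le an/2$ eventually, then $\log_2(1+k_n 2^{-an/2})\le 1$ and the term vanishes after division by $n$; otherwise $\log_2 k_n > an/2$ along a subsequence, in which case $\log_2(1+k_n 2^{-an/2}) \le 1+\log_2 k_n - an/2$, and substituting back into the protocol-side bound together with the universal dimension estimate $E^{u,\varepsilon_n}_{\min}(\rho^{\otimes n}_{AB}) \le n\log_2 d - \tfrac{1}{2}\log_2(1-\varepsilon_n)$ from Proposition~\ref{prop:J_eps_range} yields $a/2 \le \log_2 d + o(1)$, contradicting $a > 2\log_2 d$. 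Hence only the first case survives, the residual is $o(1)$, and taking $\limsup_n$ gives $\widetilde{K}^{\to}_{D,a}(\rho_{AB}) \le E^u(\rho_{AB})$.
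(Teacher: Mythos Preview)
Your argument is correct and follows the same overall architecture as the paper's proof: bound $\log_2 k_n$ by $E^{u,\varepsilon_n}_{\min}(\rho^{\otimes n}_{AB})$ plus a correction via the algebraic relaxation~\eqref{eq:smooth_min_unext_ent_alg_lb} and monotonicity under one-way LOCC, then single-letterize using $\sigma^{\otimes n}\in\mathcal{F}(\rho^{\otimes n}_{AB})$ for each $\sigma\in\mathcal{F}(\rho_{AB})$. The two places where you deviate are both legitimate alternatives. For the correction term, the paper directly shows $\varepsilon_n/J^{\varepsilon_n}_{\min}(\rho^{\otimes n}_{AB})\to 0$ using the lower bound $J^{\varepsilon_n}_{\min}\ge (1-\varepsilon_n)/d^{2n}$ from Proposition~\ref{prop:J_eps_range} together with $a>2\log_2 d$, whereas your case split uses the upper bound from that same proposition to derive a contradiction when $\log_2 k_n > an/2$ infinitely often; these are two reorganizations of the same dimension estimate. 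For the single-letterization, the paper invokes the strong converse of quantum Stein's lemma to pass from $\frac{1}{n}D^{\varepsilon_n}_{\min}(\rho^{\otimes n}\Vert\sigma^{\otimes n})$ to $D(\rho\Vert\sigma)$, whereas you go through the sandwiched R\'enyi bound~\eqref{eq:smooth_min_vs_sandwich_ineq}, tensor-additivity of $\widetilde{D}_\alpha$, and then $\alpha\to 1^+$; this trades the appeal to Stein's lemma for the (equally standard) monotone convergence $\widetilde{D}_\alpha\searrow D$, and has the mild advantage of making the whole argument stay within the R\'enyi toolkit already set up in Section~\ref{sec:sandwich_st_ub}.
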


\begin{proof}
   Let $\rho_{AB}$ be an arbitrary bipartite state from which we wish to distill secret keys, and let $\operatorname{dim}(A)$ and $\operatorname{dim}(B)$ be the dimensions of systems $A$ and $B$, respectively. Let $\left\{\mathcal{L}^{n,\to}_{A^nB^n\to A'B'A''B''}\right\}_{n\in \mathbb{N}}$ be a sequence of one-way LOCC channels, and let $\left\{\gamma^{k_n}_{A'B'A''B''}\right\}_{n\in \mathbb{N}}$ be a sequence of bipartite private states such that the following condition holds for all $a > 2\log_2 d$ and $n\in \mathbb{N}$:
   \begin{equation}
       F\!\left(\gamma^{k_n}_{A'B'A''B''},\mathcal{L}^{n,\to}\!\left(\rho^{\otimes n}_{AB}\right)\right)\ge 1-2^{-an},
   \end{equation}
   where $d\coloneqq \min\{\operatorname{dim}(A),\operatorname{dim}(B)\}$. 
   
   Let us define $\varepsilon_n \coloneqq 2^{-an}$ for convenience. Let us also define $J^{\varepsilon_n,n}(\rho_{AB})\coloneqq 2^{-E^{u,\varepsilon_n}_{\min}(\rho^{\otimes n}_{AB})}$. From Proposition~\ref{prop:J_eps_range} we know that $J^{\varepsilon,n}$ is bounded from below as follows:
   \begin{equation}\label{eq:J_eps_n_lb}
       J^{\varepsilon_n,n}\!\left(\rho\right) \ge \frac{1-\varepsilon_n}{d^{2n}}.
   \end{equation}
   Corollary~\ref{cor:dist_key_st_alg_ub} implies that the following inequality holds for all one-way LOCC channels $\mathcal{L}^{n,\to}_{A^nB^n\to A'B'A''B''}$ and all private states $\gamma^{k_n}_{A'B'A''B''}$:
   \begin{align}
       \log_2 k_n &\le -\log_2\!\left(\sqrt{J^{\varepsilon_n,n}\!\left(\rho\right)}-\sqrt{\varepsilon_n}\right)\\
       &= -\log_2\!\left(\sqrt{J^{\varepsilon_n,n}\!\left(\rho\right)}\!\left(1-\sqrt{\frac{\varepsilon_n}{J^{\varepsilon_n,n}\!\left(\rho\right)}}\right)\right)\\
       &= -\frac{1}{2}\log_2\!\left(J^{\varepsilon_n,n}\!\left(\rho\right)\right)-\log_2\!\left(1-\sqrt{\frac{\varepsilon_n}{J^{\varepsilon_n,n}\!\left(\rho\right)}}\right).
   \end{align}
   Dividing both sides by $n$ and taking the limit superior as $n\to \infty$ leads to the following inequality:
   \begin{equation}\label{eq:sbits_rate_limsup_ineq}
       \limsup_{n\to \infty} \frac{\log_2 k_n}{n} \le \limsup_{n\to \infty}\left\{-\frac{1}{2n}\log_2 J^{\varepsilon_n,n}\!\left(\rho\right) - \frac{1}{n}\log_2\!\left(1-\sqrt{\frac{\varepsilon_n}{J^{\varepsilon_n,n}\!\left(\rho\right)}}\right)\right\}
   \end{equation}
   
   Using the lower bound on $J^{\varepsilon_n,n}\!\left(\rho\right)$ from~\eqref{eq:J_eps_n_lb}, we arrive at the following inequality:
   \begin{align}
       \frac{\varepsilon_n}{J^{\varepsilon,n}} &\le \varepsilon_n\cdot\frac{d^{2n}}{1-\varepsilon_n}\label{eq:eps_J_ratio_ub_step_1}\\
       &\le \frac{2^{-an}d^{2n}}{1-2^{-an}}\\
       &= \frac{d^{2n}}{2^{an}-1}\\
       &= \frac{2^{2n\log_2 d}}{2^{an}-1} \\
       & = \frac{2^{-n(a-2\log_2 d)}}{1- 2^{-an}},
       \label{eq:eps_J_ratio_ub}
   \end{align}
   where the second inequality follows from the fact that the function $\varepsilon_n/(1-\varepsilon_n)$ increases monotonically with $\varepsilon_n \in [0,1)$ and the fact that $\varepsilon_n \le 2^{-an}$. Thus, for sufficiently large $n$, since $a > 2 \log_2 d$ by assumption, it follows that $\frac{2^{-n(a-2\log_2 d)}}{1- 2^{-an}} \leq 1$ and thus that $\frac{\varepsilon_n}{J^{\varepsilon,n}} \leq 1$.

Then we find that
   \begin{align}
       \limsup_{n\to \infty}-\log_2\!\left(1-\sqrt{\frac{\varepsilon_n}{J^{\varepsilon_n,n}\!\left(\rho\right)}}\right) & = -\log_2\!\left(1-\limsup_{n\to \infty}\sqrt{\frac{\varepsilon_n}{J^{\varepsilon_n,n}\!\left(\rho\right)}}\right) \\
       & \leq -\log_2\!\left(1-\limsup_{n\to \infty}\sqrt{\frac{2^{-n(a-2\log_2 d)}}{1- 2^{-an}}}\right)\\
       & = -\log_2\!\left(1-0\right) = 0.\label{eq:eps_J_ratio_log_final}
   \end{align}
Now let us go back to~\eqref{eq:sbits_rate_limsup_ineq}. Substituting~\eqref{eq:eps_J_ratio_log_final} in~\eqref{eq:sbits_rate_limsup_ineq} leads to the following inequality:
\begin{align}
    \limsup_{n\to \infty} \frac{\log_2 k_n}{n} &\le \limsup_{n\to \infty}-\frac{1}{2n}\log_2 J^{\varepsilon_n,n}\!\left(\rho\right)\\
    &= \limsup_{n\to \infty}\frac{1}{n}E^{u,\varepsilon_n}_{\min}\!\left(\rho^{\otimes n}_{AB}\right)\\
    &= \limsup_{n\to \infty}\frac{1}{2n}\inf_{\sigma_{A^nB^n}\in \mathcal{F}(\rho^{\otimes n}_{AB})}D^{\varepsilon_n}_{\min}\!\left(\rho^{\otimes n}_{AB}\Vert \sigma_{A^nB^n}\right)\label{eq:sbits_rate_limsup_hypo_test_rel_ent_ub}
\end{align}
where we have used the definition of $J^{\varepsilon_n,n}\!\left(\rho\right)$ to arrive at the first equality and the definition of $E^{u,\varepsilon}_{\min}$ to arrive at the second equality. Note that if $\sigma_{AB} \in \mathcal{F}(\rho_{AB})$ then $\sigma^{\otimes n}_{AB}\in \mathcal{F}\left(\rho^{\otimes n}_{AB}\right)$. Therefore,
\begin{equation}
    \inf_{\sigma_{A^nB^n}\in \mathcal{F}(\rho^{\otimes n}_{AB})}D^{\varepsilon_n}_{\min}\!\left(\rho^{\otimes n}_{AB}\Vert \sigma_{A^nB^n}\right) \le \inf_{\sigma_{AB}\in \mathcal{F}(\rho_{AB})}D^{\varepsilon_n}_{\min}\!\left(\rho^{\otimes n}_{AB}\Vert \sigma^{\otimes n}_{AB}\right).
\end{equation}
Substituting the above inequality in~\eqref{eq:sbits_rate_limsup_hypo_test_rel_ent_ub}, we arrive at the following inequality:
\begin{align}
    \limsup_{n\to \infty}\frac{\log_2 k_n}{n} &\le \limsup_{n\to \infty} \frac{1}{2n} \inf_{\sigma_{AB}\in \mathcal{F}(\rho_{AB})}D^{\varepsilon_n}_{\min}\!\left(\rho^{\otimes n}_{AB}\Vert \sigma^{\otimes n}_{AB}\right)\\
    &\leq  \inf_{\sigma_{AB}\in \mathcal{F}(\rho_{AB})}\limsup_{n\to \infty}\frac{1}{2n}D^{\varepsilon_n}_{\min}\!\left(\rho^{\otimes n}_{AB}\Vert \sigma^{\otimes n}_{AB}\right),\label{eq:sbits_rate_hypo_test_limsup_ub_2}
\end{align}
where the second inequality from an asymptotic version of the max-min inequality.

Note that $D^{\varepsilon}_{\min}(\rho\Vert\sigma)$ increases monotonically with increasing $\varepsilon$. Since $\varepsilon_n \le 2^{-an}$, for every $\varepsilon^* \in (0,1)$, there exists an $N \in \mathbb{N}$ such that $\varepsilon_n \le \varepsilon^*$ for all $n \ge N$. Consequently, 
\begin{equation}
    \frac{1}{n}D^{\varepsilon_n}_{\min}\!\left(\rho^{\otimes n}_{AB}\Vert\sigma^{\otimes n}_{AB}\right) \le \frac{1}{n}D^{\varepsilon^*}_{\min}\!\left(\rho^{\otimes n}_{AB}\Vert\sigma^{\otimes n}_{AB}\right) \qquad \forall n\ge N.
\end{equation}
Substituting the above inequality in~\eqref{eq:sbits_rate_hypo_test_limsup_ub_2}, we arrive at the following inequality:
\begin{equation}
    \limsup_{n\to \infty}\frac{\log_2 k_n}{n} \le \inf_{\sigma_{AB}\in \mathcal{F}(\rho_{AB})} \frac{1}{2}\limsup_{n\to \infty} \frac{1}{n}D^{\varepsilon^*}_{\min}\!\left(\rho^{\otimes n}_{AB}\Vert\sigma^{\otimes n}_{AB}\right).
\end{equation}
For all $\varepsilon \in (0,1)$, the following inequality holds~\cite{nagaoka2000StrongConverseSteins}:
\begin{equation}
    \limsup_{n\to \infty}\frac{1}{n}D^{\varepsilon}_{\min}\!\left(\rho^{\otimes n}\Vert\sigma^{\otimes n}\right) \le D\!\left(\rho\Vert\sigma\right),
\end{equation}
where $D(\cdot\Vert\cdot)$ is the Umegaki relative entropy~\cite{Ume62}. Therefore, we conclude the following:
\begin{equation}
    \limsup_{n\to \infty}\frac{\log_2 k_n}{n} \le \inf_{\sigma_{AB}\in \mathcal{F}(\rho_{AB})} \frac{1}{2} D(\rho_{AB}\Vert\sigma_{AB}) = E^u\!\left(\rho_{AB}\right).
\end{equation}
Since the above inequality holds for all sequences of one-way LOCC channels $\left\{\mathcal{L}^{n,\to}_{A^nB^n\to A'B'A''B''}\right\}_{n\in \mathbb{N}}$ and all sequences of private states $\left\{\gamma^{k_n}_{A'B'A''B''}\right\}_{n\in \mathbb{N}}$ such that $F\left(\gamma^{k_n}_{A'B'A''B''},\mathcal{L}^{\to}(\rho^{\otimes n}_{AB})\right) \ge 1-2^{-an}$ for all $a > 2\log_2 d$, we conclude the statement of theorem.
\end{proof}

\begin{remark}
    The relative-entropy-induced unextendible entanglement of a state can be computed using a semidefinite program. See~\cite{KS24} for a semidefinite representation of the relative entropy between two states that can be used directly to estimate the relative-entropy-induced unextendible entanglement of a state to arbitrary precision.  
\end{remark}

\section{Forward-assisted private communication from channels}\label{sec:priv_comm_results}

In this section we extend the results obtained in Section~\ref{sec:distillable_key_results} to understand the limitations of private communication over channels. We begin with a brief discussion on secret-key distillation from a channel with local operations and forward classical communication in the one-shot setting. In this setting, where forward classical communication can be performed with no cost, the task of secret-key distillation from a channel is equivalent to the task of private communication from the channel.

\subsection{One-shot, one-way distillable key of a channel}

To distill a secret key from a channel using one-way LOCC, Alice locally prepares a state $\psi_{A'A''\hat{A}}$, and encodes one share of this state using a quantum instrument $\left\{\mathcal{E}^x_{\hat{A}\to A}\right\}_{x\in \mathcal{X}}$. She then sends the system $A$ to Bob through the quantum channel $\mathcal{N}_{A\to B}$ along with the classical label $x$. Bob then decodes the received state by applying a quantum channel $\mathcal{D}^x_{B\to B'B''}$, which he can choose based on the classical label $x$ that he received from Alice. The state established at the end of the protocol can be mathematically described as follows:
\begin{equation}
    \sigma_{A'B'A''B''} \coloneqq \sum_{x\in \mathcal{X}}\left(\mathcal{D}^x_{B\to B'B''}\circ\mathcal{N}_{A\to B}\circ\mathcal{E}^x_{\hat{A}\to A}\right)\!\left(\psi_{A'A''\hat{A}}\right).
\end{equation}
For the secret-key distillation task to be successful in distilling $\log_2k$ secret bits with an error tolerance $\varepsilon$, we require the following inequality to hold:
\begin{equation}
    F\!\left(\sigma_{A'B'A''B''},\gamma^k_{A'B'A''B''}\right) \ge 1-\varepsilon
\end{equation}
for some private state $\gamma^k_{A'B'A''B''}$ holding $\log_2 k$ secret bits.  Figure~\ref{fig:key_distill_ch_diagram} depicts a schematic diagram of the task of one-way secret-key distillation from a channel.

\begin{figure}
    \centering
    \includegraphics[width=0.7\linewidth]{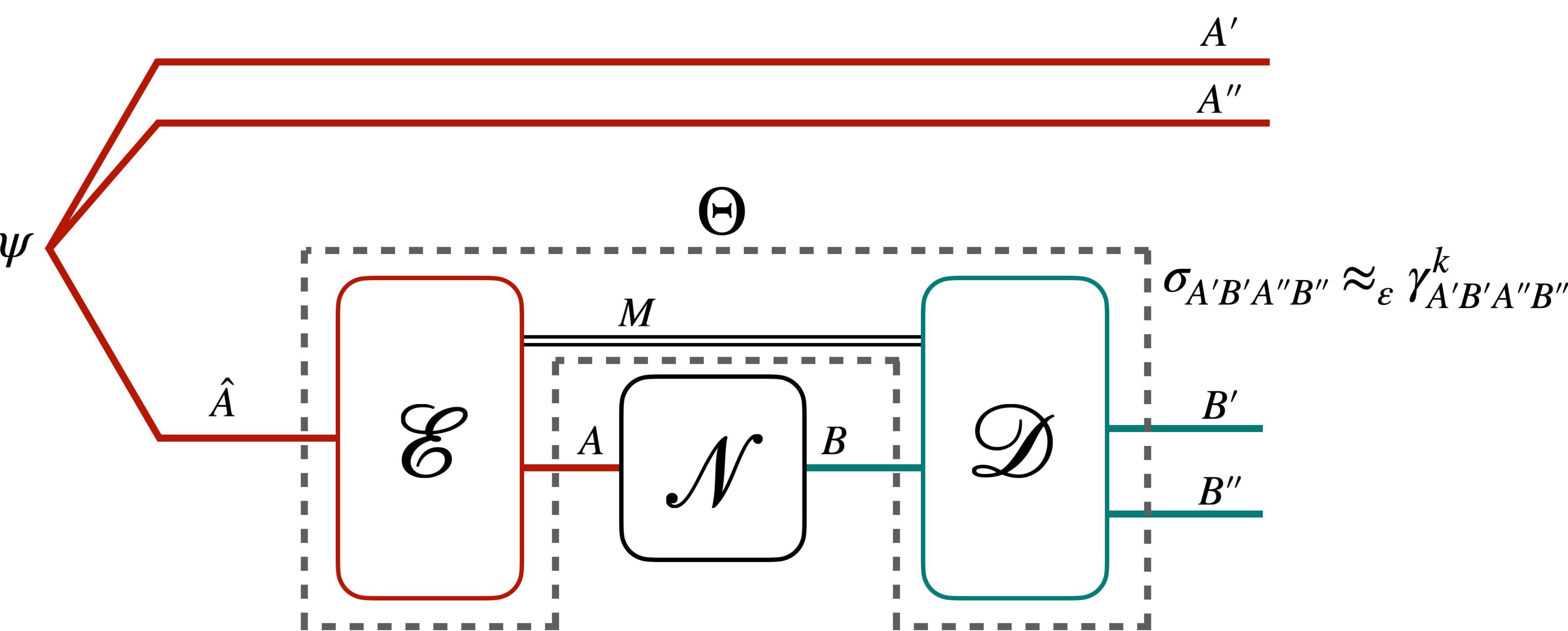}
    \caption{Schematic diagram of secret-key distillation from a channel $\mathcal{N}_{A\to B}$ using a one-way LOCC superchannel $\Theta$. The error in the distillation process, denoted by $\varepsilon$, is given by the infidelity between the state $\sigma_{A'B'A''B''}$ established at the end of the protocol and a private state $\gamma^k_{A'B'A''B''}$ that holds $\log_2 k$ secret bits.}
    \label{fig:key_distill_ch_diagram}
\end{figure}

The task of secret-key distillation from a quantum channel can be expressed more concisely using the language of superchannels~\cite{Chiribella_2008, Gour_2019}. To distill $\log_2 k$ secret bits from a channel $\mathcal{N}_{A\to B}$ with an error tolerance $\varepsilon$ and only using one-way LOCC, Alice and Bob apply a one-way LOCC superchannel $\Theta_{(A\to B)\to (\hat{A}\to B'B'')}$ on the channel $\mathcal{N}_{A\to B}$ such that the following inequality holds:
\begin{equation}
    F\!\left(\left(\Theta\!\left(\mathcal{N}\right)\right)\!\left(\psi_{A'A''\hat{A}}\right),\gamma^k_{A'B'A''B''}\right) \ge 1-\varepsilon,
\end{equation}
for some locally prepared state $\psi_{A'A''\hat{A}}$ and some private state $\gamma^k_{A'B'A''B''}$ holding $\log_2 k$ secret bits. The ability to distill secret keys from a channel in the one-shot setting can then be quantified by the one-shot, one-way distillable key of the channel, which we define below.
\begin{definition}
    The one-shot, one-way distillable key of a channel $\mathcal{N}_{A\to B}$ is defined as follows:
    \begin{equation}
        K^{\varepsilon,\to}_D\!\left(\mathcal{N}_{A\to B}\right) \coloneqq \sup_{\substack{k\in \mathbb{N,}\gamma^k_{A'B'A''B''},\\ \psi_{A'A''\hat{A}} \in \mathcal{S}(A'A''\hat{A}),\\\Theta \in \operatorname{1WL}}} \left\{\log_2 k: F\!\left(\left(\Theta\!\left(\mathcal{N}\right)\right)\!\left(\psi_{A'A''\hat{A}}\right),\gamma^k_{A'B'A''B''}\right) \ge 1-\varepsilon\right\},
    \end{equation}
    where the supremum is over every natural number $k$, every quantum state $\psi_{A'A''\hat{A}}$, every private state $\gamma^k_{A'B'A''B''}$, and every one-way LOCC superchannel $\Theta_{(A\to B)\to (\hat{A}\to B'B'')}$.
\end{definition}

The one-shot, one-way distillable key of a channel can be written in terms of the one-shot, one-way distillable key of a state as follows:
\begin{equation}\label{eq:ch_1shot_1w_key_wrt_st}
    K^{\varepsilon,\to}_D\!\left(\mathcal{N}_{A\to B}\right) = \sup_{\substack{\psi_{A'A''\hat{A}} \in \mathcal{S}(A'A''\hat{A}),\\\Theta \in \operatorname{1WL}}} K^{\varepsilon,\to}_D\!\left(\left(\Theta_{(A\to B)\to (\hat{A}\to B'B'')}\!\left(\mathcal{N}_{A\to B}\right)\right)\!\left(\psi_{A'A''\hat{A}}\right)\right).
\end{equation}

\begin{figure}
    \centering
    \includegraphics[width=0.8\linewidth]{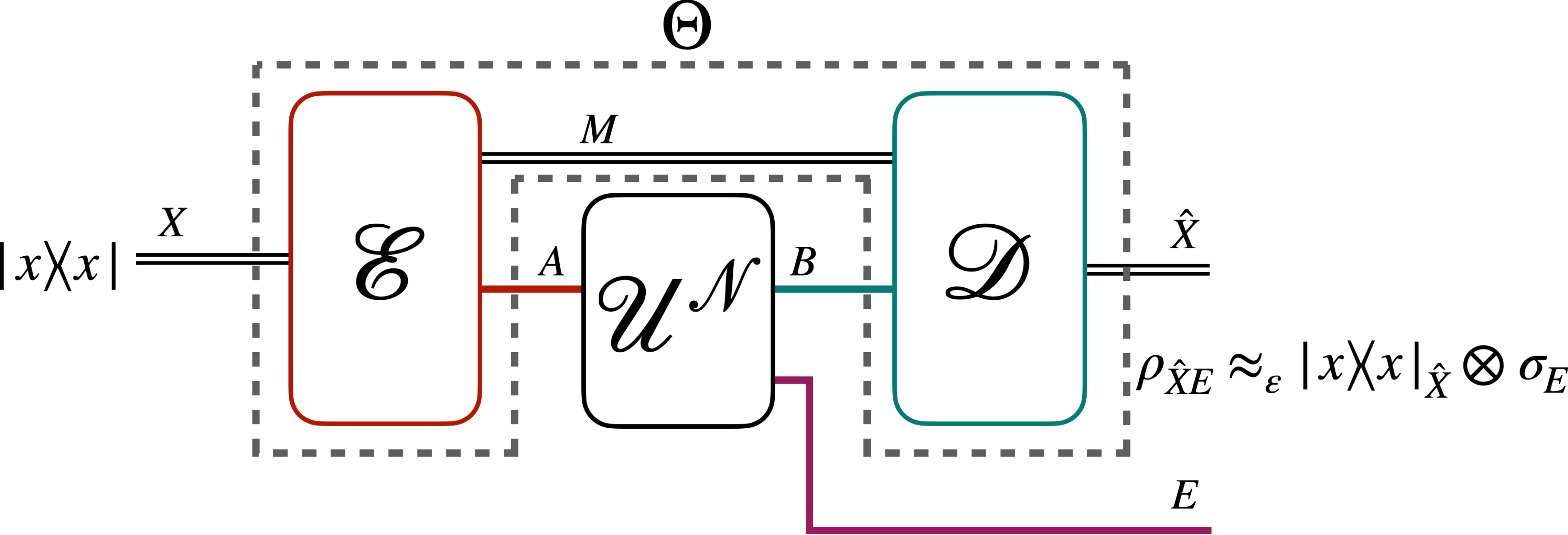}
    \caption{Schematic diagram of private communication over a channel $\mathcal{N}_{A\to B}$, with an isometric extension $\mathcal{U}^{\mathcal{N}}_{A\to BE}$, using a one-way LOCC superchannel $\Theta$. The objective of this protocol is to send an arbitrary classical label $x$ from Alice to Bob such that the state of an eavesdropper, who holds the system $E$, is independent of the state Bob receives. The error in private communication, denoted by $\varepsilon$, is defined in~\eqref{eq:priv_comm_error}.}
    \label{fig:priv_comm}
\end{figure}

Establishing a secret key and using the one-time-pad scheme for private communication is not the only way to transmit private bits over a channel, and there may exist alternate protocols to realize private communication over a quantum channel~\cite{DLL03}. The notion of one-shot private capacity of channels is used to quantify the ability of a quantum channel to communicate data privately using local operations without making assumptions on the protocol. 

Suppose that Alice wants to send private data to Bob by using a channel $\mathcal{N}_{A\to B}$. To do this, Alice and Bob apply a superchannel $\Theta_{(A\to B)\to (X\to \hat{X})}$ on the channel $\mathcal{N}_{A\to B}$, where $X$ and $\hat{X}$ are classical systems. Let $\mathcal{U}^{\mathcal{N}}_{A\to BE}$ be an isometric extension of the channel $\mathcal{N}_{A\to B}$, where the eavesdropper has access to the system $E$. The error in private communication through this protocol is defined as follows:
\begin{equation}\label{eq:priv_comm_error}
    p_{\operatorname{err}}\!\left(\Theta,\mathcal{N}\right) \coloneqq \inf_{\sigma_E}\max_{x\in \mathcal{X}}\left(1-F\!\left(|x\rangle\!\langle x|_{\hat{X}}\otimes \sigma_E, \left(\Theta\!\left(\mathcal{U}^{\mathcal{N}}_{A\to BE}\right)\right)\!\left(|x\rangle\!\langle x|_{X}\right)\right)\right),
\end{equation}
where the infimum is over all quantum states $\sigma_E$ and the maximum is over all messages $x$ in the set $\mathcal{X}$. The one-shot private capacity of a channel $\mathcal{N}_{A\to B}$ is then defined as follows:
\begin{equation}
    P^{\varepsilon}\!\left(\mathcal{N}_{A\to B}\right) \coloneqq \sup_{\mathcal{X}, \Theta \in \operatorname{LO}} \left\{\log_2|\mathcal{X}|: p_{\operatorname{err}}\!\left(\Theta,\mathcal{N}\right)\le \varepsilon\right\},
\end{equation}
where $\operatorname{LO}$ refers to the set of all superchannels that can be realized by local operations only and $|\mathcal{X}|$ refers to the number of elements in the set $\mathcal{X}$. As such, the supremum is over all sets of messages $\mathcal{X}$ and all superchannels $\Theta_{(A\to B)\to (X\to \hat{X})}$ that can be realized by only local operations.

In the one-way LOCC setting considered throughout this work, Alice is allowed to send arbitrary amounts of classical data to Bob, which is publicly available to any eavesdropper as well. In this setting, the quantity of interest is the one-shot forward-assisted private capacity of a channel, which is defined as follows:
\begin{equation}
    P^{\varepsilon,\to}\!\left(\mathcal{N}_{A\to B}\right) \coloneqq \sup_{\mathcal{X}, \Theta \in \operatorname{1WL}} \left\{\log_2|\mathcal{X}|: p_{\operatorname{err}}\!\left(\Theta,\mathcal{N}\right)\le \varepsilon\right\},
\end{equation}
where $\operatorname{1WL}$ refers to the set of all one-way LOCC superchannels. Since $\operatorname{LO}\subseteq \operatorname{1WL}$, the following inequality holds for all quantum channels $\mathcal{N}_{A\to B}$:
\begin{equation}
\label{eq:basic-priv-cap-up-bnd}
    P^{\varepsilon,\to}\!\left(\mathcal{N}_{A\to B}\right) \ge P^{\varepsilon}\!\left(\mathcal{N}_{A\to B}\right).
\end{equation}
In the remainder of this paper, we will derive several upper bounds on the one-shot forward-assisted private capacity of channels, which, as a consequence of~\eqref{eq:basic-priv-cap-up-bnd}, also serve as upper bounds on the one-shot private capacity of the channel due to the inequality mentioned above.

In the presence of forward-classical assistance, the task of secret-key distillation is equivalent to the task of private communication. Suppose that a forward-assisted protocol allows Alice to send $n$ private bits to Bob through a channel $\mathcal{N}_{A\to B}$ with some error $\varepsilon$. Alice can send a secret key itself through this channel, hence, transforming the one-shot private communication protocol to a one-shot secret-key distillation protocol. Moreover, in the forward-classical assistance setting, a secret-key distillation protocol can be transformed into a private communication protocol by using the one-time-pad scheme, thus demonstrating the equivalence between the two tasks.

Due to the equivalence between the tasks of private communication and secret-key distillation in the presence of forward-classical assistance, the one-shot forward-assisted private capacity of a channel $P^{\varepsilon,\to}\!\left(\mathcal{N}_{A\to B}\right)$ is equal to the one-shot, one-way distillable key of the channel. That is,
\begin{equation}\label{eq:dist_key_eq_priv_cap}
    P^{\varepsilon,\to}\!\left(\mathcal{N}_{A\to B}\right) = K^{\varepsilon,\to}_D\!\left(\mathcal{N}_{A\to B}\right).
\end{equation}

The techniques used in Section~\ref{sec:distillable_key_results} for obtaining upper bounds on the one-shot, one-way distillable key of a state can be extended to obtain upper bounds on the one-shot, one-way distillable key of a channel, using the unextendible entanglement of channels. Furthermore, the equality in~\eqref{eq:dist_key_eq_priv_cap} allows us to obtain upper bounds on the one-shot forward-assisted private capacity, which are also upper bounds on the one-shot private capacity by definition.

\subsection{Unextendible entanglement of channels}

The generalized unextendible entanglement of channels was defined in~\cite{SW24_channels}. We briefly present the relevant properties of the quantity here.

Let us define the following set of channels with respect to a given channel $\mathcal{N}_{A\to B}$:
\begin{equation}\label{eq:free_channels}
    \mathcal{F}\!\left(\mathcal{N}_{A\to B}\right) \coloneqq \left\{\operatorname{Tr}_{B}\circ\mathcal{P}_{A\to BE}: \operatorname{Tr}_{E}\circ\mathcal{P}_{A\to BE} = \mathcal{N}_{A\to B}\right\},
\end{equation}
where systems $B$ and $E$ are isomorphic. The generalized unextendible entanglement of a channel $\mathcal{N}_{A\to B}$ is defined with respect to a generalized channel divergence $\mathbf{D}$~\cite{CMW16, LKDW18} as follows:
\begin{equation}
    \mathbf{E}^u\!\left(\mathcal{N}_{A\to B}\right) \coloneqq \frac{1}{2}\inf_{\mathcal{M}\in \mathcal{F}(\mathcal{N})}\mathbf{D}\!\left(\mathcal{N}_{A\to B}\Vert\mathcal{M}_{A\to E}\right),
\end{equation}
where the generalized divergence between channels is defined as
\begin{equation}
    \mathbf{D}\!\left(\mathcal{N}_{A\to B}\Vert\mathcal{M}_{A\to B}\right) = \sup_{\rho_{RA}\in \mathcal{S}(RA)}\mathbf{D}\!\left(\mathcal{N}_{A\to B}\!\left(\rho_{RA}\right)\Vert \mathcal{M}_{A\to B}\!\left(\rho_{RA}\right)\right).
\end{equation}

The generalized unextendible entanglement of a channel does not increase under the action of one-way LOCC superchannels, the latter defined in Section~\ref{sec:superchannels}.
\begin{lemma}[\cite{SW24_channels}]\label{lem:unext_ent_ch_monotonic}
    The generalized unextendible entanglement of a channel does not increase under the action of one-way LOCC superchannels. That is,
    \begin{equation}
        \mathbf{E}^u\!\left(\Theta\!\left(\mathcal{N}_{A\to B}\right)\right) \le \mathbf{E}^u\!\left(\mathcal{N}_{A\to B}\right) \qquad \forall \Theta \in \operatorname{1WL},
    \end{equation}
    where $\operatorname{1WL}$ represents the set of all one-way LOCC superchannels.
\end{lemma}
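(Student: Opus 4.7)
The plan is to exploit the decomposition of a one-way LOCC superchannel together with data processing of the underlying generalized channel divergence. By the fundamental theorem of superchannels and the definition of one-way LOCC, any $\Theta \in \operatorname{1WL}$ admits a decomposition $\Theta(\mathcal{N}_{A\to B}) = \sum_{x\in\mathcal{X}}\mathcal{D}^x_{B\to D}\circ \mathcal{N}_{A\to B}\circ \mathcal{E}^x_{C\to A}$, where $\{\mathcal{E}^x_{C\to A}\}_{x}$ is a quantum instrument and $\{\mathcal{D}^x_{B\to D}\}_x$ is a family of quantum channels (the classical register $M$ of the memory carries the label $x$). I would use this explicit decomposition as the workhorse throughout.

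First, I would take an arbitrary $\mathcal{M}_{A\to E}\in \mathcal{F}(\mathcal{N}_{A\to B})$ and fix the corresponding channel extension $\mathcal{P}_{A\to BE}$ satisfying $\operatorname{Tr}_E\circ \mathcal{P} = \mathcal{N}$ and $\operatorname{Tr}_B\circ \mathcal{P} = \mathcal{M}$. Next, I would introduce a system $F$ isomorphic to $D$ and build a candidate extension of $\Theta(\mathcal{N})_{C\to D}$ by routing the same classical label $x$ to both sides of the extension:
\begin{equation}
\mathcal{Q}_{C\to DF} \coloneqq \sum_{x\in \mathcal{X}} \left(\mathcal{D}^x_{B\to D}\otimes \mathcal{D}^x_{E\to F}\right)\circ \mathcal{P}_{A\to BE}\circ \mathcal{E}^x_{C\to A},
\end{equation}
where $\mathcal{D}^x_{E\to F}$ is the copy of $\mathcal{D}^x$ acting on the isomorphic systems. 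A direct computation, using trace preservation of each $\mathcal{D}^x$ and the two marginal conditions on $\mathcal{P}$, gives $\operatorname{Tr}_F\circ \mathcal{Q} = \Theta(\mathcal{N})_{C\to D}$ and $\operatorname{Tr}_D\circ \mathcal{Q} = \Theta(\mathcal{M})_{C\to F}$, where $\Theta(\mathcal{M})$ is interpreted via the natural identification $B\simeq E$, $D\simeq F$. Thus $\mathcal{Q}$ is a legitimate extension of $\Theta(\mathcal{N})$, and $\Theta(\mathcal{M})\in \mathcal{F}(\Theta(\mathcal{N}))$.

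To close the argument I would apply data processing of the generalized channel divergence under superchannels. Since every superchannel acts by conjugation with a pre-processing channel followed by a post-processing channel, one has $\mathbf{D}(\Theta(\mathcal{N})\Vert\Theta(\mathcal{M}))\le \mathbf{D}(\mathcal{N}\Vert\mathcal{M})$; indeed, for every input state $\rho_{RC}$, data processing of the state divergence $\mathbf{D}$ under the post-processing channel and monotonicity of the supremum over inputs yield this inequality. Infimizing first over $\mathcal{M}\in \mathcal{F}(\mathcal{N})$ and then observing that the induced $\Theta(\mathcal{M})$ lies in $\mathcal{F}(\Theta(\mathcal{N}))$ gives
\begin{equation}
2\mathbf{E}^u(\mathcal{N}_{A\to B})=\inf_{\mathcal{M}\in \mathcal{F}(\mathcal{N})}\mathbf{D}(\mathcal{N}\Vert\mathcal{M})\ge \inf_{\mathcal{M}'\in \mathcal{F}(\Theta(\mathcal{N}))}\mathbf{D}(\Theta(\mathcal{N})\Vert\mathcal{M}') = 2\mathbf{E}^u(\Theta(\mathcal{N})),
\end{equation}
which is the desired monotonicity.

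The main obstacle I anticipate is not algebraic but bookkeeping: one must carefully track the isomorphism identifying the extension output $E$ with $B$ (and correspondingly $F$ with $D$) so that the symbol $\Theta(\mathcal{M})$ is well-defined and that the marginal identities for $\mathcal{Q}$ hold on the nose. Once this identification is made explicit, each step reduces to a standard verification, and the proof hinges entirely on the correct use of the one-way LOCC structure to \emph{broadcast} the classical label $x$ to both the $B$-branch and the $E$-branch of the extension.
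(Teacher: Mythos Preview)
Your argument is correct. The construction of $\mathcal{Q}_{C\to DF}$ by broadcasting the classical label $x$ to both output branches is exactly the right move: trace-preservation of each $\mathcal{D}^x$ together with the marginal conditions on $\mathcal{P}$ immediately give $\operatorname{Tr}_F\circ\mathcal{Q}=\Theta(\mathcal{N})$ and $\operatorname{Tr}_D\circ\mathcal{Q}=\Theta(\mathcal{M})$, so $\Theta(\mathcal{M})\in\mathcal{F}(\Theta(\mathcal{N}))$. The data-processing step for the channel divergence under a superchannel is also fine, since the pre-processing $\mathcal{E}_{C\to MA}$ only restricts the input state while the post-processing $\mathcal{D}_{MB\to D}$ is handled by the state-level data-processing inequality.

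The paper itself does not give a self-contained proof of this lemma; it cites \cite{SW24_channels}, which establishes the stronger statement that $\mathbf{E}^u$ is monotone under the larger class of \emph{two-extendible} superchannels, and then obtains the one-way LOCC case as a corollary of the containment $\operatorname{1WL}\subseteq\{\text{two-extendible superchannels}\}$. Your route is thus genuinely different in scope: you exploit the explicit instrument--channel decomposition available for one-way LOCC to build the symmetric extension directly, whereas the cited argument works abstractly with the channel-extension and permutation-covariance conditions defining two-extendible superchannels. Your approach is more elementary and self-contained for the statement as written; the cited approach yields a more general monotonicity result but requires setting up the machinery of two-extendible superchannels. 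Both are parallel to the corresponding arguments for states (Theorem~\ref{theo:unext_ent_st_monotonicity} and the discussion of one-way LOCC channel extensions preceding it).
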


A more general statement of Lemma~\ref{lem:unext_ent_ch_monotonic} was presented in~\cite{SW24_channels}, where it was shown that the generalized unextendible entanglement of a channel does not increase under the action of two-extendible superchannels, which is a semidefinite relaxation of the set of one-way LOCC superchannels, and hence, contains the set of one-way LOCC superchannels. In this work we will only consider one-way LOCC superchannels and not two-extendible superchannels. We point the interested reader to~\cite{SW24_channels} for a more detailed discussion.

A direct consequence of Lemma~\ref{lem:unext_ent_ch_monotonic} is that the maximum value of the generalized unextendible entanglement of a channel $\mathcal{N}_{A\to B}$ is not larger than the generalized unextendible entanglement of the identity channel $\operatorname{id}_{A'\to B'}$, where $\operatorname{dim}(A') = \operatorname{dim}(B') = \min\!\left\{\operatorname{dim}(A), \operatorname{dim}(B)\right\}$. This can be seen from the following argument: Consider an arbitrary channel $\mathcal{N}_{A\to B}$. If $\operatorname{dim}(A) \ge \operatorname{dim}(B)$, then construct a superchannel $\Theta_{(B\to C)\to (A\to D)}$ that acts on an arbitrary channel $\mathcal{M}_{B\to C}$ as follows:
\begin{equation}
    \Theta_{(B\to C)\to (A\to D)}\!\left(\mathcal{M}_{B\to C}\right) = \operatorname{id}_{C\to D}\circ\mathcal{M}_{B\to C}\circ\mathcal{N}_{A\to B}.
\end{equation}
Lemma~\ref{lem:unext_ent_ch_monotonic} implies the following inequality:
\begin{align}
    \mathbf{E}^u\!\left(\operatorname{id}_{B\to C}\right)&\ge \mathbf{E}^u\!\left(\Theta\!\left(\operatorname{id}_{B\to C}\right)\right)\\
    &= \mathbf{E}^u\!\left(\operatorname{id}_{C\to D}\circ\operatorname{id}_{B\to C}\circ\mathcal{N}_{A\to B}\right)\\
    &= \mathbf{E}^u\!\left(\mathcal{N}_{A\to B}\right).\label{eq:unext_ent_id_ge_ch_in_ge_out}
\end{align}
Similarly, if $\operatorname{dim}(B)\ge \operatorname{dim}(A)$, then we can construct a superchannel $\Upsilon_{(D\to A)\to (C\to B)}$ that acts on an arbitrary channel $\mathcal{M}_{D\to A}$ as follows:
\begin{equation}
    \Upsilon_{(D\to A)\to (C\to B)}\!\left(\mathcal{M}_{D\to A}\right) = \mathcal{N}_{A\to B}\circ\mathcal{M}_{D\to A}\circ\operatorname{id}_{C\to D}.
\end{equation}
Once again, applying Lemma~\ref{lem:unext_ent_ch_monotonic} leads to the following inequality:
\begin{equation}\label{eq:unext_ent_id_ge_ch_out_ge_in}
    \mathbf{E}^u\!\left(\operatorname{id}_{D\to A}\right) \ge \mathbf{E}^u\!\left(\Upsilon\!\left(\operatorname{id}_{D\to A}\right)\right) = \mathbf{E}^u\!\left(\mathcal{N}_{A\to B}\right).
\end{equation}
The inequalities in~\eqref{eq:unext_ent_id_ge_ch_in_ge_out} and~\eqref{eq:unext_ent_id_ge_ch_out_ge_in} can be written together as the following inequality:
\begin{equation}
    \mathbf{E}^u\!\left(\mathcal{N}_{A\to B}\right) \le \min\!\left\{\mathbf{E}^u\!\left(\operatorname{id}_{A\to C}\right),\mathbf{E}^u\!\left(\operatorname{id}_{B\to D}\right)\right\},
\end{equation}
where $\operatorname{dim}(A) = \operatorname{dim}(C)$ and $\operatorname{dim}(B) = \operatorname{dim}(D)$.

Another important property of generalized unextendible entanglement of channels, which is relevant to our discussion, is its relation with the generalized unextendible entanglement of states. In particular, the generalized unextendible entanglement of a bipartite state that can be established between two distant parties using a quantum channel $\mathcal{N}_{A\to B}$ and one-way LOCC superchannels cannot be larger than the generalized unextendible entanglement of the channel $\mathcal{N}_{A\to B}$. We state this formally in Lemma~\ref{lem:unext_ent_state_le_unext_ent_ch_gen} below.
\begin{lemma}[\cite{SW24_channels}]\label{lem:unext_ent_state_le_unext_ent_ch_gen}
    The unextendible entanglement of a quantum state $\sigma_{RC'D}$, with respect to the partition $RC':D$, that can be established between two parties using a point-to-point quantum channel $\mathcal{N}_{A\to B}$ and a one-way LOCC superchannel $\Theta_{(A\to B)\to (C\to C'D)}$ is no greater than the unextendible entanglement of the quantum channel $\mathcal{N}_{A\to B}$; i.e., 
    \begin{equation}\label{eq:unext_ent_state_le_unext_ent_ch_gen}
        \sup_{\rho_{RC}}\mathbf{E}^u\!\left(\sigma_{RC':D}\right) \le  \mathbf{E}^u\!\left(\mathcal{N}_{A\to B}\right), 
    \end{equation}
    where 
    \begin{equation}
        \sigma_{RC'D} \coloneqq \left(\Theta_{(A\to B)\to (C\to C'D)}\!\left(\mathcal{N}_{A\to B}\right)\right)\left(\rho_{RC}\right),
    \end{equation}
    and $\rho_{RC}$ is a quantum state. The symbol $\mathbf{E}^u\!\left(\sigma_{RC':D}\right)$ denotes that the unextendible entanglement of the state $\sigma_{RC'D}$ is calculated with respect to the bipartition $RC':D$. 
\end{lemma}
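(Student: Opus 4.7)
My plan is to reduce the channel-to-state bound to two monotonicity-type statements that have already been stated or are immediate from definitions. Let $\mathcal{N}'_{C\to C'D} \coloneqq \Theta_{(A\to B)\to (C\to C'D)}(\mathcal{N}_{A\to B})$, so that $\sigma_{RC'D} = \mathcal{N}'_{C\to C'D}(\rho_{RC})$. Because $\Theta$ is a one-way LOCC superchannel, Lemma~\ref{lem:unext_ent_ch_monotonic} applies directly to give
\begin{equation}
    \mathbf{E}^u\!\left(\mathcal{N}'_{C\to C'D}\right) \le \mathbf{E}^u\!\left(\mathcal{N}_{A\to B}\right).
\end{equation}
So it suffices to prove the intermediate state-from-channel bound $\mathbf{E}^u(\sigma_{RC':D}) \le \mathbf{E}^u(\mathcal{N}'_{C\to C'D})$ for every fixed input $\rho_{RC}$, where the left-hand side is computed with respect to the bipartition $RC':D$.

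For the intermediate bound, my plan is to turn an arbitrary free channel extension of $\mathcal{N}'$ into a free state extension of $\sigma_{RC'D}$. Pick an arbitrary $\mathcal{M}_{C\to C'F} \in \mathcal{F}(\mathcal{N}'_{C\to C'D})$; by \eqref{eq:free_channels} there is a channel $\mathcal{Q}_{C\to C'DF}$ with $F\cong D$ such that $\operatorname{Tr}_{F}\circ \mathcal{Q} = \mathcal{N}'$ and $\operatorname{Tr}_{D}\circ \mathcal{Q} = \mathcal{M}$. Define the candidate state extension
\begin{equation}
    \omega_{RC'DF} \coloneqq \mathcal{Q}_{C\to C'DF}(\rho_{RC}).
\end{equation}
Tracing out $F$ recovers $\sigma_{RC'D}$, so $\omega_{RC'DF} \in \operatorname{Ext}(\sigma_{RC':D})$ in the sense of \eqref{eq:extensions_state}, and $\operatorname{Tr}_{D}[\omega_{RC'DF}] = \mathcal{M}_{C\to C'F}(\rho_{RC})$. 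Now applying the definition of a generalized channel divergence together with the fact that $\rho_{RC}$ is a particular admissible input state,
\begin{equation}
    \mathbf{D}\!\left(\sigma_{RC'D}\middle\Vert \operatorname{Tr}_D[\omega_{RC'DF}]\right) = \mathbf{D}\!\left(\mathcal{N}'(\rho_{RC})\middle\Vert \mathcal{M}(\rho_{RC})\right) \le \mathbf{D}\!\left(\mathcal{N}'_{C\to C'D}\middle\Vert \mathcal{M}_{C\to C'F}\right).
\end{equation}
Dividing by two, taking the infimum over $\mathcal{M}\in \mathcal{F}(\mathcal{N}')$ on the right and recognizing that the infimum on the left over the induced extensions is at most the unrestricted infimum over all of $\operatorname{Ext}(\sigma_{RC':D})$ yields $\mathbf{E}^u(\sigma_{RC':D}) \le \mathbf{E}^u(\mathcal{N}'_{C\to C'D})$. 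Chaining with monotonicity and taking a supremum over $\rho_{RC}$ gives \eqref{eq:unext_ent_state_le_unext_ent_ch_gen}.

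The only subtle step, and the place I would slow down, is verifying that the system labels line up correctly: the state-side extension in \eqref{eq:extensions_state} traces over an auxiliary system isomorphic to the ``Bob'' system $D$, and this matches the channel-side extension $\mathcal{Q}_{C\to C'DF}$ only because the convention for $\mathcal{F}(\mathcal{N}'_{C\to C'D})$ treats $D$ (the receiver's output of the superchannel) as the extended system while $C'$ (the sender's local output) is kept intact. Once this alignment is in place, everything else is just invoking Lemma~\ref{lem:unext_ent_ch_monotonic} and the definition of the generalized channel divergence as a supremum over input states, so I do not anticipate further difficulty.
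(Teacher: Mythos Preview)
Your proof is correct, and the paper does not actually prove this lemma here---it is imported wholesale from \cite{SW24_channels}, so there is no in-paper argument to compare against. Your two-step reduction (first invoke Lemma~\ref{lem:unext_ent_ch_monotonic} to pass from $\mathcal{N}$ to $\mathcal{N}'=\Theta(\mathcal{N})$, then convert an arbitrary channel extension $\mathcal{Q}_{C\to C'DF}$ into a state extension $\omega_{RC'DF}=\mathcal{Q}(\rho_{RC})$ and use that the channel divergence is a supremum over inputs) is the natural argument and is almost certainly how the cited reference proceeds.

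The one point worth being explicit about---and you already flag it---is that the paper's displayed definition of $\mathcal{F}(\mathcal{N}_{A\to B})$ in~\eqref{eq:free_channels} is written for a point-to-point channel whose entire output goes to Bob, whereas $\mathcal{N}'_{C\to C'D}$ has a sender-side output $C'$. Your interpretation (extend only the receiver's system $D$, leave $C'$ intact) is exactly what the bipartite resource theory in \cite{SW24_channels} requires and is consistent with how the lemma is applied later in the paper (e.g.,~\eqref{eq:J_eps_ch_le_st}, where $C'$ is in fact trivial). So this is a notational alignment issue rather than a mathematical gap, and you have handled it correctly.
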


Lemma~\ref{lem:unext_ent_state_le_unext_ent_ch_gen}, Lemma~\ref{lem:monotoncity_obj_func}, and~\eqref{eq:ch_1shot_1w_key_wrt_st} provide us with all the necessary tools to obtain an upper bound on the one-shot, one-way distillable key of a channel using unextendible entanglement of channels.

The two important quantities that we will use in this section are the smooth-min unextendible entanglement of a channel and the $\alpha$-geometric unextendible entanglement of the channel.

The smooth-min unextendible entanglement of a channel is defined in terms of the smooth-min relative entropy of channels as follows:
\begin{align}
    E^{u,\varepsilon}_{\min}\!\left(\mathcal{N}_{A\to B}\right) &\coloneqq \frac{1}{2} \inf_{\mathcal{M}\in \mathcal{F}(\mathcal{N})} D^{\varepsilon}_{\min}\!\left(\mathcal{N}_{A\to B}\Vert\mathcal{M}_{A\to E}\right)\\
    &=\frac{1}{2} \inf_{\mathcal{M}\in \mathcal{F}(\mathcal{N})} \sup_{\rho_{RA}\in \mathcal{S}(RA)}D^{\varepsilon}_{\min}\!\left(\mathcal{N}_{A\to B}\!\left(\rho_{RA}\right)\Vert\mathcal{M}_{A\to E}\!\left(\rho_{RA}\right)\right),
\end{align}
where the smooth-min relative entropy of states was defined in~\eqref{eq:smooth_min_rel_ent_defn}. The smooth-min relative entropy of channels can be written as a semidefinite program~\cite[Appendix~B-3]{WW19}, and the set $\mathcal{F}(\mathcal{N})$ can also be described by semidefinite constraints. This allows us to write the smooth-min unextendible entanglement of a channel as a semidefinite program (see Appendix~\ref{app:semidefinite_programs}).

\begin{proposition}\label{prop:smooth_min_ch_range}
    The smooth-min unextendible entanglement of a channel is bounded as follows:
    \begin{equation}
        - \frac{1}{2}\log_2(1-\varepsilon) \le E^{u,\varepsilon}_{\min}\!\left(\mathcal{N}_{A\to B}\right) \le \log_2d - \frac{1}{2}\log_2(1-\varepsilon),
    \end{equation}
    where $d \coloneqq \min\{\operatorname{dim}(A),\operatorname{dim}(B)\}$.
\end{proposition}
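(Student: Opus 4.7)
The plan is to prove the two bounds independently, mirroring the state-level argument in Proposition~\ref{prop:J_eps_range}.

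The lower bound rests on the universal fact that $D^{\varepsilon}_{\min}(\rho\Vert\sigma) \ge -\log_2(1-\varepsilon)$ for every pair of states $\rho,\sigma$, which follows directly from~\eqref{eq:smooth_min_rel_ent_defn} by substituting the feasible measurement operator $\Lambda = (1-\varepsilon)I$ and computing $\operatorname{Tr}[\Lambda\rho] = \operatorname{Tr}[\Lambda\sigma] = 1-\varepsilon$. Applying this pointwise to $\mathcal{N}(\rho_{RA})$ and $\mathcal{M}(\rho_{RA})$ for every input $\rho_{RA}$ and every free channel $\mathcal{M}\in \mathcal{F}(\mathcal{N})$, and then taking $\sup_{\rho_{RA}}$ followed by $\inf_{\mathcal{M}}$, immediately yields $E^{u,\varepsilon}_{\min}(\mathcal{N}) \ge -\tfrac{1}{2}\log_2(1-\varepsilon)$.

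For the upper bound, the first step is to reduce to the identity channel via Lemma~\ref{lem:unext_ent_ch_monotonic}. With $d := \min\{\dim A, \dim B\}$, any channel $\mathcal{N}_{A\to B}$ can be realized as $\Theta(\operatorname{id}_d)$ for a one-way LOCC superchannel $\Theta$ with trivial classical memory: when $\dim B = d$, set Alice's pre-processing to $\mathcal{N}$ (with output dimension $d$) and Bob's post-processing to the identity; when $\dim A = d$, reverse the roles. Hence $E^{u,\varepsilon}_{\min}(\mathcal{N}) \le E^{u,\varepsilon}_{\min}(\operatorname{id}_d)$. To bound the latter, I would exhibit the explicit extending channel $\mathcal{P}_{A\to BE}(\rho_A) = \rho_B \otimes I_E/d$, whose complement $\mathcal{M}_{A\to E}(\rho_A) = I_E/d$ maps every bipartite input to $\rho_R \otimes I_E/d$. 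Combined with the standard bound $D^{\varepsilon}_{\min}(\rho\Vert\sigma) \le D_{\max}(\rho\Vert\sigma) - \log_2(1-\varepsilon)$, it then suffices to show $D_{\max}(\rho_{RB}\Vert \rho_R \otimes I_E/d) \le 2\log_2 d$ uniformly in $\rho_{RA}$.

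For this last estimate, I would purify $\rho_{RA}$ to a pure state $\psi_{R\hat R A}$ and apply data processing under $\operatorname{Tr}_{\hat R}$ to reduce to a pure input. A Schmidt decomposition of rank $r \le d$ then yields $(\psi_{R\hat R}^{-1/2}\otimes I_B)\,\psi_{R\hat R B}\,(\psi_{R\hat R}^{-1/2}\otimes I_B) \le rI \le dI$ on the support of $\psi_{R\hat R}$, and re-sandwiching produces $\psi_{R\hat R B} \le d^2 \cdot \psi_{R\hat R}\otimes I_E/d$, as required. The main technical nuisance I anticipate is the support issue when $\psi_{R\hat R}$ is not full rank; it is handled cleanly by restricting $\psi_{R\hat R}^{-1/2}$ to the support of $\psi_{R\hat R}$ and noting that both sides of the target operator inequality vanish on the orthogonal complement.
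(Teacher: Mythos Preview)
Your argument is correct and follows the same overall architecture as the paper: the lower bound comes from the universal inequality $D^{\varepsilon}_{\min}(\rho\Vert\sigma)\ge -\log_2(1-\varepsilon)$, and the upper bound proceeds by reducing to the identity channel via monotonicity under one-way LOCC superchannels, then bounding $D^{\varepsilon}_{\min}$ by $D_{\max}$ plus the $-\log_2(1-\varepsilon)$ correction.

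The implementation of the upper bound differs in a useful way. The paper invokes the external fact that the channel $D_{\max}$ is attained at the Choi state \cite{DFW18,WBHK20}, identifies $\inf_{\mathcal{M}\in\mathcal{F}(\operatorname{id})}\tfrac{1}{2}D_{\max}(\operatorname{id}\Vert\mathcal{M})$ with $E^u_{\max}(\Phi^d)$, and then quotes $E^u_{\max}(\Phi^d)=\log_2 d$ from~\cite[Proposition~11]{WWW24}. You instead fix the specific free channel $\mathcal{M}(\cdot)=\operatorname{Tr}[\cdot]\,I_E/d$ and prove directly, via purification and a Schmidt decomposition, that $D_{\max}(\rho_{RB}\Vert\rho_R\otimes I_B/d)\le 2\log_2 d$ uniformly in $\rho_{RA}$. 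Your route is more self-contained (no external citations needed), and effectively re-proves the Choi-state optimality for this particular channel pair; the paper's route is shorter but relies on those prior results. Your handling of the support issue is also exactly right: the pseudo-inverse restricted to $\operatorname{supp}(\psi_{R\hat{R}})$ suffices because the marginal condition forces $\psi_{R\hat{R}B}$ to live on $\operatorname{supp}(\psi_{R\hat{R}})\otimes\mathcal{H}_B$.
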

\begin{proof}
    See Appendix~\ref{app:smooth_min_ch_range}.
\end{proof}

\medskip

The $\alpha$-geometric unextendible entanglement of channels was explored in~\cite{SW24_channels} in the context of zero-error private communication. It is defined for a parameter $\alpha \in (0,1)\cup(1,2]$ as follows:
\begin{align}
    \widehat{E}^u_{\alpha}\!\left(\mathcal{N}_{A\to B}\right) &\coloneqq \inf_{\mathcal{M}\in\mathcal{F}(\mathcal{N})}\frac{1}{2}\widehat{D}_{\alpha}\!\left(\mathcal{N}_{A\to B}\Vert\mathcal{M}_{A\to B}\right)\\
    &=\inf_{\mathcal{M}\in\mathcal{F}(\mathcal{N})}\sup_{\rho_{RA}\in \mathcal{S}(RA)} \frac{1}{2}\widehat{D}_{\alpha}\!\left(\mathcal{N}_{A\to B}(\rho_{RA})\Vert\mathcal{M}_{A\to B}(\rho_{RA})\right),
\end{align}
where the $\alpha$-geometric R\'enyi relative entropy of states is defined for all $\alpha \in (0,1) \cup (1,\infty)$ as follows~\cite{Mat13}:
\begin{equation}\label{eq:alpha_geo_rel_ent_defn}
    \widehat{D}_{\alpha}\!\left(\rho\Vert\sigma\right) = \frac{1}{\alpha - 1}\log_2\operatorname{Tr}\!\left[\sigma\!\left(\sigma^{-\frac{1}{2}}\rho\sigma^{-\frac{1}{2}}\right)^{\alpha}\right].
\end{equation}

We list some properties of the $\alpha$-geometric unextendible entanglement of channels that are relevant to this work. We refer the interested reader to~\cite{SW24_channels} for a more detailed discussion of these properties:
\begin{enumerate}
    \item \textbf{Monotonicity in $\alpha$:} The $\alpha$-geometric unextendible entanglement of channels increases monotonically with increasing $\alpha$. That is,
    \begin{equation}
        \widehat{E}^u_{\alpha}\!\left(\mathcal{N}_{A\to B}\right) \ge \widehat{E}^u_{\beta}\!\left(\mathcal{N}_{A\to B}\right) \qquad \forall \alpha,\beta \in (0,1)\cup(1,2],\quad \alpha \ge \beta.
    \end{equation}
    \item \textbf{Subadditivity:} The $\alpha$-geometric unextendible entanglement of channels is subadditive under tensor products of channels. That is, the following inequality holds for all $\alpha\in (0,1)\cup(1,2]$:
    \begin{equation}
    \widehat{E}^u_{\alpha}\!\left(\mathcal{N}_{A\to B}\otimes \mathcal{M}_{A\to B}\right) \le \widehat{E}^u_{\alpha}\!\left(\mathcal{N}_{A\to B}\right) + \widehat{E}^u_{\alpha}\!\left( \mathcal{M}_{A\to B}\right),
    \label{eq:alpha-geo-subadd}
    \end{equation}
    where $\mathcal{N}_{A\to B}$ and $\mathcal{M}_{A\to B}$ are quantum channels.
    \item \textbf{Limiting case when $\alpha \to 1$:} The $\alpha$-geometric unextendible entanglement converges to the unextendible entanglement induced by the Belavkin--Staszewski relative entropy as $\alpha \to 1$. That is,
    \begin{equation}\label{eq:Bel_Stas_unext_ent}
        \widehat{E}^u\!\left(\mathcal{N}_{A\to B}\right)\coloneqq \lim_{\alpha \to 1}\widehat{E}^u_{\alpha}\!\left(\mathcal{N}_{A\to B}\right) = \inf_{\mathcal{M}\in \mathcal{F}(\mathcal{N})}\frac{1}{2}\widehat{D}\!\left(\mathcal{N}_{A\to B}\middle\Vert\mathcal{M}_{A\to B}\right),
    \end{equation}
    where the Belavkin--Staszewski relative entropy of states is defined as follows~\cite{BS82}:
    \begin{equation}
        \widehat{D}\!\left(\rho\Vert\sigma\right) \coloneqq \Bigg\{\begin{array}{cc}
            \operatorname{Tr}\!\left[\rho\!\log_2\left(\sqrt{\rho}\sigma^{-1}\sqrt{\rho}\right)\right] & \text{if }\operatorname{supp}(\rho) \subseteq \operatorname{supp}(\sigma)   \\
             +\infty & \text{otherwise} 
        \end{array},
    \end{equation}
    and where $\sigma^{-1}$ is taken on the support of $\sigma$ and the logarithm is evaluated on the support of $\rho$. The rightmost equality in~\eqref{eq:Bel_Stas_unext_ent} follows directly from~\cite[Proposition~36]{DKQSWW23}.
    \item \textbf{Semidefinite program:} The $\alpha$-geometric unextendible entanglement of a channel can be computed using a semidefinite program for rational values of $\alpha \in (1,2]$ (see Appendix~\ref{app:semidefinite_programs}).
\end{enumerate}

The $\alpha$-geometric unextendible entanglement of a channel can be related with the smooth-min unextendible entanglement using the inequality in~\eqref{eq:smooth_min_vs_sandwich_ineq}. The $\alpha$-geometric R\'enyi relative entropy of states is known to be larger than or equal to the $\alpha$-sandwiched R\'enyi relative entropy of states for all $\alpha \in (0,1)\cup(1,\infty)$~\cite{Tomamichel15, WWW24}. The inequality in~\eqref{eq:smooth_min_vs_sandwich_ineq} then implies the following inequality, which holds for all $\alpha \in (1,\infty)$ and $\varepsilon \in [0,1)$:
\begin{align}
    D^{\varepsilon}_{\min}\!\left(\rho\Vert\sigma\right) &\le \widetilde{D}_{\alpha}\!\left(\rho\Vert\sigma\right) + \frac{\alpha}{\alpha - 1}\log_2\!\left(\frac{1}{1-\varepsilon}\right)\\
    &\le \widehat{D}_{\alpha}\!\left(\rho\Vert\sigma\right) + \frac{\alpha}{\alpha - 1}\log_2\!\left(\frac{1}{1-\varepsilon}\right).\label{eq:smooth_min_vs_geo_ineq}
\end{align}
We will restrict our discussion to $\alpha \in (1,2]$, as this is the interval for which the $\alpha$-geometric R\'enyi relative entropy obeys the data-processing inequality. Setting $\rho \to (\operatorname{id}_R \otimes \mathcal{N})(\rho_{RA})$ and $\sigma\to (\operatorname{id}_R \otimes \mathcal{M})(\rho_{RA})$, where $\mathcal{N}$ and $\mathcal{M}$ are quantum channels, leads to the following inequality:
\begin{multline}
    D^{\varepsilon}_{\min}\!\left((\operatorname{id}_R \otimes \mathcal{N})(\rho_{RA})\Vert(\operatorname{id}_R \otimes \mathcal{M})(\rho_{RA})\right) \le \widehat{D}_{\alpha}\!\left((\operatorname{id}_R \otimes \mathcal{N})(\rho_{RA})\Vert(\operatorname{id}_R \otimes \mathcal{M})(\rho_{RA})\right) \\
    + \frac{\alpha}{\alpha - 1}\log_2\!\left(\frac{1}{1-\varepsilon}\right).
\end{multline}
Since the above inequality holds for every state $\rho$, we can take a supremum over all states and conclude the following inequality:
\begin{equation}
    D^{\varepsilon}_{\min}\!\left(\mathcal{N}\Vert\mathcal{M}\right) 
    \le \widehat{D}_{\alpha}\!\left(\mathcal{N}\Vert\mathcal{M}\right) + \frac{\alpha}{\alpha - 1}\log_2\!\left(\frac{1}{1-\varepsilon}\right).
\end{equation}
Now taking an infimum over all $\mathcal{M}\in \mathcal{F}(\mathcal{N})$, we arrive at the following inequality, which holds for all $\alpha \in (1,2]$ and $\varepsilon \in [0,1)$:
\begin{align}
    E^{u,\varepsilon}_{\min}\!\left(\mathcal{N}_{A\to B}\right) &= \inf_{\mathcal{M}\in \mathcal{F}(\mathcal{N})}\frac{1}{2}D^{\varepsilon}_{\min}\!\left(\mathcal{N}_{A\to B}\Vert\mathcal{M}_{A\to B}\right)\\
    &\le \inf_{\mathcal{M}\in \mathcal{F}(\mathcal{N})}\frac{1}{2}\widehat{D}_{\alpha}\!\left(\mathcal{N}_{A\to B}\Vert\mathcal{M}_{A\to B}\right) + \frac{1}{2}\cdot\frac{\alpha}{\alpha -1}\log_2\!\left(\frac{1}{1-\varepsilon}\right)\\
    &= \widehat{E}^u_{\alpha}\!\left(\mathcal{N}_{A\to B}\right) + \frac{1}{2}\cdot\frac{\alpha}{\alpha - 1}\log_2\!\left(\frac{1}{1-\varepsilon}\right).\label{eq:smooth_min_vs_geo_unext_ent_ineq}
\end{align}

\subsection{Upper bounds on the one-shot private capacity of a channel}

In this section, we discuss the application of the smooth-min unextendible entanglement and the max-unextendible entanglement of channels to obtaining upper bounds on the one-shot, one-way distillable key of a channel.

\subsubsection{Smooth-min unextendible entanglement upper bound}

Consider the following quantity:
\begin{equation}\label{eq:J_eps_ch_defn}
    J^{\varepsilon}_{\min}\!\left(\mathcal{N}_{A\to B}\right) \coloneqq 2^{-2E^{u,\varepsilon}_{\min}(\mathcal{N}_{A\to B})}.
\end{equation}
Lemma~\ref{lem:unext_ent_state_le_unext_ent_ch_gen} implies the following inequality:
\begin{equation}\label{eq:J_eps_ch_le_st}
    J^{\varepsilon}_{\min}\!\left(\mathcal{N}_{A\to B}\right) \le \sup_{\substack{\rho_{A'A''\hat{A}}\in \mathcal{S}(A'A''\hat{A})\\ \Theta \in \operatorname{1WL}}} J^{\varepsilon}_{\min}\!\left(\left(\Theta_{(A\to B)\to (\hat{A}\to B'B'')}\!\left(\mathcal{N}_{A\to B}\right)\right)\!\left(\rho_{A'A''\hat{A}}\right)\right),
\end{equation}
where $J^{\varepsilon}_{\min}(\cdot)$ for states was defined in~\eqref{eq:J_hypo_test_defn}. The supremum in the above inequality is over every state $\rho_{A'A''\hat{A}}$ and one-way LOCC superchannel $\Theta_{(A\to B)\to (\hat{A}\to B'B'')}$. The inequality in~\eqref{eq:J_eps_ch_le_st}, along with Theorem~\ref{theo:distill_key_st_ub_hypo_test} and Lemma~\ref{lem:monotoncity_obj_func}, yields an upper bound on the one-shot forward-assisted private capacity of a channel, which we state in Theorem~\ref{theo:dist_key_ch_hypo_test_bnd} below.

\begin{theorem}[Unextendibility bound on one-shot private capacity] \label{theo:dist_key_ch_hypo_test_bnd}
    Consider a quantum channel $\mathcal{N}_{A\to B}$ and a parameter $\varepsilon \in [0,1]$ such that
    \begin{equation}
        J^{\varepsilon}_{\min}\!\left(\mathcal{N}_{A\to B}\right) > \varepsilon,
    \end{equation}
    where $J^{\varepsilon}_{\min}(\mathcal{N}_{A\to B})$ is defined in~\eqref{eq:J_eps_ch_defn}. Then the one-shot, one-way distillable key of the channel, which is equal to the one-shot private capacity of the channel according to~\eqref{eq:dist_key_eq_priv_cap}, is bounded from above by the following quantity:
    \begin{equation}\label{eq:dist_key_ch_hypo_test_bnd}
        P^{\varepsilon,\to}\!\left(\mathcal{N}_{A\to B}\right) = K^{\varepsilon,\to}_{D}\!\left(\mathcal{N}_{A\to B}\right) \le \frac{1}{2}\log_2\!\left[\left(\frac{\sqrt{J^{\varepsilon}_{\min}(\mathcal{N})(1-J^{\varepsilon}_{\min}(\mathcal{N}))}+\sqrt{\varepsilon(1-\varepsilon)}}{J^{\varepsilon}_{\min}(\mathcal{N})-\varepsilon}\right)^2+1\right].
    \end{equation}
\end{theorem}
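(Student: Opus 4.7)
The plan is to lift the state-level bound from Theorem~\ref{theo:distill_key_st_ub_hypo_test} to a channel-level bound by combining the state-to-channel reduction in~\eqref{eq:ch_1shot_1w_key_wrt_st} with the data-processing-type inequality for the unextendible entanglement of a channel given in Lemma~\ref{lem:unext_ent_state_le_unext_ent_ch_gen}. Since the task is already set up in~\eqref{eq:J_eps_ch_le_st} and the monotonicity lemmas are at hand, the argument is essentially a four-line chain of inequalities once the pieces are identified.

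First, I would start from the identity
\begin{equation}
    K^{\varepsilon,\to}_D\!\left(\mathcal{N}_{A\to B}\right) = \sup_{\psi_{A'A''\hat{A}},\, \Theta \in \operatorname{1WL}} K^{\varepsilon,\to}_D\!\left(\left(\Theta_{(A\to B)\to(\hat{A}\to B'B'')}(\mathcal{N}_{A\to B})\right)\!\left(\psi_{A'A''\hat{A}}\right)\right),
\end{equation}
stated in~\eqref{eq:ch_1shot_1w_key_wrt_st}. For each fixed $\psi$ and $\Theta$, the output state lies in the regime where Theorem~\ref{theo:distill_key_st_ub_hypo_test} applies—this needs to be verified using the hypothesis $J^{\varepsilon}_{\min}(\mathcal{N}_{A\to B}) > \varepsilon$ together with~\eqref{eq:J_eps_ch_le_st}, which ensures $J^{\varepsilon}_{\min}\!\left((\Theta(\mathcal{N}))(\psi)\right) \ge J^{\varepsilon}_{\min}(\mathcal{N}_{A\to B}) > \varepsilon$ for every $\psi, \Theta$. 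Applying the state-level bound then yields
\begin{equation}
    K^{\varepsilon,\to}_D\!\left((\Theta(\mathcal{N}))(\psi)\right) \le f\!\left(J^{\varepsilon}_{\min}\!\left((\Theta(\mathcal{N}))(\psi)\right),\varepsilon\right),
\end{equation}
where $f$ is the function defined in Lemma~\ref{lem:monotoncity_obj_func} (when the argument lies above the stated cutoff, the state-level distillable key is zero and the displayed formula gives a valid, non-negative upper bound as well).

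Next, I would invoke Lemma~\ref{lem:monotoncity_obj_func}, which guarantees that $f(\cdot,\varepsilon)$ is monotonically decreasing in its first argument. Combined with the channel-to-state inequality $J^{\varepsilon}_{\min}\!\left((\Theta(\mathcal{N}))(\psi)\right) \ge J^{\varepsilon}_{\min}(\mathcal{N}_{A\to B})$—which is precisely the restatement of Lemma~\ref{lem:unext_ent_state_le_unext_ent_ch_gen} in the $J$-variable recorded in~\eqref{eq:J_eps_ch_le_st}—this gives
\begin{equation}
    f\!\left(J^{\varepsilon}_{\min}\!\left((\Theta(\mathcal{N}))(\psi)\right),\varepsilon\right) \le f\!\left(J^{\varepsilon}_{\min}(\mathcal{N}_{A\to B}),\varepsilon\right)
\end{equation}
uniformly in $\psi$ and $\Theta$. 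Taking the supremum over the one-way LOCC superchannels and input states on the left-hand side of this bound, and combining with~\eqref{eq:ch_1shot_1w_key_wrt_st} and~\eqref{eq:dist_key_eq_priv_cap}, produces the announced inequality~\eqref{eq:dist_key_ch_hypo_test_bnd}.

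There is no genuine obstacle here: the real work was done earlier in establishing the state bound (Theorem~\ref{theo:distill_key_st_ub_hypo_test}), the monotonicity of $f$ (Lemma~\ref{lem:monotoncity_obj_func}), and the channel-to-state data-processing statement for the unextendible entanglement (Lemma~\ref{lem:unext_ent_state_le_unext_ent_ch_gen}). The only subtlety to keep an eye on is bookkeeping for the range conditions on $J^{\varepsilon}_{\min}$: one should check that the hypothesis $J^{\varepsilon}_{\min}(\mathcal{N}_{A\to B}) > \varepsilon$ suffices to guarantee that, at every output state appearing in the supremum, either Theorem~\ref{theo:distill_key_st_ub_hypo_test} or its zero-key conclusion applies, and in both cases the closed-form expression in~\eqref{eq:dist_key_ch_hypo_test_bnd} remains a valid upper bound.
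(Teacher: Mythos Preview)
Your proposal is correct and follows essentially the same approach as the paper's proof: start from the channel-to-state reduction~\eqref{eq:ch_1shot_1w_key_wrt_st}, apply the state-level bound from Theorem~\ref{theo:distill_key_st_ub_hypo_test} to each output state, use~\eqref{eq:J_eps_ch_le_st} (a consequence of Lemma~\ref{lem:unext_ent_state_le_unext_ent_ch_gen}) to pass from the state-level $J^{\varepsilon}_{\min}$ to the channel-level one, and invoke the monotonicity of $f$ from Lemma~\ref{lem:monotoncity_obj_func}. The paper organizes the same steps slightly differently by first introducing the intermediate quantity $J^{\varepsilon,s}_{\min}(\mathcal{N})$ (the supremum of $J^{\varepsilon}_{\min}$ over all output states), but this is only a cosmetic difference.
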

\begin{proof}
    Let $\mathcal{N}_{A\to B}$ be a quantum channel, and let $\varepsilon \in [0,1]$ be a parameter such that
    \begin{equation}\label{eq:ch_J_eps_cond}
        J^{\varepsilon}_{\min}\!\left(\mathcal{N}_{A\to B}\right) > \varepsilon.
    \end{equation}
    Using the equality relating the one-shot, one-way distillable key of a channel and the one-shot, one-way distillable key of a state from~\eqref{eq:ch_1shot_1w_key_wrt_st}, we arrive at the following:
    \begin{align}
        K^{\varepsilon,\to}_D\!\left(\mathcal{N}_{A\to B}\right)
        &= \sup_{\Theta \in \operatorname{1WL}, \psi_{A'A''\hat{A}}\in \mathcal{S}(A'A''\hat{A})} K^{\varepsilon,\to}_D\!\left(\Theta\!\left(\mathcal{N}_{A\to B}\right)\!\left(\psi_{A'A''\hat{A}}\right)\right)\\
        &\le \sup_{\substack{\psi_{A'A''\hat{A}}\in \mathcal{S}(A'A''\hat{A}),\\\Theta \in \operatorname{1WL}} } \left\{
        \begin{array}{c}
            \frac{1}{2}\log_2\!\left[\left(\frac{\sqrt{J^{\varepsilon}_{\min}(1-J^{\varepsilon}_{\min})}+\sqrt{\varepsilon(1-\varepsilon)}}{J^{\varepsilon}_{\min}-\varepsilon}\right)^2+1\right]:\\
              J^{\varepsilon}_{\min} = J^{\varepsilon}_{\min}\!\left(\left(\Theta\!\left(\mathcal{N}\right)\right)\!\left(\psi_{A'A''\hat{A}}\right)\right)
        \end{array}\right\} ,\label{eq:dist_key_ch_st_bnd_1}
    \end{align}
    where $J^{\varepsilon}_{\min}(\cdot)$ for states is defined in~\eqref{eq:J_hypo_test_defn}. The inequality in~\eqref{eq:dist_key_ch_st_bnd_1} follows from Theorem~\ref{theo:distill_key_st_ub_hypo_test}. Note that the above inequality holds only if $J^{\varepsilon}_{\min} > \varepsilon$, but since we have assumed that $J^{\varepsilon}_{\min}(\mathcal{N})>\varepsilon$, the quantity $J^{\varepsilon}_{\min}$ is guaranteed to be strictly greater than $\varepsilon$ due to~\eqref{eq:J_eps_ch_le_st}.

    Let us define
    \begin{equation}
        J^{\varepsilon,s}_{\min}\!\left(\mathcal{N}_{A\to B}\right) \coloneqq \sup_{\substack{\psi_{A'A''\hat{A}}\in \mathcal{S}(A'A''\hat{A}),\\\Theta \in \operatorname{1WL}} } J^{\varepsilon}_{\min}\!\left(\left(\Theta\!\left(\mathcal{N}\right)\right)\!\left(\psi_{A'A''\hat{A}}\right)\right).
    \end{equation}
    Then the inequality in~\eqref{eq:dist_key_ch_st_bnd_1} can be written as follows:
    \begin{equation}
        K^{\varepsilon,\to}_D\!\left(\mathcal{N}_{A\to B}\right) \le \frac{1}{2}\log_2\!\left[\left(\frac{\sqrt{J^{\varepsilon,s}_{\min}\!\left(\mathcal{N}\right)\left(1-J^{\varepsilon,s}_{\min}\!\left(\mathcal{N}\right)\right)}+\sqrt{\varepsilon(1-\varepsilon)}}{J^{\varepsilon,s}_{\min}\!\left(\mathcal{N}\right)-\varepsilon}\right)^2+1\right].
    \end{equation}
    The inequality in~\eqref{eq:J_eps_ch_le_st} states that $J^{\varepsilon}_{\min}(\mathcal{N}) \le J^{\varepsilon,s}_{\min}(\mathcal{N})$. Therefore, by applying Lemma~\ref{lem:monotoncity_obj_func}, we arrive at the following inequality:
    \begin{equation}
        K^{\varepsilon,\to}_D\!\left(\mathcal{N}_{A\to B}\right) \le \frac{1}{2}\log_2\!\left[\left(\frac{\sqrt{J^{\varepsilon}_{\min}\!\left(\mathcal{N}\right)\left(1-J^{\varepsilon}_{\min}\!\left(\mathcal{N}\right)\right)}+\sqrt{\varepsilon(1-\varepsilon)}}{J^{\varepsilon}_{\min}\!\left(\mathcal{N}\right)-\varepsilon}\right)^2+1\right].
    \end{equation}
    Finally, using the fact that the one-shot forward-assisted private capacity of a channel is equal to the one-shot one-way distillable key of the channel, we conclude the statement of the theorem.
\end{proof}

\medskip

In Figure~\ref{fig_eras_dist_key}, we plot the upper bounds on the one-shot forward-assisted private capacity of a two-dimensional and a three-dimensional erasure channel for different erasure probabilities and different values of $\varepsilon$. The erasure channel is mathematically defined as follows:
\begin{equation}\label{eq:eras_ch_param}
    \mathcal{E}^p_{A\to B}\!\left(\rho_{RA}\right) = (1-p)\rho_{RB} + p\operatorname{Tr}_A\!\left[\rho_{RA}\right]\otimes |e\rangle\!\langle e|_B,
\end{equation}
where $|e\rangle_B$ is the erasure symbol, which is orthogonal to every state in the span of $\left\{|i\rangle\!\langle j|\right\}_{i,j=0}^{d-1}$, and $d$ is the dimension of the system $A$. The parameter $p \in [0,1]$ is the erasure probability of the channel.

\begin{figure}
    \centering
    \begin{subfigure}{0.45\linewidth}
        \includegraphics[width=\linewidth]{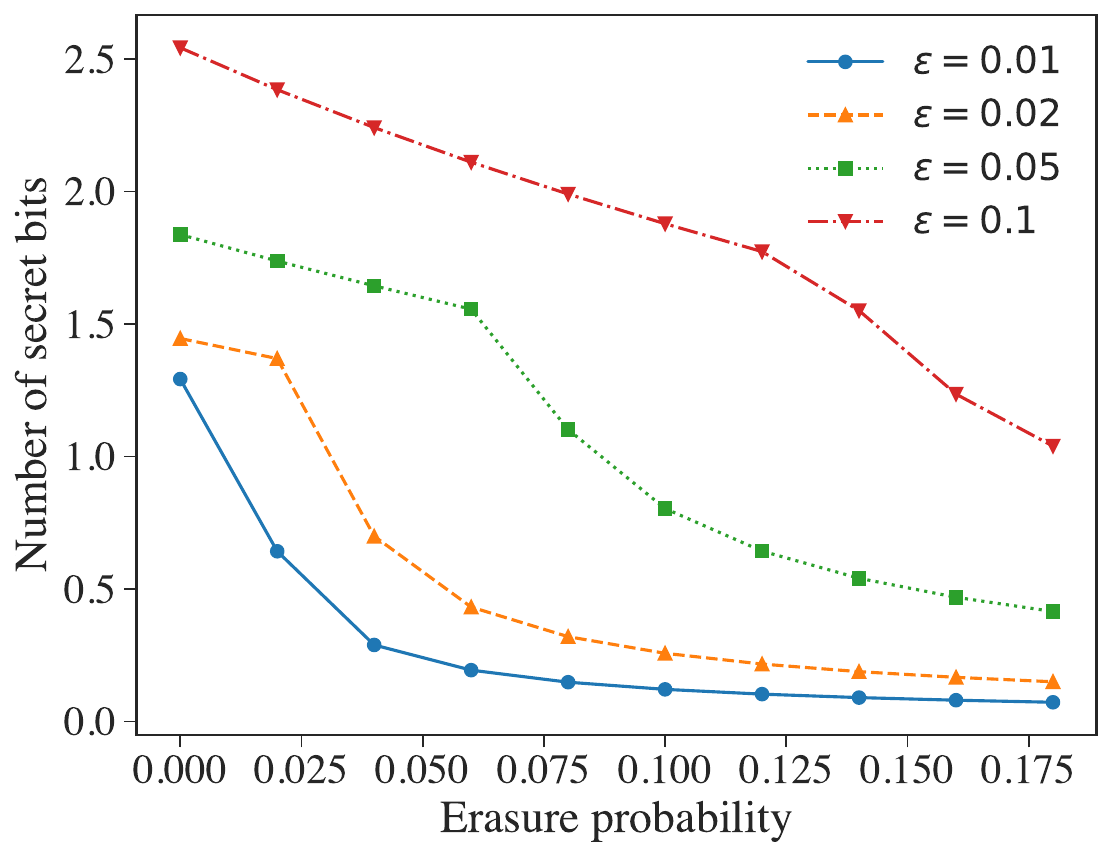}
        \caption{\centering Two-dimensional erasure channel}
        \label{fig:eras_dist_key_2d}
    \end{subfigure}
    \begin{subfigure}{0.45\linewidth}
        \includegraphics[width=\linewidth]{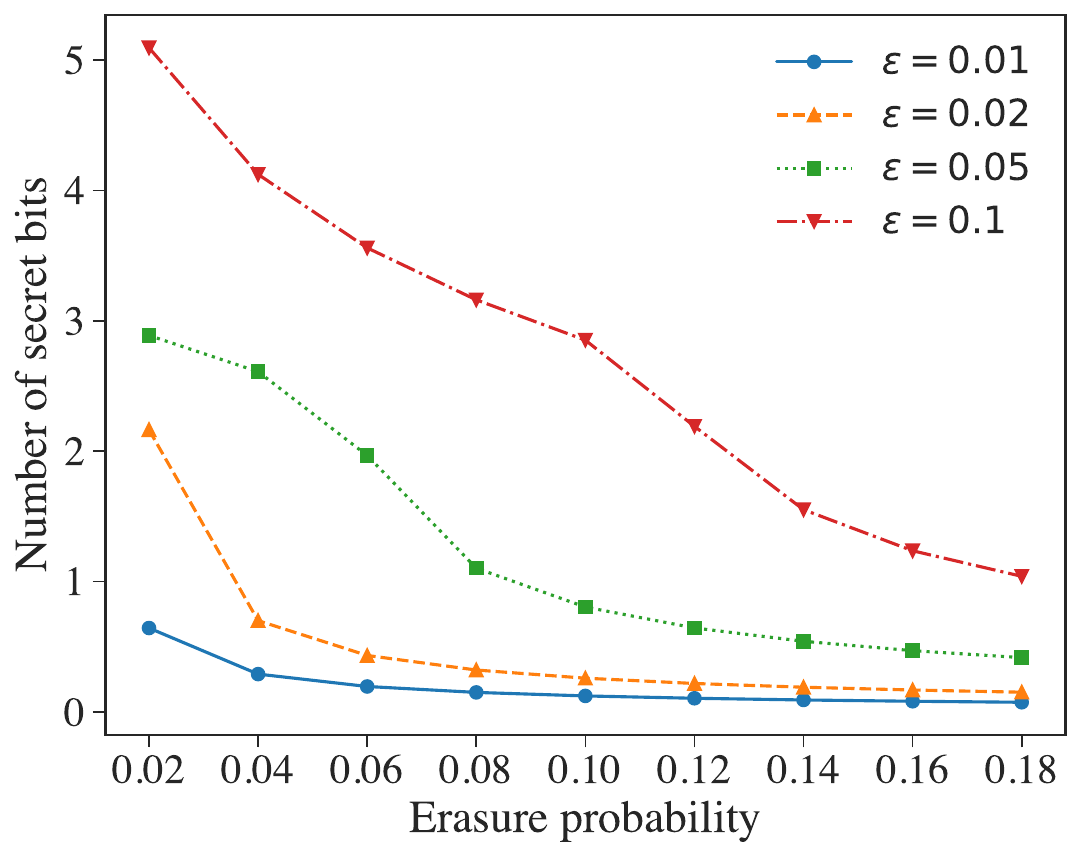}
        \caption{\centering Three-dimensional erasure channel}
        \label{fig:eras_dist_key_3d}
    \end{subfigure}
    \caption{Upper bound on the number of private bits that can be transmitted over a single use of an erasure channel assisted by local operations and forward-classical communication. The upper bound  given in Theorem~\ref{theo:dist_key_ch_hypo_test_bnd} is plotted against the erasure probability of an erasure channel for different values of $\varepsilon$.}
    \label{fig_eras_dist_key}
    
\end{figure}

The simplified upper bounds obtained in Corollaries~\ref{cor:dist_key_st_alg_ub} and~\ref{cor:dist_key_st_td_ub} can be used to obtain simplified upper bounds on the one-shot forward-assisted private capacity of a channel. The inequality in~\eqref{eq:J_eps_ch_le_st}, the equality in~\eqref{eq:ch_1shot_1w_key_wrt_st}, and the application of Lemma~\ref{lem:monotoncity_obj_func} together lead to simplified upper bounds on the one-shot forward-assisted private capacity of a channel, stated in the corollary below.
\begin{corollary}\label{cor:smooth_min_simplified_bnd_channels}
    Fix $\varepsilon \in [0,1]$. Let $\mathcal{N}_{A\to B}$ be a channel such that the following inequality holds:
    \begin{equation}
        J^{\varepsilon}_{\min}\!\left(\mathcal{N}_{A\to B}\right) > \varepsilon,
    \end{equation}
    where $J^{\varepsilon}_{\min}(\mathcal{N})$ is defined in~\eqref{eq:J_eps_ch_defn}. Then the one-shot forward-assisted private capacity of a channel is bounded from above as follows:
    \begin{equation}
        P^{\varepsilon,\to}\!\left(\mathcal{N}_{A\to B}\right) \le -\log_2\!\left(\sqrt{J^{\varepsilon}_{\min}\!\left(\mathcal{N}_{A\to B}\right)} - \sqrt{\varepsilon}\right).
    \end{equation}
    If $J^{\varepsilon}_{\min}(\mathcal{N}) > \sqrt{\varepsilon}$ then the following inequality also holds:
    \begin{equation}
        P^{\varepsilon,\to}\!\left(\mathcal{N}_{A\to B}\right) \le -\frac{1}{2}\log_2\!\left(J^{\varepsilon}_{\min}\!\left(\mathcal{N}_{A\to B}\right) - \sqrt{\varepsilon}\right).
    \end{equation}
    
\end{corollary}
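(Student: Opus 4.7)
My plan is to template the proof directly off the proof of Theorem~\ref{theo:dist_key_ch_hypo_test_bnd}, replacing the state-level bound of Theorem~\ref{theo:distill_key_st_ub_hypo_test} with the simpler state-level bounds of Corollaries~\ref{cor:dist_key_st_alg_ub} and~\ref{cor:dist_key_st_td_ub}. The whole argument is a mechanical lifting from the state level to the channel level and uses three ingredients: the channel--to--state reduction in~\eqref{eq:ch_1shot_1w_key_wrt_st}, the state--channel monotonicity of Lemma~\ref{lem:unext_ent_state_le_unext_ent_ch_gen}, and the elementary monotonicity in $J^{\varepsilon}_{\min}$ of the two simplified upper-bound functions.

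For the first inequality, I start from
\begin{equation}
    K^{\varepsilon,\to}_D\!\left(\mathcal{N}_{A\to B}\right) = \sup_{\psi,\,\Theta\in\operatorname{1WL}} K^{\varepsilon,\to}_D\!\left(\left(\Theta(\mathcal{N})\right)\!\left(\psi_{A'A''\hat{A}}\right)\right),
\end{equation}
and apply Corollary~\ref{cor:dist_key_st_alg_ub} to the output state $\sigma\coloneqq(\Theta(\mathcal{N}))(\psi)$ for each admissible $(\Theta,\psi)$. To invoke that corollary I need $J^{\varepsilon}_{\min}(\sigma)>\varepsilon$, and this is where Lemma~\ref{lem:unext_ent_state_le_unext_ent_ch_gen} enters: it gives $E^{u,\varepsilon}_{\min}(\sigma)\le E^{u,\varepsilon}_{\min}(\mathcal{N}_{A\to B})$, which, after taking $2^{-2(\cdot)}$, yields
\begin{equation}
    J^{\varepsilon}_{\min}(\sigma) \ge J^{\varepsilon}_{\min}(\mathcal{N}_{A\to B}) > \varepsilon,
\end{equation}
so the hypothesis of Corollary~\ref{cor:dist_key_st_alg_ub} holds for every such $\sigma$.

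Next, since $x\mapsto -\log_2(\sqrt{x}-\sqrt{\varepsilon})$ is strictly decreasing on $x>\varepsilon$, combining the state-level bound with the preceding inequality gives
\begin{equation}
    K^{\varepsilon,\to}_D(\sigma) \le -\log_2\!\left(\sqrt{J^{\varepsilon}_{\min}(\sigma)}-\sqrt{\varepsilon}\right) \le -\log_2\!\left(\sqrt{J^{\varepsilon}_{\min}(\mathcal{N}_{A\to B})}-\sqrt{\varepsilon}\right).
\end{equation}
The right-hand side is independent of $(\Theta,\psi)$, so it also bounds the supremum, and equation~\eqref{eq:dist_key_eq_priv_cap} then converts the bound on $K^{\varepsilon,\to}_D(\mathcal{N})$ into the claimed bound on $P^{\varepsilon,\to}(\mathcal{N})$. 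The second inequality is obtained identically, starting from Corollary~\ref{cor:dist_key_st_td_ub} and using that $x\mapsto -\tfrac12\log_2(x-\sqrt{\varepsilon})$ is strictly decreasing on $x>\sqrt{\varepsilon}$; the strengthened hypothesis $J^{\varepsilon}_{\min}(\mathcal{N}_{A\to B})>\sqrt{\varepsilon}$ transfers to $J^{\varepsilon}_{\min}(\sigma)>\sqrt{\varepsilon}$ through the same Lemma~\ref{lem:unext_ent_state_le_unext_ent_ch_gen}. I do not anticipate any genuine obstacle; the only point needing care is verifying that the state-level hypotheses are inherited through the state--channel monotonicity inequality, and this is automatic.
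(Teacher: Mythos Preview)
Your proposal is correct and follows essentially the same approach as the paper, which likewise invokes the channel--to--state reduction~\eqref{eq:ch_1shot_1w_key_wrt_st}, the state--channel monotonicity inequality~\eqref{eq:J_eps_ch_le_st} (itself a consequence of Lemma~\ref{lem:unext_ent_state_le_unext_ent_ch_gen}), and the monotonicity in $J^{\varepsilon}_{\min}$ of the bound functions. The only cosmetic difference is that the paper cites Lemma~\ref{lem:monotoncity_obj_func} for the monotonicity step, whereas you verify the (simpler) monotonicity of $x\mapsto -\log_2(\sqrt{x}-\sqrt{\varepsilon})$ and $x\mapsto -\tfrac12\log_2(x-\sqrt{\varepsilon})$ directly; both are sound.
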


\subsubsection{\texorpdfstring{$\alpha$}{alpha}-Geometric unextendible entanglement upper bound}

The subadditivity of the $\alpha$-geometric unextendible entanglement of channels, as given in~\eqref{eq:alpha-geo-subadd}, can be used to obtain an upper bound on the $n$-shot, forward-assisted private capacity of a channel.

Consider an arbitrary quantum channel $\mathcal{N}_{A\to B}$. Recall the definition of $J^{\varepsilon}_{\min}(\mathcal{N}_{A\to B})$ from~\eqref{eq:J_eps_ch_defn}. The inequality in~\eqref{eq:smooth_min_vs_geo_unext_ent_ineq} implies that the following inequality holds for all $\alpha \in (1,2]$ and $\varepsilon \in [0,1)$:
\begin{align}
    J^{\varepsilon}_{\min}\!\left(\mathcal{N}_{A\to B}\right) &\ge 2^{-2\widehat{E}^u_{\alpha}(\mathcal{N}) - \frac{\alpha}{\alpha - 1}\log_2\left(\frac{1}{1-\varepsilon}\right)}\\
    &= (1-\varepsilon)^{\frac{\alpha}{\alpha - 1}}2^{-2\widehat{E}^u_{\alpha}(\mathcal{N})}.
\end{align}
Now consider the following quantity:
\begin{align}
    J^{\varepsilon}_{\min}\!\left(\mathcal{N}^{\otimes n}_{A\to B}\right) &\ge (1-\varepsilon)^{\frac{\alpha}{\alpha - 1}}2^{-2\widehat{E}^u_{\alpha}(\mathcal{N}^{\otimes n})}\\
    &\ge (1-\varepsilon)^{\frac{\alpha}{\alpha - 1}}2^{-2n\widehat{E}^u_{\alpha}(\mathcal{N})},\label{eq:geo_ch_J_pre_defn}
\end{align}
where the second inequality follows from the subadditivity of the $\alpha$-geometric unextendible entanglement of channels (see~\eqref{eq:alpha-geo-subadd}). 

Let us define the following quantity:
\begin{equation}\label{eq:geo_ch_J_defn}
    \widehat{J}^{\varepsilon,n}_{\alpha}\!\left(\mathcal{N}_{A\to B}\right) \coloneqq (1-\varepsilon)^{\frac{\alpha}{\alpha - 1}}2^{-2n\widehat{E}^u_{\alpha}(\mathcal{N})},
\end{equation}
which, according to~\eqref{eq:geo_ch_J_pre_defn}, is a lower bound on $J^{\varepsilon}_{\min}(\mathcal{N}_{A\to B})$. If $\alpha$ is a rational number in the interval $(1,2]$, then $\widehat{J}^{\varepsilon,n}_{\alpha}\!\left(\mathcal{N}_{A\to B}\right)$ can be computed using a semidefinite program, by employing the algorithms given in~\cite{FS17} (see Appendix~\ref{app:semidefinite_programs} for a special case). We present the semidefinite program The application of Lemma~\ref{lem:monotoncity_obj_func} to Theorem~\ref{theo:dist_key_ch_hypo_test_bnd}, along with the inequality in~\eqref{eq:geo_ch_J_pre_defn}, directly leads to a single-letter, semidefinite computable upper bound on the $n$-shot forward-assisted private capacity of a channel, which we state formally in Corollary~\ref{cor:dist_key_ch_geo_bnd} below.
\begin{corollary}\label{cor:dist_key_ch_geo_bnd}
    Fix $\varepsilon \in (0,1)$. Let $\mathcal{N}_{A\to B}$ be a quantum channel  such that the following inequality holds for some $\alpha \in (1,2]$:
    \begin{equation}
        \widehat{J}^{\varepsilon,n}_{\alpha}\!\left(\mathcal{N}_{A\to B}\right) > \varepsilon
    \end{equation}
   where $\widehat{J}^{\varepsilon,n}_{\alpha}\!\left(\mathcal{N}_{A\to B}\right)$ is defined in~\eqref{eq:geo_ch_J_defn}. Then the $n$-shot forward-assisted private capacity of a channel $\mathcal{N}_{A\to B}$ is bounded from above by the following quantity:
	\begin{equation}\label{eq:dist_key_geo_unext_ent_ch_ub}
		P^{\varepsilon,\to}\!\left(\mathcal{N}^{\otimes n}_{A\to B}\right) \le \frac{1}{2}\log_2\!\left[\left(\frac{\sqrt{\widehat{J}^{\varepsilon,n}_{\alpha}\!\left(\mathcal{N}_{A\to B}\right)\!\left(1-\widehat{J}^{\varepsilon,n}_{\alpha}\!\left(\mathcal{N}_{A\to B}\right)\right)}+\sqrt{\varepsilon(1-\varepsilon)}}{\widehat{J}^{\varepsilon,n}_{\alpha}\!\left(\mathcal{N}_{A\to B}\right)-\varepsilon}\right)^2+1\right].
	\end{equation}
\end{corollary}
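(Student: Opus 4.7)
The plan is to derive this corollary as a direct consequence of Theorem~\ref{theo:dist_key_ch_hypo_test_bnd} applied to the tensor-power channel $\mathcal{N}^{\otimes n}_{A\to B}$, combined with the lower bound on $J^{\varepsilon}_{\min}$ from the preceding discussion and the monotonicity of the objective function from Lemma~\ref{lem:monotoncity_obj_func}. The starting point is that Theorem~\ref{theo:dist_key_ch_hypo_test_bnd} gives an upper bound on $P^{\varepsilon,\to}(\mathcal{N}^{\otimes n}_{A\to B}) = K^{\varepsilon,\to}_D(\mathcal{N}^{\otimes n}_{A\to B})$ in terms of $J^{\varepsilon}_{\min}(\mathcal{N}^{\otimes n}_{A\to B})$, provided the latter exceeds $\varepsilon$.

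First I would note the key inequality~\eqref{eq:geo_ch_J_pre_defn} which, via the subadditivity of the $\alpha$-geometric unextendible entanglement of channels~\eqref{eq:alpha-geo-subadd} combined with the relation~\eqref{eq:smooth_min_vs_geo_unext_ent_ineq} between smooth-min and $\alpha$-geometric divergences, yields
\begin{equation}
    J^{\varepsilon}_{\min}\!\left(\mathcal{N}^{\otimes n}_{A\to B}\right) \ge \widehat{J}^{\varepsilon,n}_{\alpha}\!\left(\mathcal{N}_{A\to B}\right).
\end{equation}
The hypothesis $\widehat{J}^{\varepsilon,n}_{\alpha}(\mathcal{N}_{A\to B}) > \varepsilon$ thus forces $J^{\varepsilon}_{\min}(\mathcal{N}^{\otimes n}_{A\to B}) > \varepsilon$, so the hypothesis of Theorem~\ref{theo:dist_key_ch_hypo_test_bnd} is satisfied for $\mathcal{N}^{\otimes n}_{A\to B}$, giving
\begin{equation}
    P^{\varepsilon,\to}\!\left(\mathcal{N}^{\otimes n}_{A\to B}\right) \le f\!\left(J^{\varepsilon}_{\min}\!\left(\mathcal{N}^{\otimes n}_{A\to B}\right),\varepsilon\right),
\end{equation}
with $f$ the function of Lemma~\ref{lem:monotoncity_obj_func}.

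Next, I would invoke Lemma~\ref{lem:monotoncity_obj_func}, which asserts that $f(J,\varepsilon)$ is monotonically decreasing in $J$ on the relevant range. Applying this to the lower bound from the previous step yields
\begin{equation}
    f\!\left(J^{\varepsilon}_{\min}\!\left(\mathcal{N}^{\otimes n}_{A\to B}\right),\varepsilon\right) \le f\!\left(\widehat{J}^{\varepsilon,n}_{\alpha}\!\left(\mathcal{N}_{A\to B}\right),\varepsilon\right),
\end{equation}
and unfolding the definition of $f$ recovers exactly the right-hand side of~\eqref{eq:dist_key_geo_unext_ent_ch_ub}. Chaining the two inequalities and using $P^{\varepsilon,\to} = K^{\varepsilon,\to}_D$ from~\eqref{eq:dist_key_eq_priv_cap} completes the argument.

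There is no significant obstacle: all of the heavy lifting has already been done in Theorem~\ref{theo:dist_key_ch_hypo_test_bnd}, Lemma~\ref{lem:monotoncity_obj_func}, and the subadditivity/comparison chain leading to~\eqref{eq:geo_ch_J_pre_defn}. The only subtlety worth flagging is to ensure that the range of $\widehat{J}^{\varepsilon,n}_{\alpha}$ lies within the domain on which the monotonicity lemma applies (i.e.\ $J \in (\varepsilon, 1-\varepsilon]$), which is automatic from $0 \le \widehat{J}^{\varepsilon,n}_{\alpha} \le (1-\varepsilon)^{\alpha/(\alpha-1)} \le 1-\varepsilon$ together with the standing assumption $\widehat{J}^{\varepsilon,n}_{\alpha} > \varepsilon$.
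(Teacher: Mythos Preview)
Your proposal is correct and follows essentially the same route as the paper: the paper explicitly states that Corollary~\ref{cor:dist_key_ch_geo_bnd} follows by applying Lemma~\ref{lem:monotoncity_obj_func} to Theorem~\ref{theo:dist_key_ch_hypo_test_bnd} together with the inequality in~\eqref{eq:geo_ch_J_pre_defn}, which is precisely the chain you outline. Your additional remark verifying that $\widehat{J}^{\varepsilon,n}_{\alpha}$ lies in the domain $(\varepsilon,1-\varepsilon]$ required by Lemma~\ref{lem:monotoncity_obj_func} is a useful sanity check that the paper leaves implicit.
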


We turn to the erasure channel once again to demonstrate our results stated in Corollary~\ref{cor:dist_key_ch_geo_bnd}. An erasure channel with erasure probability greater than or equal to $\frac{1}{2}$ (see~\eqref{eq:eras_ch_param}) is a two-extendible channel, and hence, its $\alpha$-geometric unextendible entanglement is equal to zero for all $\alpha \in (0,1)\cup(1,2]$. If the erasure probability is less than $\frac{1}{2}$, then the explicit form of the $\alpha$-geometric unextendible entanglement can be derived, which we state in Proposition~\ref{prop:geo_unext_eras_alpha} below.

\begin{proposition}
\label{prop:geo_unext_eras_alpha}
    For all $\alpha \in (0,1)\cup(1,2]$, the $\alpha$-geometric unextendible entanglement of a $d$-dimensional erasure channel, with erasure probability $p$, evaluates to the following:
    \begin{equation}
        \widehat{E}^u_{\alpha}\!\left(\mathcal{E}^p_{A\to B}\right) = \frac{1}{2}\cdot\frac{1}{\alpha - 1}\log_2\!\left(\left(p + \frac{b_{\operatorname{opt}}}{d^2}\right)^{1-\alpha}(1-p)^{\alpha} + (1-p-b_{\operatorname{opt}})^{1-\alpha}p^{\alpha}\right) 
    \end{equation}
    for all $p\in \left(0,\frac{1}{d^{1/\alpha}+1}\right]$, where 
    \begin{equation}
        b_{\operatorname{opt}} \coloneqq \frac{d^2\!\left((1-p)^2 - p^2d^{2/\alpha}\right)}{pd^{2/\alpha} + (1-p)d^2}.
    \end{equation} For all $p\in \left(\frac{1}{d^{1/\alpha}+1},\frac{1}{2}\right]$,
    \begin{equation}
        \widehat{E}^u_{\alpha}\!\left(\mathcal{E}^p_{A\to B}\right) = \frac{1}{2}\cdot\frac{1}{\alpha - 1}\log_2\!\left(p^{1-\alpha}(1-p)^{\alpha} + (1-p)^{1-\alpha}p^{\alpha}\right).
    \end{equation}
\end{proposition}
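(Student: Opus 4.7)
The plan is to reduce the infimum defining $\widehat{E}^u_\alpha(\mathcal{E}^p_{A\to B})$ to a one-parameter scalar optimization via covariance, and then solve it explicitly.

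First, I would exploit the covariance of the erasure channel, namely $\mathcal{E}^p(U\rho U^\dagger) = (U\oplus 1)\mathcal{E}^p(\rho)(U\oplus 1)^\dagger$ for every $U\in U(d)$ (acting as identity on $|e\rangle$). Combined with the invariance of $\widehat{D}_\alpha$ under unitaries and its convexity in the second argument (valid for $\alpha\in(0,1)\cup(1,2]$), a standard twirling argument shows that the infimum can be restricted to channels $\mathcal{M}\in\mathcal{F}(\mathcal{E}^p)$ covariant in the same sense, with the channel divergence attained at the maximally entangled input $\Phi^d_{RA}$. Then, using the Stinespring dilation of $\mathcal{E}^p$, I would show that every $\mathcal{M}\in\mathcal{F}(\mathcal{E}^p)$ has the form $\mathcal{M} = \Lambda\circ \mathcal{E}^{1-p}$ for some channel $\Lambda: E_0\to E$, and that by covariance one may take $\Lambda$ to be covariant under $U\oplus 1$.

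Second, I would parameterize the covariant $\mathcal{M}(\Phi^d_{RA})$ on $R\otimes E$. By Schur-type invariance under $\overline{U}_R \otimes (U\oplus 1)_E$, it must take the form
\begin{equation}
    \mathcal{M}(\Phi^d_{RA}) = A\,\Phi^d_{RE^{ne}} + B\,(I_{RE^{ne}} - \Phi^d_{RE^{ne}}) + C\,\pi_R \otimes |e\rangle\!\langle e|_E,
\end{equation}
with $A,B,C\ge 0$ and $A + (d^2-1)B + C = 1$, where $E^{ne}$ denotes the non-erasure subspace. Expanding $\mathcal{M} = \Lambda\circ \mathcal{E}^{1-p}$ acting on $\Phi^d_{RA}$ and writing $\omega = (I_R\otimes\Lambda)(\Phi^d_{RE_0^{ne}}) = x\Phi^d + y(I-\Phi^d)/(d^2-1) + z I_R\otimes|e\rangle\!\langle e|/d$ and $\Lambda(|e\rangle\!\langle e|) = \lambda_1|e\rangle\!\langle e| + \lambda_2 I^{ne}/d$, one gets $A = px + (1-p)\lambda_2/d^2$ and $C = pz + (1-p)\lambda_1$, with $x+y+z=1$ and $\lambda_1+\lambda_2 = 1$.

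Third, since $\mathcal{E}^p(\Phi^d_{RA}) = (1-p)\Phi^d_{RB^{ne}} + p\,\pi_R\otimes|e\rangle\!\langle e|_B$ and $\mathcal{M}(\Phi^d_{RA})$ are both block-diagonal with respect to the three mutually orthogonal projectors $\Phi^d_{R B^{ne}}$, $I_{RB^{ne}}-\Phi^d_{RB^{ne}}$, and $I_R\otimes|e\rangle\!\langle e|_B$, a direct computation via $\widehat{D}_\alpha(\rho\Vert\sigma) = \tfrac{1}{\alpha-1}\log_2\operatorname{Tr}[\sigma(\sigma^{-1/2}\rho\sigma^{-1/2})^\alpha]$ yields
\begin{equation}
    \widehat{D}_\alpha\!\left(\mathcal{E}^p(\Phi^d)\,\middle\Vert\,\mathcal{M}(\Phi^d)\right) = \tfrac{1}{\alpha-1}\log_2\!\bigl[A^{1-\alpha}(1-p)^\alpha + C^{1-\alpha}p^\alpha\bigr],
\end{equation}
where the $B$-dependent term drops out because the support of $\mathcal{E}^p(\Phi^d)$ is disjoint from the range of $I_{RB^{ne}} - \Phi^d_{RB^{ne}}$.

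Finally, I would optimize over the attainable $(A,C)$-region. A quick check shows that any Pareto-optimal $(A,C)$ must come from $y=0$ (redistributing $y$ into $x$ or $z$ strictly increases $A$ or $C$), and the upper-right Pareto boundary consists of two segments meeting at the kink $(A,C)=(p,1-p)$: the segment $A+C = 1$ for $A\in[0,p]$ (slope $-1$, arising from $\lambda_1=1$) and the segment $C = 1-p-(A-p)d^2$ for $A\in[p, p+(1-p)/d^2]$ (slope $-d^2$, arising from $x=1$). Parameterizing the latter by $b\in[0,1-p]$ via $(A,C) = (p+b/d^2,\,1-p-b)$ and setting the derivative of $A^{1-\alpha}(1-p)^\alpha + C^{1-\alpha}p^\alpha$ in $b$ to zero yields the stationarity condition $(1-p)(1-p-b) = p\,d^{2/\alpha}(p + b/d^2)$, whose solution is exactly the claimed $b_{\mathrm{opt}}$. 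The nonnegativity $b_{\mathrm{opt}}\ge 0$ is equivalent to $p\le 1/(d^{1/\alpha}+1)$; one verifies easily that $b_{\mathrm{opt}}\le 1-p$ always holds. For $p\in(1/(d^{1/\alpha}+1),1/2]$ the critical $b$ is negative so the minimum on segment two is at $b=0$, the kink, which coincidentally is also the minimum on segment one (since on segment one the critical point $A=1-p$ lies outside $[0,p]$ whenever $p\le 1/2$, and the objective is monotonically decreasing toward the kink). Evaluating the objective at the kink gives the second case $p^{1-\alpha}(1-p)^\alpha + (1-p)^{1-\alpha}p^\alpha$.

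The main obstacle will be the covariance reduction and the characterization of the Pareto frontier: one must argue that twirling over $U(d)$ preserves membership in $\mathcal{F}(\mathcal{E}^p)$ and does not increase $\widehat{D}_\alpha(\mathcal{E}^p\Vert\cdot)$, and that the compatibility constraints encoded by the covariant parameters $(x,y,z,\lambda_1,\lambda_2)$ yield exactly the two-segment Pareto boundary described above rather than something more intricate. Once that reduction is in place, the final scalar calculus is straightforward.
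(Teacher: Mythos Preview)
Your approach is correct and reaches the same one-parameter optimization, but the route differs genuinely from the paper's. The paper does \emph{not} work directly at the channel level. Instead it sandwiches: for the lower bound it invokes Lemma~\ref{lem:unext_ent_state_le_unext_ent_ch_gen} to get $\widehat{E}^u_\alpha(\mathcal{E}^p)\ge \widehat{E}^u_\alpha(\eta^p)$ for the erased state $\eta^p=\mathcal{E}^p(\Phi^d)$, and for the upper bound it exhibits an explicit broadcast channel $\mathcal{P}^{b^*}_{A\to BE}$ whose $E$-marginal $\mathcal{M}^{b^*}\in\mathcal{F}(\mathcal{E}^p)$ achieves the same value (this construction is quoted from~\cite[Appendix~J]{SW24_channels}). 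The substantive work is then entirely on the \emph{state} side: a dephasing-plus-twirling argument reduces $\mathcal{F}(\eta^p)$ to a two-parameter family $\omega^{a,b}$, and a further data-processing step using a tailor-made channel $\mathcal{R}^s$ (which replaces the non-erasure block by $\Phi^d$ with probability $1-s$) shows one may take $a=p$, leaving exactly your segment-two parameterization $(A,C)=(p+b/d^2,\,1-p-b)$.

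Your approach trades this sandwich for a direct channel-level reduction: covariance of $\mathcal{E}^p$ plus joint convexity of $\widehat{D}_\alpha$ restricts the infimum to covariant $\mathcal{M}$ and the supremum to the Choi input, and the Stinespring identity $\mathcal{F}(\mathcal{E}^p)=\{\Lambda\circ\mathcal{E}^{1-p}\}$ then gives a clean five-parameter description whose Pareto frontier you analyze. This is arguably more conceptual (it explains \emph{why} $a=p$ is optimal via the Pareto kink rather than via the somewhat ad~hoc channel $\mathcal{R}^s$), and it never leaves the channel picture. The paper's approach, on the other hand, is more modular---the intermediate state result (Lemma~\ref{lem:eras_st_opt_marg_gen_div} and Proposition~\ref{prop:geo_unext_eras_st_alpha}) holds for \emph{any} generalized divergence, not just $\widehat{D}_\alpha$---and it sidesteps the need to verify that twirling preserves $\mathcal{F}(\mathcal{E}^p)$ at the channel level, which you correctly flag as the main technical point to nail down.
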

\begin{proof}
    See Appendix~\ref{app:geo_unext_ent_eras}.
\end{proof}

\medskip

\begin{figure}
    \centering
    \begin{subfigure}{0.45\linewidth}
        \includegraphics[width=\linewidth]{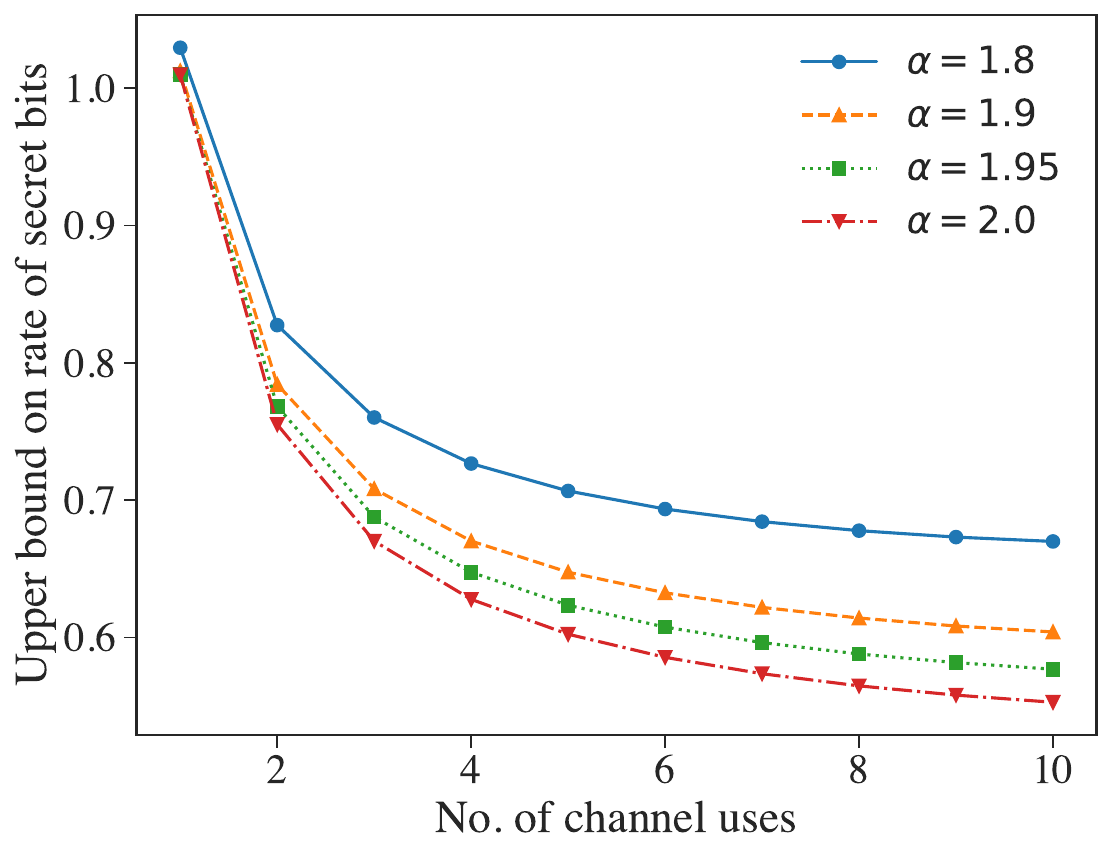}
        \caption{\centering Erasure probability = 0.2}
        \label{fig:eras_nshot_dist_key_p_02}
    \end{subfigure}
    \begin{subfigure}{0.45\linewidth}
        \includegraphics[width=\linewidth]{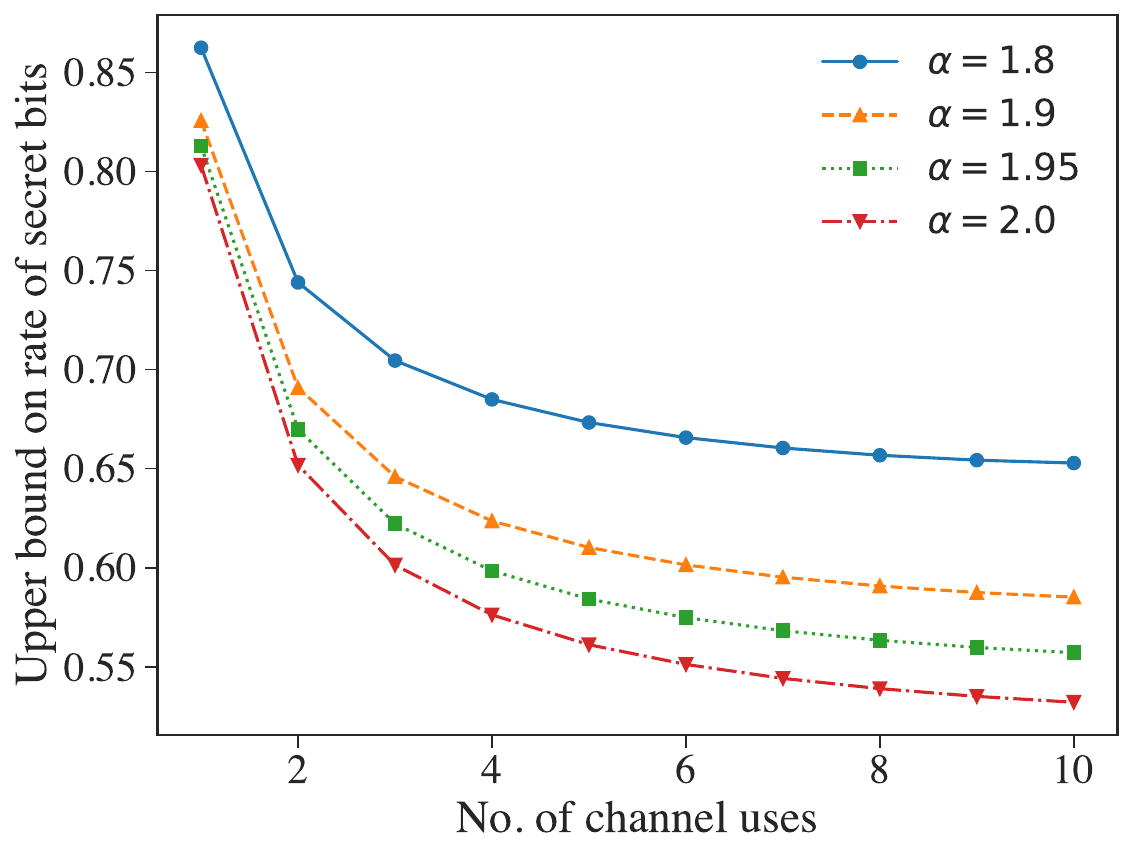}
        \caption{\centering Erasure probability = 0.3}
        \label{fig:eras_nshot_dist_key_p_03}
    \end{subfigure}
    \caption{Upper bound on the $n$-shot private capacity of an erasure channel with the error parameter $\varepsilon = 10^{-7}$. The bounds are computed for different values of $\alpha$ using Corollary~\ref{cor:dist_key_ch_geo_bnd}, where the  $\alpha$-geometric unextendible entanglement of the erasure channel is computed using Proposition~\ref{prop:geo_unext_eras_alpha}.}
    \label{fig_eras_nshot_dist_key}
    
\end{figure}

In Figure~\ref{fig_eras_nshot_dist_key} we plot the upper bound on the $n$-shot, forward-assisted private capacity of erasure channels computed using Corollary~\ref{cor:dist_key_ch_geo_bnd}. The $\alpha$-geometric unextendible entanglement of the erasure channel is computed using Proposition~\ref{prop:geo_unext_eras_alpha}. The $n$-shot forward-assisted private capacity of a channel is expected to increase with the number of channel uses. In Figure~\ref{fig_eras_nshot_dist_key} however, the computed value of the upper bound on the $n$-shot forward-assisted private capacity decreases with the number of channel uses, which indicates that the upper bound improves with an increasing number of channel uses.

\subsubsection{Private communication over a channel in the asymptotic setting}

In this section we study private communication over quantum channels using one-way LOCC superchannels in the asymptotic setting.

First, let us consider the task of secret-key distillation from quantum channels in a setting similar to the one discussed in Section~\ref{sec:asymp_key_dist_states}. In the asymptotic setting, a one-way LOCC protocol to distill secret keys from a channel $\mathcal{N}_{A\to B}$ is described by a sequence of positive integers $\{k_n\}_{n\in \mathbb{N}}$, a sequence of bipartite private states $\left\{\gamma^{k_n}_{A'A''B'B''}\right\}_{n\in \mathbb{N}}$, a sequence of one-way LOCC superchannels $\left\{\Theta^{n,\operatorname{1WL}}_{(A^n\to B^n)\to (\hat{A}\to B'B'')}\right\}_{n\in \mathbb{N}}$, a sequence of states $\left\{\rho^n_{A'A''\hat{A}}\right\}_{n\in \mathbb{N}}$, and a sequence of error parameters $\left\{\varepsilon_n\right\}_{n\in \mathbb{N}}$. A sequence of tuples, $\left\{\left(k_n, \gamma^{k_n}_{A'A''B'B''}, \Theta^{n,\operatorname{1WL}}_{(A^n\to B^n)\to (\hat{A}\to B'B'')}, \rho^n_{A'A''\hat{A}}, \varepsilon_n\right)\right\}_{n\in \mathbb{N}}$, describes a  one-way LOCC secret-key distillation protocol for a channel $\mathcal{N}_{A\to B}$ if the following inequality holds for all $n\in \mathbb{N}$:
\begin{equation}
    F\!\left(\gamma^{k_n},\left(\Theta^{n,\operatorname{1WL}}(\mathcal{N}^{\otimes n})\right)(\rho^n) \right) \ge 1-\varepsilon_n,
\end{equation}
and $\varepsilon_n \to 0$ as $n\to \infty$. We are interested in the maximum rate at which secret bits can be distilled from a channel $\mathcal{N}_{A\to B}$ using any such one-way LOCC secret-key distillation protocol.

Similar to the discussion in Section~\ref{sec:asymp_key_dist_states}, we impose an additional constraint on the sequence of error parameters $\varepsilon_n$, that $\varepsilon_n \le 2^{-an}$ for some fixed error exponent $a > 0$. We define the quantity ``$a$-exponential one-way distillable key of a channel'' as the maximum rate at which secret bits can be distilled from a channel using a one-way LOCC secret-key distillation protocol, with $a$ being the error exponent. 
\begin{definition}
    Fix $a > 0$. The $a$-exponential one-way distillable key of a channel $\mathcal{N}_{A\to B}$ is defined as follows:
    \begin{multline}
        K^{\to}_{D,a}\!\left(\mathcal{N}_{A\to B}\right) \coloneqq\\ \sup_{\substack{\left\{k_n\right\}_{n\in \mathbb{N}},\left\{\gamma^{k_n}_{A'B'A''B''}\right\}_{n\in \mathbb{N}}\\\left\{\Theta^{n,\operatorname{1WL}}\right\}_{n\in \mathbb{N}}, \left\{\rho^n_{A'A''\hat{A}}\right\}_{n\in \mathbb{N}} }} \liminf_{n\to \infty}\left\{\frac{\log_2 k_n}{n}: F\!\left(\gamma^{k_n},\Theta^{n,\operatorname{1WL}}\!\left(\mathcal{N}^{\otimes n}\right)(\rho^n)\right)\ge 1-2^{-an}\right\},
    \end{multline}
    where the supremum is over all sequences of integers $\{k_n\}_{n\in \mathbb{N}}$, all sequences of bipartite states $\left\{\gamma^{k_n}_{A'B'A''B''}\right\}_{n\in \mathbb{N}}$, all sequences of one-way LOCC superchannels $\left\{\Theta^{n,\operatorname{1WL}}_{(A^n\to B^n)\to A'B'A''B''}\right\}_{n\in \mathbb{N}}$, and all sequences of states $\left\{\rho_{A'A''\hat{A}}\right\}_{n\in \mathbb{N}}$. The bipartite state $\gamma^{k_n}_{A'B'A''B''}$ holds $\log_2 k_n$ secret bits.

    We define the converse of $a$-exponential one-way distillable key of the channel $\mathcal{N}_{A\to B}$ as follows:
    \begin{multline}
        \widetilde{K}^{\to}_{D,a}\!\left(\mathcal{N}_{A\to B}\right) \coloneqq \\ \sup_{\substack{\left\{k_n\right\}_{n\in \mathbb{N}},\left\{\gamma^{k_n}_{A'B'A''B''}\right\}_{n\in \mathbb{N}}\\\left\{\Theta^{n,\operatorname{1WL}}\right\}_{n\in \mathbb{N}}, \left\{\rho^n_{A'A''\hat{A}}\right\}_{n\in \mathbb{N}} }} \limsup_{n\to \infty}\left\{\frac{\log_2 k_n}{n}: F\!\left(\gamma^{k_n},\Theta^{n,\operatorname{1WL}}\!\left(\mathcal{N}^{\otimes n}\right)(\rho^n)\right)\ge 1-2^{-an}\right\}.
    \end{multline}
\end{definition}

We can also consider the task of private communication over channels in this setting.
We define the $a$-exponential forward-assisted private capacity of a channel as follows:
\begin{definition}
        Fix $a > 0$. The $a$-exponential forward-assisted private capacity of a channel $\mathcal{N}_{A\to B}$ is defined as follows:
    \begin{equation}
        P^{\to}_{a}\!\left(\mathcal{N}_{A\to B}\right) \coloneqq\sup_{\substack{\left\{\mathcal{X}_n\right\}_{n\in \mathbb{N}},\left\{\Theta^{n,\operatorname{1WL}}\right\}_{n\in \mathbb{N}} }} \liminf_{n\to \infty}\left\{\frac{\log_2 |\mathcal{X}_n|}{n}: p_{\operatorname{err}}\!\left(\mathcal{N}^{\otimes n},\Theta^{n,\operatorname{1WL}}\right)\le 2^{-an}\right\},
    \end{equation}
    where the supremum is over all sequences of message sets $\{\mathcal{X}_n\}_{n\in \mathbb{N}}$ and all sequences of one-way LOCC superchannels $\left\{\Theta^{n,\operatorname{1WL}}_{(A^n\to B^n)\to (X\to \hat{X})}\right\}_{n\in \mathbb{N}}$.

    We define the converse of $a$-exponential forward-assisted private capacity of the channel $\mathcal{N}_{A\to B}$ as follows:
    \begin{equation}
        \widetilde{P}^{\to}_{a}\!\left(\mathcal{N}_{A\to B}\right) \coloneqq\sup_{\substack{\left\{\mathcal{X}_n\right\}_{n\in \mathbb{N}},\left\{\Theta^{n,\operatorname{1WL}}\right\}_{n\in \mathbb{N}} }} \limsup_{n\to \infty}\left\{\frac{\log_2 |\mathcal{X}_n|}{n}: p_{\operatorname{err}}\!\left(\mathcal{N}^{\otimes n},\Theta^{n,\operatorname{1WL}}\right)\le 2^{-an}\right\}.
    \end{equation}
\end{definition}

Using the arguments mentioned before~\eqref{eq:dist_key_eq_priv_cap}, we can see that the following equalities hold:
\begin{align}
    K^{\to}_{D,a}\!\left(\mathcal{N}_{A\to B}\right) &= P^{\to}_{a}\!\left(\mathcal{N}_{A\to B}\right)\\
    \widetilde{K}^{\to}_{D,a}\!\left(\mathcal{N}_{A\to B}\right) &= \widetilde{P}^{\to}_{a}\!\left(\mathcal{N}_{A\to B}\right).\label{eq:a_exp_dist_ke_eq_priv_cap}
\end{align}
Therefore, any bounds obtained on the $a$-exponential distillable key of a channel hold for the $a$-exponential private capacity of the channel as well.

\begin{theorem}\label{theo:dist_key_asymptotic_bnd_channels}
    Consider an arbitrary quantum channel $\mathcal{N}_{A\to B}$. Let $d \coloneqq \min\{\operatorname{dim}(A),\operatorname{dim}(B)\}$ with $\operatorname{dim}(A)$ and $\operatorname{dim}(B)$ being the dimensions of systems $A$ and $B$, respectively. Fix $a\in (2\log_2d , \infty)$. Then the following bound holds:
    \begin{equation}
        \widetilde{P}^{\to}_{a}\!\left(\mathcal{N}_{A\to B}\right) = \widetilde{K}^{\to}_{D,a}\!\left(\mathcal{N}_{A\to B}\right) \le \widehat{E}^u\!\left(\mathcal{N}_{A\to B}\right),
    \end{equation}
    where $\widehat{E}^u(\mathcal{N}_{A\to B})$ is the unextendible entanglement of the channel $\mathcal{N}_{A\to B}$ induced by the Belavkin--Staszewski relative entropy defined in~\eqref{eq:Bel_Stas_unext_ent}.
\end{theorem}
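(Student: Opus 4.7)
The plan is to mirror the argument of Theorem~\ref{theo:dist_key_asymptotic_bnd} in the channel setting, with two modifications: (i) replace the Stein--Nagaoka step (which we used for states but is not known to be available for general channel divergences) with the subadditivity of the $\alpha$-geometric unextendible entanglement of channels stated in~\eqref{eq:alpha-geo-subadd}; and (ii) take the limit $\alpha \to 1^+$ only at the very end, invoking~\eqref{eq:Bel_Stas_unext_ent}. By~\eqref{eq:a_exp_dist_ke_eq_priv_cap} we have $\widetilde{P}^{\to}_a(\mathcal{N}_{A\to B}) = \widetilde{K}^{\to}_{D,a}(\mathcal{N}_{A\to B})$, so it suffices to bound the latter.

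Fix an arbitrary sequence of one-way LOCC key-distillation protocols $\{(k_n,\gamma^{k_n},\Theta^{n,\operatorname{1WL}},\rho^n)\}_{n\in\mathbb{N}}$ for $\mathcal{N}_{A\to B}$ with error at most $\varepsilon_n := 2^{-an}$. Applying Corollary~\ref{cor:smooth_min_simplified_bnd_channels} to the channel $\mathcal{N}^{\otimes n}$ and using the equivalence $K^{\varepsilon_n,\to}_D(\mathcal{N}^{\otimes n}) = P^{\varepsilon_n,\to}(\mathcal{N}^{\otimes n})$ yields, whenever the right-hand side is finite,
\[
\log_2 k_n \;\le\; -\log_2\!\left(\sqrt{J^{\varepsilon_n}_{\min}(\mathcal{N}^{\otimes n})}-\sqrt{\varepsilon_n}\right).
\]
From~\eqref{eq:geo_ch_J_pre_defn}, which composes the smooth-min vs.\ geometric comparison~\eqref{eq:smooth_min_vs_geo_unext_ent_ineq} with the subadditivity~\eqref{eq:alpha-geo-subadd}, we have $J^{\varepsilon_n}_{\min}(\mathcal{N}^{\otimes n}) \ge \widehat{J}^{\varepsilon_n,n}_\alpha(\mathcal{N}_{A\to B})$ for every $\alpha\in(1,2]$. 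Since $x\mapsto -\log_2(\sqrt{x}-\sqrt{\varepsilon_n})$ is nonincreasing on $(\varepsilon_n,\infty)$, substituting this lower bound and splitting the logarithm gives
\[
\tfrac{1}{n}\log_2 k_n \;\le\; -\tfrac{1}{2n}\log_2 \widehat{J}^{\varepsilon_n,n}_\alpha(\mathcal{N}) \;-\; \tfrac{1}{n}\log_2\!\Big(1-\sqrt{\varepsilon_n/\widehat{J}^{\varepsilon_n,n}_\alpha(\mathcal{N})}\Big).
\]

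By definition~\eqref{eq:geo_ch_J_defn}, the first term on the right equals $\widehat{E}^u_\alpha(\mathcal{N}) - \tfrac{1}{2n}\tfrac{\alpha}{\alpha-1}\log_2(1-\varepsilon_n)$, whose $\limsup$ as $n\to\infty$ is exactly $\widehat{E}^u_\alpha(\mathcal{N})$. For the second term it suffices to show $\varepsilon_n/\widehat{J}^{\varepsilon_n,n}_\alpha(\mathcal{N}) \to 0$, and using the dimensional bound $\widehat{E}^u_\alpha(\mathcal{N}) \le \log_2 d$ (which follows by applying the monotonicity of $\widehat{E}^u_\alpha$ under one-way LOCC superchannels as in~\eqref{eq:unext_ent_id_ge_ch_in_ge_out}--\eqref{eq:unext_ent_id_ge_ch_out_ge_in} to reduce to the identity channel $\operatorname{id}_d$, whose $\alpha$-geometric unextendible entanglement does not exceed $\log_2 d$) one obtains
\[
\frac{\varepsilon_n}{\widehat{J}^{\varepsilon_n,n}_\alpha(\mathcal{N})} \;\le\; \frac{2^{-n(a-2\log_2 d)}}{(1-\varepsilon_n)^{\alpha/(\alpha-1)}},
\]
which tends to $0$ precisely under the hypothesis $a>2\log_2 d$. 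This establishes $\widetilde{K}^{\to}_{D,a}(\mathcal{N}) \le \widehat{E}^u_\alpha(\mathcal{N})$ for every $\alpha\in(1,2]$; sending $\alpha \to 1^+$ and invoking~\eqref{eq:Bel_Stas_unext_ent} concludes the proof.

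The main obstacle is the dimensional upper bound $\widehat{E}^u_\alpha(\mathcal{N}) \le \log_2 d$, which is exactly what allows the slack factor $\sqrt{\varepsilon_n/\widehat{J}^{\varepsilon_n,n}_\alpha}$ to decay under the stated assumption $a>2\log_2 d$; without it, the argument would only yield a statement valid for the strictly weaker condition $a > 2\widehat{E}^u_\alpha(\mathcal{N})$, which depends on the parameter $\alpha$ being taken to~$1$. A secondary technical point is the limit interchange $\lim_{\alpha\to 1^+}\inf_{\mathcal{M}}=\inf_{\mathcal{M}}\lim_{\alpha\to 1^+}$ implicit in the final step; this is already packaged into~\eqref{eq:Bel_Stas_unext_ent}, so it can be cited directly rather than re-derived.
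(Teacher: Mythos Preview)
Your proof is correct and follows essentially the same route as the paper, but with one structural difference worth noting. The paper first controls the error term $-\log_2\!\big(1-\sqrt{\varepsilon_n/J^{\varepsilon_n}_{\min}(\mathcal{N}^{\otimes n})}\big)$ directly via Proposition~\ref{prop:smooth_min_ch_range}, which gives the dimensional lower bound $J^{\varepsilon_n}_{\min}(\mathcal{N}^{\otimes n}) \ge (1-\varepsilon_n)/d^{2n}$, and only \emph{afterward} converts $E^{u,\varepsilon_n}_{\min}(\mathcal{N}^{\otimes n})$ to $n\,\widehat{E}^u_\alpha(\mathcal{N})+o(n)$ via~\eqref{eq:smooth_min_vs_geo_unext_ent_ineq} and subadditivity. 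You instead substitute $\widehat{J}^{\varepsilon_n,n}_\alpha$ at the outset, which then forces you to invoke $\widehat{E}^u_\alpha(\mathcal{N}) \le \log_2 d$ to make the error term vanish. That bound is true (choose the trace-and-replace channel with maximally mixed output for $\operatorname{id}_d$; a direct computation on pure inputs of Schmidt rank $r\le d$ gives $\widehat{D}_\alpha(\psi_{RA}\Vert\psi_R\otimes\pi_A)=\log_2(dr)\le 2\log_2 d$), but it is not stated or proved anywhere in the paper, so your version carries one extra lemma. The paper's ordering is therefore slightly more self-contained; your ordering is arguably more transparent because the single-letter quantity $\widehat{E}^u_\alpha(\mathcal{N})$ appears immediately. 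Both arguments end identically by sending $\alpha\to 1^+$ via~\eqref{eq:Bel_Stas_unext_ent}.
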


\begin{proof}
    The proof is similar to the proof of Theorem~\ref{theo:dist_key_asymptotic_bnd}. We sketch out the main arguments here.
    
    Let $\mathcal{N}_{A\to B}$ be an arbitrary quantum channel, with input and output dimensions $\operatorname{dim}(A)$ and $\operatorname{dim}(B)$ respectively, from which we wish to distill secret keys. Let $\left\{\Theta^{n,\operatorname{1WL}}_{(A^n\to B^n)\to (\hat{A}\to B'B'')}\right\}_{n\in \mathbb{N}}$ be a sequence of one-way LOCC superchannels, let $\left\{\gamma^{k_n}_{A'B'A''B''}\right\}_{n\in \mathbb{N}}$ be a sequence of bipartite private states, and let $\left\{\rho^n_{A'A''\hat{A}}\right\}_{n\in \mathbb{N}}$ be a sequence of quantum states such that the following condition holds for all $a>2\log_2 d$ and $n\in \mathbb{N}$:
    \begin{equation}\label{eq:priv_st_fid_exponential_ch}
        F\!\left(\gamma^{k_n}_{A'B'A''B''},\left(\Theta^{n,\operatorname{1WL}}\!\left(\mathcal{N}^{\otimes n}_{A\to B}\right)\right)\!\left(\rho^n_{A'A''\hat{A}}\right)\right) \ge 1-2^{-an},
    \end{equation}
    where $d \coloneqq \min\{\operatorname{dim}(A),\operatorname{dim}(B)\}$.

    Let us set $\varepsilon_n\coloneqq 2^{-an}$ for convenience. Corollary~\ref{cor:smooth_min_simplified_bnd_channels} implies that the following inequality holds for all one-way LOCC secret-key distillation protocols such that~\eqref{eq:priv_st_fid_exponential_ch} is satisfied:
    \begin{align}
        \log_2k_n &\le -\log_2\!\left(\sqrt{J^{\varepsilon_n}_{\min}\!\left(\mathcal{N}^{\otimes n}\right)} - \sqrt{\varepsilon_n}\right)\\
        &= -\frac{1}{2}\log_2\!\left(J^{\varepsilon_n}_{\min}\!\left(\mathcal{N}^{\otimes n}\right)\right) - \log_2\!\left(1-\sqrt{\frac{\varepsilon_n}{J^{\varepsilon_n}_{\min}\!\left(\mathcal{N}^{\otimes n}\right)}}\right).\label{eq:secret_bits_ub_asymp_setup}
    \end{align}
    The quantity $J^{\varepsilon_n}_{\min}(\mathcal{N}^{\otimes n})$ is bounded from below by the following quantity:
    \begin{equation}\label{eq:J_eps_lb_ch}
        J^{\varepsilon_n}_{\min}\!\left(\mathcal{N}^{\otimes n}_{A\to B}\right) \ge \frac{1-\varepsilon_n}{d^{2n}},
    \end{equation}
    which is evident from Proposition~\ref{prop:smooth_min_ch_range}. The inequalities in~\eqref{eq:secret_bits_ub_asymp_setup} and~\eqref{eq:J_eps_lb_ch} allow us to use the mathematical arguments presented in~\eqref{eq:eps_J_ratio_ub_step_1}--\eqref{eq:eps_J_ratio_log_final} in order to conclude the following:
    \begin{equation}\label{eq:eps_J_ratio_log_ch_0}
        \limsup_{n\to \infty}-\frac{1}{n}\log_2\!\left(1-\sqrt{\frac{\varepsilon_n}{J^{\varepsilon_n}_{\min}\!\left(\mathcal{N}^{\otimes n}\right)}}\right) = 0.
    \end{equation}
    Therefore, taking $\limsup_{n\to \infty}$ in~\eqref{eq:secret_bits_ub_asymp_setup} leads to the following inequality:
    \begin{align}
        \limsup_{n\to \infty}\frac{\log_2 k_n}{n} &\le \limsup_{n\to \infty}\left\{-\frac{1}{2n}\log_2\!\left(J^{\varepsilon_n}_{\min}\!\left(\mathcal{N}^{\otimes n}\right)\right)-\frac{1}{n}\log_2\!\left(1-\sqrt{\frac{\varepsilon_n}{J^{\varepsilon_n}_{\min}\!\left(\mathcal{N}^{\otimes n}\right)}}\right)\right\}\\
        &= \limsup_{n\to \infty}-\frac{1}{2n}\log_2\!\left(J^{\varepsilon_n}_{\min}\!\left(\mathcal{N}^{\otimes n}\right)\right) + \limsup_{n\to \infty}-\frac{1}{n}\log_2\!\left(1-\sqrt{\frac{\varepsilon_n}{J^{\varepsilon_n}_{\min}\!\left(\mathcal{N}^{\otimes n}\right)}}\right)\\
        &= \limsup_{n\to \infty}-\frac{1}{2n}\log_2\!\left(J^{\varepsilon_n}_{\min}\!\left(\mathcal{N}^{\otimes n}\right)\right)\\
        &= \limsup_{n\to \infty} \frac{1}{n}E^{u,\varepsilon_n}_{\min}\!\left(\mathcal{N}^{\otimes n}\right),\label{eq:rate_ub_smooth_min}
    \end{align}
    where the second equality follows from~\eqref{eq:eps_J_ratio_log_ch_0}. 

    Recall the relation between the smooth-min unextendible entanglement of a channel and the $\alpha$-geometric unextendible entanglement of the channel from~\eqref{eq:smooth_min_vs_geo_unext_ent_ineq}. The inequality in~\eqref{eq:smooth_min_vs_geo_unext_ent_ineq} combined with~\eqref{eq:rate_ub_smooth_min} leads to the following inequality:
    \begin{align}
        \limsup_{n\to \infty}\frac{\log_2 k_n}{n} &\le \limsup_{n\to \infty} \frac{1}{n}\left(\widehat{E}^u_{\alpha}\!\left(\mathcal{N}^{\otimes n}\right) -\frac{1}{2}\cdot\frac{\alpha}{\alpha - 1}\log_2\!\left(1-\varepsilon_n\right)\right)\\
        &=\limsup_{n\to \infty} \frac{1}{n}\widehat{E}^u_{\alpha}\!\left(\mathcal{N}^{\otimes n}\right) + \limsup_{n\to \infty}-\frac{1}{2n}\cdot\frac{\alpha}{\alpha - 1}\log_2\!\left(1-\varepsilon_n\right),
    \end{align}
    which holds for all $\alpha \in (1,2]$. Since $\varepsilon_n \to 0$ as $n\to \infty$,
    \begin{equation}
        \limsup_{n\to \infty}-\frac{1}{2n}\cdot\frac{\alpha}{\alpha - 1}\log_2\!\left(1-\varepsilon_n\right) = 0.
    \end{equation}
    Now using the subadditivity of the $\alpha$-geometric unextendible entanglement of channels, we arrive at the following inequality:
    \begin{equation}
        \limsup_{n\to \infty}\frac{\log_2 k_n}{n} \le \limsup_{n\to \infty} \frac{1}{n}\widehat{E}^u_{\alpha}\!\left(\mathcal{N}^{\otimes n}\right) \le \widehat{E}^u_{\alpha}\!\left(\mathcal{N}\right).
    \end{equation}
    The above inequality holds for every sequence $\left\{k_n\right\}_{n\in \mathbb{N}}$ for which there exists a sequence of private states $\left\{\gamma^{k_n}_{A'B'A''B''}\right\}_{n\in \mathbb{N}}$, a sequence of one-way LOCC superchannels $\left\{\Theta^{n,\operatorname{1WL}}_{(A^n\to B^n)\to (\hat{A}\to B'B'')}\right\}_{n\in \mathbb{N}}$, and a sequence of states $\left\{\rho^n_{A'A''\hat{A}}\right\}_{n\in \mathbb{N}}$ such that~\eqref{eq:priv_st_fid_exponential_ch} holds for all $a>2\log_2 d$ and $n\in \mathbb{N}$. Therefore,
    \begin{equation}
        \widetilde{K}^{\to}_{D,a}\!\left(\mathcal{N}_{A\to B}\right) \le \widehat{E}^u_{\alpha}\!\left(\mathcal{N}_{A\to B}\right) \qquad \forall \alpha \in (1,2],
    \end{equation}
    which follows from the definition of $\widetilde{K}^{\to}_{D,a}\!\left(\mathcal{N}_{A\to B}\right)$. Since the $\alpha$-geometric unextendible entanglement of a channel increases monotonically with $\alpha$, we can take $\lim_{\alpha \to 1^+}$ to obtain the tightest upper bound, which is the unextendible entanglement of the channel $\mathcal{N}_{A\to B}$ induced by the Belavkin--Staszewski relative entropy. Finally, using~\eqref{eq:a_exp_dist_ke_eq_priv_cap} leads to the statement of the theorem.
\end{proof}

\section{Conclusion}

In this paper we studied the task of secret-key distillation from bipartite states and point-to-point quantum channels using local operations and one-way classical communication. Using the resource theory of unextendible entanglement, which is a semidefinite relaxation of the resource theory of entanglement, we obtained efficiently computable upper bounds on several quantities of interest in the theory of private communication over a quantum network.

We derived efficiently computable upper bounds on the one-shot, one-way distillable key of a bipartite state using the resource theory of unextendible entanglement. We also derived upper bounds on the one-shot forward-assisted private capacity of a channel that can be computed using a semidefinite program. In both cases, these are the first instances of efficiently computable upper bounds on these quantities, to the best of our knowledge. 

We extended our results to the i.i.d.~setting and obtained single-letter efficiently computable upper bounds on the $n$-shot one-way distillable key of bipartite states and $n$-shot forward-assisted private capacity of point-to-point channels. Finally, we obtained efficiently computable upper bounds on the rate at which secret keys can be distilled from a bipartite state or a quantum channel using one-way LOCC when the error is required to decay exponentially with an error exponent larger than a fixed threshold.  

The majority of bounds obtained in this work can be computed using semidefinite programs. We numerically computed the upper bounds on the one-shot, one-way distillable key and $n$-shot one-way distillable key for isotropic states to demonstrate our results. We also found analytical expressions for the upper bounds on the $n$-shot forward-assisted private capacity of erasure channels. 

We obtained a family of upper bounds on the $n$-shot, one-way distillable key of a bipartite state in this work using the $\alpha$-sandwiched R\'enyi relative entropy. However, a semidefinite representation of the $\alpha$-sandwiched R\'enyi relative entropy is only known when $\alpha\to \infty$. As such, only one member from the family of upper bounds on the $n$-shot, one-way distillable key of a state is known to be efficiently computable. 

Going forward from here, there are some open problems left for future investigation. The bounds obtained in this work are based on the resource theory of unextendible entanglement. It may be possible to obtain stronger bounds by studying entanglement measures that combine the concepts of unextendibility and the positive partial transpose (PPT) criterion. Furthermore, it can give insights into the asymptotic setting of private communication where there are no assumptions on the rate at which error decays. As another open problem of interest,  finding semidefinite representations of the $\alpha$-sandwiched R\'enyi relative entropies would improve our numerical findings here, as they can lead to tighter efficiently computable bounds on the $n$-shot, one-way distillable key of a state.  

\bigskip

\noindent \textbf{Acknowledgements}: VS and MMW acknowledge support from the National Science Foundation under Grant No.~2329662. MMW acknowledges Nilanjana Datta for many insightful discussions about this project, starting from a visit to her group at University of Cambridge in October 2022, and VS and MMW are also grateful to her for feedback on the manuscript. We also thank Kaiyuan Ji, Theshani Nuradha, Dhrumil Patel, and Aby Philip for helpful discussions. 

\bigskip

\noindent \textbf{Author Contributions}:
The following describes the different contributions of the authors of this work, using roles defined by the CRediT
(Contributor Roles Taxonomy) project~\cite{NISO}:

\medskip 
\noindent \textbf{VS}:
Formal Analysis, Investigation, Methodology, Software, Writing - Original draft, Validation, Writing - Review \& Editing.

\medskip 
\noindent \textbf{MMW}: Conceptualization, Formal Analysis, Funding acquisition,  Investigation, Methodology, Supervision, Validation,  Writing - Review \& Editing.

\bibliographystyle{alphaurl}
\bibliography{Ref}

\newcommand{\etalchar}[1]{$^{#1}$}
\begin{thebibliography}{MLDS{\etalchar{+}}13}

\bibitem[BB84]{BB84}
Charles~H. Bennett and Gilles Brassard.
\newblock Quantum cryptography: Public key distribution and coin tossing.
\newblock In {\em Proceedings of IEEE International Conference on Computers, Systems, and Signal Processing}, page 175, India, 1984.
\newblock \href {https://doi.org/10.1016/j.tcs.2014.05.025} {\path{doi:10.1016/j.tcs.2014.05.025}}.

\bibitem[BD10]{BD10}
Francesco Buscemi and Nilanjana Datta.
\newblock The quantum capacity of channels with arbitrarily correlated noise.
\newblock {\em IEEE Transactions on Information Theory}, 56(3):1447--1460, 2010.
\newblock \href {https://arxiv.org/abs/0902.0158} {\path{arXiv:0902.0158}}, \href {https://doi.org/10.1109/TIT.2009.2039166} {\path{doi:10.1109/TIT.2009.2039166}}.

\bibitem[BD11]{BD11}
Fernando G. S.~L. Brandao and Nilanjana Datta.
\newblock One-shot rates for entanglement manipulation under non-entangling maps.
\newblock {\em IEEE Transactions on Information Theory}, 57(3):1754--1760, 2011.
\newblock \href {https://arxiv.org/abs/0905.2673} {\path{arXiv:0905.2673}}, \href {https://doi.org/10.1109/TIT.2011.2104531} {\path{doi:10.1109/TIT.2011.2104531}}.

\bibitem[BS82]{BS82}
V.~P. Belavkin and P.~Staszewski.
\newblock C*-algebraic generalization of relative entropy and entropy.
\newblock {\em Annales de l'I.H.P. Physique théorique}, 37(1):51--58, 1982.
\newblock URL: \url{http://eudml.org/doc/76163}.

\bibitem[CDP08]{Chiribella_2008}
G.~Chiribella, G.~M. D{\textquotesingle}Ariano, and P.~Perinotti.
\newblock Transforming quantum operations: Quantum supermaps.
\newblock {\em Europhysics Letters}, 83(3):30004, July 2008.
\newblock \href {https://arxiv.org/abs/0804.0180} {\path{arXiv:0804.0180}}, \href {https://doi.org/10.1209/0295-5075/83/30004} {\path{doi:10.1209/0295-5075/83/30004}}.

\bibitem[CEH{\etalchar{+}}07]{CEHHOR07}
Matthias Christandl, Artur Ekert, Michal Horodecki, Pawel Horodecki, Jonathan Oppenheim, and Renato Renner.
\newblock Unifying classical and quantum key distillation.
\newblock In Salil~P. Vadhan, editor, {\em Theory of Cryptography}, pages 456--478, Berlin, Heidelberg, 2007. Springer Berlin Heidelberg.
\newblock \href {https://arxiv.org/abs/quant-ph/0608199} {\path{arXiv:quant-ph/0608199}}, \href {https://doi.org/10.1007/978-3-540-70936-7_25} {\path{doi:10.1007/978-3-540-70936-7_25}}.

\bibitem[Chr06]{Christandl06}
Matthias Christandl.
\newblock The structure of bipartite quantum states - insights from group theory and cryptography, 2006.
\newblock URL: \url{https://arxiv.org/abs/quant-ph/0604183}, \href {https://arxiv.org/abs/quant-ph/0604183} {\path{arXiv:quant-ph/0604183}}.

\bibitem[CMW16]{CMW16}
Tom Cooney, Mil{\'a}n Mosonyi, and Mark~M. Wilde.
\newblock Strong converse exponents for a quantum channel discrimination problem and quantum-feedback-assisted communication.
\newblock {\em Communications in Mathematical Physics}, 344(3):797--829, 2016.
\newblock \href {https://arxiv.org/abs/1408.3373} {\path{arXiv:1408.3373}}, \href {https://doi.org/10.1007/s00220-016-2645-4} {\path{doi:10.1007/s00220-016-2645-4}}.

\bibitem[CSW12]{CSW12}
Matthias Christandl, Norbert Schuch, and Andreas Winter.
\newblock Entanglement of the antisymmetric state.
\newblock {\em Communications in Mathematical Physics}, 311(2):397--422, 2012.
\newblock \href {https://arxiv.org/abs/0910.4151} {\path{arXiv:0910.4151}}, \href {https://doi.org/10.1007/s00220-012-1446-7} {\path{doi:10.1007/s00220-012-1446-7}}.

\bibitem[CW04]{CW04}
Matthias Christandl and Andreas Winter.
\newblock ``{S}quashed entanglement'': an additive entanglement measure.
\newblock {\em Journal of Mathematical Physics}, 45(3):829--840, 9/2/2024 2004.
\newblock \href {https://arxiv.org/abs/quant-ph/0308088} {\path{arXiv:quant-ph/0308088}}, \href {https://doi.org/10.1063/1.1643788} {\path{doi:10.1063/1.1643788}}.

\bibitem[CWY04]{CWY04}
N.~Cai, A.~Winter, and R.~W. Yeung.
\newblock Quantum privacy and quantum wiretap channels.
\newblock {\em Problems of Information Transmission}, 40(4):318--336, 2004.
\newblock \href {https://doi.org/10.1007/s11122-005-0002-x} {\path{doi:10.1007/s11122-005-0002-x}}.

\bibitem[Dat09]{Dat09}
Nilanjana Datta.
\newblock Min- and max-relative entropies and a new entanglement monotone.
\newblock {\em IEEE Transactions on Information Theory}, 55(6):2816--2826, 2009.
\newblock arXiv:0803.2770.
\newblock \href {https://doi.org/10.1109/TIT.2009.2018325} {\path{doi:10.1109/TIT.2009.2018325}}.

\bibitem[DFW{\etalchar{+}}18]{DFW18}
Mar{\'{i}}a~Garc{\'{i}}a D{\'{i}}az, Kun Fang, Xin Wang, Matteo Rosati, Michalis Skotiniotis, John Calsamiglia, and Andreas Winter.
\newblock Using and reusing coherence to realize quantum processes.
\newblock {\em {Quantum}}, 2:100, October 2018.
\newblock \href {https://arxiv.org/abs/1805.04045} {\path{arXiv:1805.04045}}, \href {https://doi.org/10.22331/q-2018-10-19-100} {\path{doi:10.22331/q-2018-10-19-100}}.

\bibitem[DKQ{\etalchar{+}}23]{DKQSWW23}
Dawei Ding, Sumeet Khatri, Yihui Quek, Peter~W. Shor, Xin Wang, and Mark~M. Wilde.
\newblock Bounding the forward classical capacity of bipartite quantum channels.
\newblock {\em IEEE Transactions on Information Theory}, 69(5):3034--3061, 2023.
\newblock arXiv:2010.01058.
\newblock \href {https://doi.org/10.1109/TIT.2022.3233924} {\path{doi:10.1109/TIT.2022.3233924}}.

\bibitem[DLL03]{DLL03}
Fu-Guo Deng, Gui~Lu Long, and Xiao-Shu Liu.
\newblock Two-step quantum direct communication protocol using the {Einstein--Podolsky--Rosen} pair block.
\newblock {\em Physical Review A}, 68(4):042317, October 2003.
\newblock arXiv:quant-ph/0308173.
\newblock \href {https://doi.org/10.1103/PhysRevA.68.042317} {\path{doi:10.1103/PhysRevA.68.042317}}.

\bibitem[DPS04]{DPS04}
Andrew~C. Doherty, Pablo~A. Parrilo, and Federico~M. Spedalieri.
\newblock Complete family of separability criteria.
\newblock {\em Physical Review A}, 69(2):022308, February 2004.
\newblock \href {https://arxiv.org/abs/quant-ph/0308032} {\path{arXiv:quant-ph/0308032}}, \href {https://doi.org/10.1103/PhysRevA.69.022308} {\path{doi:10.1103/PhysRevA.69.022308}}.

\bibitem[DW05]{DW05}
Igor Devetak and Andreas Winter.
\newblock Distillation of secret key and entanglement from quantum states.
\newblock {\em Proceedings of the Royal Society A: Mathematical, Physical and Engineering Sciences}, 461(2053):207–235, January 2005.
\newblock \href {https://doi.org/10.1098/rspa.2004.1372} {\path{doi:10.1098/rspa.2004.1372}}.

\bibitem[Eke91]{Ekert91}
Artur~K. Ekert.
\newblock Quantum cryptography based on {B}ell's theorem.
\newblock {\em Physical Review Letters}, 67(6):661--663, August 1991.
\newblock \href {https://doi.org/10.1103/PhysRevLett.67.661} {\path{doi:10.1103/PhysRevLett.67.661}}.

\bibitem[FF21]{FF21}
Kun Fang and Hamza Fawzi.
\newblock Geometric {R\'{e}}nyi divergence and its applications in quantum channel capacities.
\newblock {\em Communications in Mathematical Physics}, 384(3):1615--1677, May 2021.
\newblock \href {https://arxiv.org/abs/1909.05758} {\path{arXiv:1909.05758}}, \href {https://doi.org/10.1007/s00220-021-04064-4} {\path{doi:10.1007/s00220-021-04064-4}}.

\bibitem[FS17]{FS17}
Hamza Fawzi and James Saunderson.
\newblock Lieb's concavity theorem, matrix geometric means, and semidefinite optimization.
\newblock {\em Linear Algebra and its Applications}, 513:240--263, 2017.
\newblock arXiv:1512.03401.
\newblock URL: \url{https://www.sciencedirect.com/science/article/pii/S0024379516304852}, \href {https://doi.org/10.1016/j.laa.2016.10.012} {\path{doi:10.1016/j.laa.2016.10.012}}.

\bibitem[FvdG99]{FvdG98}
Christopher~A. Fuchs and Jeroen van~de Graaf.
\newblock Cryptographic distinguishability measures for quantum-mechanical states.
\newblock {\em IEEE Transactions on Information Theory}, 45(4):1216--1227, 1999.
\newblock \href {https://arxiv.org/abs/quant-ph/9712042} {\path{arXiv:quant-ph/9712042}}, \href {https://doi.org/10.1109/18.761271} {\path{doi:10.1109/18.761271}}.

\bibitem[Gha10]{G10}
Sevag Gharibian.
\newblock Strong {NP}-hardness of the quantum separability problem.
\newblock {\em Quantum Information and Computation}, 10(3):343--360, March 2010.
\newblock arXiv:0810.4507.
\newblock \href {https://doi.org/10.26421/qic10.3-4-11} {\path{doi:10.26421/qic10.3-4-11}}.

\bibitem[Gou19]{Gour_2019}
Gilad Gour.
\newblock Comparison of quantum channels by superchannels.
\newblock {\em {IEEE} Transactions on Information Theory}, 65(9):5880--5904, September 2019.
\newblock \href {https://arxiv.org/abs/1808.02607} {\path{arXiv:1808.02607}}, \href {https://doi.org/10.1109/tit.2019.2907989} {\path{doi:10.1109/tit.2019.2907989}}.

\bibitem[Gur03]{Gur03}
Leonid Gurvits.
\newblock Classical deterministic complexity of {E}dmonds' problem and quantum entanglement.
\newblock In {\em Proceedings of the Thirty-Fifth Annual ACM Symposium on Theory of Computing}, STOC '03, page 10–19, New York, NY, USA, 2003. Association for Computing Machinery.
\newblock arXiv:quant-ph/0303055.
\newblock \href {https://doi.org/10.1145/780542.780545} {\path{doi:10.1145/780542.780545}}.

\bibitem[HH99]{HH99}
Michal Horodecki and Pawel Horodecki.
\newblock Reduction criterion of separability and limits for a class of distillation protocols.
\newblock {\em Physical Review A}, 59(6):4206--4216, June 1999.
\newblock \href {https://arxiv.org/abs/quant-ph/9708015} {\path{arXiv:quant-ph/9708015}}, \href {https://doi.org/10.1103/PhysRevA.59.4206} {\path{doi:10.1103/PhysRevA.59.4206}}.

\bibitem[HHH{\etalchar{+}}08a]{HHHLO08}
Karol Horodecki, Michal Horodecki, Pawel Horodecki, Debbie Leung, and Jonathan Oppenheim.
\newblock Quantum key distribution based on private states: Unconditional security over untrusted channels with zero quantum capacity.
\newblock {\em IEEE Transactions on Information Theory}, 54(6):2604--2620, 2008.
\newblock \href {https://arxiv.org/abs/quant-ph/0608195} {\path{arXiv:quant-ph/0608195}}, \href {https://doi.org/10.1109/TIT.2008.921870} {\path{doi:10.1109/TIT.2008.921870}}.

\bibitem[HHH{\etalchar{+}}08b]{HHHLO08_QP}
Karol Horodecki, Michal Horodecki, Pawel Horodecki, Debbie Leung, and Jonathan Oppenheim.
\newblock Unconditional privacy over channels which cannot convey quantum information.
\newblock {\em Physical Review Letters}, 100:110502, March 2008.
\newblock \href {https://arxiv.org/abs/quant-ph/0702077} {\path{arXiv:quant-ph/0702077}}, \href {https://doi.org/10.1103/PhysRevLett.100.110502} {\path{doi:10.1103/PhysRevLett.100.110502}}.

\bibitem[HHHO05]{HHHO05}
Karol Horodecki, Michal Horodecki, Pawel Horodecki, and Jonathan Oppenheim.
\newblock Secure key from bound entanglement.
\newblock {\em Physical Review Letters}, 94(16):160502, April 2005.
\newblock \href {https://doi.org/10.1103/PhysRevLett.94.160502} {\path{doi:10.1103/PhysRevLett.94.160502}}.

\bibitem[HHHO09]{HHHO09}
Karol Horodecki, Michal Horodecki, Pawel Horodecki, and Jonathan Oppenheim.
\newblock General paradigm for distilling classical key from quantum states.
\newblock {\em {IEEE} Transactions on Information Theory}, 55(4):1898--1929, April 2009.
\newblock \href {https://arxiv.org/abs/quant-ph/0506189} {\path{arXiv:quant-ph/0506189}}, \href {https://doi.org/10.1109/tit.2008.2009798} {\path{doi:10.1109/tit.2008.2009798}}.

\bibitem[HMZ16]{HMZ16}
Teiko Heinosaari, Takayuki Miyadera, and Mário Ziman.
\newblock An invitation to quantum incompatibility.
\newblock {\em Journal of Physics A: Mathematical and Theoretical}, 49(12):123001, February 2016.
\newblock arXiv:1511.07548.
\newblock URL: \url{https://dx.doi.org/10.1088/1751-8113/49/12/123001}, \href {https://doi.org/10.1088/1751-8113/49/12/123001} {\path{doi:10.1088/1751-8113/49/12/123001}}.

\bibitem[HSW23]{HSW23}
Tharon Holdsworth, Vishal Singh, and Mark~M. Wilde.
\newblock Quantifying the performance of approximate teleportation and quantum error correction via symmetric 2-{PPT}-extendible channels.
\newblock {\em Physical Review A}, 107(1):012428, Jan 2023.
\newblock \href {https://arxiv.org/abs/2207.06931} {\path{arXiv:2207.06931}}, \href {https://doi.org/10.1103/PhysRevA.107.012428} {\path{doi:10.1103/PhysRevA.107.012428}}.

\bibitem[KDWW19]{KDWW19}
Eneet Kaur, Siddhartha Das, Mark~M. Wilde, and Andreas Winter.
\newblock Extendibility limits the performance of quantum processors.
\newblock {\em Physical Review Letters}, 123(7):070502, August 2019.
\newblock \href {https://arxiv.org/abs/2108.03137} {\path{arXiv:2108.03137}}, \href {https://doi.org/10.1103/physrevlett.123.070502} {\path{doi:10.1103/physrevlett.123.070502}}.

\bibitem[KDWW21]{KDWW21}
Eneet Kaur, Siddhartha Das, Mark~M. Wilde, and Andreas Winter.
\newblock Resource theory of unextendibility and nonasymptotic quantum capacity.
\newblock {\em Physical Review A}, 104(2):022401, August 2021.
\newblock \href {https://arxiv.org/abs/1803.10710} {\path{arXiv:1803.10710}}, \href {https://doi.org/10.1103/physreva.104.022401} {\path{doi:10.1103/physreva.104.022401}}.

\bibitem[KKGW21]{KKGW21}
Sumeet Khatri, Eneet Kaur, Saikat Guha, and Mark~M. Wilde.
\newblock Second-order coding rates for key distillation in quantum key distribution, 2021.
\newblock \href {https://arxiv.org/abs/1910.03883} {\path{arXiv:1910.03883}}.

\bibitem[KS24]{KS24}
Gereon Koßmann and René Schwonnek.
\newblock Optimising the relative entropy under semi definite constraints -- a new tool for estimating key rates in {QKD}, 2024.
\newblock \href {https://arxiv.org/abs/2404.17016} {\path{arXiv:2404.17016}}.

\bibitem[KW24]{KW24}
Sumeet Khatri and Mark~M. Wilde.
\newblock Principles of quantum communication theory: A modern approach, 2024.
\newblock \href {https://arxiv.org/abs/2011.04672v2} {\path{arXiv:2011.04672v2}}.

\bibitem[LDS18]{LDS18}
Felix Leditzky, Nilanjana Datta, and Graeme Smith.
\newblock Useful states and entanglement distillation.
\newblock {\em IEEE Transactions on Information Theory}, 64(7):4689--4708, 2018.
\newblock \href {https://arxiv.org/abs/1701.03081} {\path{arXiv:1701.03081}}, \href {https://doi.org/10.1109/TIT.2017.2776907} {\path{doi:10.1109/TIT.2017.2776907}}.

\bibitem[LKDW18]{LKDW18}
Felix Leditzky, Eneet Kaur, Nilanjana Datta, and Mark~M. Wilde.
\newblock Approaches for approximate additivity of the {H}olevo information of quantum channels.
\newblock {\em Physical Review A}, 97:012332, January 2018.
\newblock \href {https://doi.org/10.1103/PhysRevA.97.012332} {\path{doi:10.1103/PhysRevA.97.012332}}.

\bibitem[LM15]{LM15}
Debbie Leung and William Matthews.
\newblock On the power of {PPT}-preserving and non-signalling codes.
\newblock {\em IEEE Transactions on Information Theory}, 61(8):4486--4499, 2015.
\newblock arXiv:1406.7142.
\newblock \href {https://doi.org/10.1109/TIT.2015.2439953} {\path{doi:10.1109/TIT.2015.2439953}}.

\bibitem[Mat13]{Mat13}
Keiji Matsumoto.
\newblock A new quantum version of $f$-divergence, 2013.
\newblock \href {https://arxiv.org/abs/1311.4722} {\path{arXiv:1311.4722}}, \href {https://doi.org/10.48550/ARXIV.1311.4722} {\path{doi:10.48550/ARXIV.1311.4722}}.

\bibitem[MH11]{MH11}
Milán Mosonyi and Fumio Hiai.
\newblock On the quantum {R}ényi relative entropies and related capacity formulas.
\newblock {\em IEEE Transactions on Information Theory}, 57(4):2474--2487, 2011.
\newblock arXiv:0912.1286.
\newblock \href {https://doi.org/10.1109/TIT.2011.2110050} {\path{doi:10.1109/TIT.2011.2110050}}.

\bibitem[MLDS{\etalchar{+}}13]{MDSST13}
Martin Müller-Lennert, Fr{\'e}d{\'e}ric Dupuis, Oleg Szehr, Serge Fehr, and Marco Tomamichel.
\newblock On quantum {R\'{e}}nyi entropies: A new generalization and some properties.
\newblock {\em Journal of Mathematical Physics}, 54(12):122203, December 2013.
\newblock \href {https://arxiv.org/abs/1306.3142} {\path{arXiv:1306.3142}}, \href {https://doi.org/10.1063/1.4838856} {\path{doi:10.1063/1.4838856}}.

\bibitem[MO21]{MO21}
Milán Mosonyi and Tomohiro Ogawa.
\newblock Divergence radii and the strong converse exponent of classical-quantum channel coding with constant compositions.
\newblock {\em IEEE Transactions on Information Theory}, 67(3):1668--1698, 2021.
\newblock arXiv:1811.10599.
\newblock \href {https://doi.org/10.1109/TIT.2020.3041205} {\path{doi:10.1109/TIT.2020.3041205}}.

\bibitem[NIS]{NISO}
NISO.
\newblock Credit – contributor roles taxonomy.
\newblock \url{https://credit.niso.org/}, Accessed 2024-10-14.

\bibitem[NO00]{nagaoka2000StrongConverseSteins}
H.~Nagaoka and T.~Ogawa.
\newblock Strong converse and {{Stein}}'s lemma in quantum hypothesis testing.
\newblock {\em IEEE Transactions on Information Theory}, 46:2428--2433, November 2000.
\newblock URL: \url{http://ieeexplore.ieee.org/document/887855/}, \href {https://doi.org/10.1109/18.887855} {\path{doi:10.1109/18.887855}}.

\bibitem[Pet86]{Petz86}
Dénes Petz.
\newblock Quasi-entropies for finite quantum systems.
\newblock {\em Reports on Mathematical Physics}, 23(1):57--65, 1986.
\newblock URL: \url{https://www.sciencedirect.com/science/article/pii/0034487786900674}, \href {https://arxiv.org/abs/1009.2679} {\path{arXiv:1009.2679}}, \href {https://doi.org/10.1016/0034-4877(86)90067-4} {\path{doi:10.1016/0034-4877(86)90067-4}}.

\bibitem[PV10]{PV10}
Yury Polyanskiy and Sergio Verdú.
\newblock Arimoto channel coding converse and {R}ényi divergence.
\newblock In {\em 2010 48th Annual Allerton Conference on Communication, Control, and Computing (Allerton)}, pages 1327--1333, 2010.
\newblock \href {https://doi.org/10.1109/ALLERTON.2010.5707067} {\path{doi:10.1109/ALLERTON.2010.5707067}}.

\bibitem[QSW18]{QSW18}
Haoyu Qi, Kunal Sharma, and Mark~M. Wilde.
\newblock Entanglement-assisted private communication over quantum broadcast channels.
\newblock {\em Journal of Physics A: Mathematical and Theoretical}, 51(37):374001, August 2018.
\newblock arXiv:1803.03976.
\newblock URL: \url{https://dx.doi.org/10.1088/1751-8121/aad5f3}, \href {https://doi.org/10.1088/1751-8121/aad5f3} {\path{doi:10.1088/1751-8121/aad5f3}}.

\bibitem[RBL18]{RBL18}
Denis Rosset, Francesco Buscemi, and Yeong-Cherng Liang.
\newblock Resource theory of quantum memories and their faithful verification with minimal assumptions.
\newblock {\em Physical Review X}, 8(2):021033, May 2018.
\newblock arXiv:1710.04710.
\newblock URL: \url{https://link.aps.org/doi/10.1103/PhysRevX.8.021033}, \href {https://doi.org/10.1103/PhysRevX.8.021033} {\path{doi:10.1103/PhysRevX.8.021033}}.

\bibitem[RR11]{RR11}
Joseph~M. Renes and Renato Renner.
\newblock Noisy channel coding via privacy amplification and information reconciliation.
\newblock {\em IEEE Transactions on Information Theory}, 57(11):7377--7385, 2011.
\newblock arXiv:1012.4814.
\newblock \href {https://doi.org/10.1109/TIT.2011.2162226} {\path{doi:10.1109/TIT.2011.2162226}}.

\bibitem[RR12]{RR12}
Joseph~M. Renes and Renato Renner.
\newblock One-shot classical data compression with quantum side information and the distillation of common randomness or secret keys.
\newblock {\em IEEE Transactions on Information Theory}, 58(3):1985--1991, 2012.
\newblock arXiv:1008.0452.
\newblock \href {https://doi.org/10.1109/TIT.2011.2177589} {\path{doi:10.1109/TIT.2011.2177589}}.

\bibitem[RSW17]{RSW17}
Jaikumar Radhakrishnan, Pranab Sen, and Naqueeb~Ahmad Warsi.
\newblock One-shot private classical capacity of quantum wiretap channel: Based on one-shot quantum covering lemma, 2017.
\newblock URL: \url{https://arxiv.org/abs/1703.01932}, \href {https://arxiv.org/abs/1703.01932} {\path{arXiv:1703.01932}}.

\bibitem[Sio58]{Sion58}
Maurice Sion.
\newblock On general minimax theorems.
\newblock {\em Pacific Journal of Mathematics}, 8(1):171--176, March 1958.
\newblock \href {https://doi.org/10.2140/pjm.1958.8.171} {\path{doi:10.2140/pjm.1958.8.171}}.

\bibitem[SW24a]{SW24}
Vishal Singh and Mark~M. Wilde.
\newblock No-go theorem for probabilistic one-way secret-key distillation, 2024.
\newblock \href {https://arxiv.org/abs/2404.01392} {\path{arXiv:2404.01392}}.

\bibitem[SW24b]{SW24_channels}
Vishal Singh and Mark~M. Wilde.
\newblock Unextendible entanglement of quantum channels, 2024.
\newblock \href {https://arxiv.org/abs/2407.15944} {\path{arXiv:2407.15944}}.

\bibitem[TGW14]{TGW14}
Masahiro Takeoka, Saikat Guha, and Mark~M. Wilde.
\newblock The squashed entanglement of a quantum channel.
\newblock {\em IEEE Transactions on Information Theory}, 60(8):4987--4998, 2014.
\newblock arXiv:1310.0129.
\newblock \href {https://doi.org/10.1109/TIT.2014.2330313} {\path{doi:10.1109/TIT.2014.2330313}}.

\bibitem[Tom15]{Tomamichel15}
Marco Tomamichel.
\newblock {\em Quantum Information Processing with Finite Resources}.
\newblock Springer Cham, 2015.
\newblock arXiv:1504.00233.
\newblock \href {https://doi.org/10.1007/978-3-319-21891-5} {\path{doi:10.1007/978-3-319-21891-5}}.

\bibitem[Ume62]{Ume62}
Hisaharu Umegaki.
\newblock {Conditional expectation in an operator algebra. IV. Entropy and information}.
\newblock {\em Kodai Mathematical Seminar Reports}, 14(2):59 -- 85, 1962.
\newblock \href {https://doi.org/10.2996/kmj/1138844604} {\path{doi:10.2996/kmj/1138844604}}.

\bibitem[Wat18]{Watrous2018}
John Watrous.
\newblock {\em The Theory of Quantum Information}.
\newblock Cambridge University Press, 2018.
\newblock \href {https://doi.org/10.1017/9781316848142} {\path{doi:10.1017/9781316848142}}.

\bibitem[WBHK20]{WBHK20}
Mark~M. Wilde, Mario Berta, Christoph Hirche, and Eneet Kaur.
\newblock Amortized channel divergence for asymptotic quantum channel discrimination.
\newblock {\em Letters in Mathematical Physics}, 110(8):2277--2336, 2020.
\newblock \href {https://arxiv.org/abs/1808.01498} {\path{arXiv:1808.01498}}, \href {https://doi.org/10.1007/s11005-020-01297-7} {\path{doi:10.1007/s11005-020-01297-7}}.

\bibitem[Wer89]{Wer89}
Reinhard~F. Werner.
\newblock An application of {B}ell's inequalities to a quantum state extension problem.
\newblock {\em Letters in Mathematical Physics}, 17(4):359--363, 1989.
\newblock \href {https://doi.org/10.1007/BF00399761} {\path{doi:10.1007/BF00399761}}.

\bibitem[Wil16]{Wilde16}
Mark~M. Wilde.
\newblock Squashed entanglement and approximate private states.
\newblock {\em Quantum Information Processing}, 15(11):4563--4580, November 2016.
\newblock arXiv:1606.08028.
\newblock \href {https://doi.org/10.1007/s11128-016-1432-7} {\path{doi:10.1007/s11128-016-1432-7}}.

\bibitem[Wil17]{Wilde17}
Mark~M. Wilde.
\newblock Position-based coding and convex splitting for private communication over quantum channels.
\newblock {\em Quantum Information Processing}, 16(10):264, September 2017.
\newblock arXiv:1703.01733.
\newblock \href {https://doi.org/10.1007/s11128-017-1718-4} {\path{doi:10.1007/s11128-017-1718-4}}.

\bibitem[WR12]{WR12}
Ligong Wang and Renato Renner.
\newblock One-shot classical-quantum capacity and hypothesis testing.
\newblock {\em Physical Review Letters}, 108(20):200501, May 2012.
\newblock \href {https://arxiv.org/abs/1007.5456} {\path{arXiv:1007.5456}}, \href {https://doi.org/10.1103/PhysRevLett.108.200501} {\path{doi:10.1103/PhysRevLett.108.200501}}.

\bibitem[WTB17]{WTB17}
Mark~M. Wilde, Marco Tomamichel, and Mario Berta.
\newblock Converse bounds for private communication over quantum channels.
\newblock {\em IEEE Transactions on Information Theory}, 63(3):1792--1817, 2017.
\newblock arXiv:1602.08898.
\newblock \href {https://doi.org/10.1109/TIT.2017.2648825} {\path{doi:10.1109/TIT.2017.2648825}}.

\bibitem[WW19]{WW19}
Xin Wang and Mark~M. Wilde.
\newblock Resource theory of asymmetric distinguishability for quantum channels.
\newblock {\em Physical Review Research}, 1(3):033169, December 2019.
\newblock \href {https://arxiv.org/abs/1907.06306} {\path{arXiv:1907.06306}}, \href {https://doi.org/10.1103/PhysRevResearch.1.033169} {\path{doi:10.1103/PhysRevResearch.1.033169}}.

\bibitem[WWW24]{WWW24}
Kun Wang, Xin Wang, and Mark~M. Wilde.
\newblock Quantifying the unextendibility of entanglement.
\newblock {\em New Journal of Physics}, 26(3):033013, March 2024.
\newblock \href {https://arxiv.org/abs/1911.07433} {\path{arXiv:1911.07433}}, \href {https://doi.org/10.1088/1367-2630/ad264e} {\path{doi:10.1088/1367-2630/ad264e}}.

\bibitem[WWY14]{WWY14}
Mark~M. Wilde, Andreas Winter, and Dong Yang.
\newblock Strong converse for the classical capacity of entanglement-breaking and {H}adamard channels via a sandwiched {R\'{e}}nyi relative entropy.
\newblock {\em Communications in Mathematical Physics}, 331(2):593--622, July 2014.
\newblock \href {https://arxiv.org/abs/1306.1586} {\path{arXiv:1306.1586}}, \href {https://doi.org/10.1007/s00220-014-2122-x} {\path{doi:10.1007/s00220-014-2122-x}}.

\bibitem[Yan06]{Yang06}
Dong Yang.
\newblock A simple proof of monogamy of entanglement.
\newblock {\em Physics Letters A}, 360(2):249--250, 2006.
\newblock URL: \url{https://www.sciencedirect.com/science/article/pii/S0375960106012801}, \href {https://arxiv.org/abs/quant-ph/0604168} {\path{arXiv:quant-ph/0604168}}, \href {https://doi.org/10.1016/j.physleta.2006.08.027} {\path{doi:10.1016/j.physleta.2006.08.027}}.

\end{thebibliography}

\appendix

\numberwithin{equation}{section}

\section{Proof of Proposition~\ref{prop:max_ent_unext_ent_hypo_test}}\label{app:max_ent_unext_ent_hypo_test}

    In this section, we calculate the smooth-min unextendible entanglement of the maximally entangled state $\Phi^d_{AB}$.
    
    We first note that all extensions of the state $\Phi^d_{AB}$ are of the form $\Phi^d_{AB}\otimes\tau_{E}$ since $\Phi^d_{AB}$ is a pure state. Therefore, all states in the set $\mathcal{F}(\Phi^d_{AB})$ are of the form $\pi_A\otimes \tau_E$, where $\pi_A$ is the maximally mixed state and $E\cong B$. The unextendible entanglement of $\Phi^d_{AB}$ induced by the hypothesis testing relative entropy can be calculated as follows:
    \begin{align}
        E^{u,\varepsilon}_{\min}\!\left(\Phi^d_{AB}\right) &= \inf_{\tau_B \in \mathcal{S}(B)}\frac{1}{2}D^{\varepsilon}_{\min}\!\left(\Phi^d_{AB}\Vert\pi_A\otimes \tau_B\right)\\
        &= \inf_{\tau_B}-\frac{1}{2}\log_2\inf_{0\le \Lambda \le I}\left\{\operatorname{Tr}\!\left[\Lambda_{AB}(\pi_A\otimes \tau_B)\right]: \operatorname{Tr}\!\left[\Lambda_{AB}\Phi^d_{AB}\right]\ge 1-\varepsilon\right\}.
    \end{align}
    Choosing $\Lambda_{AB} = (1-\varepsilon)\Phi^d_{AB}$, we find that
    \begin{align}
        E^{u,\varepsilon}_{\min}\!\left(\Phi^d_{AB}\right) &\ge \inf_{\tau_B} -\frac{1}{2}\log_2\left((1-\varepsilon)\operatorname{Tr}\!\left[\Phi^d_{AB}(\pi_A\otimes \tau_B)\right]\right)\\
        &= \inf_{\tau_B} -\frac{1}{2}\log_2\left(\frac{1-\varepsilon}{d}\operatorname{Tr}\!\left[\Phi^d_{AB}(I_A\otimes \tau_B)\right]\right)\\
        &= \inf_{\tau_B} -\frac{1}{2}\log_2\left(\frac{1-\varepsilon}{d}\operatorname{Tr}\!\left[\pi_{B}\tau_B\right]\right)\\
        &= \inf_{\tau_B} -\frac{1}{2}\log_2\left(\frac{1-\varepsilon}{d^2}\operatorname{Tr}\!\left[\tau_B\right]\right)\\
        &= -\frac{1}{2}\log_2\left(\frac{1-\varepsilon}{d^2}\right).\label{eq:max_ent_hypo_test_unext_ent_lb}
    \end{align}

    The hypothesis testing relative entropy can also be computed using the following SDP:
    \begin{equation}
        D^{\varepsilon}_{\min}\!\left(\rho_{AB}\right) = -\log_2 \sup_{\mu \ge 0, Z\ge 0}\left\{\mu(1-\varepsilon) - \operatorname{Tr}\!\left[Z\right]: \mu\rho \le \sigma + Z\right\}.
    \end{equation}
    The unextendible entanglement of the maximally entangled state induced by the hypothesis testing relative entropy can then be computed as follows:
    \begin{equation}
        E^{u,\varepsilon}_{\min}\!\left(\Phi^d_{AB}\right) = \inf_{\tau_B \in \mathcal{S}(B)} -\frac{1}{2}\log_2 \sup_{\mu \ge 0, Z\ge 0}\left\{\mu(1-\varepsilon) - \operatorname{Tr}\!\left[Z\right]: \mu\Phi^d_{AB} \le \pi_A\otimes \tau_B + Z_{AB}\right\}
    \end{equation}
    Choosing $\tau_B$ to be the maximally mixed state, we arrive at the following inequality:
    \begin{align}
        E^{u,\varepsilon}_{\min}\!\left(\Phi^d_{AB}\right) &\le -\frac{1}{2}\log_2 \sup_{\mu \ge 0, Z\ge 0}\left\{\mu(1-\varepsilon) - \operatorname{Tr}\!\left[Z\right]: \mu\Phi^d_{AB} \le \pi_{AB} + Z_{AB}\right\}\\
        &= \inf_{\mu \ge 0, Z\ge 0}-\frac{1}{2}\log_2 \left\{\mu(1-\varepsilon) - \operatorname{Tr}\!\left[Z\right]: \mu\Phi^d_{AB} \le \pi_{AB} + Z_{AB}\right\}.
    \end{align}
    Note that the pair $\left(\mu = 1/d^2,Z = 0\right)$ lies in the feasible set of the aforementioned SDP. Therefore, setting $\mu = 1/d^2$ and $Z = 0$ leads to the following inequality:
    \begin{equation}\label{eq:max_ent_hypo_test_unext_ent_ub}
        E^{u,\varepsilon}_{\min}\!\left(\Phi^d_{AB}\right) \le -\frac{1}{2}\log_2\!\left(\frac{1-\varepsilon}{d^2}\right) = \log_2 d - \frac{1}{2}\log_2(1-\varepsilon).
    \end{equation}
    Combining~\eqref{eq:max_ent_hypo_test_unext_ent_lb} and~\eqref{eq:max_ent_hypo_test_unext_ent_ub} concludes the proof.

\section{Proof of Proposition~\ref{prop:J_eps_range}}\label{app:J_eps_range}

In this section we find the range of values that the smooth-min unextendible entanglement of a state can take.

The smooth-min relative entropy between two states is never smaller than $-\log_2(1-\varepsilon)$, which can be seen from the data-processing inequality of the smooth-min relative entropy as follows:
    \begin{align}
        D^{\varepsilon}_{\min}\!\left(\rho\Vert\sigma\right) &\ge D^{\varepsilon}_{\min}\!\left(\mathcal{R}^{\pi}\!\left(\rho\right)\Vert\mathcal{R}^{\pi}\!\left(\sigma\right)\right)\\
        &= D^{\varepsilon}_{\min}\!\left(\pi\Vert\pi\right)\\
        &= -\log_2 \inf_{0\le \Lambda \le I}\left\{\operatorname{Tr}\!\left[\Lambda\pi\right]: \operatorname{Tr}\!\left[\Lambda\pi\right] \ge 1-\varepsilon\right\}\\
        &= -\log_2(1-\varepsilon),
    \end{align}
    where $\mathcal{R}^{\pi}$ is a channel that traces out the state it acts on and replaces it with the maximally mixed state $\pi$. This leads to the following bound on the smooth-min unextendible entanglement of a bipartite state $\rho_{AB}$:
    \begin{equation}
        E^{u,\varepsilon}_{\min}\!\left(\rho_{AB}\right) = \inf_{\sigma \in \mathcal{F}(\rho)}\frac{1}{2}D^{\varepsilon}_{\min}\!\left(\rho_{AB}\Vert\sigma_{AB}\right) \ge -\frac{1}{2}\log_2(1-\varepsilon).
    \end{equation}

    To find an upper bound on $E^{u,\varepsilon}_{\min}\!\left(\rho_{AB}\right)$, we invoke~\eqref{eq:unext_ent_le_max_ent}. Since the hypothesis testing relative entropy is an example of generalized divergence,~\eqref{eq:unext_ent_le_max_ent} implies the following inequality:
    \begin{equation}\label{eq:smooth_min_rho_ub}
        E^{u,\varepsilon}_{\min}\!\left(\rho_{AB}\right) \le E^{u,\varepsilon}_{\min}\!\left(\Phi^d_{A_0B_0}\right) = \log_2d - \frac{1}{2}\log_2(1-\varepsilon),
    \end{equation}
    where $d \coloneqq \min\{\operatorname{dim}(A),\operatorname{dim}(B)\}$. The equality in~\eqref{eq:smooth_min_rho_ub} follows from Proposition~\ref{prop:max_ent_unext_ent_hypo_test}.

\section{Proof of Equation~\texorpdfstring{\eqref{eq:q_final_expression}}{79}}\label{app:priv_marg_phi_fid_bnd}

In this appendix, we show that~\eqref{eq:q_eq_main} implies~\eqref{eq:q_final_expression}.

Let us analyze the expression on the right-hand side of~\eqref{eq:q_eq_main} in the interval $q\in [0,1]$ and for $k\in \mathbb{N}$. In what follows, we find the inflection points of the expression by setting the derivative equal to zero:
\begin{align}
    0 &=\frac{\partial}{\partial q}\left(\sqrt{\frac{q}{k^2}} + \sqrt{\left(1-q\right)\left(1-\frac{1}{k^2}\right)}\right)^2\\
    &= 2\left(\sqrt{\frac{q}{k^2}} + \sqrt{\left(1-q\right)\left(1-\frac{1}{k^2}\right)}\right)\left(\frac{1}{2k\sqrt{q}} - \frac{1}{2\sqrt{1-q}}\sqrt{1-\frac{1}{k^2}}\right). \label{eq:deriv-calc-q-k}
\end{align}
Solving the above equation for $q$, we find that
\begin{align}
    \frac{1}{2k\sqrt{q}} &= \frac{1}{2\sqrt{1-q}}\sqrt{1-\frac{1}{k^2}}\\
    \implies \sqrt{1-q} &= \sqrt{q(k^2-1)}\\
    \implies 1-q &= q(k^2-1)\\
    \implies q &= \frac{1}{k^2}.
\end{align}
It is easy to verify that the function of $q$ given on the right-hand side of~\eqref{eq:q_eq_main} achieves its maximum value at this inflection point. Therefore, the function is monotonically increasing for $q\in [0,\frac{1}{k^2}]$, and it is monotonically decreasing for $q \in [\frac{1}{k^2}, 1]$. Equivalently, the derivative in~\eqref{eq:deriv-calc-q-k} is non-negative for $q\in [0,\frac{1}{k^2}]$, and it is non-positive for $q \in [\frac{1}{k^2}, 1]$.

Now let us find the values of $q$ that satisfy~\eqref{eq:q_eq_main}. We can rewrite the inequality in~\eqref{eq:q_eq_main} as follows:
\begin{equation}
    1-\varepsilon \le \frac{q}{k^2} + (1-q)\left(1-\frac{1}{k^2}\right) + 2\sqrt{q(1-q)}\sqrt{\frac{1}{k^2}\left(1-\frac{1}{k^2}\right)}.
\end{equation}
Rearranging the terms, we arrive at the following inequality:
\begin{equation}\label{eq:solving_q_step_1}
    q\left(1-\frac{2}{k^2}\right) - \varepsilon + \frac{1}{k^2} \le 2\sqrt{q(1-q)}\sqrt{\frac{1}{k^2}\left(1-\frac{1}{k^2}\right)}.
\end{equation}
The right-hand side of the above equation is always non-negative for all $q\in [0,1]$ and $k\in \mathbb{N}$. If the left-hand side of the above inequality is negative, then the above inequality is satisfied. As such, the above inequality is satisfied if the following condition holds:
\begin{equation}
    q\left(1-\frac{2}{k^2}\right) - \varepsilon + \frac{1}{k^2} \le 0 \implies q \le \frac{\varepsilon - \frac{1}{k^2}}{1-\frac{2}{k^2}} \quad \forall \, k\ge 2.
\end{equation}
If $\varepsilon-\frac{1}{k^2} \ge 0$ then the inequality in~\eqref{eq:solving_q_step_1} is satisfied for all $q\in \left[0,\frac{\varepsilon-k^{-2}}{1-2k^{-2}}\right]$ and $k\ge 2$.

Now let us consider the case where the left-hand side of~\eqref{eq:solving_q_step_1} is non-negative, which is true for $k\geq 2$ if $\varepsilon \le \frac{1}{k^2}$ or $q \ge \left(\varepsilon-\frac{1}{k^2}\right)/\left(1-\frac{2}{k^2}\right)$. We can square both sides to get the following inequality:
\begin{align}
    \left(q\left(1-\frac{2}{k^2}\right) - \varepsilon + \frac{1}{k^2}\right)^2 &\le \left(2\sqrt{q(1-q)}\sqrt{\frac{1}{k^2}\left(1-\frac{1}{k^2}\right)}\right)^2\\
    &= \frac{4}{k^2}\left(1-\frac{1}{k^2}\right)q(1-q)\label{eq:solving_q_step_2}.
\end{align}
Setting
\begin{align}
    d &\coloneqq 1-\frac{2}{k^2},\\
    e &\coloneqq \varepsilon - \frac{1}{k^2},\\
    f &\coloneqq \frac{4}{k^2}\left(1-\frac{1}{k^2}\right),
\end{align}
we can rewrite the above inequality as follows:
\begin{align}
    (d\cdot q - e)^2 &\le f\cdot q(1-q)\\
    \implies d^2 q^2 + e^2 -2deq &\le fq - fq^2\\
    \implies (d^2+f)q^2 - (2de+f)q + e^2 &\le 0.
\end{align}
The above inequality is in the standard form of a quadratic inequality. Let us first find each of the coefficients. The coefficient of $q^2$ evaluates to the following:
\begin{align}
    d^2+f &= \left(1-\frac{2}{k^2}\right)^2 + \frac{4}{k^2}\left(1-\frac{1}{k^2}\right)\\
    &= 1 + \frac{4}{k^4} - \frac{4}{k^2} + \frac{4}{k^2} - \frac{4}{k^4}\\
    &= 1.
\end{align}
The coefficient of $q$ evaluates to the following:
\begin{align}
    -2de -f &= -2\left(1-\frac{2}{k^2}\right)\left(\varepsilon-\frac{1}{k^2}\right) - \frac{4}{k^2}\left(1-\frac{1}{k^2}\right)\\
    &= -2\left(\varepsilon - \frac{1}{k^2} - \frac{2\varepsilon}{k^2} + \frac{2}{k^4}\right) - \frac{4}{k^2} + \frac{4}{k^4}\\
    &= -2\left(\varepsilon + \frac{1-2\varepsilon}{k^2}\right).
\end{align}
Finally, the term independent of $q$ is equal to the following:
\begin{align}
    e^2 &= \left(\varepsilon - \frac{1}{k^2}\right)^2\\
    &= \varepsilon^2 + \frac{1}{k^4} - \frac{2\varepsilon}{k^2}.
\end{align}
The quadratic inequality in~\eqref{eq:solving_q_step_2} can now be written as follows:
\begin{equation}\label{eq:solving_q_final_expr}
    q^2 - 2q\left(\varepsilon + \frac{1-2\varepsilon}{k^2}\right) + \varepsilon^2 +\frac{1}{k^4} - \frac{2\varepsilon}{k^2} \le 0.
\end{equation}
The discriminant of the above quadratic expression can be evaluated as follows:
\begin{align}
    &(-2de-f)^2 - 4(d^2+f)e^2\notag \\
    &= 4\left(\varepsilon + \frac{1-2\varepsilon}{k^2}\right)^2 - 4\left(\varepsilon^2 + \frac{1}{k^4} - \frac{2\varepsilon}{k^2}\right)\\
    &= 4\left(\varepsilon^2 + \left(\frac{1-2\varepsilon}{k^2}\right)^2 + 2\varepsilon\left(\frac{1-2\varepsilon}{k^2}\right) - \varepsilon^2 - \frac{1}{k^4} + \frac{2\varepsilon}{k^2}\right)\\
    &= 4\left(\frac{1+4\varepsilon^2 - 4\varepsilon}{k^4} - \frac{1}{k^4}+\frac{2\varepsilon}{k^2}\left(2-2\varepsilon\right)\right)\\
    &= 4\left(\frac{4\varepsilon(\varepsilon-1)}{k^4} + \frac{4\varepsilon(1-\varepsilon)}{k^2}\right)\\
    &= 16\frac{(k^2-1)\varepsilon(1-\varepsilon)}{k^4}.
\end{align}
We can now factor the quadratic expression in~\eqref{eq:solving_q_final_expr} as follows:
\begin{equation}\label{eq:solving_q_factorized}
    \left(q-\alpha_{q+}\right)\!\left(q-\alpha_{q-}\right) \le 0,
\end{equation}
where
\begin{align}
    \alpha_{q\pm} &= \frac{1}{2}\left(2\left(\varepsilon + \frac{1- 2\varepsilon}{k^2}\right) \pm \frac{\sqrt{16(k^2-1)\varepsilon(1-\varepsilon)} }{k^2}\right)\\
    &= \varepsilon + \frac{1-2\varepsilon}{k^2} \pm \frac{2\sqrt{(k^2-1)\varepsilon(1-\varepsilon)}}{k^2}.\label{eq:alpha_q_pm_defn}
\end{align}

Now we are in a position to identify the range of values that $q$ can take for all $\varepsilon \in [0,1]$ and $k\ge 2$.
\begin{itemize}
    \item If $\varepsilon \in \left[0,\frac{1}{k^2}\right]$, then
    \begin{equation}
        \alpha_{q-} \le q \le \alpha_{q+},
    \end{equation}
    where $\alpha_{q-}$ and $\alpha_{q+}$ are defined in~\eqref{eq:alpha_q_pm_defn}.
    \item If $\varepsilon \in \left[\frac{1}{k^2},1\right]$, then
    \begin{equation}\label{eq:q_range_eps_ge_1/k^2}
        q\in \left[0,\min\left\{\frac{\varepsilon - \frac{1}{k^2}}{1-\frac{2}{k^2}},1\right\}\right]\cup\left[\alpha_{q-},\alpha_{q+}\right].
    \end{equation}
\end{itemize}

We can identify the values of $\varepsilon$ such that the two intervals in~\eqref{eq:q_range_eps_ge_1/k^2} overlap. Let us first find the values of $\varepsilon$ that satisfy the following inequality:
\begin{equation}
    \frac{\varepsilon-\frac{1}{k^2}}{1-\frac{2}{k^2}} \le \alpha_{q+} = \varepsilon + \frac{1-2\varepsilon}{k^2} + \frac{2\sqrt{(k^2-1)\varepsilon(1-\varepsilon)}}{k^2}.\label{eq:eps_range_inequality}
\end{equation}
We can rearrange the terms of the above inequality to get the following inequality:
\begin{align}
    \frac{k^2\varepsilon-1}{k^2-2} - \frac{(k^2-2)\varepsilon+ 1}{k^2} &\le \frac{2\sqrt{(k^2-1)\varepsilon(1-\varepsilon)}}{k^2}\\
    \implies \frac{k^4\varepsilon - k^2 - (k^2-2)^2\varepsilon -k^2 + 2}{k^2(k^2-2)} &\le \frac{2\sqrt{(k^2-1)\varepsilon(1-\varepsilon)}}{k^2}\\
    \implies \frac{4(k^2-1)\varepsilon + 2(1-k^2)}{k^2-2} &\le 2\sqrt{(k^2-1)\varepsilon(1-\varepsilon)}\\
    \implies \frac{(k^2-1)(2\varepsilon - 1)}{k^2-2} &\le \sqrt{(k^2-1)\varepsilon(1-\varepsilon)}.\label{eq:eps_range_inequality_common_step}
\end{align}
The above inequality is satisfied for all $\varepsilon\in [0,1/2]$ and $k\ge 2$. Now assuming that $\varepsilon \ge 1/2$ and $k\ge 2$, we can square both sides of the above inequality to get the following inequality:
\begin{equation}\label{eq:eps_range_ineq_common_step_red}
    \frac{(k^2-1)(2\varepsilon-1)^2}{(k^2-2)^2} \le \varepsilon(1-\varepsilon).
\end{equation}
Setting $a \coloneqq (k^2-1)/(k^2-2)^2$, we can rewrite the above inequality as follows:
\begin{align}
    a(2\varepsilon-1)^2 &\le \varepsilon(1-\varepsilon)\\
    \implies (4a+1)\varepsilon^2 - (4a+1)\varepsilon + a &\le 0.
\end{align}
Note that $a$ is a positive number, which implies that $4a+1$ is also a positive number. Therefore, the above quadratic inequality can be factored as follows:
\begin{equation}\label{eq:eps_factored_ineq}
    \left(\varepsilon - \frac{1}{2}\left(1+\frac{1}{\sqrt{4a+1}}\right)\right)\left(\varepsilon - \frac{1}{2}\left(1-\frac{1}{\sqrt{4a+1}}\right)\right) \le 0.
\end{equation}
Substituting the value of $a$, the quantity $\sqrt{4a+1}$ evaluates to the following:
\begin{equation}
    \sqrt{4a+1} = \frac{k^2}{k^2-2}.
\end{equation}
Therefore, the inequality in~\eqref{eq:eps_factored_ineq} can be written as follows:
\begin{align}
    \left(\varepsilon - \frac{1}{2}\left(1+\frac{k^2-2}{k^2}\right)\right)\left(\varepsilon - \frac{1}{2}\left(1-\frac{k^2-2}{k^2}\right)\right) &\le 0\\
    \implies \left(\varepsilon - \left(1-\frac{1}{k^2}\right)\right)\left(\varepsilon - \frac{1}{k^2}\right) &\le 0.
\end{align}
The above inequality is satisfied only when $\varepsilon \in \left[1/k^2,1-1/k^2\right]$. To get the above inequality we assumed that $\alpha \ge 1/2$. Since $1-\frac{1}{k^2} \ge \frac{1}{2}$ for all $k\ge 2$ and every $\varepsilon \in [0,1/2]$ satisfies the inequality in~\eqref{eq:eps_range_inequality}, we conclude that the inequality in~\eqref{eq:eps_range_inequality} is satisfied if and only if $\varepsilon \in \left[0,1-\frac{1}{k^2}\right]$. 

Now let us find the values of $\varepsilon$ for which the following inequality is satisfied:
\begin{equation}
    \frac{\varepsilon-\frac{1}{k^2}}{1-\frac{2}{k^2}} \le \alpha_{q-} = \varepsilon + \frac{1-2\varepsilon}{k^2} - \frac{2\sqrt{(k^2-1)\varepsilon(1-\varepsilon)}}{k^2}.\label{eq:eps_range_inequality_2}
\end{equation}
Following the same steps as above, we arrive at the following inequality:
\begin{equation}\label{eq:eps_range_inequality_common_step_2}
    \frac{(k^2-1)(2\varepsilon - 1)}{k^2-2} \le -\sqrt{(k^2-1)\varepsilon(1-\varepsilon)},
\end{equation}
which is similar to~\eqref{eq:eps_range_inequality_common_step}. This inequality is not satisfied by any value of $\varepsilon \ge 1/2$. Under the assumption that $\varepsilon \le 1/2$, we can square both sides of the above inequality to get the following inequality:
\begin{equation}\label{eq:eps_range_ineq_common_step_red_2}
    \frac{(k^2-1)(2\varepsilon-1)^2}{(k^2-2)^2} \ge \varepsilon(1-\varepsilon).
\end{equation}
From the solution of~\eqref{eq:eps_range_ineq_common_step_red}, we know that the opposite of this quadratic inequality is satisfied when $\varepsilon \in \left[\frac{1}{k^2},1-\frac{1}{k^2}\right]$. Therefore, the inequality in~\eqref{eq:eps_range_ineq_common_step_red_2} is satisfied when $\varepsilon \in \left[0,\frac{1}{k^2}\right]\cup \left[1-\frac{1}{k^2},1\right]$. Recall that we assumed $\varepsilon\in [0,1/2]$ to arrive at~\eqref{eq:eps_range_ineq_common_step_red_2} from~\eqref{eq:eps_range_inequality_common_step_2}. Therefore, for every $k\ge 2$, we conclude that~\eqref{eq:eps_range_inequality_2} is satisfied for all $\varepsilon \in \left[0,\frac{1}{k^2}\right]$. 

Now we have a clearer picture of the range of values $q$ that satisfy~\eqref{eq:q_eq_main} for some fixed value of $\varepsilon \in [0,1]$ and integer $k\ge 2$, which is given as follows:
\begin{equation}
    q \in \begin{cases}
        \left[\alpha_{q-},\alpha_{q+}\right] &\text{if } \varepsilon \in \left[0,\frac{1}{k^2}\right]\\
        \left[0,\alpha_{q+}\right] &\text{if }\varepsilon \in \left[\frac{1}{k^2}, 1-\frac{1}{k^2}\right]\\
        \left[0,\min\left\{\frac{\varepsilon - \frac{1}{k^2}}{1-\frac{2}{k^2}},1\right\}\right] &\text{if }\varepsilon \in \left[1-\frac{1}{k^2},1\right]
    \end{cases},
\end{equation}
where $\alpha_{q+}$ and $\alpha_{q-}$ are defined in~\eqref{eq:alpha_q_pm_defn}. Observe that $\frac{\varepsilon - \frac{1}{k^2}}{1-\frac{2}{k^2}} \ge 1$ for all $\varepsilon \in \left[1-\frac{1}{k^2},1\right]$. Therefore, we can rewrite the above condition as follows:
\begin{equation}
    q \in \begin{cases}
        \left[\alpha_{q-},\alpha_{q+}\right] &\text{if } \varepsilon \in \left[0,\frac{1}{k^2}\right]\\
        \left[0,\alpha_{q+}\right] &\text{if }\varepsilon \in \left[\frac{1}{k^2}, 1-\frac{1}{k^2}\right]\\
        \left[0,1\right] &\text{if }\varepsilon \in \left[1-\frac{1}{k^2},1\right]
    \end{cases}.
\end{equation}
As such, if $\varepsilon \le 1-\frac{1}{k^2}$, then $q$ is bounded from above by the following quantity:
\begin{equation}
    q \le \varepsilon + \frac{1-2\varepsilon}{k^2} +\frac{2\sqrt{(k^2-1)\varepsilon(1-\varepsilon)} }{k^2},
\end{equation}
and otherwise, when $\varepsilon > 1-\frac{1}{k^2}$, we only have the trivial bound $q \leq 1$.

\section{Proof of Theorem~\ref{theo:distill_key_st_ub_hypo_test}}\label{app:distill_key_st_ub_hypo_test}

First let us note that the condition $\varepsilon < J^{\varepsilon}_{\min}\!\left(\rho_{AB}\right)$ is only satisfied if $\varepsilon < \frac{1}{2}$. This is because $J^{\varepsilon}_{\min}\!\left(\rho_{AB}\right) \le 1-\varepsilon$, as stated in Proposition~\ref{prop:J_eps_range}. So we restrict to $\varepsilon < \frac{1}{2}$ for the remainder of the proof.

Let $\mathcal{L}^{\to}_{AB\to A'B'A''B''} $ be a one-way LOCC channel that acts on $\rho_{AB}$ to give a state $\sigma_{A'B'A''B''}$ such that $F(\sigma_{A'B'A''B''},\gamma^k_{A'B'A''B''})\ge 1-\varepsilon$ for some bipartite private state $\gamma^k_{A'B'A''B''}$. That is,
	\begin{equation}
		\mathcal{L}^{\to}_{AB\to A'B'A''B''}\!\left(\rho_{AB}\right) = \sigma_{A'B'A''B''}.
	\end{equation}
	The smooth-min unextendible entanglement of a bipartite state does not increase under the action of a one-way LOCC channel~\cite[Theorem~2]{WWW24}. Therefore,
	\begin{align}
		E^{u,\varepsilon}_{\min}\!\left(\rho_{AB}\right) &\ge E^{u,\varepsilon}_{\min}\!\left(\mathcal{L}^{\to}\!\left(\rho_{AB}\right)\right)\\
		 &= E^{u,\varepsilon}_{\min}\!\left(\sigma_{A'B'A''B''}\right).
\end{align}
Consequently,
\begin{equation}
    J^{\varepsilon}_{\min}\!\left(\rho_{AB}\right) \le J^{\varepsilon}_{\min}\!\left(\sigma_{A'B'A''B''}\right).
\end{equation}
Since $\varepsilon < \frac{1}{2}$, we can use Remark~\ref{rem:nogo_key_distill} to state that one cannot distill any secret bits from a state $\rho_{AB}$ with an error tolerance of $\varepsilon$ if $J^{\varepsilon}_{\min}\!\left(\rho_{AB}\right) > \varsigma\!\left(\varepsilon,2\right)$, where $\varsigma$ is defined in~\eqref{eq:varsig-func}. 

Let us now consider the case when one-shot, one-way secret-key distillation is possible. Proposition~\ref{prop:unext_ent_hyp_test_lb} implies that the following inequality holds for all $\varepsilon \in \left[0,\frac{1}{2}\right)$:
\begin{equation}\label{eq:J_eps_ineq_hypo_test}
    J^{\varepsilon}_{\min}\!\left(\rho_{AB}\right) \le \varsigma\!\left(\varepsilon,k\right) = \varepsilon + \frac{1-2\varepsilon}{k^2} + \frac{2\sqrt{(k^2-1)\varepsilon(1-\varepsilon)}}{k^2}
\end{equation}
if $F\!\left(\mathcal{L}^{\to}_{AB\to A'B'A''B''}\!\left(\rho_{AB}\right),\gamma^k_{A'B'A''B''}\right)\ge 1-\varepsilon$ for some one-way LOCC channel $\mathcal{L}^{\to}_{AB\to A'B'A''B''}$ and some private state $\gamma^k_{A'B'A''B''}$ with $k\ge 2$. We will use $J^{\varepsilon}_{\min}$ as a shorthand for $J^{\varepsilon}_{\min}\!\left(\rho_{AB}\right)$ in the remainder of the proof for convenience. Rearranging the terms in~\eqref{eq:J_eps_ineq_hypo_test}, we arrive at the following inequality:
\begin{align}
    k^2(J^{\varepsilon}_{\min}-\varepsilon)&\le 1-2\varepsilon + 2\sqrt{(k^2-1)\varepsilon(1-\varepsilon)}\\
    \implies (k^2-1)(J^{\varepsilon}_{\min}-\varepsilon) + J^{\varepsilon}_{\min}-\varepsilon &\le 1-2\varepsilon + 2\sqrt{(k^2-1)\varepsilon(1-\varepsilon)}\\
    \implies (k^2-1)(J^{\varepsilon}_{\min}-\varepsilon) + J^{\varepsilon}_{\min}+\varepsilon -1 &\le 2\sqrt{(k^2-1)\varepsilon(1-\varepsilon)}.\label{eq:k_inequality_main_hypo_test}
\end{align}
The above inequality is always satisfied if the left-hand side is non-positive, that is, if
\begin{equation}\label{eq:k_ineq_triv_step}
    (k^2-1)(J^{\varepsilon}_{\min}-\varepsilon) + J^{\varepsilon}_{\min}+\varepsilon -1 \le 0 .
\end{equation}
Note that $J^{\varepsilon}_{\min}+\varepsilon-1$ is always non-positive due to~\eqref{eq:J_eps_range}. Therefore, if $J^{\varepsilon}_{\min} \le \varepsilon$, then the above inequality holds for all $k\in \mathbb{N}$. If $J^{\varepsilon}_{\min} > \varepsilon$, then~\eqref{eq:k_ineq_triv_step} is satisfied when the following condition holds:
\begin{equation}\label{eq:k_ineq_triv_soln}
    k^2-1 \le \frac{1-\varepsilon-J^{\varepsilon}_{\min}}{J^{\varepsilon}_{\min}-\varepsilon}.
\end{equation}
Consequently, if the inequality in~\eqref{eq:k_ineq_triv_soln} holds, then the inequality in~\eqref{eq:k_inequality_main_hypo_test} also holds for all $J^{\varepsilon}_{\min} \in [0,1-\varepsilon]$.

Now let us assume that $(k^2-1)(J^{\varepsilon}_{\min}-\varepsilon) + J^{\varepsilon}_{\min}+\varepsilon -1 > 0$. Then we can square both sides of~\eqref{eq:k_inequality_main_hypo_test} to get the following inequality:
\begin{equation}\label{eq:k_ineq_hypo_test_squared}
    \left((k^2-1)(J^{\varepsilon}_{\min}-\varepsilon) + J^{\varepsilon}_{\min}+\varepsilon -1\right)^2 \le 4(k^2-1)\varepsilon(1-\varepsilon).
\end{equation}
We write the above inequality as follows:
\begin{equation}\label{eq:k_ineq_hypo_test_comp}
    \left(d\cdot x + e\right)^2 \le f\cdot x,
\end{equation}
where we have made the following assignments:
\begin{align}
    x &\coloneqq k^2 - 1,\\
    d &\coloneqq J^{\varepsilon}_{\min} - \varepsilon,\\
    e &\coloneqq J^{\varepsilon}_{\min}+\varepsilon-1,\\
    f &\coloneqq 4\varepsilon(1-\varepsilon).
\end{align}
We can write the inequality in~\eqref{eq:k_ineq_hypo_test_comp} as the following quadratic inequality in~$x$:
\begin{align}
    d^2x^2 + e^2 + 2de\cdot x &\le f\cdot x\\
    \implies d^2x^2 + (2de-f)x + e^2 &\le 0.
\end{align}
The coefficient of $x^2$ evaluates to the following:
\begin{equation}
    d^2 = (J^{\varepsilon}_{\min}-\varepsilon)^2.
\end{equation}
The coefficient of $x$ evaluates to the following:
\begin{align}
    2de-f &= 2(J^{\varepsilon}_{\min}-\varepsilon)(J^{\varepsilon}_{\min}+\varepsilon-1) - 4\varepsilon(1-\varepsilon)\\
    &= -2J^{\varepsilon}_{\min}(1-J^{\varepsilon}_{\min}) - 2\varepsilon(1-\varepsilon).
\end{align}
In what follows, we compute the discriminant of the above quadratic expression:
\begin{align}
    (2de-f)^2 - 4d^2e^2 &= 4d^2e^2+f^2-4def-4d^2e^2\\
    &= f^2-4def\\
    &= 16\varepsilon^2(1-\varepsilon)^2 - 16(J^{\varepsilon}_{\min}-\varepsilon)(J^{\varepsilon}_{\min}+\varepsilon-1)\varepsilon(1-\varepsilon)\\
    &= 16\varepsilon(1-\varepsilon)\left[\varepsilon(1-\varepsilon)-\left(J^{\varepsilon}_{\min}\right)^2+J^{\varepsilon}_{\min}-\varepsilon(1-\varepsilon)\right]\\
    &= 16\varepsilon(1-\varepsilon)J^{\varepsilon}_{\min}(1-J^{\varepsilon}_{\min}).
\end{align}
Note that the coefficient of $x^2$, that is $(J^{\varepsilon}_{\min}-\varepsilon)^2$, is always positive since we have assumed that $J^{\varepsilon}_{\min}>\varepsilon$, which allows us to factor the quadratic expression in~\eqref{eq:k_ineq_hypo_test_squared} as follows:
\begin{equation}\label{eq:x_ineq_factorized}
    \left(x-\beta_{x-}\right)\left(x-\beta_{x+}\right) \le 0,
\end{equation}
where 
\begin{align}
    \beta_{x\pm} &\coloneqq \frac{2J^{\varepsilon}_{\min}(1-J^{\varepsilon}_{\min})+2\varepsilon(1-\varepsilon) \pm 4\sqrt{\varepsilon(1-\varepsilon)J^{\varepsilon}_{\min}(1-J^{\varepsilon}_{\min})}}{2(J^{\varepsilon}_{\min}-\varepsilon)^2}\\
    &= \left(\frac{\sqrt{J^{\varepsilon}_{\min}(1-J^{\varepsilon}_{\min})} \pm \sqrt{\varepsilon(1-\varepsilon)}}{J^{\varepsilon}_{\min}-\varepsilon}\right)^2.\label{eq:beta_x_pm_defn}
\end{align}
The inequality in~\eqref{eq:x_ineq_factorized} is satisfied if and only if $\beta_{x-}\le x \le \beta_{x+}$. Combining~\eqref{eq:k_ineq_triv_soln} and~\eqref{eq:x_ineq_factorized}, we conclude that~\eqref{eq:k_inequality_main_hypo_test} is satisfied if and only if
\begin{equation}\label{eq:k_soln_naive}
    k^2-1 \in \left[0,\frac{1-\varepsilon-J^{\varepsilon}_{\min}}{J^{\varepsilon}_{\min}-\varepsilon}\right] \cup \left[\beta_{x-},\beta_{x+}\right].
\end{equation}

Recall that we had assumed $k^2-1 > \frac{1-\varepsilon-J^{\varepsilon}_{\min}}{J^{\varepsilon}_{\min}-\varepsilon}$ to arrive at~\eqref{eq:k_ineq_hypo_test_squared}, and the inequality in~\eqref{eq:k_inequality_main_hypo_test} holds for all $0\le k^2-1\le  \frac{1-\varepsilon-J^{\varepsilon}_{\min}}{J^{\varepsilon}_{\min}-\varepsilon}$. In what follows, we shall show that $\beta_{x-}\le \frac{1-\varepsilon-J^{\varepsilon}_{\min}}{J^{\varepsilon}_{\min}-\varepsilon}\le \beta_{x+}$ for all $\varepsilon < J^{\varepsilon}_{\min} \le 1-\varepsilon$. 

First, let us consider the following inequality:
\begin{align}
    \sqrt{J^{\varepsilon}_{\min}(1-J^{\varepsilon}_{\min})} &\le \sqrt{\varepsilon(1-\varepsilon)}\label{eq:J_eps_sqrt_rev_ineq}\\
    \Longleftrightarrow J^{\varepsilon}_{\min}(1-J^{\varepsilon}_{\min}) &\le \varepsilon(1-\varepsilon)\\
    \Longleftrightarrow (J^{\varepsilon}_{\min})^2-J^{\varepsilon}_{\min}+\varepsilon(1-\varepsilon) &\ge 0.
\end{align}
The above inequality can be factored as follows:
\begin{equation}
    \left(J^{\varepsilon}_{\min}-\varepsilon\right)\left(J^{\varepsilon}_{\min}-(1-\varepsilon)\right)\ge 0.
\end{equation}
Therefore, the inequality in~\eqref{eq:J_eps_sqrt_rev_ineq} is satisfied if and only if $J^{\varepsilon}_{\min} \le \varepsilon$ or $J^{\varepsilon}_{\min} \ge 1-\varepsilon$, with the inequality being saturated if $J^{\varepsilon}_{\min} = \varepsilon$ or $J^{\varepsilon}_{\min} = 1-\varepsilon$. Thus, we conclude that the following inequality holds for all $J^{\varepsilon}_{\min} \in (\varepsilon,1-\varepsilon]$:
\begin{equation}\label{eq:J_eps_sqrt_ineq}
    \sqrt{J^{\varepsilon}_{\min}(1-J^{\varepsilon}_{\min})} \ge \sqrt{\varepsilon(1-\varepsilon)}.
\end{equation}

Now we prove that $\beta_{x-}\leq  \frac{1-\varepsilon-J^{\varepsilon}_{\min}}{J^{\varepsilon}_{\min}-\varepsilon}$, provided that $J^{\varepsilon}_{\min} \in (\varepsilon, 1-\varepsilon]$. Consider that
\begin{align}
    \beta_{x-} &\leq  \frac{1-\varepsilon - J^{\varepsilon}_{\min}}{J^{\varepsilon}_{\min} - \varepsilon}\\
    \Longleftrightarrow\left(\frac{\sqrt{J^{\varepsilon}_{\min}(1-J^{\varepsilon}_{\min})} - \sqrt{\varepsilon(1-\varepsilon)}}{J^{\varepsilon}_{\min}-\varepsilon}\right)^2 &\leq  \frac{1-\varepsilon - J^{\varepsilon}_{\min}}{J^{\varepsilon}_{\min} - \varepsilon}\\
    \Longleftrightarrow\left(\sqrt{J^{\varepsilon}_{\min}(1-J^{\varepsilon}_{\min})} - \sqrt{\varepsilon(1-\varepsilon)}\right)^2 &\leq  \left(J^{\varepsilon}_{\min} - \varepsilon\right)\left(1-\varepsilon - J^{\varepsilon}_{\min}\right)\\
    \Longleftrightarrow\left(\sqrt{J^{\varepsilon}_{\min}(1-J^{\varepsilon}_{\min})} - \sqrt{\varepsilon(1-\varepsilon)}\right)^2 &\leq  J^{\varepsilon}_{\min} - (J^{\varepsilon}_{\min})^2 - \varepsilon + \varepsilon^2\\
    \Longleftrightarrow\left(\sqrt{J^{\varepsilon}_{\min}(1-J^{\varepsilon}_{\min})} - \sqrt{\varepsilon(1-\varepsilon)}\right)^2 &\leq  J^{\varepsilon}_{\min}(1-J^{\varepsilon}_{\min}) - \varepsilon(1-\varepsilon)\\
    \Longleftrightarrow\sqrt{J^{\varepsilon}_{\min}(1-J^{\varepsilon}_{\min})} - \sqrt{\varepsilon(1-\varepsilon)} &\leq  \sqrt{J^{\varepsilon}_{\min}(1-J^{\varepsilon}_{\min})} +\sqrt{\varepsilon(1-\varepsilon)}.\label{eq:beta_le_cond_mid_steps}\\
    \Longleftrightarrow  0 &\leq   2\sqrt{\varepsilon(1-\varepsilon)}.
\end{align}
To arrive at the penultimate inequality, we have used the fact that
\begin{multline}
    J^{\varepsilon}_{\min}(1-J^{\varepsilon}_{\min}) - \varepsilon(1-\varepsilon) \\= \left(\sqrt{J^{\varepsilon}_{\min}(1-J^{\varepsilon}_{\min})} +\sqrt{\varepsilon(1-\varepsilon)}\right)\left(\sqrt{J^{\varepsilon}_{\min}(1-J^{\varepsilon}_{\min})} -\sqrt{\varepsilon(1-\varepsilon)}\right).
\end{multline}
Since the last inequality $0\leq 2\sqrt{\varepsilon(1-\varepsilon)}$ holds trivially for $\varepsilon \in [0,1]$, we conclude that $\beta_{x-} \le \frac{1-\varepsilon-J^{\varepsilon}_{\min}}{J^{\varepsilon}_{\min}-\varepsilon}$ for all $J^{\varepsilon}_{\min} \in (\varepsilon,1-\varepsilon]$.

To show that $\beta_{x+} \ge \frac{1-\varepsilon-J^{\varepsilon}_{\min}}{J^{\varepsilon}_{\min}-\varepsilon}$, we consider the following inequality:
\begin{equation}
    \beta_{x+} = \left(\frac{\sqrt{J^{\varepsilon}_{\min}(1-J^{\varepsilon}_{\min})} + \sqrt{\varepsilon(1-\varepsilon)}}{J^{\varepsilon}_{\min}-\varepsilon}\right)^2 \geq  \frac{1-\varepsilon-J^{\varepsilon}_{\min}}{J^{\varepsilon}_{\min}-\varepsilon}.
\end{equation}
Following the same steps as before, the above inequality can be transformed into the following inequality:
\begin{align}
    \left(\sqrt{J^{\varepsilon}_{\min}(1-J^{\varepsilon}_{\min})} + \sqrt{\varepsilon(1-\varepsilon)}\right)^2 &\geq  J^{\varepsilon}_{\min}(1-J^{\varepsilon}_{\min}) - \varepsilon(1-\varepsilon)\\
    \Longleftrightarrow\sqrt{J^{\varepsilon}_{\min}(1-J^{\varepsilon}_{\min})} + \sqrt{\varepsilon(1-\varepsilon)} &\geq  \sqrt{J^{\varepsilon}_{\min}(1-J^{\varepsilon}_{\min})} -\sqrt{\varepsilon(1-\varepsilon)},
\end{align}
which holds for all $J^{\varepsilon}_{\min}\in (\varepsilon,1-\varepsilon]$ and $\varepsilon \in [0,1]$. Hence, we conclude the following:
\begin{align}
    \left(\frac{\sqrt{J^{\varepsilon}_{\min}(1-J^{\varepsilon}_{\min})} - \sqrt{\varepsilon(1-\varepsilon)}}{J^{\varepsilon}_{\min}-\varepsilon}\right)^2 &\le \frac{1-\varepsilon-J^{\varepsilon}_{\min}}{J^{\varepsilon}_{\min} - \varepsilon}\label{eq:beta_x-_ub} \\ 
    &\le \left(\frac{\sqrt{J^{\varepsilon}_{\min}(1-J^{\varepsilon}_{\min})} + \sqrt{\varepsilon(1-\varepsilon)}}{J^{\varepsilon}_{\min}-\varepsilon}\right)^2\label{eq:beta_x+_lb}
\end{align}

Recall from~\eqref{eq:k_soln_naive} that~\eqref{eq:k_inequality_main_hypo_test} is satisfied for all $J^{\varepsilon}_{\min}\in (\varepsilon,1-\varepsilon]$ if and only if $k^2-1\in\left[0,\frac{1-\varepsilon-J^{\varepsilon}_{\min}}{J^{\varepsilon}_{\min}-\varepsilon}\right]\cup [\beta_{x-},\beta_{x+}]$. The inequalities in~\eqref{eq:beta_x-_ub} and~\eqref{eq:beta_x+_lb} further reveal that the inequality in~\eqref{eq:k_inequality_main_hypo_test} is satisfied for all $J^{\varepsilon}_{\min} \in (\varepsilon,1-\varepsilon]$ if and only if the following condition holds:
\begin{equation}\label{eq:k^2-1_range_final}
    0\le k^2-1 \le \left(\frac{\sqrt{J^{\varepsilon}_{\min}(1-J^{\varepsilon}_{\min})} + \sqrt{\varepsilon(1-\varepsilon)}}{J^{\varepsilon}_{\min}-\varepsilon
    }\right)^2.
\end{equation}

 The quantity $\log_2 k$ is interpreted as the number of secret bits that can be distilled from the state $\rho_{AB}$ using the one-way LOCC channel $\mathcal{L}^{\to}_{AB\to A'B'A''B''}$. Therefore, we rewrite the condition in~\eqref{eq:k^2-1_range_final} as the following upper bound on $\log_2 k$:
\begin{equation}\label{eq:secret_key_ub_hypo_test}
	\log_2 k \le \frac{1}{2}\log_2\!\left[\left(\frac{\sqrt{J^{\varepsilon}_{\min}(1-J^{\varepsilon}_{\min})}+\sqrt{\varepsilon(1-\varepsilon)}}{J^{\varepsilon}_{\min}-\varepsilon}\right)^2+1\right] 
 \end{equation}
Since the inequality in~\eqref{eq:secret_key_ub_hypo_test} holds for every integer $k \ge 2$, every private state $\gamma^k_{A'B'A''B''}$, and every one-way LOCC channels $\mathcal{L}^{\to}_{AB\to A'B'A''B''}$, we conclude~\eqref{eq:dist_key_hypo_test_ub}.

\section{Proof of Lemma~\ref{lem:monotoncity_obj_func}}\label{app:monotonicity_obj_func}

In this section we show that the following function:
\begin{equation}
    f(J,\varepsilon) \coloneqq \frac{1}{2}\log_2\!\left[\left(\frac{\sqrt{J(1-J)}+\sqrt{\varepsilon(1-\varepsilon)}}{J-\varepsilon}\right)^2+1\right]
\end{equation}
decreases monotonically with increasing $J$ and increases monotonically with increasing $\varepsilon$ for all $\varepsilon \in [0,1]$ and $J\in (\varepsilon,1-\varepsilon]$.

First, let us analyze the monotonicity of $f(J,\varepsilon)$ in $J$. The logarithm function is monotonic in its argument. Therefore, we only need to check the monotonicity of the following function:
    \begin{equation}
        g(J,\varepsilon) \coloneqq \left(\frac{\sqrt{J(1-J)}+\sqrt{\varepsilon(1-\varepsilon)}}{J-\varepsilon}\right)^2.
    \end{equation}
    Note that the quantity $\frac{\sqrt{J(1-J)}+\sqrt{\varepsilon(1-\varepsilon)}}{J-\varepsilon}$  is non-negative for all $\varepsilon < J \le 1$ and $0\le \varepsilon \le 1$. Therefore, $g(J,\varepsilon)$ is monotonic in $J$ if $\sqrt{g(J,\varepsilon)}$ is monotonic in $J$ in the given domain.

    Let us compute the derivative of $\sqrt{g(J,\varepsilon)}$ with respect to $J$.
    \begin{equation}\label{eq:sqrt_arg_differential}
        \frac{d}{dJ}\sqrt{g(J,\varepsilon)}
        = \frac{1}{(J-\varepsilon)^2}\left((J-\varepsilon)\!\left(\frac{1}{2}\cdot\frac{1-2J}{\sqrt{J(1-J)}}\right) - \left(\sqrt{J(1-J)}+\sqrt{\varepsilon(1-\varepsilon)}\right)\right).
    \end{equation}
    Set $h \coloneqq J- \varepsilon$ so that $h\in (0,1-2\varepsilon]$. The first term of the above expression can then be expressed as follows:
    \begin{align}
        (J-\varepsilon)\!\left(\frac{1}{2}\cdot\frac{1-2J}{\sqrt{J(1-J)}}\right)
        &= \frac{h(1-2h-2\varepsilon)}{2\sqrt{(h+\varepsilon)(1-h-\varepsilon)}}\\
        &= \frac{h(2-2h-2\varepsilon)}{2\sqrt{(h+\varepsilon)(1-h-\varepsilon)}} - \frac{h}{2\sqrt{(h+\varepsilon)(1-h-\varepsilon)}}\\
        &= \frac{h\sqrt{1-h-\varepsilon}}{\sqrt{h+\varepsilon}} - \frac{h}{2\sqrt{(h+\varepsilon)(1-h-\varepsilon)}}\\
        &\le \sqrt{(h+\varepsilon)(1-h-\varepsilon)} - \frac{h}{2\sqrt{(h+\varepsilon)(1-h-\varepsilon)}}\\
        &= \sqrt{J(1-J)} - \frac{J-\varepsilon}{2\sqrt{J(1-J)}},
    \end{align}
    where we have used the fact that $h \le h+\varepsilon$ to arrive at the inequality. Substituting the above inequality in~\eqref{eq:sqrt_arg_differential},
    \begin{align}
        \frac{d}{dJ}\sqrt{g(J,\varepsilon)}
        &\le \frac{1}{(J-\varepsilon)^2}\left(\sqrt{J(1-J)} - \frac{J-\varepsilon}{2\sqrt{J(1-J)}} - \sqrt{J(1-J)} - \sqrt{\varepsilon(1-\varepsilon)}\right)\\
        &= -\frac{1}{(J-\varepsilon)^2}\left(\frac{J-\varepsilon}{2\sqrt{J(1-J)}} + \sqrt{\varepsilon(1-\varepsilon)}\right),
    \end{align}
    which is negative for all $J\in (\varepsilon,1-\varepsilon]$ and $\varepsilon\in [0,1]$. Therefore, $\sqrt{g(J,\varepsilon)}$ decreases monotonically with $J$ in the given domain, and consequently, $f(J,\varepsilon)$ decreases monotonically with $J$ in the same domain.

    Now let us analyze the monotonicity of $f(J,\varepsilon)$ in $\varepsilon$. Once again, we only need to determine the monotonicity of $\sqrt{g(J,\varepsilon)}$ in $\varepsilon$ to determine the monotonicity of $f(J,\varepsilon)$ in $\varepsilon$. Taking the derivative of $\sqrt{g(J,\varepsilon)}$, we arrive at the following equality:
    \begin{align}
        \frac{d}{d\varepsilon}\sqrt{g(J,\varepsilon)} &= \frac{d}{d\varepsilon} \left(\frac{\sqrt{J(1-J)}+\sqrt{\varepsilon(1-\varepsilon)}}{J-\varepsilon}\right)\\
        &= \frac{1}{(J-\varepsilon)^2}\left((J-\varepsilon)\left(\frac{1}{2}\cdot \frac{1-2\varepsilon}{\sqrt{\varepsilon(1-\varepsilon)}}\right)+\sqrt{J(1-J)} +\sqrt{\varepsilon(1-\varepsilon)}\right)\\
        &= \frac{1}{(J-\varepsilon)^2}\left(\frac{J-\varepsilon}{2\sqrt{\varepsilon(1-\varepsilon)}} - \frac{\varepsilon(J-\varepsilon)}{\sqrt{\varepsilon(1-\varepsilon)}} + \sqrt{J(1-J)} + \sqrt{\varepsilon(1-\varepsilon)}\right)\\
        &=\frac{1}{(J-\varepsilon)^2}\left(\frac{J-\varepsilon}{2\sqrt{\varepsilon(1-\varepsilon)}} + \frac{\varepsilon(1-\varepsilon)-\varepsilon(J-\varepsilon)}{\sqrt{\varepsilon(1-\varepsilon)}} + \sqrt{J(1-J)}\right)\\
        &= \frac{1}{(J-\varepsilon)^2}\left(\frac{J-\varepsilon}{2\sqrt{\varepsilon(1-\varepsilon)}} + \frac{\varepsilon(1-J)}{\sqrt{\varepsilon(1-\varepsilon)}} + \sqrt{J(1-J)}\right).
    \end{align}
    The above expression is strictly positive for all $\varepsilon \in [0,1]$ and all $J \in (\varepsilon,1-\varepsilon]$. Therefore, $\sqrt{g(J,\varepsilon)}$ is monotonically increasing in $\varepsilon$ in the given domain, and consequently, $f(J,\varepsilon)$ is monotonically increasing in $\varepsilon$ in the same domain.

\section{Proof of Proposition~\ref{prop:smooth_min_ch_range}}\label{app:smooth_min_ch_range}

In this section, we show that the smooth-min unextendible entanglement of a channel lies in the following range:
\begin{equation}
    -\frac{1}{2}(1-\varepsilon)\le E^{u,\varepsilon}_{\min}\!\left(\mathcal{N}_{A\to B}\right) \le \log_2d -\frac{1}{2}(1-\varepsilon).
\end{equation}

The lower bound on the smooth-min unextendible entanglement can be obtained from Proposition~\ref{prop:J_eps_range} and Lemma~\ref{lem:unext_ent_state_le_unext_ent_ch_gen}. For every quantum state $\rho_{RA}$, 
\begin{align}
    E^{u,\varepsilon}_{\min}\!\left(\mathcal{N}_{A\to B}\right)&\ge E^{u,\varepsilon}_{\min}\!\left(\mathcal{N}_{A\to B}\!\left(\rho_{RA}\right)\right)\\
    &\ge -\frac{1}{2}\log_2\!\left(1-\varepsilon\right),
\end{align}
where the first inequality follows from Lemma~\ref{lem:unext_ent_state_le_unext_ent_ch_gen} and the second inequality follows from Proposition~\ref{prop:J_eps_range}.

To obtain an upper bound on the smooth-min unextendible entanglement of a channel, consider the smooth-min unextendible entanglement of the identity channel. Recall the inequality in~\eqref{eq:smooth_min_vs_sandwich_ineq}. Setting $\rho \to \operatorname{id}_{A\to B}(\rho_{RA})$, $\sigma \to \mathcal{M}_{A\to E}(\rho_{RA})$, and $\alpha \to \infty$, we arrive at the following inequality:
\begin{equation}
    D^{\varepsilon}_{\min}\!\left(\operatorname{id}_{A\to B}\!\left(\rho_{RA}\right)\middle\Vert\mathcal{M}_{A\to E}\!\left(\rho_{RA}\right)\right) \le D_{\max}\!\left(\operatorname{id}_{A\to B}\!\left(\rho_{RA}\right)\middle\Vert\mathcal{M}_{A\to E}\!\left(\rho_{RA}\right)\right) - \log_2(1-\varepsilon),
\end{equation}
where $\mathcal{M}_{A\to E}$ is an arbitrary quantum channel and systems $B$ and $E$ are isomorphic. Since the above inequality holds for every state $\rho_{RA}$, we can take a supremum over all states and arrive at the following inequality between the smooth-min relative entropy of channels and the max-relative entropy of channels:
\begin{equation}\label{eq:smooth_min_vs_dmax_channels}
    D^{\varepsilon}_{\min}\!\left(\operatorname{id}_{A\to B}\middle\Vert\mathcal{M}_{A\to E}\right) \le D_{\max}\!\left(\operatorname{id}_{A\to B}\middle\Vert\mathcal{M}_{A\to E}\right) - \log_2(1-\varepsilon).
\end{equation}
In~\cite{DFW18, WBHK20}, it was shown that the following equality holds for the max-relative entropy of channels:
\begin{equation}\label{eq:Dmax_saturate_max_ent}
    D_{\max}\!\left(\mathcal{N}_{A\to B}\Vert\mathcal{M}_{A\to B}\right) = D_{\max}\!\left(\mathcal{N}_{A\to B}\!\left(\Phi^d_{RA}\right)\middle\Vert \mathcal{M}_{A\to B}\!\left(\Phi^d_{RA}\right)\right),
\end{equation}
where $d$ is the dimension of the input system of the channel and $\Phi^d_{RA}$ is the maximally entangled state with Schmidt rank $d$. Therefore, we can rewrite~\eqref{eq:smooth_min_vs_dmax_channels} as follows: 
\begin{equation}\label{eq:smooth_min_vs_dmax_Choi}
    D^{\varepsilon}_{\min}\!\left(\operatorname{id}_{A\to B}\middle\Vert\mathcal{M}_{A\to E}\right) \le D_{\max}\!\left(\Phi^d_{RB}\middle\Vert\mathcal{M}_{A\to E}\!\left(\Phi^d_{RA}\right)\right) - \log_2(1-\varepsilon).
\end{equation}

Any channel that lies in the set $\mathcal{F}\!\left(\operatorname{id}_{A\to B}\right)$ is a trace and replace channel; that is, it is of the following form:
\begin{equation}
    \mathcal{M}_{A\to E}\!\left(\cdot\right) = \operatorname{Tr}\!\left[\cdot\right]\sigma_E,
\end{equation}
where $\sigma_E$ is a quantum state. Therefore, the inequality in~\eqref{eq:smooth_min_vs_dmax_Choi} leads to the following inequality:
\begin{align}
    \inf_{\mathcal{M}\in \mathcal{F}(\mathcal{N})}D^{\varepsilon}_{\min}\!\left(\operatorname{id}_{A\to B}\middle\Vert\mathcal{M}_{A\to E}\right) &\le \inf_{\mathcal{M}\in \mathcal{F}(\mathcal{N})}D_{\max}\!\left(\Phi^d_{RB}\middle\Vert\mathcal{M}_{A\to E}\!\left(\Phi^d_{RA}\right)\right) - \log_2(1-\varepsilon)\\
    &= \inf_{\sigma_E\in \mathcal{S}(E)}D_{\max}\!\left(\Phi^d_{RB}\middle\Vert\pi_R\otimes \sigma_E\right) - \log_2(1-\varepsilon),\label{eq:smooth_min_vs_dmax_max_ent}
\end{align}
where $\pi_R$ is the maximally mixed state. Moreover, an arbitrary extension of the maximally entangled state is of the following form $\Phi^d_{AB}\otimes \tau_E$ because the maximally entangled state is a pure state. As such, every state in the set $\mathcal{F}\!\left(\Phi^d_{RB}\right)$ can be written as $\pi_R\otimes \tau_E$ for some $\tau_E \in \mathcal{S}(E)$. Therefore, the max-unextendible entanglement of the maximally entangled state can be written as follows:
\begin{align}
    E^u_{\max}\!\left(\Phi^d_{AB}\right) &= \inf_{\tau_E\in \mathcal{S}(E)}\frac{1}{2}D_{\max}\!\left(\Phi^d_{RB}\middle\Vert\pi_R\otimes \tau_E\right)\\
    &\ge \frac{1}{2}D^{\varepsilon}_{\min}\!\left(\operatorname{id}_{A\to B}\middle\Vert\mathcal{M}_{A\to E}\right) + \frac{1}{2}\log_2(1-\varepsilon)\\
    &= E^{u,\varepsilon}_{\min}\!\left(\operatorname{id}_{A\to B}\right) + \frac{1}{2}\log_2(1-\varepsilon),
\end{align}
where the first inequality follows from~\eqref{eq:smooth_min_vs_dmax_max_ent} and the last equality follows from the definition of smooth-min unextendible entanglement of channels.

The max-unextendible entanglement of a maximally entangled state with Schmidt rank $d$ is equal to $\log_2 d$ as shown in~\cite[Proposition~11]{WWW24}. Therefore, 
\begin{equation}
    E^{u,\varepsilon}_{\min}\!\left(\operatorname{id}_{A\to B}\right) \le \log_2 d - \frac{1}{2}\log_2(1-\varepsilon).
\end{equation}
The converse of the above inequality is also true as Lemma~\ref{lem:unext_ent_state_le_unext_ent_ch_gen} implies the following inequality:
\begin{align}
    E^{\varepsilon,u}_{\min}\!\left(\operatorname{id}_{A\to B}\right) &\ge E^{\varepsilon,u}_{\min}\!\left(\operatorname{id}_{A\to B}\!\left(\Phi^d_{AB}\right)\right)\\
    &= E^{\varepsilon,u}_{\min}\!\left(\Phi^d_{AB}\right)\\
    &= \log_2 d - \frac{1}{2}\log_2(1-\varepsilon),
\end{align}
where the final equality follows from Proposition~\ref{prop:max_ent_unext_ent_hypo_test}. Therefore,
\begin{equation}
    E^{u,\varepsilon}_{\min}\!\left(\operatorname{id}_{A\to B}\right) = \log_2 d - \frac{1}{2}\log_2(1-\varepsilon).
\end{equation}
Since
\begin{equation}
    E^{u,\varepsilon}_{\min}\!\left(\mathcal{N}_{A\to B}\right) \le \min\!\left\{E^{u,\varepsilon}_{\min}\!\left(\operatorname{id}_{A\to C}\right),E^{u,\varepsilon}_{\min}\!\left(\operatorname{id}_{B\to D}\right)\right\},
\end{equation}
where $\operatorname{dim}(A) = \operatorname{dim}(C)$ and $\operatorname{dim}(B) = \operatorname{dim}(D)$, we conclude the second inequality in the statement of the proposition.

\section{Proof of Proposition~\ref{prop:geo_unext_eras_alpha}}\label{app:geo_unext_ent_eras}

In this section we derive an expression for the $\alpha$-geometric unextendible entanglement of the $d$-dimensional erasure channel, for $\alpha \in (0,1)\cup(1,2]$. An upper bound on the $\alpha$-geometric unextendible entanglement of the $d$-dimensional erasure channel was obtained in~\cite[Appendix J]{SW24_channels}. Here we show that the inequality stated in~\cite{SW24_channels} is, in fact, an equality.

Lemma~\ref{lem:unext_ent_state_le_unext_ent_ch_gen} implies the following inequality:
\begin{equation}\label{eq:eras_st_gen_div_le_eras_ch}
    \mathbf{E}^u\!\left(\mathcal{E}^p_{A\to B}\!\left(\Phi^d_{RA}\right)\right) = \mathbf{E}^u\!\left((1-p)~\Phi^d_{AB} + p\frac{I_A}{d}\otimes |e\rangle\!\langle e|_B\right)\le \mathbf{E}^u\!\left(\mathcal{E}^p_{A\to B}\right),
\end{equation}
where $\Phi^d_{RA}$ is the maximally entangled state of Schmidt rank $d \in \mathbb{N}$, $|e\rangle$ is the erasure symbol, and $\mathcal{E}^p_{A\to B}$ is a $d$-dimensional erasure channel with erasure probability $p\in [0,1]$. The bipartite state obtained after sending one share of a maximally entangled state through an erasure channel is called an erased state, which we denote as follows:
\begin{equation}\label{eq:eras_st_defn}
    \eta^p_{AB} \coloneqq (1-p)~\Phi^d_{AB} + p\frac{I_A}{d}\otimes |e\rangle\!\langle e|_B.
\end{equation}
In what follows, we will obtain an analytical expression for the $\alpha$-geometric unextendible entanglement of a $d$-dimensional erased state $\eta^p_{AB}$ and show that it matches the upper bound on the $\alpha$-geometric unextendible entanglement of the erasure channel $\mathcal{E}^p_{A\to B}$ obtained in~\cite[Appendix J]{SW24_channels}. The inequality in~\eqref{eq:eras_st_gen_div_le_eras_ch} then simply implies that the $\alpha$-geometric unextendible entanglement of the erasure channel $\mathcal{E}^p_{A\to B}$ is equal to the $\alpha$-geometric unextendible entanglement of the erased state $\eta^p_{AB}$, thus establishing the equality stated in Proposition~\ref{prop:geo_unext_eras_alpha}.

Let us analyze the generalized unextendible entanglement of an erased state. We will restrict our discussion to $p<\frac{1}{2}$ since $\widehat{E}^u_{\alpha}\!\left(\eta^p_{AB}\right) = 0$ for all $p\ge \frac{1}{2}$.
\begin{lemma}\label{lem:eras_st_opt_marg_gen_div}
    For $p\in [0,1/2)$, let $\eta^p_{AB}$ be the erased state defined in~\eqref{eq:eras_st_defn}. The generalized unextendible entanglement of the erased state is equal to the following:
    \begin{equation}
        \mathbf{E}^u\!\left(\eta^p_{AB}\right) = \inf_{b\in [0,1-p]}\frac{1}{2}\mathbf{D}\!\left(\eta^p_{AB}\middle\Vert\omega^{p,b}_{AE'}\right),
    \end{equation}
    where
    \begin{equation}\label{eq:omega_pb_defn}
        \omega^{p,b}_{AB} = p~\Phi^d_{AB} + b~\frac{I_{A}\otimes \Pi_{B}}{d^2} + (1-p-b)\frac{I_A}{d}\otimes |e\rangle\!\langle e|_{B}
    \end{equation}
    and 
    \begin{equation}
        \Pi \coloneqq |0\rangle\!\langle 0| + \cdots + |d-1\rangle\!\langle d-1|.
    \end{equation}
\end{lemma}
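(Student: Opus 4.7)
The plan is to exploit the $U(d)$ symmetry of the erased state together with the Stinespring structure of extensions to reduce the infimum defining $\mathbf{E}^u(\eta^p_{AB})$ to the one-parameter family $\{\omega^{p,b}_{AE'}\}_{b\in[0,1-p]}$.

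First, I would note that $\eta^p_{AB}$ is invariant under the group action $U_A \otimes V_B$ for all $U \in U(d)$, where $V_B$ acts as $\overline{U}$ on the $d$-dimensional subspace $B_d$ and as the identity on the erasure component spanned by $|e\rangle_B$. Given any extension $\omega_{ABE}\in\operatorname{Ext}(\eta^p_{AB})$, the twirled extension $\bar{\omega}_{ABE} := \int dU\,(U_A\otimes V_B\otimes V_E)\,\omega_{ABE}\,(U_A\otimes V_B\otimes V_E)^{\dagger}$ is again in $\operatorname{Ext}(\eta^p_{AB})$ (because $\eta^p_{AB}$ is fixed by the action), and by joint convexity of the generalized divergence we obtain $\mathbf{D}(\eta^p_{AB}\Vert\operatorname{Tr}_B[\bar{\omega}_{ABE}]) \le \mathbf{D}(\eta^p_{AB}\Vert\operatorname{Tr}_B[\omega_{ABE}])$. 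Hence the infimum may be taken over $(U_A\otimes V_E)$-invariant marginals $\omega_{AE}$.

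Second, I would parameterize these symmetric marginals via Stinespring. Fix a purification $|\phi\rangle_{ABF}$ of $\eta^p_{AB}$; a direct computation (where cross terms vanish because $|e\rangle_B$ is orthogonal to $B_d$) shows $\xi_{AF} := \operatorname{Tr}_B[|\phi\rangle\!\langle\phi|_{ABF}] = p\,\Phi^d_{AF_d} + (1-p)(I_A/d)\otimes|f_0\rangle\!\langle f_0|_F$, where $F$ splits as $F_d \oplus \mathbb{C}|f_0\rangle$. Every extension has the form $(\operatorname{id}_{AB}\otimes\mathcal{N}_{F\to E})(|\phi\rangle\!\langle\phi|_{ABF})$ for some channel $\mathcal{N}$, so $\omega_{AE} = (\operatorname{id}_A\otimes\mathcal{N})(\xi_{AF})$; by the symmetry reduction one may take $\mathcal{N}$ covariant. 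Since both $F$ and $E$ decompose as $\mathbb{C}\oplus V^*$ under the $U(d)$ action, Schur's lemma forces the resulting $\omega_{AE}$ to lie in the affine family spanned by $\Phi^d_{AE_d}$, $(I_A\otimes\Pi_E)/d^2$, and $(I_A/d)\otimes|e\rangle\!\langle e|_E$.

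The main obstacle is showing that the optimal symmetric marginal carries coefficient exactly $p$ on $\Phi^d_{AE_d}$, which is what pins the infimum to the advertised one-parameter family. My plan is to identify the ``natural'' covariant $\mathcal{N}$ (sending $|f_0\rangle\mapsto|e\rangle_E$ and restricting to the identity on the $V^*$-block $F_d \to E_d$) as achieving this maximum coefficient, producing $\omega^{p,0}_{AE'}$; varying how the $(1-p)(I_A/d)\otimes|f_0\rangle\!\langle f_0|_F$ summand is redistributed between the maximally mixed state on $E_d$ and $|e\rangle_E$ then sweeps out the full family $\{\omega^{p,b}_{AE'}\}_{b\in[0,1-p]}$. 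To rule out symmetric marginals with strictly smaller $\Phi^d$-coefficient, I would apply the data-processing inequality to a covariant channel on $E$ that fixes $\eta^p_{AB}$ and maps the non-optimal marginal to a member of the advertised family; constructing such a channel, which captures the irreversibility of contracting the $V^*$-Schur block below unity while respecting block invariance of $\eta^p_{AB}$, is the central technical step.
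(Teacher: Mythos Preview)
Your overall strategy---twirl to reduce to a symmetric family, parameterize via Stinespring, then use data processing to pin down the $\Phi^d$-coefficient---is the same as the paper's. Two small points and one real gap.

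\textbf{Minor.} First, a generalized divergence is only assumed to satisfy data processing, not joint convexity; the correct justification for the twirling reduction is that the twirl is a channel fixing $\eta^p$, so data processing gives $\mathbf{D}(\eta^p\Vert\sigma)\ge\mathbf{D}(\eta^p\Vert\mathcal{T}(\sigma))$. Second, twirling alone does not kill the off-diagonal block $\delta\,|\Phi^d\rangle_{AE'}\!\langle e|_{E'}$ (this block carries the trivial representation of $U\otimes\overline U$), so your Schur argument actually yields a four-parameter family; the paper inserts a dephasing channel $\Delta_{E'}$ (which also fixes $\eta^p$) to remove it.

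\textbf{The gap.} Your final step proposes a covariant channel \emph{on $E$} that fixes $\eta^p_{AB}$ and pushes the $\Phi^d$-coefficient from $a<p$ up to $p$. This cannot work with a channel acting on the $B\cong E'$ system alone. Indeed, if $\mathcal{C}_{E'}$ is covariant and fixes $\eta^p$, then matching coefficients in $(1-p)\mathcal{C}_{E'}(\Phi^d)+p\,\pi_A\otimes\mathcal{C}_{E'}(|e\rangle\!\langle e|)=\eta^p$ forces $\mathcal{C}_{E'}(\Phi^d)=\Phi^d$ and $\mathcal{C}_{E'}(|e\rangle\!\langle e|)=|e\rangle\!\langle e|$, so $\mathcal{C}_{E'}$ acts trivially on every $\omega^{a,b}$. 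The paper's resolution is a \emph{bipartite} channel $\mathcal{R}^s_{AE'}$: measure $\{\Pi_{E'},|e\rangle\!\langle e|_{E'}\}$, and upon the non-erasure outcome replace the joint state on $AE'$ by $\Phi^d_{AE'}$ with probability $1-s$ (otherwise do nothing). This fixes $\eta^p$ because its non-erasure part is already $\Phi^d$, yet it maps $\omega^{a,b}\mapsto\omega^{a+(1-s)b,\,sb}$; choosing $s=(a+b-p)/b$ sends $a\to p$. The entanglement-creating replacement on $A$ \emph{and} $E'$ is essential---your ``irreversibility'' intuition runs the wrong direction, and a local covariant channel on $E'$ cannot supply it.
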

\begin{proof}
    Let $\eta^p_{AB} \coloneqq (1-p)~\Phi^d_{AB} + p\frac{I_A}{d}\otimes |e\rangle\!\langle e|_B$ be an erased state. Consider the following purification of $\eta^p_{AB}$:
    \begin{equation}
        |\psi^{\eta}\rangle_{ABE} \coloneqq \sqrt{1-p}|\Phi^d\rangle_{AB}\otimes|e\rangle_E + \sqrt{p}|\Phi^d\rangle_{AE}\otimes|e\rangle_B,
    \end{equation}
    where systems $B$ and $E$ are isomorphic to each other. For clarity, we define the following projector:
    \begin{equation}
        \Pi\coloneqq |0\rangle\!\langle 0| + \cdots |d-1\rangle\!\langle d-1|,
    \end{equation}
    which is the projector onto the subspace orthogonal to $|e\rangle\!\langle e|$. The maximally mixed state of a $d$-dimensional system is defined as $\pi \coloneqq \frac{\Pi}{d}$. Since system $A$ does not have any component of the erasure symbol, $I_A = \Pi_A$. 
    
    Using the correspondence between a purification and an extension of a state established in~\cite{CW04}, we can write an arbitrary extension of the erased state as $\mathcal{N}_{E\to E'}\!\left(\psi^{\eta}_{ABE}\right)$. Therefore, any state $\sigma_{AE'} \in \mathcal{F}(\eta^p_{AB})$ can be written as follows:
    \begin{align}
        \sigma_{AE'} &= \operatorname{Tr}_{B}\!\left[\mathcal{N}_{E\to E'}\!\left(\psi^{\eta}_{ABE}\right)\right]\\
        &= \mathcal{N}_{E\to E'}\!\left(\operatorname{Tr}_B\!\left[\psi^{\eta}_{ABE}\right]\right)\\
        &= \mathcal{N}_{E\to E'}\!\left((1-p)\pi_A\otimes |e\rangle\!\langle e|_{E} + p~\Phi^d_{AE}\right)\\
        &= (1-p)\pi_A\otimes \mathcal{N}_{E\to E'}\!\left(|e\rangle\!\langle e|_{E}\right) + p~\mathcal{N}_{E\to E'}\!\left(\Phi^d_{AE}\right),
    \end{align}
    where systems $B$, $E$, and $E'$ are all isomorphic to each other.

    Let us consider the following partially dephasing channel:
    \begin{equation}
        \Delta_{E'}\!\left(\cdot\right) = \Pi_{E'}\!\left(\cdot\right)\Pi_{E'} + |e\rangle\!\langle e|_{E'}\!\left(\cdot\right)|e\rangle\!\langle e|_{E'}.
    \end{equation}
     Applying this dephasing channel on $\sigma_{AE'}$ leads to a state of the following form:
    \begin{equation}\label{eq:dephased_state}
        \Delta_{E'}\!\left(\sigma_{AE'}\right) = (1-x)\rho_{AE'} + x\pi_A\otimes |e\rangle\!\langle e|_{E'},
    \end{equation}
    where $x \in [0,1]$ and $\rho_{AE'}$ is a quantum state such that $\rho_{AE'}|e\rangle_{E'} = 0$. 

    Let $U_{A}$ be an arbitrary unitary operator acting on the Hilbert space of system $A$. The corresponding operator acting on the Hilbert space of system $E'$ has the following property:
    \begin{equation}
        U^{\dagger}_{E'}U_{E'} = U_{E'}U^{\dagger}_{E'} = \Pi_{E'}.
    \end{equation}
    We can promote $U_{E'}$ to a unitary operator on the Hilbert space of system $E'$ as follows:
    \begin{equation}
        V^U_{E'} = U_{E'} + |e\rangle\!\langle e|_{E'}.
    \end{equation}
    Note that 
    \begin{equation}\label{eq:V_eq_U}
        V^U_{E'}\rho_{AE'}\left(V^U_{E'}\right)^{\dagger} = U_{E'}\rho_{AE'}U^{\dagger}_{E'},
    \end{equation}
    and
    \begin{equation}
        V^U_{E'}|e\rangle\!\langle e|_{E'}\left(V^U_{E'}\right)^{\dagger} = |e\rangle\!\langle e|_{E'}.
    \end{equation}
    Now consider the following twirling channel:
    \begin{equation}
        \mathcal{T}_{AE'} \coloneqq \int dU\left(\overline{U}_A\otimes V^U_{E'}\right)(\cdot)\left(\overline{U}_A\otimes V^U_{E'}\right)^{\dagger},
    \end{equation}
    which can be implemented by local operations and common randomness (LOCR). The action of this twirling channel on the dephased state in~\eqref{eq:dephased_state} results in the following state:
    \begin{align}
        \mathcal{T}_{AE'}\circ\Delta_{E'}\!\left(\sigma_{AE'}\right) &= (1-x)\mathcal{T}_{AE'}\!\left(\rho_{AE'}\right) + x\int dU U_A\pi_AU^{\dagger}_A\otimes |e\rangle\!\langle e|_{E'}\\
        &= (1-x)\int dU \left(\overline{\mathcal{U}}_A\otimes \mathcal{U}_{E'}\right)(\rho_{AE'}) + x\pi_A\otimes |e\rangle\!\langle e|_{E'}\\
        &= q\Phi^d_{AE'} + (1-q-x)\frac{\Pi_{A}\otimes \Pi_{E'}-\Phi^d_{AE'}}{d^2-1} + x\pi_A\otimes|e\rangle\!\langle e|_{E'}\label{eq:twirled_dephased_st_q_form},
    \end{align}
    where $q\coloneqq (1-x)\operatorname{Tr}\!\left[\rho_{AE'}\Phi^d_{AE'}\right]$. In the above, the second equality follows from~\eqref{eq:V_eq_U}, and the final equality is a consequence of the following equality~\cite{HH99}:
    \begin{equation}
        \int dU \left(\overline{\mathcal{U}}_A\otimes \mathcal{U}_{E'}\right)(\tau_{AE'}) = \operatorname{Tr}\!\left[\tau_{AE'}\Phi^d_{AE'}\right]\Phi^d_{AE'} + \operatorname{Tr}\!\left[\tau_{AE'}\!\left(\Pi_{A}\otimes \Pi_{E'}-\Phi^d_{AE'}\right)\right]\frac{\Pi_{A}\otimes \Pi_{E'}-\Phi^d_{AE'}}{d^2-1},
    \end{equation}
    which holds for every quantum state $\tau_{AE'}$. We can rewrite~\eqref{eq:twirled_dephased_st_q_form} as follows:
    \begin{equation}\label{eq:omega_ab_defn}
        \omega^{a,b}_{AE'}\coloneqq \mathcal{T}_{AE'}\circ\Delta_{E'}\!\left(\sigma_{AE'}\right) = a~\Phi^d_{AE'} + b~\pi_{AE'} + (1-a-b)\pi_A\otimes |e\rangle\!\langle e|_{E'},
    \end{equation}
    where $a = q-\frac{1-q-x}{d^2-1}$ and $b=\frac{(1-q-x)d^2}{d^2-1}$.

    The generalized unextendible entanglement of the erased state can now be written as follows:
    \begin{align}
        \mathbf{E}^u\!\left(\eta^p_{AB}\right) &= \inf_{\sigma\in\mathcal{F}(\eta^p)}\frac{1}{2}\mathbf{D}\!\left(\eta^p_{AB}\middle\Vert\sigma_{AE'}\right)\label{eq:gen_unext_ent_eras_defn}\\
        &\ge \inf_{\sigma\in\mathcal{F}(\eta^p)}\frac{1}{2}\mathbf{D}\!\left(\mathcal{T}_{AB}\circ\Delta_{B}\!\left(\eta^p_{AB}\right)\middle\Vert\mathcal{T}_{AE'}\circ\Delta_{E'}\!\left(\sigma_{AE'}\right)\right) \\
        &= \inf_{\sigma\in\mathcal{F}(\eta^p)}\frac{1}{2}\mathbf{D}\!\left(\eta^p_{AB}\middle\Vert\mathcal{T}_{AE'}\circ\Delta_{E'}\!\left(\sigma_{AE'}\right)\right),\label{eq:gen_div_min_twirl_dephas}
    \end{align}
    where the inequality follows from the data-processing inequality of the generalized divergence and the final equality follows from the fact that the erased state is invariant under the action of the dephasing channel $\Delta_{B}$ as well as the twirling channel $\mathcal{T}_{AB}$. The inequality in~\eqref{eq:gen_div_min_twirl_dephas} implies that for every state $\sigma_{AE'}\in \mathcal{F}\!\left(\eta^p_{AB}\right)$ there exists a state $\omega^{a,b}_{AE'}$, defined in~\eqref{eq:omega_ab_defn}, such that
    \begin{equation}
        \frac{1}{2}\mathbf{D}\!\left(\eta^p_{AB}\middle\Vert\sigma_{AE'}\right) \ge \frac{1}{2}\mathbf{D}\!\left(\eta^p_{AB}\middle\Vert\omega^{a,b}_{AE'}\right).
    \end{equation}
    Therefore, it suffices to restrict the infimum in~\eqref{eq:gen_unext_ent_eras_defn} to the following optimization:
    \begin{equation}\label{eq:gen_unext_ent_eras_eta_omega}
        \mathbf{E}^u\!\left(\eta^p_{AB}\right) = \inf_{a\in\mathcal{A},b\in \mathcal{B}}\frac{1}{2}\mathbf{D}\!\left(\eta^p_{AB}\middle\Vert\omega^{a,b}_{AE'}\right),
    \end{equation}
    where sets $\mathcal{A}$ and $\mathcal{B}$ correspond to the sets of parameters $a$ and $b$ such that $\omega^{a,b}_{AE'}\in \mathcal{F}\!\left(\eta^p_{AB}\right)$.

    Now let us find the range of values that $a$ and $b$ can take such that $\omega^{a,b}_{AE'}$, defined in~\eqref{eq:omega_ab_defn}, lies in the set $\mathcal{F}\!\left(\eta^p_{AB}\right)$. Note that
    \begin{equation}
        \omega^{a,b}_{AE'} = \mathcal{T}_{AE'}\circ\Delta_{E'}\circ\mathcal{N}_{E\to E'}\!\left(p~\Phi^d_{AE} + (1-p)\pi_A\otimes |e\rangle\!\langle e|_E\right).
    \end{equation}
    For every $\mathcal{N}_{E\to E'}$, the channel $\mathcal{T}_{AE'}\circ\Delta_{E'}\circ\mathcal{N}_{E\to E'}$ acts on $\pi_{A}\otimes|e\rangle\!\langle e|_E$ and $\Phi^d_{AE}$ as follows:
    \begin{align}
        \mathcal{T}_{AE'}\circ\Delta_{E'}\circ\mathcal{N}_{E\to E'}\!\left(\pi_A\otimes |e\rangle\!\langle e|_E\right) &= y\pi_{AE'} + (1-y)\pi_A\otimes |e\rangle\!\langle e|_{E'},\\
        \mathcal{T}_{AE'}\circ\Delta_{E'}\circ\mathcal{N}_{E\to E'}\!\left(\Phi^d_{AE}\right) &= y'\Phi^d_{AE'} + y''\pi_{AE'} + (1-y'-y'')\pi_A\otimes|e\rangle\!\langle e|_{E'},
    \end{align}
    where $y\in [0,1]$, $y' \in [0,1]$ and $y'' \in [0,1-y']$.    
    The state $\omega^{a,b}_{AE'}$ can hence be written as follows:
    \begin{multline}
        \omega^{a,b}_{AE'} = py'~\Phi^d_{AE} + (y(1-p)+py'')\pi_{AE'} \\+ ((1-y)(1-p)+p(1-y'-y''))\pi_{A}\otimes |e\rangle\!\langle e|_{E'},
    \end{multline}
    for some $y\in [0,1]$, $y'\in [0,1]$ and $y''\in [0,1-y']$. Comparing with~\eqref{eq:omega_ab_defn}, it is clear that $a \le p$. Therefore, if $\omega^{a,b}_{AE'}\in \mathcal{F}\!\left(\eta^p_{AB}\right)$, then $a$ must be less than or equal to $p$.

    Now we will show that $\omega^{a,b}_{AE'}\in \mathcal{F}\!\left(\eta^p_{AB}\right)$ for all $a\in [0,p]$ and $b\in [0,1-a]$. Consider the following extension of the state $\omega^{a,b}_{AE'}$:
    \begin{equation}
        \omega^{a,b,c,g}_{ABE'} = \left(a~\Phi^d_{AE'} + c~\pi_{AE'}\right)\otimes|e\rangle\!\langle e|_B + \left(g~\Phi^d_{AB}+ f~\pi_{AB}\right)\otimes|e\rangle\!\langle e|_{E'} + (b-c)\Phi^d_{AB}\otimes\pi_{E'},
    \end{equation}
    where $g+f = 1-a-b$ and $a,b \ge 0$ follow from~\eqref{eq:omega_ab_defn} and $c,g,f\geq 0$ and $c\leq b$ ensure that $\omega^{a,b,c,g}_{ABE'}$ is positive semi-definite. It can be easily verified that $\operatorname{Tr}_{B}\!\left[\omega^{a,b,c,g}_{ABE'}\right] = \omega^{a,b}_{AE'}$. Moreover, if $a+c = p$ and $f=0$, then $\operatorname{Tr}_{E'}\!\left[\omega^{a,b,c,g}_{ABE'}\right] = \eta^p_{AB}$ (and also $g+b-c = 1-p$ as a consequence). Therefore, for all $a \in [0,p]$, there exists $b\in [0,1-a]$ such that $\omega^{a,b}_{AE'}\in \mathcal{F}\!\left(\eta^p_{AB}\right)$. As such, $\omega^{a,b}_{AE'}\in \mathcal{F}\!\left(\eta^p_{AB}\right)$ if and only if $a \in [0,p]$ and $b\in [0,1-a]$. Invoking~\eqref{eq:gen_unext_ent_eras_eta_omega}, we can write the generalized unextendible entanglement of an erased state $\eta^p_{AB}$ as follows:
    \begin{equation}\label{eq:gen_unext_ent_ab_inf}
        \mathbf{E}^u\!\left(\eta^p_{AB}\right) = \inf_{\substack{a\in[0,p],\\b\in [0,1-a]}}\frac{1}{2}\mathbf{D}\!\left(\eta^p_{AB}\middle\Vert\omega^{a,b}_{AE'}\right).
    \end{equation}

Consider the following channel:
\begin{equation}
    \mathcal{R}^s_{AE'}\!\left(\cdot\right) = |e\rangle\!\langle e|_{E'}\!\left(\cdot\right)|e\rangle\!\langle e|_{E'} + (1-s)\operatorname{Tr}\!\left[\Pi_{E'}\!\left(\cdot\right)\Pi_{E'}\right]\Phi^d_{AE'} + s~\operatorname{id}_{AE'}\!\left(\Pi_{E'}\!\left(\cdot\right)\Pi_{E'}\right),
\end{equation}
where $s\in [0,1]$. The channel $\mathcal{R}^s_{AE'}$ can be realized by applying the POVM $\left\{\Pi_{E'},|e\rangle\!\langle e|_{E'}\right\}$ on the state of system $E'$. If the outcome corresponding to the POVM element $\Pi_{E'}$ occurs, then the state is replaced by the maximally entangled state $\Phi^d_{AE'}$ with probability $1-s$ and otherwise, with probability $s$, the identity channel is applied. The erased state $\eta^p_{AE'}$ is invariant under the action of the channel $\mathcal{R}^s_{AE'}$ for all $s \in [0,1]$. The channel $\mathcal{R}^s_{AE'}$ acts on $\omega^{a,b}_{AE'}$ as follows:
\begin{align}
    \mathcal{R}^s_{AE'}\!\left(\omega^{a,b}_{AE'}\right) &= ((1-s)b+a)\Phi^d_{AE'} + sb~\pi_{AE'} + (1-a-b)\pi_{A}\otimes|e\rangle\!\langle e|_{E'}\\
    &= \omega^{a+(1-s)b, sb}_{AE'}.
\end{align}
Fix $s = (a+b-p)/b$. The data-processing inequality of the generalized divergence yields the following inequality:
\begin{align}
    \mathbf{D}\!\left(\eta^p_{AB}\middle\Vert\omega^{a,b}_{AE'}\right) &\ge \mathbf{D}\!\left(\mathcal{R}^s_{AB}\!\left(\eta^p_{AB}\right)\middle\Vert\mathcal{R}^s_{AE'}\!\left(\omega^{a,b}_{AE'}\right)\right)\\
    &= \mathbf{D}\!\left(\eta^p_{AB}\middle\Vert\omega^{a+(1-s)b,sb}_{AE'}\right)\\
    &= \mathbf{D}\!\left(\eta^p_{AB}\middle\Vert\omega^{p,a+b-p}_{AE'}\right).
\end{align}
If $\omega^{a,b}_{AE'}\in \mathcal{F}\!\left(\eta^p_{AB}\right)$, then $\omega^{p,a+b-p}_{AE'}$ also lies in set $\mathcal{F}\!\left(\eta^p_{AB}\right)$. Therefore, the bivariate infimum in~\eqref{eq:gen_unext_ent_ab_inf} can be restricted to a single variable infimum as follows:
\begin{equation}
    \mathbf{E}^u\!\left(\eta^p_{AB}\right) = \inf_{b\in [0,1-p]}\frac{1}{2}\mathbf{D}\!\left(\eta^p_{AB}\middle\Vert\omega^{p,b}_{AE'}\right).
\end{equation}
This concludes the proof.
\end{proof}

\medskip

Lemma~\ref{lem:eras_st_opt_marg_gen_div} allows us to obtain an analytical expression for the $\alpha$-geometric unextendible entanglement of the erased state, which we present in Proposition~\ref{prop:geo_unext_eras_st_alpha} stated below. 

\begin{proposition}\label{prop:geo_unext_eras_st_alpha}
    For all $\alpha \in (0,1)\cup(1,2]$, the $\alpha$-geometric unextendible entanglement of a $d$-dimensional erased state $\eta^p_{AB}$, defined in Lemma~\ref{lem:eras_st_opt_marg_gen_div}, evaluates to the following:
    \begin{equation}
        \widehat{E}^u_{\alpha}\!\left(\eta^p_{AB}\right) = \frac{1}{2}\cdot\frac{1}{\alpha - 1}\log_2\!\left(\left(p + \frac{b_{\operatorname{opt}}}{d^2}\right)^{1-\alpha}(1-p)^{\alpha} + (1-p-b_{\operatorname{opt}})^{1-\alpha}p^{\alpha}\right) 
    \end{equation}
    for all $p\in \left(0,\frac{1}{d^{1/\alpha}+1}\right]$, where 
    \begin{equation}
        b_{\operatorname{opt}} \coloneqq \frac{d^2\!\left((1-p)^2 - p^2d^{2/\alpha}\right)}{pd^{2/\alpha} + (1-p)d^2}.
    \end{equation} For all $p\in \left(\frac{1}{d^{1/\alpha}+1},\frac{1}{2}\right]$,
    \begin{equation}
        \widehat{E}^u_{\alpha}\!\left(\eta^p_{AB}\right) = \frac{1}{2}\cdot\frac{1}{\alpha - 1}\log_2\!\left(p^{1-\alpha}(1-p)^{\alpha} + (1-p)^{1-\alpha}p^{\alpha}\right).
    \end{equation}
\end{proposition}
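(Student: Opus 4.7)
The plan is to first establish Proposition~\ref{prop:geo_unext_eras_st_alpha} (the state-level claim in the appendix) and then lift it to the channel using Lemma~\ref{lem:unext_ent_state_le_unext_ent_ch_gen}, combined with the matching upper bound from~\cite{SW24_channels}. Specifically, by Lemma~\ref{lem:unext_ent_state_le_unext_ent_ch_gen} applied to $\rho_{RA} = \Phi^d_{RA}$, one has $\widehat{E}^u_\alpha(\eta^p_{AB}) \le \widehat{E}^u_\alpha(\mathcal{E}^p_{A\to B})$, where $\eta^p_{AB}$ is the erased state in~\eqref{eq:eras_st_defn}. The reverse inequality is precisely the upper bound worked out in~\cite[Appendix~J]{SW24_channels}, so the equality stated in Proposition~\ref{prop:geo_unext_eras_alpha} will follow once Proposition~\ref{prop:geo_unext_eras_st_alpha} is proved.

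To prove Proposition~\ref{prop:geo_unext_eras_st_alpha}, I would invoke Lemma~\ref{lem:eras_st_opt_marg_gen_div} to reduce the infimum to a single scalar optimization over $b \in [0,1-p]$ of $\widehat{D}_\alpha(\eta^p_{AB} \Vert \omega^{p,b}_{AE'})$. The key observation is that both $\eta^p_{AB}$ and $\omega^{p,b}_{AE'}$ decompose as direct sums over two orthogonal subspaces of system $B/E'$: the ``erasure block'' spanned by $|e\rangle$, and the ``data block'' spanned by $\Pi$. On the erasure block, $\eta^p$ is proportional to $I_A \otimes |e\rangle\!\langle e|$ with weight $p/d$, while $\omega^{p,b}$ has weight $(1-p-b)/d$ on the same projector; on the data block, $\eta^p = (1-p)\Phi^d$ is a rank-one projector lying in the eigenspace $\Phi^d$ of $\omega^{p,b} = (p+b/d^2)\Phi^d + (b/d^2)(I\otimes\Pi - \Phi^d)$. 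Because $\eta^p$ is supported on a single eigenspace of $\omega^{p,b}$ in each block, the operator $\omega^{p,b}_{\ -1/2}\,\eta^p\,\omega^{p,b}_{\ -1/2}$ is a scalar on each block, and a direct computation yields
\begin{equation}
\operatorname{Tr}\!\left[\omega^{p,b}\!\left((\omega^{p,b})^{-1/2}\eta^p(\omega^{p,b})^{-1/2}\right)^{\alpha}\right] = \left(p+\tfrac{b}{d^2}\right)^{1-\alpha}(1-p)^{\alpha} + (1-p-b)^{1-\alpha}p^{\alpha}.
\end{equation}

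With this closed form in hand, minimizing over $b$ reduces to an elementary calculus problem. Differentiating the expression above with respect to $b$ and setting the derivative equal to zero gives the relation
\begin{equation}
d^{-2/\alpha}\!\left(p+\tfrac{b}{d^2}\right)^{-1}\!(1-p) = (1-p-b)^{-1}p,
\end{equation}
whose unique solution is the formula for $b_{\operatorname{opt}}$ stated in the proposition. Substituting back yields the first formula. The second formula (the constant regime) then corresponds to the case where the unconstrained critical point $b_{\operatorname{opt}}$ is negative, so the minimum on $[0,1-p]$ is attained at the boundary $b=0$; solving $b_{\operatorname{opt}} \ge 0$ gives precisely the threshold $p \le 1/(d^{1/\alpha}+1)$. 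A second-derivative or convexity check (the objective is a sum of two convex power functions of $b$, provided $\alpha > 1$; for $\alpha \in (0,1)$ the factor $(\alpha-1)^{-1}$ flips the sign and the same critical point gives a minimum of the divergence) confirms that $b_{\operatorname{opt}}$ is a minimum.

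The main technical obstacle will be bookkeeping the $\alpha \in (0,1)$ versus $\alpha \in (1,2]$ cases when asserting that the interior critical point is a minimum, and checking that $b_{\operatorname{opt}} \le 1-p$ automatically whenever $b_{\operatorname{opt}} \ge 0$ (so the constraint $b \in [0,1-p]$ is only binding at $b=0$). The former just requires tracking the sign of $1-\alpha$ in the derivative of $\tfrac{1}{\alpha-1}\log_2(\cdot)$; the latter is a short algebraic verification using $p \in (0,1/2)$ and $d \ge 2$. Once these are handled, combining with the channel/state inequality from Lemma~\ref{lem:unext_ent_state_le_unext_ent_ch_gen} and the upper bound from~\cite[Appendix~J]{SW24_channels} closes the proof of Proposition~\ref{prop:geo_unext_eras_alpha}.
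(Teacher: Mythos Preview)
Your proposal is correct and follows essentially the same approach as the paper: invoke Lemma~\ref{lem:eras_st_opt_marg_gen_div} to reduce to a one-parameter optimization over $b\in[0,1-p]$, compute the closed-form expression for $\widehat{D}_\alpha(\eta^p\Vert\omega^{p,b})$ via the block-diagonal structure, and minimize by calculus to obtain $b_{\operatorname{opt}}$ with the boundary case $b=0$ when $b_{\operatorname{opt}}<0$. The paper's proof is slightly more terse (it does not spell out the convexity/sign-tracking for the two $\alpha$ regimes that you flag), but the substance is identical; your additional remarks about lifting to the channel via Lemma~\ref{lem:unext_ent_state_le_unext_ent_ch_gen} and the upper bound from~\cite{SW24_channels} also mirror the paper's subsequent proof of Proposition~\ref{prop:geo_unext_eras_alpha}.
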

\begin{proof}
    The $\alpha$-geometric unextendible entanglement of the erased state can be computed using Lemma~\ref{lem:eras_st_opt_marg_gen_div} as follows:
    \begin{equation}
        \widehat{E}^u_{\alpha}\!\left(\eta^p_{AB}\right) = \inf_{b\in [0,1-p]}\frac{1}{2}\widehat{D}_{\alpha}\!\left(\eta^p_{AB}\middle\Vert\omega^{p,b}_{AB}\right),
    \end{equation}
    where $\omega^{p,b}_{AB}$ is defined in~\eqref{eq:omega_pb_defn} and $\alpha \in (0,1)\cup(1,2]$. Recall the definition of the $\alpha$-geometric R\'enyi relative entropy given in~\eqref{eq:alpha_geo_rel_ent_defn}. The $\alpha$-geometric R\'enyi relative entropy of $\eta^p_{AB}$ with respect to $\omega^{p,b}_{AB}$ can be computed as follows:
    \begin{equation}\label{eq:geo_rel_ent_eras_vs_omega}
        \widehat{D}_{\alpha}\!\left(\eta^p_{AB}\middle\Vert\omega^{p,b}_{AB}\right) = \frac{1}{\alpha - 1}\log_2\!\left(\left(p + \frac{b}{d^2}\right)^{1-\alpha}(1-p)^{\alpha} + (1-p-b)^{1-\alpha}p^{\alpha}\right).
    \end{equation}
    The above expression is minimized for 
    \begin{equation}\label{eq:b_opt_defn}
        b = b_{\operatorname{opt}} \coloneqq \frac{d^2\!\left((1-p)^2 - p^2d^{2/\alpha}\right)}{pd^{2/\alpha} + (1-p)d^2}.
    \end{equation}
    
    Let us now find the range of $p$ such that $b_{\operatorname{opt}} \in [0,1-p]$. Consider the following expression:
    \begin{align}
        1-p-b_{\operatorname{opt}} = \frac{pd^{2/\alpha}(1-p+pd^2)}{pd^{2/\alpha} + (1-p)d^2}.
    \end{align}
    The above expression is greater than or equal to zero for all $p \in [0,1]$. Therefore, $b_{\operatorname{opt}} \le 1-p$ for all $p \in [0,1]$. Moreover, $b_{\operatorname{opt}} \ge 0$ if and only if $0 \le (1-p)^2-p^2d^{2/\alpha}$, which holds for all $p \in \left[0,\frac{1}{d^{1/\alpha}+1}\right)$. If $b_{\operatorname{opt}} \le 0$, the value of $b \in [0,1-p]$ that minimizes the expression in~\eqref{eq:geo_rel_ent_eras_vs_omega} is $b = 0$. Therefore, for all $\alpha \in (0,1)\cup(1,2]$, the $\alpha$-geometric unextendible entanglement of a $d$-dimensional erased state $\eta^p_{AB}$ evaluates to the following:
    \begin{equation}
        \widehat{E}^u_{\alpha}\!\left(\eta^p_{AB}\right) = \frac{1}{2}\cdot\frac{1}{\alpha - 1}\log_2\!\left(\left(p + \frac{b_{\operatorname{opt}}}{d^2}\right)^{1-\alpha}(1-p)^{\alpha} + (1-p-b_{\operatorname{opt}})^{1-\alpha}p^{\alpha}\right) 
    \end{equation}
    for all $p\in \left(0,\frac{1}{d^{1/\alpha}+1}\right]$, where $b_{\operatorname{opt}}$ is defined in~\eqref{eq:b_opt_defn}. For all $p\in \left(\frac{1}{d^{1/\alpha}+1},\frac{1}{2}\right]$,
    \begin{equation}
        \widehat{E}^u_{\alpha}\!\left(\eta^p_{AB}\right) = \frac{1}{2}\cdot\frac{1}{\alpha - 1}\log_2\!\left(p^{1-\alpha}(1-p)^{\alpha} + (1-p)^{1-\alpha}p^{\alpha}\right).
    \end{equation}
    This concludes the proof.
\end{proof}

\medskip

\begin{proof}[Proof of Proposition~\ref{prop:geo_unext_eras_alpha}]
The $\alpha$-geometric unextendible entanglement of the erased state serves as a lower bound on the $\alpha$-geometric unextendible entanglement of the erasure channel, as is evident from~\eqref{eq:eras_st_gen_div_le_eras_ch}. The expression for the $\alpha$-geometric unextendible entanglement of the erased state $\eta^p_{AB}$ derived in Proposition~\ref{prop:geo_unext_eras_st_alpha} was found to be an upper bound on the $\alpha$-geometric unextendible entanglement of a $d$-dimensional erasure channel with erasure probability $p$. We give a brief outline of the proof here.

Consider a quantum channel $\mathcal{P}^{b^*}_{A\to BE}$ with the following Choi operator:
\begin{equation}
    \Gamma^{\mathcal{P},b^*}_{ABE} = pd~\Phi^d_{AE}\otimes|e\rangle\!\langle e|_{B} + \left(1-p-b^*\right)d~\Phi^d_{AB}\otimes|e\rangle\!\langle e|_{E} + b^*d~\Phi^d_{AB}\otimes \pi_{E}. 
\end{equation}
The two relevant marginals of this Choi operator are as follows:
\begin{equation}
    \Gamma^{\mathcal{E}}_{AB} = d\!\left((1-p)\Phi^d_{AB} + p~\pi_A\otimes |e\rangle\!\langle e|_B\right),
\end{equation}
which is the Choi operator of a $d$-dimensional erasure channel with erasure probability $p$, and
\begin{equation}
    \Gamma^{\mathcal{M},b^*}_{AE} = d\!\left(p~\Phi^d_{AE} + (1-p-b^*)\pi_A\otimes |e\rangle\!\langle e|_E + b^*\pi_{AE}\right).
\end{equation}
The Choi operator $\Gamma^{\mathcal{M},b^*}_{AE}$ corresponds to a channel $\mathcal{M}^{b^*}_{A\to E}$ that lies in the set $\mathcal{F}\!\left(\mathcal{E}^p_{A\to B}\right)$ if $b^* \in [0,1-p]$, where $\mathcal{E}^p_{A\to B}$ is a $d$-dimensional erasure channel with erasure probability $p$. By definition,
\begin{equation}
    \widehat{E}^u_{\alpha}\!\left(\mathcal{E}^p_{A\to B}\right) \le \frac{1}{2}\widehat{D}_{\alpha}\!\left(\mathcal{E}^p_{A\to B}\middle\Vert\mathcal{M}^{b^*}_{A\to E}\right)
\end{equation}
for all $b^* \in [0,1-p]$. Choosing $b^* = \min\{0,b_{\operatorname{opt}}\}$, where $b_{\operatorname{opt}}$ is defined in~\eqref{eq:b_opt_defn}, the $\alpha$-geometric R\'enyi relative entropy of $\mathcal{E}^p_{A\to B}$ with respect to $\mathcal{M}^{b^*}_{A\to E}$ evaluates to the expression which is equal to the $\alpha$-geometric unextendible entanglement of the erased state $\eta^p_{AB}$ derived in Proposition~\ref{prop:geo_unext_eras_st_alpha}. Since the $\alpha$-geometric unextendible entanglement of the erasure channel $\mathcal{E}^p_{A\to B}$ cannot be less than the $\alpha$-geometric unextendible entanglement of the erased state $\eta^p_{AB}$, we conclude that the two quantities are equal.
\end{proof}

\section{Semidefinite programs}\label{app:semidefinite_programs}

In this section we present all the semidefinite programs that were used in this work.

\begin{enumerate}
    \item \textbf{Smooth-min unextendible entanglement of a state:}
    \begin{equation}
        E^{u,\varepsilon}_{\min}(\rho_{AB}) = -\frac{1}{2}\log_2\max\left\{
        \begin{array}{c}
             \mu(1-\varepsilon)-\operatorname{Tr}\!\left[Z_{AB}\right]: \\
             \mu \ge 0, Z_{AB} \ge 0, \omega_{ABE} \ge 0,\\
             \mu\rho_{AB} \le \operatorname{Tr}_B\!\left[\omega_{ABE}\right] + Z_{AB},\\
             \operatorname{Tr}_E\!\left[\omega_{ABE}\right] = \rho_{AB}
        \end{array}
        \right\}.
    \end{equation}
    \item \textbf{Max-unextendible entanglement of a state:} The semidefinite program for the max-unextendible entanglement of a state was given in~\cite{WWW24}. We include it here for completeness.
    \begin{equation}\label{eq:max_unext_ent_st_SDP}
        E^u_{\max}\!\left(\rho_{AB}\right) = -\frac{1}{2}\log_2 \max\left\{
        \begin{array}{c}
             \lambda:  \\
             \lambda \rho_{AB} \le \operatorname{Tr}_{B}\!\left[\omega_{ABE}\right],\\
             \omega_{ABE} \ge 0,\\
             \operatorname{Tr}_E\!\left[\omega_{ABE}\right] = \rho_{AB}
        \end{array}
        \right\}.
    \end{equation}
    \item \textbf{Smooth-min unextendible entanglement of a channel:} The smooth-min relative entropy of a channel $\mathcal{N}$ with respect to a channel $\mathcal{M}$ has a semidefinite program, which was given in~\cite[Proposition~2]{WW19}. We use it to write the semidefinite program for smooth-min unextendible entanglement of a channel as follows:
    \begin{equation}
        E^{u,\varepsilon}_{\min}\!\left(\mathcal{N}_{A\to B}\right) = -\frac{1}{2}\log_2 \max\left\{
        \begin{array}{c}
            \mu(1-\varepsilon) - \lambda: \\
             \lambda \ge 0, \mu \ge 0, Y_{AB}\ge 0,\Gamma^{\mathcal{P}}_{ABE}\ge 0,\\
             \mu\Gamma^{\mathcal{N}}_{AB} \le \operatorname{Tr}_{B}\!\left[\Gamma^{\mathcal{P}}_{AB}\right]+Y_{AB}\\
             \operatorname{Tr}_{B}\!\left[Y_{AB}\right] \le \lambda I_A,\\
             \operatorname{Tr}_E\!\left[\Gamma^{\mathcal{P}}_{ABE}\right] = \Gamma^{\mathcal{N}}_{AB}
        \end{array}
        \right\},
    \end{equation}
    where $\Gamma^{\mathcal{N}}_{AB}$ is the Choi operator of the channel $\mathcal{N}_{A\to B}$ defined in~\eqref{eq:Choi_op_defn}.
        \item \textbf{$\alpha$-geometric unextendible entanglement of a channel:} Fix $\ell \in \mathbb{N}$. The $\alpha$-geometric unextendible entanglement of a channel $\mathcal{N}_{A\to B}$ for $\alpha = 1+2^{-\ell}$ can be computed using the following semidefinite program:
    \begin{equation}
		\widehat{E}^u_{\alpha}\!\left(\mathcal{N}_{A\to B}\right) = 2^\ell \min_{\substack{y\in \mathbb{R}, \Gamma^{\mathcal{P}}_{ABE} \ge 0\\M_{AE}, \left\{N^i_{AE}\right\}_{i=0}^{\ell},\in \operatorname{Herm}}} \log_2 y,
	\end{equation}
	subject to the constraints,
	\begin{align}
		\operatorname{Tr}_{E}\!\left[\Gamma^{\mathcal{P}}_{ABE}\right] &= \Gamma^{\mathcal{N}}_{AB}\label{eq:non_sig_SDP_geo_unext_ent},\\
		\operatorname{Tr}_{E} \left[M_{AE}\right] &\le yI_A\label{eq:geo_unext_ent_cond_1},\\
		\operatorname{Tr}_{B} \left[\Gamma^{\mathcal{P}}_{ABE}\right] &= N^0_{AE}, \label{eq:geo_unext_ent_cond_2}\\
		\begin{bmatrix}
			M_{AE} & \Gamma^{\mathcal{N}}_{AB}\\
			\Gamma^{\mathcal{N}}_{AB} & N^{\ell}_{AE}
		\end{bmatrix}
		&\ge 0,\label{eq:geo_unext_ent_cond_3}\\
		\begin{bmatrix}
			\Gamma^{\mathcal{N}}_{AE} & N^i_{AE}\\
			N^i_{AE} & N^{i-1}_{AE}
		\end{bmatrix}
		&\ge 0 \quad \forall i\in \{1,2,\ldots,\ell\},\label{eq:geo_unext_ent_cond_4}
	\end{align}
 where $\Gamma^{\mathcal{N}}_{AB}$ is the Choi operator of the channel $\mathcal{N}_{A\to B}$ and the system $E$ is isomorphic to the system $B$. To compute the $\alpha$-geometric unextendible entanglement of the channel for other rational values of $\alpha$ see~\cite[Table 4]{FS17}.
\end{enumerate}
\end{document}